\documentclass{article}
\usepackage[utf8]{inputenc}
\usepackage{amsmath}
\usepackage{hyperref}
\usepackage{todonotes}
\usepackage{amsthm}
\usepackage{comment}
\usepackage{amsfonts}
\usepackage{dsfont}
\usepackage{amssymb}
\usepackage{geometry}
\usepackage{esint}
\usepackage{bbm}
\usepackage{xcolor}
\usepackage{cancel}
\usepackage{appendix}

\definecolor{linkcolor}{rgb}{0,0,0.6}
\definecolor{ccitecolor}{rgb}{0.75, 0, 0} 
\hypersetup{
	colorlinks=true, 
	pdfstartview=FitV, 
	urlcolor=linkcolor, 
	linkcolor= linkcolor, 
	citecolor=ccitecolor 
}

\newtheorem{theorem}{Theorem}[section]
\newtheorem{lemma}[theorem]{Lemma}

\newtheorem{corollary}[theorem]{Corollary}
\newtheorem{proposition}[theorem]{Proposition}
\newtheorem{properties}[theorem]{Properties}
\newtheorem{property}[theorem]{Property}
\newtheorem{definition}[theorem]{Definition}

\theoremstyle{definition}
\newtheorem{notation}[theorem]{Notation}

\theoremstyle{definition}
\newtheorem{assumption}[theorem]{Assumption}

\theoremstyle{remark}
\newtheorem{remark}[theorem]{Remark}

\numberwithin{equation}{section}

\DeclareMathOperator{\R}{\mathbb{R}}

\DeclareMathOperator{\N}{\mathbb{N}}

\DeclareMathOperator{\Hcal}{\mathcal{H}}
\DeclareMathOperator{\tr}{Tr}

\DeclareMathOperator{\supp}{supp}

\renewcommand{\d}{ \, \mathrm{d}}

\title{Semi-classical limit of an attractive Fermi gas in one or two dimensions}
\author{Thomas Gamet\thanks{Ecole Normale Supérieure de Lyon, UMPA (UMR 5669)}~\thanks{thomas.gamet@ens-lyon.fr}}

\begin{document}
	\maketitle
	
	\begin{abstract}
		We study the ground-state of a Fermi gas with short range attrative interactions in one or two dimensions. $N$ fermions are placed in a confining potential, and interact with each other through a negative potential, whose range is larger than the typical distance between particles. We show the convergence of the ground-state energy of the Hamiltonian to a Thomas-Fermi energy in the large $N$ limit. Furthermore, we prove convergence of the ground states, in the sense of their Husimi functions. These results extend to the case of a repulsive interaction of positive Fourier transform.
	\end{abstract}
	
	\tableofcontents 
	
	\section{Introduction}
	\subsection{Context}
	Large and interacting quantum systems are very difficult to describe directly, whence the interest of deriving effective models. Many such models replace the $N$-body wave function by the density of particles (density functional theory). For fermions, the thermodynamic limit of a system of spin $1/2$ particles in three dimensions has been widely studied from a mathematical point of view in the last decade. For a total density of particles $\varrho$, the volumic energy is given by the Huang-Yang formula \cite{huang_quantum-mechanical_1957} in the low density limit $\varrho \to 0$
	\begin{equation}\label{equation_huang_yang}
		e(\varrho) = \frac{3}{5}(3\pi^2)^{2/3}\varrho^{5/3} + 2\pi a \varrho^2(1 + c\varrho^{1/3}) + o(\varrho^{7/3}),
	\end{equation}
	where $a$ is the scattering length of the interaction potential, and $c$ an explicit constant. The first order corresponds to the kinetic energy of the system without interactions. The first two terms of \eqref{equation_huang_yang} -- that is up to $\varrho^2$ -- have been first derived in \cite{lieb_ground-state_2005}. \cite{falconi_dilute_2021} provided an alternative proof of the $\varrho^2$ contribution using Bogoliubov-type rotations, a method originally developed for Bose systems; for improvements of the error bounds see \cite{giacomelli_optimal_2023, giacomelli_optimal_2024, lauritsen_almost_2025}. Notably, the authors of \cite{falconi_dilute_2021} were the first to implement this Bogoliubov approach directly in the thermodynamic limit. More recently, the authors of the seminal papers \cite{giacomelli_huang-yang_2024, giacomelli_huang-yang_2025} went a step further and proved the Huang–Yang formula -- i.e. the expansion up to order $\varrho^{7/3}$. The one dimensional case has also been studied in \cite{agerskov_ground_2025}. For spin-polarized fermions, the first order is the same (up to a constant) and the first interaction term is of order $\varrho^{8/3}$ as proved in \cite{lauritsen_ground_2024, lauritsen_ground_2024_2}, hence the interaction terms in \eqref{equation_huang_yang} must only come from interactions between particles of opposite spin.~\\
	
	Other works study trapped gases in the large number of particles limit rather than a thermodynamic limit. Let 
	\begin{equation}
		H_N = \sum_{j=1}^N (-\hbar^2\Delta_j) + \sum_{j=1}^N V(x_j) + \sum_{1\leq j < k\leq N} v_N(x_j - x_k)
	\end{equation}
	with $\hbar = N^{-1/d}$. The scaling of $\hbar$ comes from the growth of the sum of the $N$ first eigenvalues of the Laplacian on an hypercube, which is proportional to $N^{5/3}$ and gives a kinetic energy $\propto N$. The Hamiltonian $H_N$ is composed of three parts: the kinetic, potential, and interaction energies. The Gross-Pitaevskii limit, i.e. a scaling 
	\begin{equation} 
		v_N = N^2v(N\cdot)\geq 0
	\end{equation}
	in three dimension, with $V=0$ and in a box $[0, L]^3$ has been studied in \cite{chen_second_2024, chen_second_2025}. In a previous work \cite{gamet_ground_2025}, we have studied the ground-state energy of $H_N$ in the scaling 
	\begin{equation} 
		v_N = N^{2\beta -2/3}v(N^{\beta}\cdot)
	\end{equation}
	with an inhomogeneous potential $V$ and $1/3< \beta < 34/81$. For spin-polarized fermions, many papers (from \cite{lieb_hartree-fock_1977, lieb_thomas-fermi_1977} in the case of Coulomb interactions) have focused on the $N\to +\infty$ limit in the mean-field scaling, with a general potential $V$, that is
	\begin{equation}
		v_N = N^{-1}v
	\end{equation}
	to ensure that the interaction energy remains of order $N$. In this scaling, each particle interacts in the limit with a huge number of particles, and the range of the interaction remains macroscopically of order 1. In \cite{fournais_semi-classical_2018}, the authors show that the ground-state energy of such a system converges to the Thomas-Fermi energy
	\begin{equation}
		\inf\left\{c_{\mathrm{TF}}\int \rho^{1+2/d} +\int V\rho + \frac{1}{2}\int (v * \rho)\rho,~\rho\geq 0,~\int \rho = 1\right\}
	\end{equation}
	In following papers \cite{cardenas_norm_2024, cardenas_quantitative_2025}, the norm convergence and control (by $\hbar$) of the rate of convergence of the Husimi and Wigner functions of the minimizers have then been proved, under some additional assumptions. These quantitative estimates are motivated by the limit $N\to +\infty$ for the solutions of an evolution equation involving $H_N$, for which quantitative estimates in $\hbar$ are needed \cite{benedikter_meanfield_2014, benedikter_effective_2015, benedikter_hartree_2016, fournais_optimal_2020}. Several papers have extended the results of \cite{fournais_semi-classical_2018} to different settings. For instance, in \cite{fournais_semi-classical_2020}, the authors established the convergence of the energy and the states when a strong external magnetic potential was added, that is when $-\hbar^2 \Delta_j$ is replaced by the magnetic Laplacian
	\begin{equation}
		\big(-i\hbar\nabla_j + b_N A(x_j)\big)^2
	\end{equation}
	with $b_N N^{-1/3}\to \beta \in [0, +\infty]$. See also \cite{lieb_heavy_1992, lieb_asymptotics_1994, lieb_asymptotics_1994-1, lieb_ground_1995, perice_multiple_2024} for other results on the high magnetic field limit. Almost fermionic anyons have also been studied in a similar fashion in \cite{girardot_semiclassical_2021, girardot_lieb-thirring_2023}. These can be assimilated to fermions under a density dependent magnetic field. The corrections to the ground-state energy in the mean-field limit, on the torus, and for repulsive interactions (in the sense that $\widehat{v} \geq 0$) has been studied in a series of works \cite{hainzl_correlation_2020, benedikter_optimal_2020, benedikter_correlation_2021, benedikter_correlation_2023, christiansen_random_2023, christiansen_gell-mannbrueckner_2023, christiansen_correlation_2024, casadei_ground_2026}. Other scalings for  $v_N$ have also been considered. In \cite{fournais_ground_2024}, strong interactions are considered with
	\begin{equation}
		v_N = N^{-\alpha}v,~~~~~0<\alpha<1.
	\end{equation}
	In \cite{lewin_semi-classical_2019}, the result of \cite{fournais_semi-classical_2018} have been generalized to a positive temperature setting and for several scalings. In particular, for repulsive interactions of the form
	\begin{equation}
		v_N = N^{d\beta -1}v(N^{\beta}\cdot),
	\end{equation}
	when $\beta<1/d$, contrary to the mean-field regime, in the limit, the interactions are local, and for $\beta >1/d$, there are no interactions in the limit because of the Pauli principle. We want to study a similar limit with $d\beta < 1$ in order to recover interactions in the limit. However, we take attractive interactions\footnote{With a generalization in Sections \ref{subsection_non_attractive} and \ref{section5}.}, that is
	\begin{equation}
		v \leq 0.
	\end{equation}
	From a mathematical point of view, the interest of the problem comes mainly of the inability to use -- at least directly -- some standard methods for repulsive potentials (Onsager lemma, throwing positive terms away). However, this attractive potential is not just a mathematical curiosity, as attractive interactions have been constructed experimentally for a 1D Fermi gas using Feshbach resonance \cite{kafle_low_2025}. In this experiment, the fermions have a spin $1/2$, but in our model the spin only changes the constants in the Thomas-Fermi energy functional, and not the qualitative results.
	
	\subsection{Convergence of the energy}
	First, let us introduce the Hamiltonian we will study.
	\begin{definition}[The Hamiltonian and its domain]\label{definition_hamiltonien}
		Let 
		\begin{equation}
			\mathcal{H}_N = \left\{\Psi_N \in H^1(\R^{dN})\cap L^2_{\mathrm{as}}((\R^d)^N),~\forall j\leq N,~\int_{\R^{dN}} V(x_j)|\Psi_N(x_1,...x_N)|^2\d x_1...\d x_N< +\infty\right\}
		\end{equation}
		be the Hilbert space on which we define the Hamiltonian
		\begin{equation}
			H_N = \sum_{j=1}^N (-\hbar^2\Delta_j) + \sum_{j=1}^NV(x_j) - N^{-1}\sum_{j<k} w_N(x_j - x_k)
		\end{equation}
		with a potential satisfying
		\begin{equation}\label{equation_premiere_condition_V}
			V\in L^{\infty}_{\mathrm{loc}}(\R^d),~~~~~V\geq 0,~~~~~ V(x)\to +\infty ~\mathrm{when}~ |x|\to +\infty
		\end{equation}
		and with a short range scaling
		\begin{equation}\label{equation_premiere_condition_w}
			w_N = N^{d\beta}w(N^{\beta}\cdot),~~~~~0<\beta<1/d,~~~~~w \in L^{\infty}(\R^d)\cap L^1(\R^d),~~~~~w\geq 0.
		\end{equation}
		As explained briefly in the previous section, we choose the semi-classical parameter $\hbar$ to be
		\begin{equation}
			\hbar = N^{-1/d}.
		\end{equation}
		The ground-state energy $E(N)$ is defined by
		\begin{equation}\label{equation_definition_e(n)}
			E(N) = \inf_{\Psi_N\in \mathcal{H}_N}\frac{\langle \Psi_N| H_N \Psi_N\rangle}{\langle \Psi_N|\Psi_N\rangle}.
		\end{equation}
	\end{definition}
	
	In the large number of particles limit, one expects the ground-state energy to be approximately given by the minimum of a semi-classical functional. 
	
	\begin{definition}[Vlasov and Thomas-Fermi energies]\label{definition_energies_vlasov_thomas_fermi}
		We define the Vlasov energy of a positive integrable function $m$ on the phase space $\R^d \times \R^d$ by
		\begin{equation}\label{equation_definition_vlasov_energy_functional}
			\mathcal{E}^{\mathrm{V}}[m] = \frac{1}{(2\pi)^d}\iint_{\R^d \times \R^d} \big(|p|^2 + V(x)\big) m(x, p)\d x\d p - I_w \int_{\R^d} \rho_m(x)^2\d x 
		\end{equation}
		with
		\begin{equation}
			I_w = \frac{1}{2}\int w
		\end{equation}
		and \begin{equation}\label{equation_definition_rhom}
			\rho_m(x) = \frac{1}{(2\pi)^d}\int_{\R^d} m(x, p)\d p.
		\end{equation}
		We define the Thomas-Fermi functional by
		\begin{equation}
			\mathcal{E}^{\mathrm{TF}}[\rho] = c_{\mathrm{TF}}\int_{\R^d} \rho^{1+2/d} + \int_{\R^d} V\rho - I_w\int_{\R^d} \rho^2
		\end{equation}
		with 
		\begin{equation}\label{equation_definition_constante_thomas_fermi}
			c_{\mathrm{TF}} = \begin{cases}
				\pi^2 & \mathrm{if}~~ d = 1 \\
				8\pi &\mathrm{if}~~ d=2,
			\end{cases}
		\end{equation}
		and the Thomas-Fermi energy by
		\begin{equation}\label{equation_definition_energie_thomas_fermi}
			E^{\mathrm{TF}} = \inf \left\{\mathcal{E}^{\mathrm{TF}}[\rho],~\rho\geq 0,~\int_{\R^d} \rho =1\right\}.
		\end{equation}
	\end{definition}
	\begin{remark}[Constraints on the dimension]\label{remark_constraint_dimension}
		When $d\geq 3$, we have 
		\begin{equation}
			E^{\mathrm{TF}} = - \infty.
		\end{equation}
		On the contrary, we have
		\begin{equation}
			E^{\mathrm{TF}} > -\infty
		\end{equation}
		for $d = 1$, and for $d = 2$ as long as 
		\begin{equation}\label{equation_lien_c_I}
			c_{\mathrm{TF}} \geq I_w.
		\end{equation}
		A proof is given in Appendix \ref{subsection_appendice_bonus}.
		\hfill $\diamond$
	\end{remark}
	
	\begin{remark}[Link between Vlasov and Thomas-Fermi energies]\label{remark_link_vlasov_thomas_fermi}
		By the bath-tube principle \cite[Theorem 1.14]{lieb_analysis_1997}, clearly, we have
		\begin{equation}\label{equation_egalite_energie_thomas_fermi_vlasov}
			E^{\mathrm{TF}} = \inf \left\{\mathcal{E}^{\mathrm{V}}[m],~ 0\leq m \leq 1,~\iint_{\R^d\times \R^d} m = (2\pi)^d\right\},
		\end{equation}
		with minimizers of the form
		\begin{equation}
			m_{\rho}(x, p) = \mathds{1}\Big(|p| \leq c_d \rho(x)^{1/d}\Big)
		\end{equation}
		with
		\begin{equation}
			c_d =
			\begin{cases}
				\pi & \mathrm{if}~~ d=1\\
				\sqrt{4\pi} & \mathrm{if}~~ d=2.
			\end{cases}
		\end{equation}
		\hfill $\diamond$
	\end{remark}
	Let us now introduce the assumptions we will require in Theorems \ref{theorem1} and \ref{theorem2}.
	\begin{assumption}[The trapping potential $V$]\label{assumption_V} For a given $s>1$
		\begin{equation}\label{hypothese_V1}
			\begin{cases}
				V, \nabla V \in L^{\infty}_{\mathrm{loc}}\\
				V(x) \geq C |x|^s - c\\
				\nabla V(x) \leq C \big(|x|^{s-1} + 1\big).
			\end{cases}
		\end{equation} 
	\end{assumption}
	\begin{assumption}\label{assumption_V2}
		When $d=1$, we request that the level sets of $V$ have zero Lebesgue measure.\footnote{We use this to prove the existence of minimizers of the Thomas-Fermi energy, see Appendix \ref{section_appendixe_1}.}
	\end{assumption}
	\begin{assumption}[The interaction potential $w$]\label{assumption_w}
		\begin{equation}
			\begin{cases}
				w\geq 0\\
				w \in L^1(\R^d)\cap L^{\infty}(\R^d)\\
				\nabla w \in L^{1}(\R^d)\cap L^{\infty}(\R^d),
			\end{cases}
		\end{equation}
		and in the case $d=2$:
		\begin{equation}\label{hypothese_w}
			I_w = \frac{1}{2}\int_{\R^d} w < c_{\mathrm{TF}}.
		\end{equation}
	\end{assumption}
	Now, we can state our first theorem.
	\begin{theorem}[Ground-state energy]\label{theorem1}
		Let $d=1$ or $2$, and $0 < \beta < \frac{1}{d(d + 1)}$. If $V$ and $w$ satisfy Assumptions \ref{assumption_V} and \ref{assumption_w}, we have
		\begin{equation}
			E(N) = N E^{\mathrm{TF}} + o(N)
		\end{equation}
		with the notation introduced in Definitions \ref{definition_hamiltonien} and \ref{definition_energies_vlasov_thomas_fermi}.
	\end{theorem}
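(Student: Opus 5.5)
The proof splits into an upper bound via a trial state and a lower bound that handles the attractive interaction with coherent states. In both directions the Vlasov reformulation \eqref{equation_egalite_energie_thomas_fermi_vlasov} of Remark \ref{remark_link_vlasov_thomas_fermi} is the bridge between the quantum energy and $E^{\mathrm{TF}}$.

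\textbf{Upper bound.} Take a minimizer $\rho_{\mathrm{TF}}$ of \eqref{equation_definition_energie_thomas_fermi}; its existence is in the appendix. Regularize it by truncating and smoothing at a fixed scale, with a cost $o(1)$ in $\mathcal{E}^{\mathrm{TF}}$. Then build a semiclassical one-body projection $\gamma_N$ of rank $N$, for instance the spectral projection $\mathds{1}(-\hbar^2\Delta + V - \mu - 2I_w\rho_{\mathrm{TF}} \le 0)$ adjusted to have rank exactly $N$, or a coherent-state quantization of $m_{\rho_{\mathrm{TF}}}$ followed by projection. Standard semiclassics (Weyl asymptotics, or coherent-state bounds) give
\begin{equation*}
\hbar^d\rho_{\gamma_N}\to\rho_{\mathrm{TF}}, \qquad \tr\big((-\hbar^2\Delta+V)\gamma_N\big)=N\Big(c_{\mathrm{TF}}\int\rho_{\mathrm{TF}}^{1+2/d}+\int V\rho_{\mathrm{TF}}\Big)+o(N).
\end{equation*}
For the Slater determinant $\Psi_N$ of $\gamma_N$, the interaction energy equals $-\frac{1}{2N}\iint w_N(x-y)\big(\rho(x)\rho(y)-|\gamma(x,y)|^2\big)$. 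The exchange term only helps, since it carries sign $+$. The direct term is $-\frac{N}{2}\int (w_N*\tilde\rho)\tilde\rho$ with $\tilde\rho = N^{-1}\rho_{\gamma_N}$. Because $\beta<1/d$, the range $N^{-\beta}$ shrinks, so this term tends to $-N I_w\int\rho_{\mathrm{TF}}^2$, provided $\tilde\rho\to\rho_{\mathrm{TF}}$ in $L^2$. That convergence follows from a Lieb–Thirring bound $\int\tilde\rho^{1+2/d}\lesssim$ kinetic energy, together with local smoothness of the semiclassical density.

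\textbf{Lower bound.} Write $w_N=\hat{}$-convolution and split the interaction using the one-body density. The natural route is to smear at an intermediate scale. Introduce coherent states $f^\hbar_{x,p}$ with position width $\hbar^{\alpha}$, where $N^{-\beta}\ll\hbar^{\alpha}$ fails for the scales needed, so instead smear $w_N$ itself. Write $-w_N \ge -\chi_\delta*w_N*\chi_\delta - (\text{error})$ for a mollifier at scale $\ell$, using $\nabla w\in L^1$ to control the commutator errors. Then use the lower bound
\begin{equation*}
-\frac{1}{N}\sum_{j<k}w_N(x_j-x_k)\ \ge\ -\frac{1}{2N}\int\!\!\int w_N(x-y)\rho^{(2)}_{\Psi}(x,y)
\end{equation*}
and control $\rho^{(2)}$ by $\rho^{(1)}$ through the fermionic bound $\rho^{(2)}\le$ (product structure) in a local sense. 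Concretely, localize space into cubes of side $L$ with $N^{-\beta}\ll L\ll 1$ and drop inter-cube interaction errors, which are boundary terms of relative size $N^{-\beta}/L$. Inside each cube, the attraction is bounded below by $-\frac{1}{2N}\|w_N\|_{\infty}$-type bounds that are too crude. The better option is to bound it by $-\frac{\int w}{2N}\,n_Q^2/|Q|$ plus a kinetic-energy error obtained from a local Lieb–Thirring inequality, $\int_Q (w_N*\rho)\rho \le \|w\|_1\int_Q\rho^2+\ldots$, where the two-body density is handled via $\rho^{(2)}(x,y)\le$ its smoothed version.

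The cleanest way to carry this out is to pass to the Husimi measure $m_N$ of $\Psi_N$. This measure satisfies $0\le m_N\le 1$ and $\iint m_N=(2\pi)^d$. The kinetic and potential energies are bounded below by those of $m_N$ up to $O(N\hbar^{2-2\alpha}+N\hbar^{\alpha})$. The two-body Husimi function $m^{(2)}_N$ yields the interaction as $-\frac{N}{2}\iint w_N(x-y)\rho^{(2)}_{m}(x,y)+\text{errors}$, with the smearing error controlled by $\|\nabla w_N\|_{1}\hbar^{\alpha}=N^{\beta}\hbar^{\alpha}$. One then needs a two-body bound $\rho^{(2)}_m\lesssim$ a density with $L^{1+2/d}$ control. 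Via de Finetti/Fournais–Lewin–Solovej, the two-body semiclassical measure is a superposition of products $m^{\otimes 2}$, and the local limit turns $\iint w_N\rho_m\otimes\rho_m$ into $2I_w\int\rho_m^2$. Finally, \eqref{equation_egalite_energie_thomas_fermi_vlasov} gives $\ge E^{\mathrm{TF}}$; in $d=2$, \eqref{hypothese_w} is needed for compactness and tightness, which also relies on $V\ge C|x|^s-c$.

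\textbf{Main obstacle.} The hard step is justifying the interaction limit in the lower bound, where the two scale constraints have to be compatible. The smearing error $N^{\beta}\hbar^{\alpha}$ must vanish, which forces $\alpha>d\beta$. The kinetic error $\hbar^{2-2\alpha}$, combined with the fact that the attraction cannot simply be discarded, requires an a priori bound. That bound would be $\int\rho^{2}\lesssim$ kinetic energy via Lieb–Thirring, together with the weak convergence of $N^{-2}\rho^{(2)}$ under the shrinking kernel, which is not continuous in weak topologies. I would try to obtain a uniform $L^{1+2/d}$ bound on smoothed two-body densities at scale $\hbar^{\alpha}$; optimizing the exponents should produce the restriction $\beta<1/(d(d+1))$. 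The de Finetti step then upgrades the weak limit to the product form, with the local density approximation error controlled by the same regularity.
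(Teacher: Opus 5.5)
\textbf{Gap in the lower bound.} The proof breaks at the step where you invoke de Finetti / Fournais--Lewin--Solovej: you write the two-body Husimi measure as a superposition of products and let ``the local limit'' turn $\iint w_N\,\rho_m\otimes\rho_m$ into $2I_w\int\rho_m^2$. Hewitt--Savage type results only identify the \emph{weak limit} of $m^{(2)}_{\Psi_N}$. But $w_N$ concentrates to $2I_w\delta_0$, so $\iint w_N(x-y)\,\d m^{(2)}_{\Psi_N}$ cannot be passed to that limit. You notice this in your last paragraph, but you only say you ``would try'' to get a uniform $L^{1+2/d}$ bound on smoothed two-body densities.

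That bound is the whole difficulty. Because the interaction is attractive, you need an \emph{upper} bound on $\frac1N\int w_N(x-y)\rho^{(2)}_{\Psi_N}(x,y)\,\d x\,\d y$ at finite $N$, asymptotically $N I_w\int\rho^2$. In other words, you must exclude clustering at scale $N^{-\beta}$. The available a priori information does not give this: the conditional Lieb--Thirring bounds of Lemma~\ref{lemma_apriori} only yield $\frac1N\int w_N\rho^{(2)}_{\Psi_N}\lesssim N^{1+3d^2\beta/(2(d+2))}$. Your box-localization variant has the same problem, since $-\frac{\int w}{2N}n_Q^2/|Q|$ is not a lower bound inside a cube unless clustering is already ruled out.

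Two further steps are unjustified.
\begin{itemize}
\item The smearing error is $\frac1N\int v_N\rho^{(2)}_{\Psi_N}$ with $v_N=w_N-w_N*|f^\hbar|^2*|f^\hbar|^2$. The quantity $\|\nabla w_N\|_{L^1}\hbar^\alpha$ does not bound it without $L^\infty$-type control on $\rho^{(2)}$.
\item The potential-energy error needs an a priori bound on $\int V\rho$, which is not $O(N)$ here because the attraction cannot be thrown away.
\end{itemize}
Finally, the restriction $\beta<1/(d(d+1))$ is asserted rather than derived.

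\textbf{What the paper does instead.} It uses a \emph{quantitative} de Finetti argument at fixed $N$.
\begin{itemize}
\item The a priori bounds of Lemma~\ref{lemma_apriori} give kinetic energy $O(N^{1+d\beta/2})$ and the $L^{1+d/2}$ estimate \eqref{equation_borne_apriori2}. The latter is what controls $v_N$, and it forces $\hbar_x\ll N^{-(2d+1)\beta}$.
\item The Diaconis--Freedman theorem with Lemma~\ref{lemma_first_marginals} gives $m_N^{(2)}=\int\mu^{\otimes 2}\,\d P_N^{\mathrm{DF}}(\mu)$, up to a diagonal term costing only $O(\|w_N\|_{L^\infty}/N)=O(N^{d\beta-1})$.
\item Since $P_N^{\mathrm{DF}}$ charges empirical measures, these are averaged over phase-space cells of volume $N^{-\gamma}$, $\gamma<1$. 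Proposition~\ref{proposition_proba_violating_pauli_principle} shows that $\bar\mu\le(1+\varepsilon)(2\pi)^{-d}$ outside an event of stretched-exponentially small probability.
\item For such $\bar\mu$, Young's inequality $\int(w_N*\rho)\rho\le\|w_N\|_{L^1}\|\rho\|_{L^2}^2$ and the bathtub principle give $\mathcal{E}_N[\bar\mu]\ge\mathcal{E}_\infty[\bar\mu]\ge E^{\mathrm{TF}}-C(\varepsilon+\eta)$ directly. No limit is ever taken in the interaction.
\item The averaging error $l_x\|\nabla w_N\|_{L^\infty}=l_xN^{(d+1)\beta}$, combined with $l_x^d l_p^d=N^{-\gamma}$ and $\gamma<1$, is exactly where $\beta<1/(d(d+1))$ comes from.
\end{itemize}

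\textbf{Your upper bound.} It is a different but repairable route, with three problems.
\begin{itemize}
\item In the attractive case the Slater exchange term $+\frac{1}{2N}\iint w_N(x-y)|\gamma_N(x,y)|^2$ is nonnegative, so it works \emph{against} an upper bound rather than helping. It is harmless only because it is at most $\frac12\|w\|_{L^\infty}N^{d\beta}=o(N)$.
\item The direct term needs \emph{strong} $L^2$ convergence of $N^{-1}\rho_{\gamma_N}$, because $\widehat{w}$ may change sign. Lieb--Thirring does not provide this.
\item An exact Thomas--Fermi minimizer in $d=1$ requires Assumption~\ref{assumption_V2}, which Theorem~\ref{theorem1} does not assume.
\end{itemize}
The paper sidesteps all three with Lieb's variational principle (Proposition~\ref{proposition_lieb_variational}), applied to the coherent-state operator $\gamma^\hbar$ of an approximate Vlasov minimizer. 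Its density $\rho_m*|f^\hbar|^2$ converges strongly.
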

	\begin{remark}[Constraint on $\beta$]
		We cannot hope to have a similar result for $\beta > 1/d$, since in this case, the interactions between particles are very rare due to the range being shorter than the typical distance between particles, and the Pauli principle; hence we have to impose $\beta < 1/d$. In Theorem \ref{theorem1}, we require a stronger constraint, namely
		\begin{equation}\label{equation_constraint_beta_rk1}
			\beta < \frac{1}{d(d +1)}.
		\end{equation}
		Note that the upper bound remains true for all $\beta < 1/d$. For the semi-classical approximation of the energy used in the lower bound, we need to impose a stricter assumption on $\beta$. Because of the negative part of the Hamiltonian, the a priori estimates of Lemma \ref{lemma_apriori} do not allow us to cover the case of every $\beta < 1/d$. Note however that our constraint 
		\begin{equation}
			\beta < \frac{2}{d(2d+1)}
		\end{equation}
		is better than the naive $\beta < \frac{1}{d(d+1)}$ that we would obtain, should we bound the left-hand side of \eqref{equation_borne_apriori2} using only the $L^{\infty}$ norm of $w_N$. Nonetheless, later on, namely in Lemma \ref{lemma_averaging}, we have to impose further restrictions on $\beta$, whence the assumption \eqref{equation_constraint_beta_rk1}.
		~\hfill $\diamond$
	\end{remark}
	\begin{remark}[Assumptions on $w$ and $V$] 
		In the proof, we will specify which lemmas require Assumption \ref{assumption_V} and/or \ref{assumption_w}, so that the reader may track where it is needed. We will assume that $V$ and $w$ always satisfy \eqref{equation_premiere_condition_V} and \eqref{equation_premiere_condition_w}.
		\hfill $\diamond$
	\end{remark}

	\subsection{Convergence of Husimi functions}
	In order to state our second theorem, we introduce a few definitions. A reader already familiar with this notation and these usual properties might want to go directly to Theorem \ref{theorem2}.
	\begin{definition}[Space and momentum density]\label{definition_space_momentum_density}
		For 
		\begin{equation}
			\gamma = \sum \alpha_n|u_n\rangle\langle u_n| 
		\end{equation}
		a positive trace class operator, we define the associated space and momentum densities by
		\begin{equation}\label{equation_definition_tgamma}
			\begin{cases}
				\rho_{\gamma} = \sum \alpha_n |u_n|^2\\
				t_{\gamma} = \sum \alpha_n \big|\mathcal{F}_{\hbar}[u_n]\big|^2,
			\end{cases}
		\end{equation}
		with $\mathcal{F}_{\hbar}[g]$ the semi-classical Fourier transform of $g$
		\begin{equation}
			\mathcal{F}_{\hbar}[g](p) = (2\pi \hbar_x)^{-d/2} \int_{\R^d} g(p) e^{- ip.x/\hbar}\d x.
		\end{equation}
	\end{definition}
	Note in particular that for a given $\gamma$ as in the previous definition, we have
	\begin{equation}
		\int_{\R^d} \rho_{\gamma}(x)\d x= \int_{\R^d} t_{\gamma}(p) \d p = \tr(\gamma).
	\end{equation}
	\begin{definition}[Reduced density matrices]
		For $\Psi_N \in \Hcal_N$ and $k\leq N$, we define the $k$-particles reduced density matrix of $\Psi_N$ by
		\begin{equation}\label{equation_definition_matrice_densite_reduite}
			\gamma_{\Psi_N}^{(k)} = \frac{N !}{(N - k) !}\tr_{k+1\to N}\Big(|\Psi_N\rangle\langle\Psi_N|\Big).
		\end{equation}
		Moreover, we define the $k$-particles density of $\Psi_N$ by
		\begin{equation}
			\rho_{\Psi_N}^{(k)}(x_1,...,x_k) = \binom{N}{k}\int_{\R^{d(N-k)}} \big|\Psi_N(x_1, ... x_N)\big|^2 \d x_{k+1}...\d x_N
		\end{equation}
	\end{definition}
	Note that 
	\begin{equation}
		\rho_{\Psi_N}^{(1)} = \rho_{\gamma_{\Psi_N}^{(1)}}.
	\end{equation}
	\begin{definition}[Coherent state]\label{definition_coherent_state}
		Let $\hbar_x$ and $\hbar_p$ such that $\hbar_x\hbar_p = \hbar^2$ and $\hbar_x, \hbar_p\ll 1$. We fix $f\in C_c^{\infty}(\R^d)$ a real positive and radial function with a $L^2$ norm of 1. For $x, p \in \R^d$, we set 
		\begin{equation}\label{equation_definition_coherent_state_space}
			\begin{cases}
				f^{\hbar}(x) = \hbar_x^{-d/4}f(x/\sqrt{\hbar_x})\\
				g^{\hbar}(p) = \hbar_p^{-d/4}\widehat{f}(p/\sqrt{\hbar_p}),
			\end{cases}
		\end{equation}
		with $\widehat{f}$ the Fourier transform
		\begin{equation}
			\widehat{f}(p) = (2\pi)^{-d/2}\int_{\R^d} f(p)e^{-ip.x}\d x.
		\end{equation}
		For $x, p\in \R^d$, we denote by $f_{x, p}^{\hbar}$ the squeezed coherent state, defined by
		\begin{equation}
			\forall y\in \R^d, ~ f_{x, p}^{\hbar}(y) = \hbar_x^{-d/4}f\left(\frac{y-x}{\sqrt{\hbar_x}}\right)e^{ip.y/\hbar}.
		\end{equation}
		It is localized on a scale $\sqrt{\hbar_x}$ in space, and $\sqrt{\hbar_p}$ in momentum, around the point $(x, p)$ of the phase space. 
	\end{definition}
	\begin{property}[Resolution of the identity]\label{property1}
		We have
		\begin{equation}
			\iint_{\R^d\times\R^d} |f_{x, p}^{\hbar}\rangle\langle f_{x, p}^{\hbar}|\d x\d p = (2\pi\hbar)^d = \frac{(2\pi)^d}{N}.
		\end{equation}
	\end{property}
	\begin{definition}[Husimi functions]\label{definition_husimi}
		For $k\leq N$ and $\Psi_N \in \Hcal_N$, we define the $k$-particles Husimi function of $\Psi_N$ by
		\begin{equation}
			m_{\Psi_N}^{(k)}(x_1,...x_k;p_1,...,p_k) = \Big\langle f_{x_1, p_1}^{\hbar}\otimes...\otimes f_{x_k, p_k}^{\hbar}\Big|\gamma_{\Psi_N}^{(k)} f_{x_1, p_1}^{\hbar}\otimes...\otimes f_{x_k, p_k}^{\hbar}\Big\rangle.
		\end{equation}
	\end{definition}
	\begin{properties}[Husimi functions]\label{property3}~\\
		For $\Psi_N \in \mathcal{H}_N$ and $k\leq N$, we have
		\begin{equation}
			\frac{1}{(2\pi)^{dk}}\iint_{\R^{dk}\times \R^{dk}} m_{\Psi_N}^{(k)}(x_1,...x_k;p_1,...p_k)\d x_1...\d x_k \d p_1... \d p_k = \tr\big(\gamma_{\Psi_N}^{(k)}\big) = \frac{N !}{(N-k) !}.
		\end{equation}
		Moreover,
		\begin{equation}\label{equation_lien_m_rho}
			\frac{N^k}{(2\pi)^{dk}}\int_{\R^{dk}} m_{\Psi_N}^{(k)}(\cdot;p_1,... p_k)\d p_1...\d p_k = k! \rho_{\Psi_N}^{(k)}*\big(|f^{\hbar}|^2\big)^{\otimes k}
		\end{equation}
		and
		\begin{equation}\label{equation_lien_m_t}
			\frac{N}{(2\pi)^d}\int_{\R^d} m_{\Psi_N}^{(1)}(x, \cdot)\d x = t_{\gamma^{(1)}_{\Psi_N}}*|g^{\hbar}|^2.
		\end{equation}
	\end{properties}
	Properties \ref{property1} and \ref{property3} are standard results, one can find proofs in Section 2.1 of \cite{fournais_semi-classical_2018}. Now, we can state our second theorem.
	\begin{theorem}[Convergence of ground states]\label{theorem2}
		Let $d = 1$ or $2$, $0 < \beta < \frac{1}{d(d + 1)}$. We use the notation introduced in Definitions \ref{definition_hamiltonien}, \ref{definition_energies_vlasov_thomas_fermi} and \ref{definition_husimi}. Let $V$ and $w$ satisfy Assumptions \ref{assumption_V}, \ref{assumption_V2}\footnote{Only when $d=1$.} and \ref{assumption_w}. Let $(\Psi_N)$ be a minimizing sequence of $H_N$ in the sense that
		\begin{equation}
			\langle \Psi_N |H_N \Psi_N\rangle = 1,~~~~~ \langle \Psi_N|H_N\Psi_N\rangle = E(N) + o(N)
		\end{equation} 
		Then there exists a probability measure $P$ on $\mathcal{P}(\R^{2d})$ such that we have the following weak convergence up to extraction
		\begin{equation}
			(2\pi)^{-dk} m_{\Psi_N}^{(k)} \rightharpoonup \int_{\mathcal{P}(\R^d)} \sigma^{\otimes k} \d P(\sigma),
		\end{equation}
		weakly as measures. The probability measure $P$ is concentrated on measures $\sigma$ that satisfy
		\begin{equation}\label{equation_principe_pauli_df}
			\sigma \leq (2\pi)^{-d}.
		\end{equation}
		Moreover, $P$-almost surely, $(2\pi)^d\sigma$ minimizes the Vlasov energy $\mathcal{E}^{\mathrm{V}}$ under the constraints stated in \eqref{equation_egalite_energie_thomas_fermi_vlasov}.
	\end{theorem}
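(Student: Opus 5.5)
The proof follows the de Finetti strategy of \cite{fournais_semi-classical_2018}, adapted to the attractive short-range interaction, and reuses the lower-bound machinery behind Theorem \ref{theorem1}.

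\textbf{Step 1: tightness and the Pauli bound.} From Theorem \ref{theorem1} and the a priori estimates (Lemma \ref{lemma_apriori}), a minimizing sequence satisfies
\begin{equation*}
\tr\big((-\hbar^2\Delta + V)\gamma^{(1)}_{\Psi_N}\big) \leq CN .
\end{equation*}
By Assumption \ref{assumption_V}, this gives moment bounds on $(2\pi)^{-d}m^{(1)}_{\Psi_N}$ in both $x$ and $p$. Since $m^{(k)}$ is symmetric and consistent in $k$ (up to factors $N!/((N-k)!N^k)\to 1$), the Husimi functions are tight on $\R^{2dk}$. The fermionic bound $\gamma^{(1)}\leq 1$ gives $0\leq m^{(1)}_{\Psi_N}\leq 1$. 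More generally, $m^{(k)}$ is bounded pointwise by products in each variable, in the sense $m^{(k)}\leq m^{(k-1)}$ in the last variable. Extracting a subsequence and applying the Hewitt--Savage / diagonal de Finetti theorem, as in \cite[Thm.~2.6]{fournais_semi-classical_2018}, yields a probability measure $P$ with
\begin{equation*}
(2\pi)^{-dk}m^{(k)}\rightharpoonup\int\sigma^{\otimes k}\,\d P(\sigma).
\end{equation*}
Tightness ensures $P$ is supported on probability measures. The bound $m^{(2)}\leq m^{(1)}\otimes 1$ passes to the limit, which gives $\int \sigma\otimes\sigma\,\d P\leq(2\pi)^{-d}\int\sigma\otimes 1\,\d P$. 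Testing with products $\chi\otimes\chi$ and using a Cauchy--Schwarz-type argument as in the cited reference yields $\sigma\leq(2\pi)^{-d}$ $P$-a.s.

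\textbf{Step 2: lower bound on the energy in terms of $P$.} This is the main step and the main obstacle. The kinetic and potential parts are standard. Writing them through $m^{(1)}$, with an $o(N)$ error controlled by the coherent-state scales $\hbar_x,\hbar_p\to 0$ and by $\nabla V$, and using weak lower semicontinuity, we get
\begin{equation*}
\liminf N^{-1}\tr\big((-\hbar^2\Delta+V)\gamma^{(1)}\big)\geq\int\iint(|p|^2+V)\,\d\sigma\,\d P .
\end{equation*}

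The interaction term $-\frac{1}{2N^2}\int w_N(x-y)\rho^{(2)}_{\Psi_N}$ has a bad sign, so lower semicontinuity is unavailable. We need instead the \emph{convergence}
\begin{equation*}
N^{-2}\int w_N(x-y)\rho^{(2)}\to\int w\cdot\int\!\Big(\int\rho_\sigma^2\Big)\d P(\sigma).
\end{equation*}
To obtain it, I would use the averaging lemma (Lemma \ref{lemma_averaging}), which is precisely where $\beta<1/(d(d+1))$ enters. It replaces $\rho^{(2)}$ paired with $w_N$ by the smeared density $\rho^{(2)}*(|f^\hbar|^2)^{\otimes 2}$ paired with a kernel of width $\sqrt{\hbar_x}\gg N^{-\beta}$, at a cost $o(N)$. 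Via \eqref{equation_lien_m_rho} this becomes a functional of $m^{(2)}$ with a smooth, mollified kernel $w_\eta$.

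The remaining issue is the diagonal concentration of $w_\eta\to(\int w)\delta$. I would pass $N\to\infty$ at fixed $\eta$ using the weak convergence of $m^{(2)}$, which requires uniform integrability. That follows from the bound $\rho^{(2)}\leq\rho^{(1)}\otimes\rho^{(1)}$-type Lieb--Thirring/$L^{1+2/d}$ control of $\rho^{(1)}$, available because $\sigma\leq(2\pi)^{-d}$ and the kinetic bound give $\rho_\sigma\in L^{1+2/d}$. Then letting $\eta\to 0$ and using dominated convergence in $P$ gives the limit above.

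\textbf{Step 3: identifying the minimizers.} Combining Steps 1 and 2,
\begin{equation*}
E^{\mathrm{TF}}=\lim E(N)/N\geq\int\mathcal{E}^{\mathrm V}\big[(2\pi)^d\sigma\big]\,\d P(\sigma).
\end{equation*}
Each $(2\pi)^d\sigma$ satisfies $0\leq(2\pi)^d\sigma\leq 1$ and has total mass $(2\pi)^d$, so Remark \ref{remark_link_vlasov_thomas_fermi} gives $\mathcal{E}^{\mathrm V}\big[(2\pi)^d\sigma\big]\geq E^{\mathrm{TF}}$. Since the average cannot exceed $E^{\mathrm{TF}}$ while each term is at least $E^{\mathrm{TF}}$, we get equality $P$-almost surely. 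Hence $P$-a.s. $(2\pi)^d\sigma$ is a Vlasov minimizer, which completes the proof.
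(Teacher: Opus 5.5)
There is a genuine gap in your Step 2, which is the heart of the proof. Because of the sign, you need $\limsup_N N^{-2}\int w_N(x-y)\rho^{(2)}_{\Psi_N}(x,y)\,\d x\,\d y\le I_w\int\big(\int_{\R^d}\rho_{(2\pi)^d\sigma}^2\big)\d P(\sigma)$. (With the paper's normalization of $\rho^{(2)}$ the constant is $I_w=\frac12\int w$.) Nothing you list gives this bound.

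\begin{itemize}
\item No bound $\rho^{(2)}\lesssim\rho^{(1)}\otimes\rho^{(1)}$ holds for general fermionic states; it holds for quasi-free states, not in general. An attractive $w_N$ favors exactly an enhanced pair density at distance $N^{-\beta}$. The available a priori bound \eqref{equation_borne_apriori2} with $v=w_N$ is not even $O(N)$.
\item More fundamentally, even with uniform integrability, replacing $w_N$ by a fixed-width kernel $w_\eta$ uniformly in $N$ requires the pair density to be essentially flat on all scales between $N^{-\beta}$ and $\eta$. Weak convergence of $m^{(2)}$ says nothing about this.
\item In $d=1$ your claim is in fact false at the generality of your argument, which never uses Assumption \ref{assumption_V2}. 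Take a suitable $V$ with a flat piece. Build trial states as in Section \ref{section_upper_bound} from Thomas--Fermi minimizers that oscillate there between $0$ and $\rho_\alpha=I_w/(2c_{\mathrm{TF}})$ at a scale $N^{-\beta}\ll\ell_N\to0$. These states have energy $NE^{\mathrm{TF}}+o(N)$ and bounded densities. Yet their interaction per particle tends to $I_w\lim\int\rho_N^2>I_w\int\rho_\infty^2$, and the weak limit $\rho_\infty$ is not a minimizer.
\item Your scales are also inverted. Lemma \ref{lemma_semi_classical_approximation_energy} needs coherent states much narrower than the interaction range, $\sqrt{\hbar_x}\ll N^{-(d+1/2)\beta}$. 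Lemma \ref{lemma_averaging} is not a mollification of the kernel.
\end{itemize}

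The paper never proves convergence of the interaction; it bounds the energy from below configuration by configuration.
\begin{itemize}
\item By Lemma \ref{lemma_energy_involving_diaconis_freedman}, $N^{-1}\langle\Psi_N|H_N\Psi_N\rangle=\int\mathcal{E}_N[\mu]\,\d P_N^{\mathrm{DF}}(\mu)+o(1)$, so all short-range correlations are carried by the empirical measures $\mu$.
\item Each $\mu$ is averaged over phase-space tiles with $l_x\ll N^{-(d+1)\beta}$, below the interaction range, at negligible cost (Lemma \ref{lemma_averaging}). Making this compatible with $l_x^dl_p^d=N^{-\gamma}\gg N^{-1}$ and $l_p\ll1$ is exactly where $\beta<1/(d(d+1))$ enters.
\item By Proposition \ref{proposition_proba_violating_pauli_principle}, $\overline{\mu}\le(1+\varepsilon)(2\pi)^{-d}$ outside a set of negligible $P_N^{\mathrm{DF}}$-probability. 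For such $\overline{\mu}$, Young's inequality gives $\frac12\iint w_N\,\d\overline{\mu}^{\otimes 2}\le I_w\int\rho_{(2\pi)^d\overline{\mu}}^2$ whatever the correlations. The bathtub inequality then lets the kinetic energy absorb this term.
\item What remains is weakly lower semicontinuous in $\overline{\mu}$. For $d=2$ this holds because $I_w<c_{\mathrm{TF}}$. For $d=1$ it holds only after the convex relaxation $\mathcal{J}_\eta$, and Assumption \ref{assumption_V2} enters through Corollary \ref{corollary_same_minimizers}.
\item Fatou's lemma along $Q_N\rightharpoonup P^{\mathrm{DF}}$ (Lemma \ref{lemma_fatou_bizarre}) and equality in the bathtub bound then identify the minimizers.
\end{itemize}

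Two smaller points in your Step 1:
\begin{itemize}
\item The bound $\tr\big((-\hbar^2\Delta+V)\gamma^{(1)}_{\Psi_N}\big)\le CN$ does not follow from Theorem \ref{theorem1} and Lemma \ref{lemma_apriori}, which only give $O(N^{1+\beta d/2})$. As in Lemma \ref{lemma_m_tight}, you must apply the energy lower bound with $w$ replaced by $(1+\varepsilon)w$ when $d=2$ (using the strict inequality $I_w<c_{\mathrm{TF}}$), or by $2w$ when $d=1$.
\item The Pauli bound $\sigma\le(2\pi)^{-d}$ does not follow from the $k=2$ inequality and Cauchy--Schwarz alone. You need the bounds $m^{(k)}\le m^{(k-1)}\otimes 1$ for every $k$.
\end{itemize}
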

	\begin{remark}[Case $d=2$]
		When $d=2$, the Thomas-Fermi energy functional (and hence the Vlasov functional too) admits exactly one minimizer. Indeed, as $\mathcal{E}^{\mathrm{TF}}$ is strictly convex, there can be at most one minimizer. We recall that for $d=2$,
		\begin{equation}
			\mathcal{E}^{\mathrm{TF}}[\rho] = (c_{\mathrm{TF}}-I_w)\int_{\R^2}\rho^2 + \int_{\R^2}V\rho.
		\end{equation}
		Since $c_{\mathrm{TF}}-I_w>0$, the first part of the energy is weakly lower semi-continuous in $L^2$, and by a tightness argument, we have conservation of the mass. Therefore, by standard arguments, there exists a positive minimizer of $\mathcal{E}^{\mathrm{TF}}$ with integral $1$. By Remark \ref{remark_link_vlasov_thomas_fermi}, the Vlasov energy $\mathcal{E}^{\mathrm{V}}$ admits exactly one minimizer, of the form
		\begin{equation}
			m(x, p) = \mathds{1}_{|p|\leq \sqrt{4\pi\rho(x)}}
		\end{equation}
		where $\rho$ is the minimizer of $\mathcal{E}^{\mathrm{TF}}$, that satisfies
		\begin{equation}
			\rho = \frac{(\lambda - V)_+}{2(c_{\mathrm{TF}} - I_w)},
		\end{equation}
		with $\lambda$ a constant. In particular, $\rho$ is continuous.
		\hfill $\diamond$
	\end{remark}
	The existence of minimizers is less obvious in 1D, for the functional is not weakly lower semi-continuous. We state the result here, and give a proof in the appendix.
	\begin{theorem}[Minimizers of the Thomas-Fermi energy when $d=1$]\label{theorem3}
		Let $d=1$. The minimization problem \eqref{equation_definition_energie_thomas_fermi} admits at least one solution. We denote by $\mathcal{M}^{\mathrm{TF}}$ the sets of such minimizers. Then, for all $\rho \in \mathcal{M}^{\mathrm{TF}}$, $\rho$ is discontinuous and satisfies
		\begin{equation}
			\rho \geq  \frac{I_w}{2c_{\mathrm{TF}}}
		\end{equation}
		on the interior of $\supp(\rho)$.
		Furthermore, the minimization problem \eqref{equation_egalite_energie_thomas_fermi_vlasov} admits at least one solution too, and the solutions are of the form
		\begin{equation}
			m_{\rho}(x, p) = \mathds{1}_{|p| \leq \pi \rho(x)},
		\end{equation}
		with $\rho\in \mathcal{M}^{\mathrm{TF}}$.
	\end{theorem}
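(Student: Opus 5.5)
The plan is to rewrite $\mathcal{E}^{\mathrm{TF}}$ in one dimension as
\[
\mathcal{E}^{\mathrm{TF}}[\rho]=\pi^2\int\rho^3+\int V\rho-I_w\int\rho^2 .
\]
For boundedness from below and compactness I use $I_w\rho^2\le \tfrac{\pi^2}{2}\rho^3+C\rho$, so along a minimizing sequence $(\rho_n)$ the quantities $\int\rho_n^3$ and $\int V\rho_n$ stay bounded. Since $V\to+\infty$, the sequence is tight and mass cannot escape. The difficulty is that $-I_w\int\rho^2$ is only weakly upper semicontinuous, so a weak $L^3$ limit is not enough. I need strong $L^2$ compactness, and I will obtain it by a rearrangement reduction.

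The reduction uses the \emph{decreasing rearrangement with respect to $V$}. Set $\varphi(t)=|\{V<t\}|$; by Assumption \ref{assumption_V2} $\varphi$ is continuous, and it is nondecreasing. Given $\rho$, let $\rho^\star$ be the unique (a.e.) function of the form $h(\varphi(V(x)))$, with $h$ nonincreasing on $[0,\infty)$, that is equimeasurable with $\rho$. The integrals $\int\rho^3$, $\int\rho^2$ and $\int\rho$ depend only on the distribution function, so they are unchanged. By the bathtub/Hardy--Littlewood principle, $\int V\rho^\star\le\int V\rho$. Hence the minimizing sequence may be taken of the form $\rho_n=h_n\circ\varphi\circ V$ with $h_n$ nonincreasing, $\int h_n=1$ and $\int h_n^3$ bounded. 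Monotonicity gives $h_n(s)\le 1/s$, so Helly's selection theorem yields a subsequence converging pointwise. Combined with the uniform $L^3$ bound (Vitali) and tightness, this gives $\rho_n\to\rho$ strongly in $L^1\cap L^2$. Fatou's lemma handles the $\rho^3$ and $V\rho$ terms, so $\rho$ is a minimizer with $\int\rho=1$.

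Next I establish the structure of every minimizer $\rho\in\mathcal{M}^{\mathrm{TF}}$, not only the rearranged one. I perform mass-transferring variations: remove mass $\varepsilon$ from a small set $A$ and add it on a small set $B$, replacing $\rho$ there by arbitrary values $s\ge0$. This should show that there exists $\mu$ such that for a.e.\ $x$, $\rho(x)$ is a global minimizer on $[0,\infty)$ of
\[
f_x(s)=\pi^2 s^3-I_w s^2+(V(x)-\mu)s .
\]
This is the step I expect to be the main obstacle: the variations are non-infinitesimal in $s$ but small in measure, so the bookkeeping must be done carefully using Lebesgue points. Once it holds, on $\{\rho>0\}$ we have $f_x'(\rho)=0$ and $f_x(\rho)\le f_x(0)=0$, that is,
\[
3\pi^2\rho^2-2I_w\rho+V-\mu=0,\qquad \pi^2\rho^2-I_w\rho+V-\mu\le 0 .
\]
Subtracting the second relation from the first gives $2\pi^2\rho^2-I_w\rho\ge0$, hence $\rho\ge I_w/(2\pi^2)=I_w/(2c_{\mathrm{TF}})$. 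Since $V\to\infty$, the support is bounded, and it is nonempty because $\int\rho=1$. Therefore $\rho$ jumps from at least $I_w/(2c_{\mathrm{TF}})>0$ to $0$ at the boundary of its support, so it is discontinuous.

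Finally, for the Vlasov problem, I take any admissible $m$ with $0\le m\le1$. The bathtub principle in $p$ at fixed $x$ gives
\[
\frac{1}{2\pi}\int |p|^2 m(x,p)\,\mathrm{d}p\ \ge\ \pi^2\rho_m(x)^3 ,
\]
with equality iff $m(x,\cdot)=\mathds{1}_{|p|\le\pi\rho_m(x)}$. Hence $\mathcal{E}^{\mathrm{V}}[m]\ge\mathcal{E}^{\mathrm{TF}}[\rho_m]$, with equality for $m_\rho$. Combined with \eqref{equation_egalite_energie_thomas_fermi_vlasov}, this shows that the Vlasov minimizers exist (namely $m_\rho$ with $\rho\in\mathcal{M}^{\mathrm{TF}}$). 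It also shows that every Vlasov minimizer is of the form $m_\rho$ with $\rho=\rho_m\in\mathcal{M}^{\mathrm{TF}}$.
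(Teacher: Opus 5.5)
Your proof is correct, but it is organized differently from the paper's. The paper never handles the non-convex functional directly. It replaces $e_\alpha$ by its convex envelope $J=e_\alpha\mathds{1}_{s\ge\rho_\alpha}$ and gets a relaxed minimizer by weak lower semicontinuity. From the Euler--Lagrange equation it reads off that $V$ is constant on $\{0<\rho<\rho_\alpha\}$, and Assumption \ref{assumption_V2} makes that set null. Hence $E^{\mathrm{TF}}_J=E^{\mathrm{TF}}$, and every Thomas--Fermi minimizer, being a relaxed minimizer, inherits the bound.

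You instead obtain strong $L^2$ compactness for the original functional from the rearrangement along level sets of $V$ (this is where you use Assumption \ref{assumption_V2}) together with Helly's theorem. You then get the structure of minimizers from exchange variations.

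The step you flag does close. Take Lebesgue points $x,y$ of $\rho,\rho^2,\rho^3$ ($V$ is continuous) and set $g_x(s)=c_{\mathrm{TF}}s^3-I_ws^2+V(x)s$. The exchange gives $\frac{g_x(s)-g_x(\rho(x))}{s-\rho(x)}\ge\frac{g_y(t)-g_y(\rho(y))}{t-\rho(y)}$ for all $s>\rho(x)$ and $0\le t<\rho(y)$. Any $\mu$ between the supremum of the right-hand sides and the infimum of the left-hand sides is then your multiplier.

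This is the un-relaxed counterpart of the paper's argument: global minimality of $f_x$ forces $\rho(x)$ into the set where $e_\alpha$ equals its convex envelope. It buys a little more:
\begin{itemize}
\item the bound on $\{\rho>0\}$ holds for every minimizer without Assumption \ref{assumption_V2};
\item with that assumption, $\{\rho>0\}$ equals the open set $\{V<\mu+I_w^2/(4c_{\mathrm{TF}})\}$ up to a null set, and there $\rho=\big(I_w+\sqrt{I_w^2-3c_{\mathrm{TF}}(V-\mu)}\big)/(3c_{\mathrm{TF}})$.
\end{itemize}
You should state the second point explicitly. It is what gives the bounded support and the bound on the interior of $\mathrm{supp}(\rho)$ as claimed, rather than only on $\{\rho>0\}$.

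The paper's route pays off later. The relaxed functional and the identity $E^{\mathrm{TF}}_J=E^{\mathrm{TF}}$ are exactly what Section \ref{section_convergence_states} uses to pass to the limit along weakly converging Husimi measures. There a rearrangement is of no help.

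One caveat, shared with Remark \ref{remark_link_vlasov_thomas_fermi}: $\frac{1}{2\pi}\int_{-\pi\rho}^{\pi\rho}p^2\,\mathrm{d}p=\frac{\pi^2}{3}\rho^3$, not $\pi^2\rho^3$. So your last step identifies the Vlasov and Thomas--Fermi problems only if $c_{\mathrm{TF}}=\pi^2/3$; the paper's value $\pi^2$ appears to be a misprint. Nothing else in your argument depends on the value of $c_{\mathrm{TF}}>0$.
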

	\begin{remark}[Uniqueness of the minimizer]
		In this case, we do not expect uniqueness of the minimizer of the Thomas-Fermi energy. Indeed, if $V$ is a double-well potential, the minimizer might only populate one of the two wells. 
		~\hfill $\diamond$
	\end{remark}
	The proof of Theorem \ref{theorem3} is the object of Appendix \ref{section_appendixe_1}, it is inspired by \cite{benguria_von_1979} and \cite[Section 6]{lieb_thomas-fermi_1981}.

	\subsection{The case of non-purely attractive interactions}\label{subsection_non_attractive}
	In this section, we extend the results of Theorems \ref{theorem1} and \ref{theorem2} to the case of a non-purely attractive system. More accurately, the particles may have both attractive and repulsive interactions, with a stability constraint on the repulsive attractions: its potential has positive Fourier transform. Indeed, a potential with with a positive transform of finite mass is stable \cite[Proposition 1.2]{ruelle_superstable_1970}, and this stability assumption on the interaction potential is usually required for classical particle systems (see also \cite[Section 3.2]{ruelle_statistical_1989}). Before we state the result, let us first define the Hamiltonian and associated semi-classical energies.
	
	\begin{definition}[The Hamiltonian and its domain]\label{definition_hamiltonien2}
		We define the Hamiltonian
		\begin{equation}
			H_N = \sum_{j=1}^N (-\hbar^2\Delta_j) + \sum_{j=1}^NV(x_j) + N^{-1}\sum_{j<k} w_N(x_j - x_k)
		\end{equation}
		on $\mathcal{H}_N$ (see Definition \ref{definition_hamiltonien}) with a potential $V$ satisfying \eqref{equation_premiere_condition_V} and with a short range scaling
		\begin{equation}\label{equation_premiere_condition_w2}
			w_N = N^{d\beta}w(N^{\beta}\cdot) = w_N^+ - w_N^- = N^{d\beta}w^+(N^{\beta}\cdot) - N^{d\beta}w^-(N^{\beta}\cdot) ,~~~~~0<\beta<1/d,~~~~~w^+, w^-\geq 0.
		\end{equation}
		The ground-state energy $E(N)$ is still defined by \eqref{equation_definition_e(n)}.
	\end{definition}
	
	\begin{definition}[Vlasov and Thomas-Fermi energies]\label{definition_energies_vlasov_thomas_fermi2}
		We define the Vlasov energy of a positive integrable function $m$ on the phase space $\R^d \times \R^d$ by
		\begin{equation}\label{equation_definition_vlasov_energy_functional2}
			\mathcal{E}^{\mathrm{V}}[m] = \frac{1}{(2\pi)^d}\iint_{\R^d \times \R^d} \big(|p|^2 + V(x)\big) m(x, p)\d x\d p + I_w \int_{\R^d} \rho_m(x)^2\d x 
		\end{equation}
		with
		\begin{equation}
			I_w = I_{w^+} - I_{w^-} = \frac{1}{2}\int w^+ - \frac{1}{2}\int w^- = \frac{1}{2}\int w
		\end{equation}
		and $\rho_m$ defined by \eqref{equation_definition_rhom}. We define the Thomas-Fermi functional by
		\begin{equation}
			\mathcal{E}^{\mathrm{TF}}[\rho] = c_{\mathrm{TF}}\int_{\R^d} \rho^{1+2/d} + \int_{\R^d} V\rho + I_w\int_{\R^d} \rho^2
		\end{equation}
		with $c_{\mathrm{TF}}$ defined by \eqref{equation_definition_constante_thomas_fermi},
		and the Thomas-Fermi energy by
		\begin{equation}\label{equation_definition_energie_thomas_fermi2}
			E^{\mathrm{TF}} = \inf \left\{\mathcal{E}^{\mathrm{TF}}[\rho],~\rho\geq 0,~\int_{\R^d} \rho =1\right\}.
		\end{equation}
	\end{definition}
	Now, let us state the assumptions on the interaction potentials $w^+$ and $w^-$ that we need.
	\begin{assumption}[The interaction potential $w$]\label{assumption_w2}
		\begin{equation}
			\begin{cases}
				w^- \in L^1(\R^d)\cap L^{\infty}(\R^d)\\
				\nabla w^- \in L^{1}(\R^d)\cap L^{\infty}(\R^d)\\
				\widehat{w^+} \geq 0\\
				w^+ \in L^{1}(\R^d)\cap L^{\infty}(\R^d)\\
				\nabla w^+ \in L^{1 + d/2}(\R^d),
			\end{cases}
		\end{equation}
		and in the case $d=2$:
		\begin{equation}\label{hypothese_w2}
			I_{w^-} = \frac{1}{2}\int_{\R^d} w^- < c_{\mathrm{TF}}.
		\end{equation}
	\end{assumption}
	\begin{remark}[On the last assumption]
		Equation \eqref{hypothese_w2} is stronger than the condition 
		\begin{equation}\label{stabilite_macroscopique}
			I_w + c_{\mathrm{TF}} > 0 
		\end{equation}
		that is required to ensure the stability of the macroscopic system. Indeed, provided that \eqref{stabilite_macroscopique} holds true, the Thomas-Fermi energy is bounded from below. Hence, the stricter assumption \eqref{hypothese_w2} corresponds to a microscopic stability condition. \hfill $\diamond$
	\end{remark}
	\begin{theorem}[Convergence of the energy and the minimizers in the non-purely attractive case]\label{theorem_repulsif}
		Let $d=1$ or $2$, and $0 < \beta < \frac{1}{d(d + 1)}$. If $V$ and $w$ satisfy Assumptions \ref{assumption_V} and \ref{assumption_w2}, we have
		\begin{equation}
			E(N) = N E^{\mathrm{TF}} + o(N)
		\end{equation}
		with the notation introduced in Definitions \ref{definition_hamiltonien2} and \ref{definition_energies_vlasov_thomas_fermi2}. Moreover, let $V$ satisfy Assumption \ref{assumption_V2}\footnote{Only when $d=1$.}. Let $(\Psi_N)$ be a minimizing sequence of $H_N$ in the sense that
		\begin{equation}
			\langle \Psi_N |H_N \Psi_N\rangle = 1,~~~~~ \langle \Psi_N|H_N\Psi_N\rangle = E(N) + o(N)
		\end{equation} 
		Then there exists a probability measure $P$ on $\mathcal{P}(\R^{2d})$ such that we have the following convergence up to extraction
		\begin{equation}
			(2\pi)^{-dk} m_{\Psi_N}^{(k)} \rightharpoonup \int_{\mathcal{P}(\R^d)} \sigma^{\otimes k} \d P(\sigma),
		\end{equation}
		weakly as measures. The probability measure $P$ is concentrated on measures $\sigma$ that satisfy
		\begin{equation}\label{equation_principe_pauli_df2}
			\sigma \leq (2\pi)^{-d}.
		\end{equation}
		Moreover, $P$-almost surely, $(2\pi)^d\sigma$ minimizes the Vlasov energy $\mathcal{E}^{\mathrm{V}}$ under the constraints stated in \eqref{equation_egalite_energie_thomas_fermi_vlasov}.
	\end{theorem}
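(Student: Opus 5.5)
The plan is to reduce Theorem \ref{theorem_repulsif} to the attractive case (Theorems \ref{theorem1} and \ref{theorem2}) by treating the repulsive part $w_N^+$ separately. For $w_N^+$ we use the positivity of $\widehat{w^+}$, and for $w_N^-$ we reuse the attractive machinery.

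\textbf{Upper bound.} We take the same trial state as in the attractive case: a Slater determinant built semiclassically from a Thomas-Fermi minimizer $\rho$, regularized and truncated. Its one-body density is close to $N\rho$ on scales $\gg N^{-\beta}$. For a Slater determinant, the two-body density equals $\rho^{(1)}\otimes\rho^{(1)}$ minus the nonnegative exchange term. The direct term gives
\[
\frac{1}{2N}\iint w_N(x-y)\rho^{(1)}(x)\rho^{(1)}(y)\,\d x\,\d y \approx N I_w\int\rho^2,
\]
because $w_N\to(\int w)\delta$ and $\rho$ can be mollified. The exchange term is bounded by $N^{-1}\|w_N\|_{L^\infty}\,N=N^{d\beta}$, or better by the estimates already used for the upper bound in Theorem \ref{theorem1}. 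Since $d\beta<1$, this is $o(N)$. Nothing here uses the sign of $w$, so the attractive upper bound carries over with $I_w=I_{w^+}-I_{w^-}$.

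\textbf{Lower bound.} I would first get a priori bounds: kinetic energy $\lesssim N$ and $\int V\rho^{(1)}\lesssim N$ for approximate minimizers. This is done as in Lemma \ref{lemma_apriori}, discarding the repulsive part $w_N^+\ge 0$ in pairs (it is a positive pair potential) and using the microscopic stability assumption \eqref{hypothese_w2} on $w^-$. For the repulsive interaction, the standard Fourier positivity argument applies. Writing $\widehat{w_N^+}\ge0$ and introducing a smoothed density $\rho_N=\rho^{(1)}*|f^\hbar|^2$ or the Husimi density, we get
\[
\sum_{j<k}w_N^+(x_j-x_k)\ \ge\ \sum_j (w_N^+*\rho_N)(x_j)-\tfrac12 D_{w_N^+}(\rho_N,\rho_N)-\tfrac N2 w_N^+(0).
\]
This follows from expanding $D_{w_N^+}\big(\sum_j\delta_{x_j}-\rho_N,\ \sum_j\delta_{x_j}-\rho_N\big)\ge0$, which needs no smoothing of the deltas since $w^+$ is bounded. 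After multiplication by $N^{-1}$, the self-energy error $\tfrac12 w_N^+(0)=O(N^{d\beta})$ is $o(N)$.

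This reduces the problem to a one-body term plus the attractive part, which is handled exactly as in the proof of Theorem \ref{theorem1}: the semiclassical averaging of Lemma \ref{lemma_averaging} and the de Finetti/Husimi lower bound. The one-body term is then compared with the Husimi density using $\nabla w^+\in L^{1+d/2}$. Concretely, $\|w_N^+*(\rho^{(1)}-\rho_N)\|$ is controlled via $\nabla w_N^+$, the $L^{1+2/d}$ bound on $\rho^{(1)}$ from Lieb--Thirring, and the displacement scale $\sqrt{\hbar_x}\ll N^{-\beta}$. This comparison is the main obstacle: the condition $\beta<1/(d(d+1))$ must suffice with an appropriate choice of $\hbar_x$. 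If it does not, I would instead apply the Fourier bound directly at the level of the Husimi $2$-body function.

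\textbf{States.} For the convergence of ground states, I would argue as in Theorem \ref{theorem2}. The de Finetti measure $P$ exists by the Fournais--Lewin--Solovej method, and the Pauli bound $\sigma\le(2\pi)^{-d}$ follows from $\gamma^{(1)}\le1$. Combining the energy lower bound, which is linear in $P$, with the equality of energies forces $P$-a.s. minimality of $(2\pi)^d\sigma$ for $\mathcal{E}^{\mathrm V}$. Here the repulsive term passes to the limit by weak lower semicontinuity, since $\int\rho^2$ is convex.
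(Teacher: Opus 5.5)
The gap is in your lower bound, at the Onsager step.

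\textbf{Why the chain does not close.} You linearize $w_N^+$ around the state's own smoothed density $\rho_N$. After taking expectations, the repulsion is then counted only through the Hartree term of the one-body density. In the Diaconis--Freedman representation this is $\frac12 D_{w_N^+}(\bar\rho,\bar\rho)$, with $D_w(f,g)=\iint w(x-y)f(x)g(y)\d x\d y$ and $\bar\rho=\int\rho_\mu\d P_N^{\mathrm{DF}}(\mu)$. The kinetic, potential and attractive energies, by contrast, are evaluated $\mu$ by $\mu$. Your chain therefore yields only
\[
\Phi(P_N^{\mathrm{DF}})=\int \mathcal{E}^{-}_N[\mu]\d P_N^{\mathrm{DF}}(\mu)+\tfrac12 D_{w_N^+}(\bar\rho,\bar\rho),
\]
where $\mathcal{E}^-_N$ is $\mathcal{E}_N$ with $w_N$ replaced by $-w_N^-$. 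This quantity is convex, not linear, in $P$. It lies below $\int\mathcal{E}_N[\mu]\d P_N^{\mathrm{DF}}(\mu)$ by $\frac12\int D_{w_N^+}(\rho_\mu-\bar\rho,\rho_\mu-\bar\rho)\d P_N^{\mathrm{DF}}(\mu)$.

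Treating the attractive part ``as in Theorem \ref{theorem1}'' with the effective potential $V+N^{-1}w_N^+*\rho_N$ gives, for each $\mu$ with averaged density $\rho$, only the bound $\mathcal{E}^{\mathrm{TF}}_N[\rho]-\frac12D_{w_N^+}(\rho-\sigma,\rho-\sigma)$, where $\sigma=\rho_N/N$. This is not bounded below by $E^{\mathrm{TF}}$.

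\textbf{In $d=1$ the loss cannot be recovered.} The map $x\mapsto c_{\mathrm{TF}}x^3-I_{w^-}x^2$ lies strictly above its convex envelope on $(0,I_{w^-}/(2c_{\mathrm{TF}}))$. Take a mixture $P$ of measures whose densities locally take only the values $0$ and $I_{w^-}/(2c_{\mathrm{TF}})$. It pays only that convex envelope for the attractive local energy, while its repulsion is charged on the averaged density. As soon as $w^+\neq0$, the Thomas--Fermi minimizer takes values in $(0,I_{w^-}/(2c_{\mathrm{TF}}))$ on a set of positive measure near the edge of its support. Hence $\inf_P\Phi<E^{\mathrm{TF}}$ by an $N$-independent amount. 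Nothing in your argument prevents $P_N^{\mathrm{DF}}$ from being close to such a mixture; superpositions of Slater determinants with different densities produce exactly such de Finetti measures.

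For the same reason, your claim in the last step that the energy lower bound is linear in $P$ is false along your route. In $d=2$ both points could be repaired by Jensen's inequality and the strict convexity of $(c_{\mathrm{TF}}-I_{w^-})\int\rho^2$ from \eqref{hypothese_w2}, but that argument is missing.

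\textbf{How the paper avoids this.} The paper never linearizes. It keeps the full $w_N=w_N^+-w_N^-$ inside the two-body Husimi function; Lemma \ref{lemma_semi_classical_approximation_energy} and the Diaconis--Freedman step only use norms of $w^\pm$ and $\nabla w^\pm$. Each $\mu$ then carries its own repulsion $\frac12\iint w_N^+\d\mu^{\otimes 2}$. The paper bounds $\mathcal{E}_{N,\infty}[\bar\mu]\ge E^{\mathrm{TF}}_N-C(\varepsilon+\eta)$ pointwise in $\mu$. Only afterwards does it use $\widehat{w^+}\ge0$, through
\[
\int(w_N^+*\rho_\infty)\rho_N\le D_{w_N^+}(\rho_\infty,\rho_\infty)^{1/2}D_{w_N^+}(\rho_N,\rho_N)^{1/2},
\]
to get $\liminf E_N^{\mathrm{TF}}\ge E^{\mathrm{TF}}$. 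In $d=1$ this is combined with a relaxed functional in which $\rho^2$ is replaced by $f_\alpha$. The same inequality gives the lower semicontinuity of the repulsion in the convergence of states, and the bound there stays linear in $P$. Mere convexity of $\int\rho^2$ is not enough, since the kernel depends on $N$.

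\textbf{Smaller points.} Lemma \ref{lemma_apriori} gives only $O(N^{1+\beta d/2})$ for the kinetic energy, not $O(N)$. The condition \eqref{hypothese_w2} is semiclassical and cannot be fed into a Lieb--Thirring inequality with a non-sharp constant. The weaker bound does suffice for your comparison step with $\hbar_x\ll N^{-(2d+1)\beta}$. Also, the $P$-almost sure Pauli principle needs $m^{(k)}_{\Psi_N}\le1$ for every $k$, not only $\gamma^{(1)}\le1$. Your Slater-determinant upper bound, on the other hand, is a legitimate alternative to the paper's use of Lieb's variational principle with coherent states.
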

	
	\subsection{Short plan of the paper}
	~
	
	\indent We divide this paper in three main parts. First, we prove the upper bound corresponding to Theorem \ref{theorem1} in Section \ref{section_upper_bound}. We use Lieb's variational principle \cite{lieb_variational_1981} to prove that we can neglect the correlations in the interaction energy, and bound the ground state energy $E(N)$ by the Hartree energy of a given one-body density matrix. Then, we show that for a given semi-classical measure $m$, one can construct a one body density matrix whose Hartree energy is almost the Vlasov energy of $m$. \\
	\indent Then, inspired by \cite[Section 4]{girardot_semiclassical_2021}, we prove the lower bound corresponding to Theorem \ref{theorem1} in Section \ref{section_lower_bound}. First, we use standard semi-classical arguments to approximate the energy in terms of the Husimi functions. In \cite{fournais_semi-classical_2018}, the authors have shown that in the mean-field regime, the convergence of Husimi functions gives a limit measure $P$ by the Hewitt-Savage theorem. In our case, since the interaction $w_N$ depends on $N$, we have to establish quantitative bounds. Therefore, we use the Diaconis-Freedman theorem \cite{diaconis_finite_1980} in Section \ref{subsection_diaconis_freedman} in order to rewrite the energy as an integral on probability measures. However, this method gives a measure $P^{\mathrm{DF}}_N$ that only charges empirical measures, that do not satisfy the Pauli principle. Hence, we have to average the empirical measures in Section \ref{subsection_averaging}, check that the averaged measures mostly satisfy the Pauli principle, and show that this averaging does not change the energy too much.\\
	\indent Eventually, in Section \ref{section_convergence_states}, we prove the convergence of the Husimi functions associated to ground-states of $H_N$. We show that the one-body Husimi functions of the states is a tight sequence, which implies weak convergence of the Diaconis-Freedman measure. Then, we prove a lower semi-continuity of the energy to use Fatou's lemma, so that the upper bound is recovered in the liminf. In the case $d = 1$, we have to be careful and use relaxed functionals (see Appendix \ref{section_appendixe_1} and \ref{section_appendixe_2}, inspired by \cite{benguria_von_1979}). As the limit measure is concentrated on probability measures satisfying the Pauli principle, these measures must be minimizers of the Vlasov energy.\\
	\indent Finally, in Section \ref{section5}, we explain how to adapt the method for an interacting potential with a repulsive part of positive Fourier transform.
	
	\section{Upper bound on the energy}\label{section_upper_bound}
	In this section, we prove the following upper bound on the energy:
	\begin{proposition}[Upper bound]\label{proposition_upper_bound}
		For $d= 1$ or $2$ and for $0 < \beta < 1/d$, let $V$ and $w$ satisfy Assumptions \ref{assumption_V} and \ref{assumption_w}, we have
		\begin{equation}
			E(N) \leq N E^{\mathrm{TF}} + o(N).
		\end{equation}
	\end{proposition}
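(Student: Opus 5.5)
We follow the route sketched in the plan of the paper. First we reduce to a one-body problem. By Lieb's variational principle \cite{lieb_variational_1981}, for any one-body density matrix $0\le\gamma\le 1$ with $\tr\gamma = N$ and finite kinetic and potential energy, there is an $N$-body state (mixed, hence by convexity also pure) whose energy is at most the Hartree--Fock energy with the exchange term kept. Since $w\ge 0$, the exchange term $+\frac{1}{2N}\iint w_N(x-y)|\gamma(x,y)|^2$ is nonnegative, so it cannot be simply dropped. We therefore control it crudely by $\frac{1}{2N}\|w_N\|_\infty\tr(\gamma^2)\le \frac{1}{2N}N^{d\beta}\|w\|_\infty N = O(N^{d\beta}) = o(N)$, using $\beta<1/d$. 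This gives
\begin{equation*}
E(N)\le \tr\big((-\hbar^2\Delta+V)\gamma\big) - \frac{1}{2N}\iint w_N(x-y)\rho_\gamma(x)\rho_\gamma(y)\d x\d y + o(N).
\end{equation*}

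Next we construct a trial $\gamma$. Fix $\varepsilon>0$ and a nice Thomas--Fermi competitor $\rho\ge 0$ with $\int\rho=1$: continuous, compactly supported, and bounded, with $\mathcal{E}^{\mathrm{TF}}[\rho]\le E^{\mathrm{TF}}+\varepsilon$. Such a competitor exists by density and truncation/mollification, using the continuity of $\mathcal{E}^{\mathrm{TF}}$ on bounded compactly supported functions. Set $m=m_\rho=\mathds{1}(|p|\le c_d\rho(x)^{1/d})$ and define the coherent-state quantization
\begin{equation*}
\gamma = \frac{N}{(2\pi)^d}\iint m(x,p)\,|f^\hbar_{x,p}\rangle\langle f^\hbar_{x,p}|\d x\d p .
\end{equation*}
Property~\ref{property1} gives $0\le\gamma\le 1$ and $\tr\gamma=N$. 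The standard computation gives $\tr(-\hbar^2\Delta\gamma)= \frac{N}{(2\pi)^d}\iint |p|^2 m + N\hbar^2\|\nabla f^\hbar\|^2$ plus a cross term that vanishes since $f$ is real. The error is $O(N\hbar^2/\hbar_x)$, which is $o(N)$ if we choose $\hbar_x=\hbar$. Also $\rho_\gamma = N\rho_m*|f^\hbar|^2 = N\rho*|f^\hbar|^2$. The potential term is then $N\int (V*|f^\hbar|^2)\rho$, which converges to $N\int V\rho$ because $V$ is locally bounded and continuous up to the gradient bound of Assumption~\ref{assumption_V}, and $\rho$ has compact support. The constant $c_d$ is exactly the one for which the kinetic term equals $Nc_{\mathrm{TF}}\int\rho^{1+2/d}$.

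The main step is the interaction term. With $\tilde\rho=\rho*|f^\hbar|^2$ it equals
\begin{equation*}
-\frac{N}{2}\iint w_N(x-y)\tilde\rho(x)\tilde\rho(y) = -\frac{N}{2}\int (w_N*\tilde\rho)\tilde\rho .
\end{equation*}
Since $w_N$ is an approximate identity of mass $\int w$ and $\beta>0$, we have $w_N*\tilde\rho\to(\int w)\rho$ in $L^2$. Indeed $\|w_N*\tilde\rho-(\int w)\tilde\rho\|_2\to 0$ by continuity of translations in $L^2$, and $\tilde\rho\to\rho$ in $L^2$ because $\rho\in L^2$. Hence the interaction term equals $-N I_w\int\rho^2+o(N)$. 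Because $\rho$ is fixed and bounded, no rate is needed. Collecting all terms gives $E(N)\le N\mathcal{E}^{\mathrm{TF}}[\rho]+o(N)\le N(E^{\mathrm{TF}}+\varepsilon)+o(N)$, and letting $\varepsilon\to0$ proves Proposition~\ref{proposition_upper_bound}.

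I expect two points to be the main obstacles. The first is the exchange-term bound, which requires $\tr\gamma^2\le N$ together with $\beta<1/d$. The second is justifying the approximation of $E^{\mathrm{TF}}$ by regular $\rho$, particularly in $d=1$ where minimizers are discontinuous. There we would mollify and truncate an arbitrary admissible $\rho$ and use that $\int\rho^3$, $\int\rho^2$ and $\int V\rho$ all converge under these operations (for $V$ one cuts off at large radius, using monotone convergence).
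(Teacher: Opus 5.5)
\textbf{The first step has a genuine gap.} You invoke Lieb's variational principle as giving an $N$-body state whose energy is at most the Hartree--Fock energy with the exchange term kept. That conclusion requires a nonnegative pair interaction. In the form used in the paper, the principle gives $\Gamma_N$ with $\Gamma_N^{(1)}=\gamma$ and $\Gamma_N^{(2)}=\gamma\otimes\gamma(1-\mathrm{Ex})-L$, where $L\geq 0$. For repulsive interactions the correlation defect $L$ simply drops out. Here the pair potential is $-N^{-1}w_N\leq 0$, so
\[
\tr(H_N\Gamma_N)=\mathcal{E}^{\mathrm{H}}[\gamma]+\frac{1}{2N}\iint w_N(x-y)|\gamma(x,y)|^2\d x\d y+\frac{1}{2N}\tr\big(w_N(x-y)L\big),
\]
and the last term is nonnegative as well. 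Your claim is in fact false in general for attractive interactions. For instance, with $N=1$ there is no interaction at all, while the Hartree--Fock functional of a rank-two $\gamma$ contains a negative interaction term that, for a strong enough attraction, pushes it below $\inf\mathrm{spec}(-\hbar^2\Delta+V)$.

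The fix is the one the paper uses (Lemma \ref{lemma_L_is_small}). Taking traces gives $\tr\Gamma_N^{(2)}=N(N-1)$, $\tr(\gamma\otimes\gamma)=N^2$ and $\tr(\gamma\otimes\gamma\,\mathrm{Ex})=\tr\gamma^2$. Hence $\tr L=N-\tr\gamma^2\leq N$, and the $L$-term is at most $\frac{1}{2N}\|w_N\|_{L^\infty}N=O(N^{d\beta})=o(N)$, exactly like your exchange bound. So the conclusion survives, but this missing estimate is precisely the place where the attractive sign matters. Your list of obstacles overlooked it.

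\textbf{The rest of your argument is correct, and your interaction step is a genuinely different, more economical route.} The paper moves both coherent-state smearings onto $w_N$ and uses $\big\|w_N-(w_N*|f^{\hbar}|^2)*|f^{\hbar}|^2\big\|_{L^1}\leq C\sqrt{\hbar_x}\|\nabla w_N\|_{L^1}=O(\sqrt{\hbar_x}N^{\beta})$. This needs $\nabla w\in L^1$ and forces $\hbar_x\ll N^{-2\beta}$; the limit $\frac12\int(w_N*\rho_m)\rho_m\to I_w\int\rho_m^2$ is then taken separately. You keep the smearing on $\rho$ and use $\|w_N*(\tilde\rho-\rho)\|_{L^2}\leq\|w\|_{L^1}\|\tilde\rho-\rho\|_{L^2}$ together with the approximate-identity property. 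That needs only $w\in L^1$ and $\rho\in L^2$, with no rate, so $\hbar_x=\hbar_p=\hbar$ is allowed. The gradient assumption on $w$ is therefore not needed for the upper bound. Your approximation of $E^{\mathrm{TF}}$ by compactly supported competitors is the same idea as in the paper. Continuity and boundedness of $\rho$ are not actually needed there; compact support and $\rho\in L^{1+2/d}\cap L^2$ suffice.
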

	\begin{remark}[Constraint on $\beta$]
		Note that for the upper bound, we only need 
		\begin{equation}
			\beta < 1/d.
		\end{equation}
		The constraints on $\beta$ in Theorem \ref{theorem1} come from the lower bound.
		\hfill $\diamond$
	\end{remark}
	First, we show that we can bound the ground state energy from above by the Hartree energy of a given one particle density matrix, and then we show that the Hartree energy of a well-chosen state is almost the Thomas-Fermi energy.
	
	\subsection{Reduction to the Hartree energy}
	\begin{definition}[Hartree energy]\label{definition_hartree_energy}
		For $\gamma$ a trace-class operator on $\Hcal$ satisfying 
		\begin{equation}
			0\leq \gamma \leq 1,
		\end{equation}  
		let $\mathcal{E}^{\mathrm{H}}[\gamma]$ be the Hartree energy of $\gamma$ defined by
		\begin{equation}\label{equation_definition_fonctionnelle_energie_hartree}
			\mathcal{E}^{\mathrm{H}}[\gamma] = \hbar^2 \tr(-\Delta\gamma) + \tr(V\gamma) - \frac{N^{-1}}{2}\int_{\R^d} (w_N * \rho_{\gamma})\rho_{\gamma},
		\end{equation}
		using the notation introduced in Definition \ref{definition_space_momentum_density} for the density of $\gamma$.
	\end{definition}
	In this section, we prove Proposition \ref{proposition_reduction_hartree_energy} that states that the Hartree energy bounds the ground state energy $E(N)$. For Slater determinants, it is quite clear that the energy is approximated by the Hartree energy (see Remark \ref{remark_slater_lieb}). The one-body reduced density matrix of a Slater determinant, however, is of rank N, while the one-body matrix $\gamma^{\hbar}$ that we define in Lemma \ref{lemma_approximation_semi_classique2} -- and that corresponds to a minimizer of the Vlasov energy -- is not. Therefore, we use Lieb's variational principle (see Proposition \ref{proposition_lieb_variational}) to relax the constraint on the rank.
	\begin{definition}[$N$-body density matrix]
		We call $N$-body density matrix a positive self adjoint operator $\Gamma_N$ acting on $\Hcal_N$ of trace $1$. We define its $k$-body reduced matrix (for $k\leq N$) by the partial trace
		\begin{equation}
			\Gamma_N^{(k)} = \frac{N!}{(N-k)!}\tr_{k+1\to N}(\Gamma_N).
		\end{equation}
	\end{definition} 
	Before we prove Proposition \ref{proposition_reduction_hartree_energy}, we have to show that under certain assumptions, for each one-body operator $\gamma$, there exists an $N$-body density matrix $\Gamma$ such that 
	\begin{equation}
		\Gamma^{(1)} = \gamma,~~~~~\Gamma^{(2)} \approx \gamma^{\otimes 2}.
	\end{equation}
	We define the exchange operator that we will need, as it appears naturally in the next proposition.
	\begin{definition}[Exchange operator]
		We define the exchange operator $\mathrm{Ex}$ on $L^2(\R^d)^{\otimes 2}$ by
		\begin{equation}
			\forall u, v\in L^2(\R^d),~\mathrm{Ex} (u\otimes v) = v \otimes u.
		\end{equation}
		Its kernel is
		\begin{equation}
			\mathrm{Ex}(x, x' ; y, y') = (x, x'; y', y).
		\end{equation}
	\end{definition}
	Now, we can state Lieb's variational principle \cite{lieb_variational_1981} with a use that was inspired by \cite[Theorem VI.4]{perice_multiple_2024}.
	\begin{proposition}[Lieb's variational principle]\label{proposition_lieb_variational}
		Let $\gamma_N$ acting on $\Hcal$ and satisfying 
		\begin{equation}\label{equation_condition_gamma_N}
			\tr(\gamma_N) = N ~~~~~0 \leq \gamma_N \leq 1.
		\end{equation}
		Then, there exists $\Gamma_N$ an N-body density matrix and $L$ a positive operator such that
		\begin{equation}\label{equation_lieb_variational}
			\Gamma_N^{(1)} = \gamma_N~~~~~\Gamma_N^{(2)} = \gamma_N\otimes \gamma_N(1 - \mathrm{Ex}) - L.
		\end{equation}
	\end{proposition}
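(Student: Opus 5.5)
We prove Proposition \ref{proposition_lieb_variational} by writing $\gamma_N$ as a convex combination of orthogonal projections of rank $N$ and taking the corresponding mixture of Slater determinants. This is Lieb's original argument.

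\textbf{Step 1: the projection case.} Suppose first that $\gamma_N = P$ is an orthogonal projection of rank $N$, say $P=\sum_{j=1}^N |u_j\rangle\langle u_j|$ with $(u_j)$ orthonormal. Let $\Gamma_N = |\Psi\rangle\langle\Psi|$, where $\Psi = u_1\wedge\dots\wedge u_N$ is the normalized Slater determinant. A direct computation, using the definition of $\Gamma_N^{(k)}$ with the factor $N!/(N-k)!$, gives
\begin{equation}
\Gamma_N^{(1)}=P,\qquad \Gamma_N^{(2)}=P\otimes P\,(1-\mathrm{Ex}).
\end{equation}
Here $P\otimes P(1-\mathrm{Ex})$ is twice the projection onto the antisymmetric part of $\mathrm{Ran}(P)^{\otimes 2}$. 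So in this case $L=0$.

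\textbf{Step 2: decomposition of $\gamma_N$.} For a general $\gamma_N$ with $0\le\gamma_N\le1$ and $\tr\gamma_N=N$, we write
\begin{equation}
\gamma_N=\sum_i \lambda_i P_i,\qquad \lambda_i\ge0,\quad \sum_i\lambda_i=1,
\end{equation}
where each $P_i$ is a rank-$N$ orthogonal projection.

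For finite rank this is the standard extreme-point fact: the set $\{0\le\gamma\le1,\ \tr\gamma=N\}$ on a finite-dimensional space is a compact convex set whose extreme points are exactly the rank-$N$ projections. Carathéodory's theorem then gives a finite convex combination.

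For general trace-class $\gamma_N$ we use the spectral decomposition $\gamma_N=\sum_k \mu_k |v_k\rangle\langle v_k|$ with $\sum_k\mu_k=N$ and $0\le\mu_k\le1$. The problem reduces to writing the sequence $(\mu_k)$ as a countable convex combination of $\{0,1\}$-valued sequences, each with exactly $N$ ones. Lieb's "staircase'' construction does this: lay the intervals of lengths $\mu_k$ consecutively on $[0,N)$, then fold $[0,N)$ onto $[0,1)$ modulo $1$. For each $t\in[0,1)$, the $N$ points $t, t+1,\dots,t+N-1$ fall into $N$ distinct intervals $k$, because each $\mu_k\le1$. This yields a measurable family of projections $P_t$ with $\int_0^1 P_t\,\d t=\gamma_N$. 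This step is the main technical point. The infinite-rank case needs some care with convergence, but the integral converges in trace norm since every $P_t$ is supported on $\mathrm{span}(v_k)$ and has trace $N$.

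\textbf{Step 3: the mixture and the error term.} Set $\Gamma_N=\int_0^1|\Psi_t\rangle\langle\Psi_t|\,\d t$, where $\Psi_t$ is the Slater determinant built from $P_t$. By linearity and Step 1,
\begin{equation}
\Gamma_N^{(1)}=\gamma_N,\qquad \Gamma_N^{(2)}=\int_0^1 P_t\otimes P_t\,(1-\mathrm{Ex})\,\d t.
\end{equation}
Define $L:=\gamma_N\otimes\gamma_N(1-\mathrm{Ex})-\Gamma_N^{(2)}$. It remains to show $L\ge0$. Since $\mathrm{Ex}$ commutes with $A\otimes A$, we can write
\begin{equation}
\gamma_N\otimes\gamma_N(1-\mathrm{Ex}) = 2\,\Pi_{\mathrm{as}}(\gamma_N\otimes\gamma_N)\Pi_{\mathrm{as}},
\end{equation}
with $\Pi_{\mathrm{as}}=(1-\mathrm{Ex})/2$, and similarly for each $P_t$. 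Hence
\begin{equation}
L = 2\,\Pi_{\mathrm{as}}\Big(\int_0^1\!\!\int_0^1 P_t\otimes P_s\,\d s\,\d t-\int_0^1 P_t\otimes P_t\,\d t\Big)\Pi_{\mathrm{as}}.
\end{equation}
Using the symmetry in $(s,t)$, this becomes
\begin{equation}
L = -\,\Pi_{\mathrm{as}}\Big(\iint (P_t-P_s)\otimes(P_t-P_s)\,\d s\,\d t\Big)\Pi_{\mathrm{as}}.
\end{equation}
We therefore need $\Pi_{\mathrm{as}}(A\otimes A)\Pi_{\mathrm{as}}\le0$ for self-adjoint $A=P_t-P_s$.

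This fails for general self-adjoint $A$, so we use the structure of the decomposition. All $P_t$ are diagonal in the common basis $(v_k)$, so $A=\sum_k a_k|v_k\rangle\langle v_k|$ with $a_k\in\{-1,0,1\}$. On the antisymmetric vectors $v_k\wedge v_l$ with $k\ne l$, the operator $A\otimes A$ acts by $a_ka_l$. Averaged over $(s,t)$, the diagonal of $\iint(P_t-P_s)\otimes(P_t-P_s)$ on $v_k\wedge v_l$ equals $2(\mu_{kl}-\mu_k\mu_l)$, where $\mu_{kl}=\int_0^1\mathds{1}_{k,l\in P_t}\,\d t$. Thus $L\ge0$ reduces to the scalar inequality
\begin{equation}
\mu_{kl}\le\mu_k\mu_l\qquad\text{for all } k\ne l.
\end{equation}

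This inequality is a property of the staircase construction and is the step I expect to be the main obstacle. If it does not come out cleanly from the staircase, the fallback is to cite Lieb's variational principle \cite{lieb_variational_1981} directly for the positivity of $L$. Lieb proves that statement exactly in this form (see also \cite[Theorem VI.4]{perice_multiple_2024}).
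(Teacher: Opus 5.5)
\textbf{There is a genuine gap in Step 3.} Your reduction there is correct. For any mixture of Slater determinants built on the eigenbasis $(v_k)$ of $\gamma_N$, the operator $L$ vanishes on symmetric vectors and acts on $v_k\wedge v_l$ ($k\neq l$) by $2(\mu_k\mu_l-\mu_{kl})$, so $L\ge 0$ is equivalent to $\mu_{kl}\le\mu_k\mu_l$. The problem is that the staircase decomposition does not satisfy this inequality, so the argument fails exactly at the step you flagged.

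Take $N=2$ and $\gamma_N$ with four eigenvalues equal to $\frac12$. The intervals are $[0,\frac12)$, $[\frac12,1)$, $[1,\frac32)$, $[\frac32,2)$. So $P_t$ projects onto $\mathrm{span}(v_1,v_3)$ for $t<\frac12$ and onto $\mathrm{span}(v_2,v_4)$ for $t\ge\frac12$. Hence $\mu_{13}=\frac12>\frac14=\mu_1\mu_3$, and $L$ has eigenvalue $-\frac12$ on $v_1\wedge v_3$. Choosing $v_1,v_3$ localized at two points where $v_2,v_4$ vanish, even the diagonal $L(x,y;x,y)$ becomes negative.

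Reordering the intervals does not help, and an abstract extreme-point (Carath\'eodory) decomposition gives no control on pair correlations at all. What is missing is a decomposition whose pair inclusion probabilities are negatively correlated. The staircase (systematic sampling) is not one, so it cannot be the argument behind Lieb's principle.

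\textbf{A construction that works and fits your framework.} Pick a rank-$N$ orthogonal projection $P$ with $P_{kk}=\mu_k$ for all $k$, where $P_{kl}:=\langle v_k,Pv_l\rangle$. It exists by the Schur--Horn theorem (Kadison's version if $\gamma_N$ has infinite rank), since $0\le\mu_k\le1$ and $\sum_k\mu_k=N$. Let $D_\theta=\sum_k e^{i\theta_k}|v_k\rangle\langle v_k|$ with independent uniform phases, set $P_\theta=D_\theta PD_\theta^*$, and let $\Gamma_N$ be the $\theta$-average of the Slater determinants of $\mathrm{Ran}\,P_\theta$. The phase average kills every matrix element whose indices do not match as multisets. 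Therefore $\Gamma_N^{(1)}=\sum_kP_{kk}|v_k\rangle\langle v_k|=\gamma_N$ and
\[
\mathbb{E}_\theta\big[P_\theta\otimes P_\theta\big]=\gamma_N\otimes\gamma_N+Y\,\mathrm{Ex},\qquad Y=\sum_{k\neq l}|P_{kl}|^2\,|v_k\otimes v_l\rangle\langle v_k\otimes v_l|.
\]
It follows that $\Gamma_N^{(2)}=\gamma_N\otimes\gamma_N(1-\mathrm{Ex})-Y(1-\mathrm{Ex})$, so $L=Y(1-\mathrm{Ex})\ge0$, because $Y\ge0$ commutes with $\mathrm{Ex}$.

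Equivalently, $\Gamma_N$ is the mixture of the Slater determinants $v_S$ of $(v_k)_{k\in S}$ with weights $|\langle v_S,\Psi\rangle|^2$, where $\Psi$ is the Slater determinant of $\mathrm{Ran}\,P$. This is a determinantal law, for which $\mu_{kl}=\mu_k\mu_l-|P_{kl}|^2\le\mu_k\mu_l$.

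For comparison, the paper gives no proof of this proposition. It simply invokes Lieb's variational principle \cite{lieb_variational_1981} in the form of \cite[Theorem VI.4]{perice_multiple_2024}, which is your fallback.
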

	\begin{remark}[Slater determinants]\label{remark_slater_lieb}
		For $\Psi_N\in\Hcal_N$ a Slater determinant and $\Gamma_N = |\Psi_N\rangle \langle \Psi_N|$, we have the Wick rule
		\begin{equation}
			\Gamma_N^{(2)} = \Gamma_N^{(1)}\otimes \Gamma_N^{(1)} (1 - \mathrm{Ex}),
		\end{equation}
		i.e.
		\begin{equation}
			L = 0.
		\end{equation}
	\end{remark}
	Now, we show that the trace of $L$ is small compared to that of $\Gamma_N^{(2)}$. Indeed, $\tr(L)$ is of order $N$ while $\tr(\Gamma_N^{(2)})$ is of order $N^2$.
	\begin{lemma}[$L$ is small]\label{lemma_L_is_small}
		Let $\gamma_N$ be a self-adjoint operator satisfying \eqref{equation_condition_gamma_N} and $L$ defined as in Proposition \ref{proposition_lieb_variational}. Then the trace norm of $L$ is bounded
		\begin{equation}
			\|L\|_{\sigma^1} \leq  N.
		\end{equation}
		\hfill $\diamond$
	\end{lemma}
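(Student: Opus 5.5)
Since $L$ is a positive operator, its trace norm equals its trace, so the plan is to compute $\tr(L)$ directly from the identity \eqref{equation_lieb_variational} of Proposition \ref{proposition_lieb_variational}:
\begin{equation*}
	\|L\|_{\sigma^1} = \tr(L) = \tr\big(\gamma_N\otimes\gamma_N(1-\mathrm{Ex})\big) - \tr\big(\Gamma_N^{(2)}\big).
\end{equation*}
Both traces on the right-hand side are explicit.

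For the second term, $\Gamma_N$ is an $N$-body density matrix of trace $1$, so the normalization of the reduced matrices gives $\tr(\Gamma_N^{(2)}) = N!/(N-2)! = N(N-1)$. For the first term, I expand
\begin{equation*}
	\tr(\gamma_N\otimes\gamma_N) = (\tr\gamma_N)^2 = N^2,
\end{equation*}
and use the standard identity $\tr\big((A\otimes B)\mathrm{Ex}\big) = \tr(AB)$. This identity can be checked on an orthonormal basis $(e_i\otimes e_j)$, or directly from the kernel of $\mathrm{Ex}$. It gives $\tr(\gamma_N\otimes\gamma_N\,\mathrm{Ex}) = \tr(\gamma_N^2)$.

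Putting these together,
\begin{equation*}
	\tr(L) = N^2 - \tr(\gamma_N^2) - N(N-1) = N - \tr(\gamma_N^2).
\end{equation*}
Since $\gamma_N \geq 0$, we have $\tr(\gamma_N^2)\geq 0$, so $\tr(L) \leq N$. Moreover, $0\leq\gamma_N\leq 1$ gives $\gamma_N^2\leq\gamma_N$, so this expression is consistent with $L\geq0$.

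The only delicate point is justifying these trace manipulations for infinite-rank operators. I would handle it as follows. The operator $\gamma_N$ is trace class with $\|\gamma_N\|\leq 1$, so $\gamma_N\otimes\gamma_N$ is trace class on $L^2(\R^d)^{\otimes 2}$, and so is its product with the bounded unitary $\mathrm{Ex}$. The exchange identity can then be checked in the eigenbasis of $\gamma_N$, where both sides equal $\sum_i \lambda_i^2$. Finally, $L$ is a difference of trace-class operators, so it is trace class, which justifies taking traces termwise.
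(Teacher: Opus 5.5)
Your proof is correct and follows the paper's argument: take traces in the identity of Proposition~\ref{proposition_lieb_variational}, use $\tr(\Gamma_N^{(2)}) = N(N-1)$, $\tr(\gamma_N\otimes\gamma_N) = N^2$ and $\tr(\gamma_N\otimes\gamma_N\,\mathrm{Ex}) = \tr(\gamma_N^2)\geq 0$, then use $L\geq 0$ to identify the trace norm with the trace. Your exact formula $\tr(L) = N - \tr(\gamma_N^2)$ and your remarks on trace-class justification are a slightly sharper and more careful version of the same computation.
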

	\begin{proof}
		We take the trace in \eqref{equation_lieb_variational}:
		\begin{equation}
			\tr\big(\Gamma_N^{(2)}\big) = \tr\big(\gamma_N\otimes\gamma_N\big) - \tr\big(\gamma_N\otimes \gamma_N \mathrm{Ex}\big) - \tr(L).
		\end{equation}
		By definition, 
		\begin{equation}
			\tr(\Gamma^{(2)}_N) = N(N-1),~~~~~~\tr(\gamma_N\otimes\gamma_N) = N^2.
		\end{equation}
		Moreover, let us write 
		\begin{equation} 
			\gamma_N = \sum \lambda_j |u_j\rangle \langle u_j|
		\end{equation}
		with $(u_j)$ an orthonormal basis of $\Hcal$. We have
		\begin{equation}\label{equation_borne_trace_exchange1}
			\tr\big(\gamma_N\otimes\gamma_N \mathrm{Ex}\big) = \tr\Big(\sum_{j, k} \lambda_j\lambda_k |u_j\otimes u_k\rangle\langle u_k\otimes u_j|\Big)= \sum_j \lambda_j^2 = \tr\big((\gamma_N)^2\big)\geq 0.
		\end{equation}
		Therefore,
		\begin{equation}
			\tr(L) \leq N^2 - N(N-1) = N.
		\end{equation}
		Since $L$ is positive, this concludes the proof of Lemma \ref{lemma_L_is_small}.
	\end{proof}
	\begin{remark}[The exchange term is small]
		Note that \eqref{equation_borne_trace_exchange1} also implies
		\begin{equation}\label{equation_borne_trace_exchange2}
			\tr\big(\gamma_N\otimes\gamma_N \mathrm{Ex}\big) \leq \tr(\gamma_N) = N.
		\end{equation}
		\hfill $\diamond$
	\end{remark}
	Thanks to Proposition \ref{proposition_lieb_variational} and Lemma \ref{lemma_L_is_small}, for any operator satisfying \eqref{equation_condition_gamma_N}, we know that there exists an $N$-body density matrix such that 
	\begin{equation}
		\Gamma_N^{(2)} \approx \gamma_N \otimes \gamma_N.
	\end{equation}
	Hence, the interaction energy of $\gamma_N$ -- without correlations -- is almost the interaction energy of $\Gamma_N$ -- with correlations. Then, we can bound the ground state energy by the Hartree energy of a given $\gamma_N$.
	\begin{proposition}[Reduction to the Hartree energy]\label{proposition_reduction_hartree_energy}
		Let $\beta <1/d$ and $\gamma_N$ satisfy \eqref{equation_condition_gamma_N}. There exists $\Gamma_N$ an N-body density matrix such that
		\begin{equation}\label{equation_egalite_energies_hartree_vraie}
			\mathcal{E}^{\mathrm{H}}[\gamma_N] = \big(1 + o(1)\big)\tr(H_N\Gamma_N) + o(N),
		\end{equation}
		with $\mathcal{E}^{\mathrm{H}}$ defined in \eqref{equation_definition_fonctionnelle_energie_hartree}. In particular, the ground state energy $E(N)$ is bounded by the Hartree energy of $\gamma_N$
		\begin{equation}\label{equation_borne_sup_energie_vraie_hartree}
			E(N) \leq \big(1 + o(1)\big)\mathcal{E}^{\mathrm{H}}[\gamma_N] + o(N).
		\end{equation}
	\end{proposition}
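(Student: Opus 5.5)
We take $\Gamma_N$ to be the $N$-body density matrix given by Proposition \ref{proposition_lieb_variational} for the operator $\gamma_N$, so that $\Gamma_N^{(1)}=\gamma_N$ and $\Gamma_N^{(2)} = \gamma_N\otimes\gamma_N(1-\mathrm{Ex}) - L$ with $L\geq 0$. Since the one-body part of $H_N$ only sees $\Gamma_N^{(1)}$, we have
\begin{equation*}
\tr(H_N\Gamma_N) = \hbar^2\tr(-\Delta\gamma_N) + \tr(V\gamma_N) - \frac{N^{-1}}{2}\tr\big(w_N(x_1-x_2)\Gamma_N^{(2)}\big).
\end{equation*}
Inserting the formula for $\Gamma_N^{(2)}$, the direct term gives exactly $-\frac{N^{-1}}{2}\int (w_N*\rho_{\gamma_N})\rho_{\gamma_N}$. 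It follows that
\begin{equation*}
\tr(H_N\Gamma_N) = \mathcal{E}^{\mathrm{H}}[\gamma_N] + \frac{N^{-1}}{2}\tr\big(w_N\,\gamma_N\otimes\gamma_N\,\mathrm{Ex}\big) + \frac{N^{-1}}{2}\tr(w_N L).
\end{equation*}
It therefore suffices to show that both error terms are $o(N)$. We then obtain \eqref{equation_egalite_energies_hartree_vraie} even with the factor $1+o(1)$ replaced by $1$. Since $E(N)\leq \tr(H_N\Gamma_N)$ by the variational principle (a mixed state has energy at least $E(N)$), \eqref{equation_borne_sup_energie_vraie_hartree} follows immediately.

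For the $L$ term, $w_N\geq 0$ is a bounded multiplication operator with $\|w_N\|_\infty = N^{d\beta}\|w\|_\infty$. Because $L\geq 0$, Lemma \ref{lemma_L_is_small} gives
\begin{equation*}
0\leq \tr(w_N L)\leq \|w_N\|_\infty\|L\|_{\sigma^1}\leq N^{1+d\beta}\|w\|_\infty.
\end{equation*}
After multiplying by $N^{-1}/2$, this contribution is $O(N^{d\beta}) = o(N)$, precisely because $\beta<1/d$.

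For the exchange term, we write $\gamma_N=\sum_j\lambda_j|u_j\rangle\langle u_j|$. This gives
\begin{equation*}
\tr\big(w_N\gamma_N\otimes\gamma_N\mathrm{Ex}\big)=\iint w_N(x-y)|\gamma_N(x,y)|^2\,\d x\d y \geq 0.
\end{equation*}
We bound this by $\|w_N\|_\infty\iint|\gamma_N(x,y)|^2 = \|w_N\|_\infty\tr(\gamma_N^2)\leq N^{d\beta}\|w\|_\infty N$, using $0\leq\gamma_N\leq 1$ as in \eqref{equation_borne_trace_exchange2}. This term is again $O(N^{d\beta})=o(N)$. Note that the two error terms enter with signs that raise the energy, which is consistent with the upper bound direction.

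I expect no serious obstacle. The only delicate point is checking that the kernel identity for $\tr(w_N\,\gamma\otimes\gamma\,\mathrm{Ex})$ is justified, which holds for trace-class $\gamma_N$ and bounded $w_N$, and that all terms are finite. Finiteness requires $\tr(-\Delta\gamma_N)$ and $\tr(V\gamma_N)$ to be finite; if they are infinite, the inequality is trivial. If one wants the sharper statement for general $w$ without using $L^\infty$ bounds, one could instead use $\tr(\gamma^2)\leq N$ together with a Lieb--Thirring-type estimate, but the crude $L^\infty$ bound already suffices for $\beta<1/d$.
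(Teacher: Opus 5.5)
Your proof is correct and follows the paper's route: you take $\Gamma_N$ from Lieb's variational principle and bound the exchange term and the $L$ term by $\|w_N\|_{L^\infty}$ times $\tr(\gamma_N^2)\le N$ and $\tr(L)\le N$ respectively, which gives an $O(N^{d\beta})=o(N)$ error because $\beta<1/d$. Your bookkeeping is slightly cleaner than the paper's, since it yields the exact identity $\tr(H_N\Gamma_N)=\mathcal{E}^{\mathrm{H}}[\gamma_N]+O(N^{d\beta})$ with a nonnegative error, so the factor $1+o(1)$ is not needed.
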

	\begin{proof}
		First, let us establish \eqref{equation_egalite_energies_hartree_vraie}. Since we have
		\begin{equation}
			\Gamma_N^{(1)} = \gamma_N,
		\end{equation}
		it is clear that
		\begin{equation}
			\tr\Big((-\hbar^2 \Delta + V)\gamma_N\Big) = \tr\Bigg(\sum_{j=1}^N \big(-\hbar^2 \Delta_j + V(x_j)\big)\Gamma_N\Bigg).
		\end{equation}
		Furthermore,
		\begin{equation}
			\left|\tr\Big(N^{-1}w_N(x-y)\gamma_N^{\otimes 2}\Big) - \frac{N-1}{N}\tr\Big(N^{-1}w_N(x - y)\Gamma_N^{(2)} \Big)\right| \leq \|w_N\|_{L^{\infty}} \Big( \tr\big(\gamma_N\otimes\Gamma_N \mathrm{Ex}\big) + \tr(L)\Big),
		\end{equation}
		then
		\begin{equation}
			\tr\Big(N^{-1}w_N(x-y)\gamma_N^{\otimes 2}\Big) = \big(1 + o(1)\big)\tr\Big(N^{-1}w_N(x - y)\Gamma_N^{(2)}\Big) + O(N^{d\beta}),
		\end{equation}
		and hence \eqref{equation_egalite_energies_hartree_vraie}. Now, note that
		\begin{equation}
			E(N) = \inf_{\Hcal_N} \frac{\langle \Psi_N | H_N \Psi_N\rangle}{\langle \Psi_N|\Psi_N\rangle} = \inf \Big\{\tr(H_N\Gamma_N),~\Gamma_N~ N\mathrm{-body~density~matrix}\Big\}
		\end{equation}
		which implies \eqref{equation_borne_sup_energie_vraie_hartree}.
	\end{proof}
	
	\subsection{Estimate of the Hartree energy}
	Here, we want to prove that one can choose an appropriate $\gamma_N$ satisfying \eqref{equation_condition_gamma_N} such that
	\begin{equation}
		\mathcal{E}^{\mathrm{H}}[\gamma_N] \approx N E^{\mathrm{TF}}.
	\end{equation}
	To do so, we show that starting with a semi-classical measure $m$ on the phase space $\R^{2d}$, one can construct a $\gamma$ with a Hartree energy close to the Vlasov energy of $m$.
	\begin{definition}[$N$-Vlasov energy]\label{definition_N_Vlasov}
		We set
		\begin{equation}
			\mathcal{E}_N^{\mathrm{V}}[m] = \frac{1}{(2\pi)^d}\iint_{\R^d \times \R^d} \big(|p|^2 + V(x)\big) m(x, p)\d x\d p - \frac{1}{2}\int_{\R^{2d}} w_N(x - y) \rho_m(x)\rho_m(y)\d x\d y
		\end{equation}
		the $N$-Vlasov energy functional. Note that formally, the Vlasov functional defined in \eqref{equation_definition_vlasov_energy_functional} corresponds to $N=+\infty$.
	\end{definition}
	\begin{lemma}[Semi-classical approximation of the energy]\label{lemma_approximation_semi_classique2}
		Let $d=1$ or $2$ and $\beta <1/d$. Let $V$ satisfy Assumption \ref{assumption_V}, $m\in L^1(\R^{2d})$ satisfy
		\begin{equation}\label{equation_hypotheses_m_lemme_approximation_semiclassique2}
			\iint m = (2\pi)^d,~~~~~0 \leq m \leq 1,~~~~~\supp \rho_m \subset B(0, C),~~~~~\rho_m\in L^{1 + 2/d}
		\end{equation}
		with $\rho_m$ defined in \eqref{equation_definition_rhom}, and
		\begin{equation}\label{equation_defintion_gamma_hbar_lemme_approx}
			\gamma^{\hbar} = (2\pi\hbar)^{-d}\iint_{\R^d\times\R^d} m(x, p) |f_{x, p}^{\hbar}\rangle\langle f_{x, p}^{\hbar}|\d x\d p.
		\end{equation}
		Then, for $\hbar_x$ and $\hbar_p$ defined as in Definition \ref{definition_coherent_state}, satisfying $\hbar_x\hbar_p = \hbar^2$ and
		\begin{equation}\label{equation_conditions_sur_hbarx_hbarp}
			\begin{cases}
				\hbar_x \ll N^{-2\beta}\\
				\hbar_p \ll 1,
			\end{cases}
		\end{equation}
		we have
		\begin{equation}\label{equation_lemme_approximation_semi_classique2}
			\mathcal{E}^{\mathrm{H}}[\gamma^{\hbar}] = N \mathcal{E}_N^{\mathrm{V}}[m] + o(N),
		\end{equation}
		the energies being introduced in Definitions \ref{definition_hartree_energy} and \ref{definition_N_Vlasov}.
	\end{lemma}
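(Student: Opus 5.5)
The plan is to compute the three terms of $\mathcal{E}^{\mathrm{H}}[\gamma^{\hbar}]$ separately and compare each with the matching term of $N\mathcal{E}_N^{\mathrm{V}}[m]$. First, $\gamma^{\hbar}$ is admissible in Definition \ref{definition_hartree_energy}. By Property \ref{property1} and $0\le m\le 1$ we have $0\le\gamma^{\hbar}\le 1$, and $\tr\gamma^{\hbar}=(2\pi\hbar)^{-d}\iint m=N$.

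\textbf{Kinetic and potential terms.} A direct computation gives
\begin{equation*}
\langle f^{\hbar}_{x,p}|(-\hbar^2\Delta)f^{\hbar}_{x,p}\rangle=|p|^2+\hbar_p\|\nabla f\|_{L^2}^2,
\end{equation*}
since $f$ is real, so the cross term vanishes. Hence
\begin{equation*}
\hbar^2\tr(-\Delta\gamma^{\hbar})=\frac{N}{(2\pi)^d}\iint|p|^2m+N\hbar_p\|\nabla f\|^2_{L^2}.
\end{equation*}
The error is $o(N)$ because $\hbar_p\ll1$. Finiteness of $\iint|p|^2m$ should follow from $0\le m\le1$ together with $\rho_m\in L^{1+2/d}$ via the bathtub bound $\int|p|^2m(x,p)\,\d p\lesssim\rho_m(x)^{1+2/d}$ (implicitly, otherwise both sides are $+\infty$). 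For the potential, $\langle f^{\hbar}_{x,p}|Vf^{\hbar}_{x,p}\rangle=(V*|f^{\hbar}|^2)(x)$. Since $f$ has compact support, $|f^{\hbar}|^2$ lives in $B(0,C\sqrt{\hbar_x})$, and $\rho_m$ is supported in $B(0,C)$. Assumption \ref{assumption_V} gives $\nabla V$ bounded on $B(0,C+1)$, so $|V*|f^{\hbar}|^2-V|\le C\sqrt{\hbar_x}$ there. This yields $\tr(V\gamma^{\hbar})=N\int V\rho_m+O(N\sqrt{\hbar_x})$.

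\textbf{Interaction term.} The density of $\gamma^{\hbar}$ is exactly
\begin{equation*}
\rho_{\gamma^{\hbar}}=N\,\rho_m*|f^{\hbar}|^2,
\end{equation*}
which is the one-body version of \eqref{equation_lien_m_rho}. Therefore
\begin{equation*}
\frac{N^{-1}}{2}\int(w_N*\rho_{\gamma^{\hbar}})\rho_{\gamma^{\hbar}}=\frac N2\int(\tilde w_N*\rho_m)\rho_m,\qquad \tilde w_N=w_N*|f^{\hbar}|^2*|f^{\hbar}|^2,
\end{equation*}
using that $f$ is radial. It remains to show
\begin{equation*}
\int\big((\tilde w_N-w_N)*\rho_m\big)\rho_m=o(1),
\end{equation*}
and this is the main obstacle, since $w_N$ concentrates at scale $N^{-\beta}$ and $\rho_m$ is only in $L^{1+2/d}$. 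By Young's inequality the left side is at most $\|\tilde w_N-w_N\|_{L^1}\|\rho_m\|_{L^2}^2$. In $d=1,2$ we have $\rho_m\in L^1\cap L^{1+2/d}\subset L^2$ (for $d=2$ this is exactly $L^2$), so it suffices to control the $L^1$ difference. Writing $w_N*\mu-w_N=\int(w_N(\cdot-y)-w_N)\,\d\mu(y)$ with $\mu=|f^{\hbar}|^2*|f^{\hbar}|^2$ supported in $|y|\le C\sqrt{\hbar_x}$, we get
\begin{equation*}
\|w_N(\cdot-y)-w_N\|_{L^1}\le|y|\,\|\nabla w_N\|_{L^1}=|y|N^{\beta}\|\nabla w\|_{L^1},
\end{equation*}
using Assumption \ref{assumption_w}. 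So the error is $O(N^{\beta}\sqrt{\hbar_x})$, which is $o(1)$ exactly under $\hbar_x\ll N^{-2\beta}$ from \eqref{equation_conditions_sur_hbarx_hbarp}.

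Collecting the three estimates gives \eqref{equation_lemme_approximation_semi_classique2}, with error $O\big(N(\hbar_p+\sqrt{\hbar_x}+N^{\beta}\sqrt{\hbar_x})\big)=o(N)$. Such a choice of $\hbar_x,\hbar_p$ is possible because $\hbar^2=N^{-2/d}\ll N^{-2\beta}$, which holds since $\beta<1/d$.
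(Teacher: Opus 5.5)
Your proposal is correct and follows essentially the same route as the paper. The paper also computes the kinetic and potential terms exactly on coherent states, and bounds the interaction error via Young's inequality and $\|w_N*|f^{\hbar}|^2*|f^{\hbar}|^2 - w_N\|_{L^1}\lesssim\sqrt{\hbar_x}\|\nabla w_N\|_{L^1}=O(N^{\beta}\sqrt{\hbar_x})$. One small slip: the bathtub principle gives $\int|p|^2m\,\d p\gtrsim\rho_m^{1+2/d}$, not $\lesssim$, so it cannot yield finiteness of the kinetic term. Your fallback observation that both sides are otherwise $+\infty$ is the correct one, and the argument is unaffected.
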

	\begin{remark}[Choice of $\hbar_x$ and $\hbar_p$]\label{remark_choix_de_hbarx_hbarp}
		The condition \eqref{equation_conditions_sur_hbarx_hbarp} is compatible with $\hbar_x\hbar_p = \hbar^2$ because $\beta < 1/d$. One can take for instance $\hbar_x = N^{-\beta-1/d}$ and $\hbar_p = N^{\beta-1/d}$.
		\hfill $\diamond$
	\end{remark}
	\begin{proof}[Proof of Lemma \ref{lemma_approximation_semi_classique2}]
		For the kinetic energy, we have
		\begin{equation}
			\tr(-\hbar^2 \Delta \gamma^{\hbar}) = (2\pi\hbar)^{-d} \iint_{\R^d \times \R^d} \langle f_{x, p}^{\hbar}|-\hbar^2 \Delta f_{x, p}^{\hbar}\rangle m(x, p)\d x\d p.
		\end{equation}
		Since\footnote{The cross term is 0 because $\int f \nabla f = 0$.}
		\begin{equation}
			\iint |\nabla f_{x, p}^{\hbar}|^2 = \hbar_x^{-1} \|\nabla f\|_{L^2}^2 + \hbar^{-2}|p|^2,
		\end{equation}
		we have
		\begin{equation}\label{equation_semi_classique_approximation_cinetique}
			\tr(-\hbar^2 \Delta \gamma^{\hbar}) = (2\pi\hbar)^{-d}\iint_{\R^d\times\R^d} |p|^2 m(x, p)\d x\d p + O(N\hbar_p).
		\end{equation}
		Then, for the potential
		energy, we have
		\begin{equation}
			\tr(V \gamma^{\hbar}) = (2\pi\hbar)^{-d}\iint_{\R^d\times\R^d} \langle f_{x, p}^{\hbar}|V f_{x, p}^{\hbar}\rangle m(x, p)\d x \d p.
		\end{equation}
		As
		\begin{equation}
			\hbar_x^{-d/2}\int_{\R^d} V(y) \left|f\left(\frac{x - y}{\sqrt{\hbar_x}}\right)\right|^2 \d y = V(x) + O\big(\|\nabla V\|_{L^{\infty}\big(B(0, 2C)\big)}\sqrt{\hbar_x}\big),
		\end{equation}
		we have
		\begin{equation}\label{equation_semi_classique_approximation_potentiel}
			\tr(V\gamma^{\hbar}) = (2\pi\hbar)^{-d} \iint_{\R^d\times\R^d} V(x) m(x, p)\d x\d p + O(N\sqrt{\hbar_x}).
		\end{equation}
		Finally, for the interaction energy, we have
		\begin{equation}
			N^{-1}\int_{\R^d} (w_N*\rho_{\gamma^{\hbar}})\rho_{\gamma^{\hbar}} = N^{-1}\int_{\R^d} \bigg(\Big(\big(w_N*|f^{\hbar}|^2 \big) * |f^{\hbar}|^2\Big) * \rho_m\bigg)\rho_m.
		\end{equation}
		By Young's inequality, 
		\begin{equation}\label{equation_juste_avant_resultat_intermediaire}
			\left|\int \bigg(\Big(\big(w_N*|f^{\hbar}|^2 \big) * |f^{\hbar}|^2 - w_N\Big) * \rho_m\bigg)\rho_m\right| \leq \Big\|\big(w_N*|f^{\hbar}|^2 \big) * |f^{\hbar}|^2 - w_N\Big\|_{L^1} \|\rho_m\|_{L^2}^2.
		\end{equation}
		Moreover, for all $v$ in the Sobolev space $W^{1, p}(\R^d)$
		\begin{equation}\label{equation_resultat_intermediaire}
			\big\|v - v*|f^{\hbar}|^2\big\|_{L^p} \leq C\sqrt{\hbar_x}\|\nabla v\|_{L^p}.
		\end{equation}
		Indeed, by Taylor's theorem,
		\begin{equation}
			\Big|v(x) - \big(v*|f^{\hbar}|^2\big)(x)\Big| = \bigg|\int_{\R^d} \big(v(x) - v(x - y)\big)|f^{\hbar}(y)|^2\d y\bigg| = \bigg|\int_{\R^d} y.\int_0^1\nabla v(x-ty)\d t |f^{\hbar}(y)|^2 \d y\bigg|.
		\end{equation}
		Hence, since $|f^{\hbar}|^2$ is of integral 1, by Jensen's theorem, we have
		\begin{equation}
			\big\|v - v*|f^{\hbar}|^2\big\|^p_{L^p} \leq C \hbar_x^{p/2} \int_{\R^{2d}} \int_0^1 |\nabla v(x-ty)|^p |f^{\hbar}(y)|^2 \d t \d x\d y.
		\end{equation}
		Finally, by exchanging the integrals and a simple change of variable $z=x - ty$, we find
		\begin{equation}
			\big\|v - v*|f^{\hbar}|^2\big\|^p_{L^p} \leq C \hbar_x^{p/2} \int_0^1\int_{\R^{2d}}|\nabla v(z)|^p  \d z |f^{\hbar}(y)|^2\d y \d t = C \hbar_x^{p/2} \|\nabla v\|_{L^p}^p.
		\end{equation}
		Thus, by \eqref{equation_juste_avant_resultat_intermediaire} and \eqref{equation_resultat_intermediaire} applied to $p=1$ and $v = w_N$, we have 
		\begin{equation}\label{equation_semi_classique_approximation_interaction}
			N^{-1}\int_{\R^d} (w_N*\rho_{\gamma^{\hbar}})\rho_{\gamma^{\hbar}} = N^{-1}(2\pi\hbar)^{2d} \int_{\R^d} (w_N * \rho_m)\rho_m + O(\sqrt{\hbar_x}N^{1 + \beta}).
		\end{equation}
		Putting together \eqref{equation_semi_classique_approximation_cinetique}, \eqref{equation_semi_classique_approximation_potentiel} and \eqref{equation_semi_classique_approximation_interaction}, and choosing $\hbar_x$ and $\hbar_p$ such that
		\begin{equation}
			\hbar_x \ll N^{-2\beta}~~~~~\mathrm{and}~~~~~\hbar_p \ll 1,
		\end{equation} 
		e.g. as in Remark \ref{remark_choix_de_hbarx_hbarp}, we find \eqref{equation_lemme_approximation_semi_classique2}.
	\end{proof}
	
	With Lemma \ref{lemma_approximation_semi_classique2}, we can now prove Proposition \ref{proposition_upper_bound}.
	\begin{proof}[Proof of Proposition \ref{proposition_upper_bound}]
		Since minimizers of the Vlasov energy do not necessarily exist a priori in the case $d=1$, we take $m_{\varepsilon}$ satisfying \eqref{equation_hypotheses_m_lemme_approximation_semiclassique2} and such that
		\begin{equation}
			\mathcal{E}^{\mathrm{V}}[m_{\varepsilon}] \leq E^{\mathrm{TF}} + \varepsilon.
		\end{equation}
		The existence is clear: we take $m$ that approximates the Thomas-Fermi energy, and we move the mass outside of $B(0, C)$ inside, which does not change the energy too much since this mass is very small for $C$ large enough, because of the confining potential $V$. Then, using Lemma \ref{lemma_approximation_semi_classique2} and denoting by $\gamma_{\varepsilon}$ the operator defined by \eqref{equation_defintion_gamma_hbar_lemme_approx} with $m = m_{\varepsilon}$, we have
		\begin{equation}
			\mathcal{E}^{\mathrm{H}}[\gamma] = N\mathcal{E}_N^{\mathrm{V}}[m_{\varepsilon}] + o(N).
		\end{equation}
		Then, we have clearly convergence of the interaction energy when $N \to + \infty$
		\begin{equation}
			\frac{1}{2}\int_{\R^d} \big(w_N * \rho_{m_{\varepsilon}}\big) \rho_{m_{\varepsilon}} \to I_w \int_{\R^d} \rho_{m_{\varepsilon}}^2,
		\end{equation}
		and thus of the $N$-Vlasov energy. Hence,
		\begin{equation}
			\mathcal{E}^{\mathrm{H}}[\gamma] = N\mathcal{E}^{\mathrm{V}}[m_{\varepsilon}] + o(N) \leq N\big(E^{\mathrm{TF}} + \varepsilon \big) + o(N).
		\end{equation}
		Using Proposition \ref{proposition_reduction_hartree_energy}, we find
		\begin{equation}\label{equation_fin_preuve_borne_superieure}
			\frac{1}{N}E(N) \leq \big(1 + o(1)\big)\big(1 + \varepsilon\big)E^{\mathrm{TF}}.
		\end{equation}
		Since \eqref{equation_fin_preuve_borne_superieure} holds for all $\varepsilon > 0$, we finally have
		\begin{equation}
			\frac{1}{N}E(N) \leq E^{\mathrm{TF}} + o(1).
		\end{equation}
	\end{proof}
	\section{Lower bound on the energy}\label{section_lower_bound}
	In this section, we prove the following lower bound:
	\begin{proposition}[Lower bound]\label{proposition_lower_bound}
		Let $d=1$ or $2$, $\beta < \frac{1}{d(d + 1)}$, $V$ and $w$ satisfy Assumptions \ref{assumption_V} and \ref{assumption_w}. Let $\Psi_N$ be a sequence of approximate minimizers of the energy, i.e.
		\begin{equation}
			\frac{\langle \Psi_N|H_N\Psi_N\rangle}{\langle \Psi_N | \Psi_N\rangle} = E(N) + o(N),
		\end{equation}
		then we have
		\begin{equation}
			\langle\Psi_N |H_N\Psi_N\rangle \geq N E^{\mathrm{TF}} + o(N).
		\end{equation}
		In particular, 
		\begin{equation}
			E(N) \geq N E^{\mathrm{TF}} + o(N).
		\end{equation}
	\end{proposition}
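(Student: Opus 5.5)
The lower bound follows the route announced in the plan of the paper: a priori bounds, a semi-classical rewriting of the energy through Husimi functions, a de Finetti-type decomposition via Diaconis--Freedman, and an averaging step that restores the Pauli bound so that each term of the decomposition is controlled by $E^{\mathrm{TF}}$.

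\textbf{Step 1: a priori bounds.} For the approximate minimizer $\Psi_N$, the upper bound of Proposition \ref{proposition_upper_bound} gives $\langle \Psi_N|H_N\Psi_N\rangle \leq CN$. I would first control the attractive term. Write $N^{-1}\sum_{j<k} w_N(x_j-x_k)$, bound its expectation by $N^{-1}\int w_N(x-y)\rho^{(2)}_{\Psi_N}$, and estimate this using the Lieb--Thirring inequality $\hbar^2\tr(-\Delta\gamma^{(1)})\gtrsim \int \rho^{1+2/d}$. Interpolating between $L^1$ and $L^\infty$ of $w_N$ avoids paying the full $\|w_N\|_\infty=N^{d\beta}$, which in $d=2$ also needs $I_w<c_{\mathrm{TF}}$ to absorb the interaction into the kinetic term. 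The outcome is
\[
\hbar^2\tr(-\Delta\gamma^{(1)}_{\Psi_N})+\tr(V\gamma^{(1)}_{\Psi_N})\leq CN, \qquad \|\rho_{\Psi_N}\|_{L^{1+2/d}}\leq C,
\]
uniformly in $N$. The bound on the $V$-term also gives tightness in $x$.

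\textbf{Step 2: semi-classical rewriting.} Choose $\hbar_x,\hbar_p$ as in Remark \ref{remark_choix_de_hbarx_hbarp}. Using Property \ref{property1} and Properties \ref{property3}, the one-body terms equal
\[
\frac{N}{(2\pi)^d}\iint (|p|^2+V(x))\, m^{(1)}_{\Psi_N}\d x\d p
\]
up to errors $O(N\hbar_p)+O(N\sqrt{\hbar_x})$; the $V$-error is controlled through the growth assumption on $\nabla V$ together with the a priori bounds. For the interaction, \eqref{equation_lien_m_rho} with $k=2$ and the smoothing estimate \eqref{equation_resultat_intermediaire} replace $w_N$ by $w_N*|f^\hbar|^2*|f^\hbar|^2$ at a cost $\sqrt{\hbar_x}\|\nabla w_N\|$. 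The result is an expression in terms of $m^{(2)}_{\Psi_N}$ paired with a smooth function of $x_1-x_2$.

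\textbf{Step 3: Diaconis--Freedman.} Since $w_N$ depends on $N$, Hewitt--Savage cannot be applied directly. Instead I would apply the Diaconis--Freedman theorem to the symmetric $N$-body Husimi measure, which replaces $m^{(2)}$ by $\int \mu^{\otimes 2}\d P_N^{\mathrm{DF}}(\mu)$ over empirical measures $\mu=N^{-1}\sum\delta_{(x_i,p_i)}$. The error is $O(1/N)$ in total variation, multiplied by $\|w_N\|_\infty=N^{d\beta}$, which is $o(1)$. The energy per particle then becomes $\int \mathcal{F}_N[\mu]\d P_N^{\mathrm{DF}}(\mu)+o(1)$, with $\mathcal{F}_N$ an $N$-Vlasov functional that includes the smoothing.

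\textbf{Step 4: averaging and conclusion.} This is the step I expect to be the main obstacle. Empirical measures violate $\sigma\le(2\pi)^{-d}$, so I would convolve each $\mu$ with a phase-space mollifier at scale $\ell_x\times\ell_p$, chosen with $\ell_x\ll N^{-\beta}$ so that the interaction is unchanged, and with $\ell_x^d\ell_p^d N\to\infty$. The averaged measure $\tilde\mu$ then satisfies the Pauli bound up to $o(1)$ on most of $P_N^{\mathrm{DF}}$. To see this, bound the expectation of $(\tilde\mu-(2\pi)^{-d})_+$ via the Pauli principle for the Husimi function ($m^{(1)}\le 1$) together with second-moment estimates from $m^{(2)}$. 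The compatibility of these scale constraints with the $\sqrt{\hbar_x}\,\|\nabla w_N\|_{L^1}N^{\beta}$ errors from Step 2 is exactly what forces $\beta<\frac1{d(d+1)}$. Once $\tilde\mu$ nearly satisfies $0\le(2\pi)^d\tilde\mu\le 1$ with unit mass, Remark \ref{remark_link_vlasov_thomas_fermi} gives
\[
\mathcal{E}^{\mathrm V}[(2\pi)^d\tilde\mu]\ge E^{\mathrm{TF}}-o(1),
\]
after a small truncation to enforce $\le1$. Here the $N$-dependent interaction $\frac12\int w_N*\rho\,\rho$ is bounded by $I_w\|\rho\|_2^2$ via Young's inequality, which only helps. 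Integrating against $P_N^{\mathrm{DF}}$ yields $\langle\Psi_N|H_N\Psi_N\rangle\ge NE^{\mathrm{TF}}+o(N)$, and the statement about $E(N)$ follows.
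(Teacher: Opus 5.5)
Your overall route (a priori bounds, Husimi rewriting, Diaconis--Freedman, averaging to restore Pauli, comparison with $E^{\mathrm{TF}}$) is the paper's. However, the step you flag as the main obstacle fails with the tools you propose. Take a cell $\Omega$, set $a=(2\pi)^{-d}|\Omega|$, and let $X=\mu(\Omega)$ under $P_N^{\mathrm{DF}}$. The one- and two-body Pauli bounds only give $\mathbb{E}X\le a$ and $\mathbb{E}X^2\le a^2+a/N$. Both are satisfied if $X\in\{0,2a\}$ with probabilities $\tfrac34,\tfrac14$; only the three-body bound excludes this, since then $\mathbb{E}X^3=2a^3$. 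In that example $\mathbb{E}(X-a)_+=a/4$, so the excess above the Pauli bound is half of the mass. Chebyshev only gives $P(X\ge(1+\varepsilon)a)\lesssim(1+\varepsilon)^{-2}$, which does not tend to zero.

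What is needed is much stronger. On the bad event, the attractive energy of an empirical measure is only bounded by $\|w_N\|_{L^\infty}\sim N^{d\beta}$. So the probability that \emph{some} cell of the relevant region has $\mu(\Omega)\ge(1+\varepsilon)a$ must be $o(N^{-d\beta})$. A Markov bound on $X^k$ gives at best $(1+\varepsilon)^{-k}$, so $k$ must grow with $N$. This is the content of Proposition~\ref{proposition_proba_violating_pauli_principle}, taken from \cite[Theorem 4.4]{girardot_semiclassical_2021}. It uses the Pauli bounds on $m_N^{(k)}$ for $k\sim N^{\delta}$ to get $P_N^{\mathrm{DF}}(\Gamma^j_\varepsilon)\le C_\delta e^{-c_\delta N^\delta\ln(1+\varepsilon)}$. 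A union bound then runs over the polynomially many tiles of a compact box $S_L$. For this you must first discard the mass outside $S_L$ using the energy bound (the set $\Xi_\tau$ and Lemma~\ref{lemme_restriction_support_compact}), which your plan omits. On $\Gamma_\varepsilon^c$ one has $\bar\mu\le(1+\varepsilon)(2\pi)^{-d}$ everywhere, so dividing by $1+\varepsilon$ replaces your truncation. A truncation would not work anyway, since the Pauli-violating part can carry attractive energy of order $N^{d\beta}$.

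Two further points need repair.

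\emph{Step 1.} Lieb--Thirring controls $\rho^{(1)}_{\Psi_N}$, but the attractive term involves $\rho^{(2)}_{\Psi_N}$, which is not controlled by $\rho^{(1)}$ alone. Moreover, absorbing the interaction via $I_w<c_{\mathrm{TF}}$ would require the kinetic Lieb--Thirring constant to equal $c_{\mathrm{TF}}$, which is not known. The paper instead conditions on $x_1$, applies Lieb--Thirring to the $(N-1)$-body state $\Psi_N(x_1,\cdot)$, and uses H\"older--Young. This pays $\|w_N\|_{L^{1+d/2}}^{1+d/2}$, which grows like a power of $N^{\beta}$. The outcome is only an $O(N^{1+c\beta})$ bound on kinetic plus potential energy, not $O(N)$. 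This loss has to be propagated: hence $\hbar_x\ll N^{-(2d+1)\beta}$ in Lemma~\ref{lemma_semi_classical_approximation_energy} rather than the choice of Remark~\ref{remark_choix_de_hbarx_hbarp}, and $\tau=N^{3d\beta/2+\sigma}$ in the Markov bound on $\Xi_\tau^c$.

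\emph{Step 4.} The scale $\ell_x\ll N^{-\beta}$ does not leave the interaction of an empirical measure unchanged. The natural bound on the change is $2\ell_x\|\nabla w_N\|_{L^\infty}\sim \ell_x N^{(d+1)\beta}$, so you need $\ell_x\ll N^{-(d+1)\beta}$. Combine this with $\ell_x^d\ell_p^d=N^{-\gamma}$, $\gamma<1$ (needed for the large-deviation estimate), and $\ell_p\ll 1$. This combination, not Step 2, is where the paper's restriction $\beta<\frac{1}{d(d+1)}$ comes from.
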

	First, we prove a semi-classical approximation, using previously established a priori bounds, in order to rewrite the enrgy in terms of the Husimi functions. Then, we execute a mean-field approximation thanks to the Diaconis-Freedman theorem. However, the Diaconis-Freedman measure only charges empirical probability measures, that in particular do not satisfy the Pauli principle. In order to recover the Vlasov energy, we finally have to average the empirical measures so that the averaged measures satisfy the Pauli principle.
	\subsection{A priori bounds}\label{subsection_a_priori_bounds}
	First, we can establish the following a priori bounds on the kinetic and interaction energies of a sequence of states whose total energy is of order $N$.
	\begin{lemma}[A priori estimates on the energy]\label{lemma_apriori}
		Assume $\beta < 1/d$ and let $\Psi_N\in \Hcal_N$ of norm 1 such that 
		\begin{equation}\label{equation_condition_sur_psi_lemme32}
			\langle \Psi_N | H_N \Psi_N\rangle = O(N).
		\end{equation}
		Then, 
		\begin{equation}\label{equation_borne_apriori1}
			\Big\langle \Psi_N \Big| \sum\big(-\hbar^2 \Delta_j + V(x_j)\big) \Psi_N\Big\rangle = O(N^{1 + \beta d/2})
		\end{equation}
		and for every $v \in L^{1 + d/2}$,
		\begin{equation}\label{equation_borne_apriori2}
			\left|\frac{1}{N}\int_{\R^{2d}} v(x - y) \rho_{\Psi_{N}}^{(2)}(x, y)\d x\d y\right| \leq C N^{1 + \frac{\beta d^2}{2(d+2)}}\|v\|_{L^{1 + d/2}}.
		\end{equation}
	\end{lemma}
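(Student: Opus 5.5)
The plan is to derive both estimates from one operator inequality. It controls a two-body potential by half the kinetic energy plus an explicit error, and comes from the Lieb--Thirring inequality applied particle by particle.

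\textbf{Key inequality.} Let $v\in L^{1+d/2}$ and $\lambda>0$. Write $\sum_{j<k}=\frac12\sum_k\sum_{j\neq k}$ and $\sum_j(-\hbar^2\Delta_j)=\frac1{N-1}\sum_k\sum_{j\neq k}(-\hbar^2\Delta_j)$. This gives
\begin{equation*}
\frac12\sum_{j}(-\hbar^2\Delta_j)-\frac{\lambda}{N}\sum_{j<k}|v|(x_j-x_k)\geq \frac{1}{2N}\sum_{k=1}^N\sum_{j\neq k}\Big(-\hbar^2\Delta_j-\lambda|v|(x_j-x_k)\Big).
\end{equation*}
For fixed $x_k$, each inner sum is a one-body Schr\"odinger operator acting on the remaining $N-1$ variables. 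An antisymmetric $\Psi_N$ is still antisymmetric in those variables, so the fermionic Lieb--Thirring inequality applies for a.e.\ $x_k$. It bounds the inner sum below by $-C\hbar^{-d}\lambda^{1+d/2}\|v\|_{L^{1+d/2}}^{1+d/2}=-CN\lambda^{1+d/2}\|v\|^{1+d/2}_{L^{1+d/2}}$. Summing over $k$ and taking expectations in $\Psi_N$ gives
\begin{equation*}
\frac{\lambda}{N}\int|v|(x-y)\rho^{(2)}_{\Psi_N}\leq \frac12\Big\langle\Psi_N\Big|\sum_j(-\hbar^2\Delta_j)\Psi_N\Big\rangle+CN\lambda^{1+d/2}\|v\|_{L^{1+d/2}}^{1+d/2}.
\end{equation*}

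\textbf{Optimizing $\lambda$.} Let $K$ denote the expectation in \eqref{equation_borne_apriori1}. Choose $\lambda$ with $N\lambda^{1+d/2}\|v\|^{1+d/2}\sim K$. This yields
\begin{equation*}
\frac1N\int|v(x-y)|\rho^{(2)}_{\Psi_N}\lesssim K^{d/(d+2)}N^{2/(d+2)}\|v\|_{L^{1+d/2}}.
\end{equation*}
If $K=O(N^{1+a})$, the right-hand side is $N^{1+ad/(d+2)}\|v\|_{L^{1+d/2}}$. Taking $a=\beta d/2$ gives exactly the exponent $1+\frac{\beta d^2}{2(d+2)}$ of \eqref{equation_borne_apriori2}. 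So \eqref{equation_borne_apriori2} follows once \eqref{equation_borne_apriori1} is established.

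\textbf{Bound on $K$.} Since $V\ge0$ and $\langle H_N\rangle=O(N)$, we have $K\le O(N)+N^{-1}\int w_N\rho^{(2)}_{\Psi_N}$. Apply the key inequality with $v=w_N$ and $\lambda=1$, so that $\frac12\sum_j(-\hbar^2\Delta_j)$ absorbs the interaction. This gives $H_N\geq \frac12\sum_j(-\hbar^2\Delta_j+V(x_j))-CN\|w_N\|_{L^{1+d/2}}^{1+d/2}$, hence $K\lesssim N+N\|w_N\|^{1+d/2}_{L^{1+d/2}}$. The scaling gives $\|w_N\|_{L^p}^p=N^{d\beta(p-1)}\|w\|_{L^p}^p$. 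Alternatively one can bootstrap through the optimized bound and Young's inequality in $K$.

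\textbf{Main obstacle.} The delicate step is the exponent bookkeeping in the bound on $K$. The scaling gives $N^{1+\beta d^2/2}$, which matches the claimed $N^{1+\beta d/2}$ for $d=1$ but not for $d=2$. For $d=2$ I would first look for a sharper split of the interaction, for instance using part of the potential energy or interpolating $\|w_N\|_{L^\infty}$ against the Pauli bound on $\rho_{\Psi_N}$. If that fails, I would check whether the stated exponent should be adjusted. A second, easier point needing care is justifying the fibrewise use of Lieb--Thirring (measurability in $x_k$ and the domain issues).
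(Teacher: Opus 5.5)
Your argument is the paper's own. You write the expectation as an integral over one frozen particle, apply the fermionic Lieb--Thirring inequality to the remaining $N-1$ antisymmetric variables, and absorb the interaction by H\"older and Young. The only difference is in \eqref{equation_borne_apriori2}. You optimize over $\lambda$ in an operator inequality, while the paper applies H\"older fibrewise in $x_1$, then Lieb--Thirring, then Jensen for the concave map $t\mapsto t^{d/(d+2)}$. Both routes yield $\frac1N\int|v(x-y)|\rho^{(2)}_{\Psi_N}\lesssim K^{d/(d+2)}N^{2/(d+2)}\|v\|_{L^{1+d/2}}$. For $d=1$ your proof is complete.

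The obstacle you flag for $d=2$ is genuine, and the paper does not overcome it either. Its proof asserts $N\|w_N\|_{L^{1+d/2}}^{1+d/2}=N^{1+\beta d/2}\int w^{1+d/2}$, whereas the scaling gives $N^{1+\beta d^2/2}$, which is $N^{1+2\beta}$ when $d=2$; this is exactly your computation. Do not look for a sharper split: the lemma puts no restriction on $\int w$, and for $\int w$ large the exponent cannot be improved. The following heuristic example shows why.

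\begin{itemize}
\item Take a Slater determinant of $N$ fermions in a square of side $\ell$ near the origin, with $N^{-\beta}\ll\ell\ll N^{-\beta/2}$.
\item The attractive term contributes about $-\frac12(\int w)N\ell^{-2}$; the exchange part is only $O(N^{2\beta})$. For $\int w$ large this beats the minimal kinetic energy, which is of order $N\ell^{-2}$.
\item Raise the occupied momenta one mode at a time. Each step changes the energy by $O(\ell^{-2})=o(N)$.
\item A discrete intermediate-value argument then produces a state with $\langle\Psi_N|H_N\Psi_N\rangle=O(N)$ but kinetic energy $\gtrsim N\ell^{-2}\gg N^{1+\beta}$.
\end{itemize}

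What your argument, and the paper's, actually proves for $d=2$ is $O(N^{1+2\beta})$ in \eqref{equation_borne_apriori1} and $CN^{1+\beta}\|v\|_{L^2}$ in \eqref{equation_borne_apriori2}. Later uses of the lemma have to be rechecked with these exponents. For instance, Lemma~\ref{lemma_semi_classical_approximation_energy} then requires $\hbar_x\ll N^{-6\beta}$, which is still compatible with $\beta<1/6$.
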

	The proof of \eqref{equation_borne_apriori1} was inspired by that of \cite[Lemma 21.3]{girardot_approximation_2021}
	\begin{proof}
		First, we prove \eqref{equation_borne_apriori1}. We fix $N \in \N$, let 
		\begin{equation}
			H_N^0 = \sum_{j = 1}^N \big(\hbar^2 (-\Delta_j) + V(x_j)\big),
		\end{equation}
		we want to bound $\langle \Psi_N |H_N^0\Psi_N\rangle$ for $\Psi_N\in \Hcal_N$ satisfying \eqref{equation_condition_sur_psi_lemme32}. Let
		\begin{equation}
			\widetilde{H}_N = \frac{1}{2}\sum_{j = 1}^N \hbar^2 (-\Delta_j) - \frac{1}{N}\sum_{j<k} w_N(x_j - x_k).
		\end{equation}
		Bounding $\langle \Psi_N |\widetilde{H}_N \Psi_N\rangle$ from below will prove \eqref{equation_borne_apriori1}; indeed, for $\Psi_N \in \Hcal_N$, we have
		\begin{equation}\label{equation_inegalite_entre_hamiltoniens}
			\langle \Psi_N|H_N\Psi_N\rangle \geq \frac{1}{2}\langle \Psi_N| H_N^0 \Psi\rangle + \langle \Psi_N |\widetilde{H}_N \Psi_N\rangle.
		\end{equation}
		Let 
		\begin{equation}
			\widetilde{H}_{N-1}(x_1) = \frac{1}{2}\sum_{j = 2}^{N} \left( \frac{N}{N-1}(-\hbar^2 \Delta_j)  - w_N(x_1 - x_j)\right),
		\end{equation}
		we have by antisymmetry of $\Psi_N$
		\begin{equation}\label{equation_egalite_entre_htilde}
			\langle\Psi_N|\widetilde{H}_N\Psi_N\rangle = \int_{\R^d} \Big\langle \Psi_N(x_1)\Big|\widetilde{H}_{N-1}(x_1)\Psi_N(x_1)\Big\rangle \d x_1.
		\end{equation}
		We take now $\Psi_{N-1} \in \Hcal_{N-1}$ of norm 1. We have the following Lieb-Thirring inequality \cite[Theorem 4.3]{lieb_stability_2009}
		\begin{equation}\label{equation_lieb_thirring_inequality}
			\Big\langle \Psi_{N-1}\Big|\sum_{j=2}^N -\hbar^2 \Delta_j \Psi_{N-1}\Big\rangle \geq C_1 (N-1) \int_{\R^d} \left(\frac{\rho_{\Psi_{N-1}}}{N-1}\right)^{1 + 2/d}
		\end{equation}
		and by Hölder's and Young's inequalities, there exists $C_2$ such that
		\begin{align}
			\left|\int_{\R^d} w_N(x_1 - x_2)\rho_{\Psi_{N-1}}(x_2)\d x_2\right|&\leq (N-1) \|w_N\|_{L^{1 + d/2}}\left\|\frac{\rho_{\Psi_{N-1}}}{N-1}\right\|_{L^{1+2/d}} \notag\\
			&\leq C_2 (N-1) \|w_N\|_{L^{1 + d/2}}^{1+d/2} + C_1 (N-1)\left\|\frac{\rho_{\Psi_{N-1}}}{N-1}\right\|_{L^{1+2/d}}^{1+2/d}.\label{equation_borne_sup_w_linearise}
		\end{align}
		Thus, 
		\begin{equation}\label{equation_borne_htilde_n-1}
			\big\langle \Psi_{N-1}\big|\widetilde{H}_{N-1}(x_1)\Psi_{N-1}\big\rangle \geq -C_2 N\|w_N\|_{L^{1+d/2}}^{1+d/2} = -C_2 N^{1 + \beta d/2}\int_{\R^d} w^{1+d/2}.
		\end{equation}
		Inserting \eqref{equation_borne_htilde_n-1} in \eqref{equation_egalite_entre_htilde} with
		\begin{equation}\label{equation_definition_psi_n1}
			\Psi_{N-1} = \frac{\Psi_N(x_1)}{\|\Psi_N(x_1)\|_{L^2(\R ^{d(N-1)})}},
		\end{equation}
		we find
		\begin{equation}\label{equation_borne_htilde_n}
			\langle \Psi_N|\widetilde{H}_N\Psi_N\rangle \geq - C N^{1+\beta d/2}.
		\end{equation}
		Therefore, using \eqref{equation_inegalite_entre_hamiltoniens} and \eqref{equation_borne_htilde_n}, we have 
		\begin{equation}
			\langle \Psi_N|H_N^0\Psi_N\rangle \leq 2\langle \Psi_N|H_N\Psi_N\rangle - 2\langle \Psi_N|\widetilde{H}_N\Psi_N\rangle = O(N^{1+\beta d/2}).
		\end{equation}
		Let us now prove \eqref{equation_borne_apriori2}. Using once again the notation \eqref{equation_definition_psi_n1} with an implicit dependence on $x_1$, we have
		\begin{equation}
			I:= \frac{2}{N}\int_{\R^{2d}} v(x-y) \rho_{\Psi_N}^{(2)}(x, y)\d x \d y = \int_{\R^d} \left(\int_{\R^d} v(x_1 - x_2)\rho_{\Psi_{N-1}}(x_2)\d x_2\right) \|\Psi_N(x_1)\|^2 \d x_1.
		\end{equation}
		Then, we can use the first inequality in \eqref{equation_borne_sup_w_linearise}, as well as the Lieb-Thirring's inequality \eqref{equation_lieb_thirring_inequality} and Jensen's inequality, to find
		\begin{align}
			I &\leq 2N \|v\|_{L^{1 + d/2}}\int_{\R^d}\left\|\frac{\rho_{\Psi_{N-1}}}{N-1}\right\|_{L^{1+2/d}}\|\Psi_N(x_1)\|^2\d x_1\notag\\
			&\leq 2N \|v\|_{L^{1 + d/2}}\int_{\R^d}  \bigg(\frac{C}{N} \Big\langle \Psi_{N-1} \Big| \sum_{j=2}^N -\hbar^2 \Delta_j \Psi_{N-1}\Big\rangle \bigg)^{\frac{1}{1 + 2/d}}\|\Psi_N(x_1)\|^2\d x_1\notag\\
			&\leq CN \|v\|_{L^{1 + d/2}} \left(N^{-1} \int_{\R^d} \Big\langle \Psi_{N-1} \Big| \sum_{j=2}^N -\hbar^2 \Delta_j \Psi_{N-1}\Big\rangle  \|\Psi_N(x_1)\|^2\d x_1\right)^{\frac{1}{1 + 2/d}} \notag\\
			&\leq CN \|v\|_{L^{1 + d/2}}\left(N^{-1}\Big\langle \Psi_N\Big| \sum_{j = 1}^N - \hbar^2 \Delta_ j \Psi_N\Big\rangle \right)^{\frac{1}{1 + 2/d}}.
		\end{align}
		Then, since
		\begin{equation}
			\Big\langle \Psi_N\Big| \sum_{j = 1}^N - \hbar^2 \Delta_ j \Psi_N\Big\rangle = O(N^{1 + \beta d/2}),
		\end{equation}
		we recover
		\begin{equation}
			I \leq CN \|v\|_{L^{1+d/2}} \big(N^{\beta d /2}\big)^{\frac{1}{1 + 2/d}} = C \|v\|_{L^{1 + d/2}} N^{1 + \frac{\beta d^2}{2(d+2)}}.
		\end{equation}
	\end{proof}
	
	\subsection{Semi-classical computations}\label{subsection_semiclassical}
	In this section, we show that for an appropriate state $\Psi_N$, we can approximate the energy of $\Psi_N$ by a semi-classical energy of the one and two particles Husimi functions of $\Psi_N$.
	\begin{lemma}[Semi-classical approximation of the energy]\label{lemma_semi_classical_approximation_energy}
		Let $V$ satisfy Assumption \ref{assumption_V}. Let $\Psi_N \in \Hcal_N$ satisfy
		\begin{equation}
			\langle \Psi_N| H_N \Psi_N\rangle = O(N).
		\end{equation}
		Let $\beta < \frac{2}{d(2d + 1)}$. Then, for
		\begin{equation}\label{equation_choix_hbarx_hbarp_lemme33}
			\begin{cases}
				\hbar_p \ll 1\\
				\hbar_x\ll N^{-(2d+1)\beta},
			\end{cases}
		\end{equation}
		we have
		\begin{multline}\label{equation_semi_classique_borne_inf}
			\frac{\langle \Psi_N|H_N\Psi_N\rangle}{N} = \frac{1}{(2\pi)^d}\iint_{\R^d\times \R^d} \big(|p|^2 + V(x)\big)m_{\Psi_N}^{(1)}(x, p)\d x \d p \\ - \frac{1}{2(2\pi)^{2d}}\iint_{\R^{2d}\times\R^{2d}} w_N(x-y)m_{\Psi_N}^{(2)}(x, y; p, q)\d x\d y \d p \d q + o(1).
		\end{multline}
	\end{lemma}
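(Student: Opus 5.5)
The plan is to compare the three pieces of $\langle\Psi_N|H_N\Psi_N\rangle/N$ separately with their Husimi counterparts, using Properties \ref{property3} to turn every phase-space integral into an integral against $t_{\gamma^{(1)}}$, $\rho^{(1)}$ or $\rho^{(2)}$ convolved with $|g^\hbar|^2$ or $|f^\hbar|^2$. The error terms are then controlled by the a priori bounds of Lemma \ref{lemma_apriori}, which are available since $\langle\Psi_N|H_N\Psi_N\rangle=O(N)$.

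\emph{Kinetic term.} By \eqref{equation_lien_m_t},
\begin{equation*}
\frac{1}{(2\pi)^d}\iint |p|^2 m^{(1)}_{\Psi_N}\d x\d p=\frac1N\int |p|^2\big(t_{\gamma^{(1)}}*|g^\hbar|^2\big)(p)\d p .
\end{equation*}
Expanding $|p+q|^2=|p|^2+2p\cdot q+|q|^2$, the cross term vanishes because $f$ is radial, hence $|g^\hbar|^2$ is even. The remaining piece is $\int|p|^2 t_{\gamma^{(1)}}=\langle\Psi_N|\sum_j-\hbar^2\Delta_j\Psi_N\rangle$ plus $N\hbar_p\int|q|^2|\widehat f(q)|^2\d q$. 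After dividing by $N$, the error is $O(\hbar_p)=o(1)$.

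\emph{Potential term.} By \eqref{equation_lien_m_rho} with $k=1$, the Husimi potential energy equals $N^{-1}\int (V*|f^\hbar|^2)\rho^{(1)}_{\Psi_N}$. Since $f$ has compact support, Assumption \ref{assumption_V} gives
\begin{equation*}
|V*|f^\hbar|^2-V|(x)\le C\sqrt{\hbar_x}\,(|x|^{s-1}+1)\le C\sqrt{\hbar_x}\,(V(x)+C).
\end{equation*}
Hence, by \eqref{equation_borne_apriori1}, the error after division by $N$ is $O(\sqrt{\hbar_x}N^{\beta d/2})$. This is $o(1)$ because $\hbar_x\ll N^{-(2d+1)\beta}\le N^{-\beta d}$.

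\emph{Interaction term (main step).} By \eqref{equation_lien_m_rho} with $k=2$, and using that $|f^\hbar|^2$ is even,
\begin{equation*}
\frac{1}{2(2\pi)^{2d}}\iint w_N(x-y)\,m^{(2)}_{\Psi_N}=\frac{1}{N^2}\int \big(w_N*|f^\hbar|^2*|f^\hbar|^2\big)(x-y)\,\rho^{(2)}_{\Psi_N}(x,y)\d x\d y .
\end{equation*}
The true interaction energy divided by $N$ is the same expression with $w_N$ in place of the smeared potential. The difference is $N^{-2}\int v(x-y)\rho^{(2)}$ with $v=w_N-w_N*|f^\hbar|^2*|f^\hbar|^2$. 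Applying \eqref{equation_borne_apriori2} bounds it by $CN^{\frac{\beta d^2}{2(d+2)}}\|v\|_{L^{1+d/2}}$. Applying \eqref{equation_resultat_intermediaire} twice with $p=q:=1+d/2$ gives $\|v\|_{L^q}\le C\sqrt{\hbar_x}\|\nabla w_N\|_{L^q}$. By scaling, $\|\nabla w_N\|_{L^q}=N^{\beta+d\beta(1-1/q)}\|\nabla w\|_{L^q}=N^{\beta+\frac{d^2\beta}{d+2}}\|\nabla w\|_{L^q}$, which is finite since $\nabla w\in L^1\cap L^\infty$. The total error is therefore
\begin{equation*}
C\sqrt{\hbar_x}\,N^{\beta\left(1+\frac{3d^2}{2(d+2)}\right)}.
\end{equation*}
For $d=1$ the exponent is $\tfrac32\beta$, and for $d=2$ it is $\tfrac52\beta$; in both cases it equals $\tfrac{(2d+1)\beta}{2}$. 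So the error is $o(1)$ precisely when $\hbar_x\ll N^{-(2d+1)\beta}$. I expect this step to be the crux: the exponent bookkeeping is tight, and it is exactly what fixes the condition \eqref{equation_choix_hbarx_hbarp_lemme33}.

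\emph{Compatibility of the parameters.} The choice is consistent with $\hbar_x\hbar_p=\hbar^2=N^{-2/d}$. Indeed, $\hbar_p=N^{-2/d}/\hbar_x\ll1$ can hold together with $\hbar_x\ll N^{-(2d+1)\beta}$ exactly when $(2d+1)\beta<2/d$, which is the assumption $\beta<\frac{2}{d(2d+1)}$. Summing the three estimates yields \eqref{equation_semi_classique_borne_inf}.
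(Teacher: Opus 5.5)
Your proposal is correct and follows the paper's proof essentially step by step. It uses the same split into kinetic, potential and interaction terms, the same use of \eqref{equation_lien_m_t} and \eqref{equation_lien_m_rho}, and the same treatment of $v=w_N-w_N*|f^\hbar|^2*|f^\hbar|^2$ via \eqref{equation_borne_apriori2}, \eqref{equation_resultat_intermediaire} and scaling, giving the same exponent $(d+\tfrac12)\beta=\tfrac{(2d+1)\beta}{2}$ in front of $\sqrt{\hbar_x}$.

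Your errors after dividing by $N$ ($O(\hbar_p)$, $O(\sqrt{\hbar_x}N^{\beta d/2})$, $O(\sqrt{\hbar_x}N^{(2d+1)\beta/2})$) are in fact stated more accurately than the paper's displayed error terms, which carry stray factors of $N$.
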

	\begin{remark}[Choice of $\hbar_x$ and $\hbar_p$]
		The condition \eqref{equation_choix_hbarx_hbarp_lemme33} is compatible with $\hbar_x\hbar_p = \hbar^2 = N^{-2/d}$ since
		\begin{equation}
			N^{-2/d}\ll N^{-(2d+1)\beta}
		\end{equation}
		because
		\begin{equation}
			-(2d+1)\beta > - 2/d.
		\end{equation}
		\hfill $\diamond$
	\end{remark}
	\begin{proof}[Proof of Lemma \ref{lemma_semi_classical_approximation_energy}]
		We denote $\gamma_N$ the one-particle reduced density matrix of $\Psi_N$. First, we compute the kinetic energy using \eqref{equation_lien_m_t}, as well as notation introduced in \eqref{equation_definition_tgamma} and \eqref{equation_definition_coherent_state_space}.
		\begin{align}\label{equation_energie_cinetique_preuve_lemme33}
			\frac{N}{(2\pi)^d} \iint_{\R^d\times\R^d} |p|^2 m_{\Psi_N}^{(1)}(x, p)\d x\d p &= \frac{N}{(2\pi)^d} \iint |p|^2 \langle f_{x, p}^{\hbar}|\gamma_N f_{x, p}^{\hbar}\rangle \d x\d p  \notag\\ 
			&= \int_{\R^{2d}}  |p-q|^2 t_{\gamma_N}(p) |g^{\hbar}(q)|^2 \d p \d q\notag\\
			&= \int_{\R^{2d}} |p|^2 t_{\gamma_N}(p) |g^{\hbar}(q)|^2 \d p \d q + \int_{\R^{2d}} |q|^2 t_{\gamma_N}(p) |g^{\hbar}(q)|^2 \d p \d q\notag\\
			&= \int_{\R^d}|p|^2 t_{\gamma_N}(p)\d p + \tr(\gamma_N)\hbar_p^{-d/2} \int_{\R^d} |q|^2 |\widehat{f}(q/\sqrt{\hbar_p})|^2 \d q\notag\\
			&= \int_{\R^d} |p|^2 t_{\gamma_N}(p)\d p + N \hbar_p \int_{\R^d} |k|^2 |\widehat{f}(k)|^2 \d k\notag\\
			&= \tr\big( -\hbar^2 \Delta \gamma_N\big) + N \hbar_p \|\nabla f\|_{L^2}^2.
		\end{align}
		The $pq$ term coming from the expansion of $|p-q|^2$ cancels out because $|g^{\hbar}|^2$ is radial. Then, we compute the potential energy using \eqref{equation_lien_m_rho}
		\begin{align}
			\frac{N}{(2\pi)^d}\iint_{\R^d\times\R^d} V(x) m_{\Psi_N}^{(1)}(x, p)\d x\d p &= \frac{N}{(2\pi)^d}\iint_{\R^d\times\R^d} V(x) \langle f_{x, p}^{\hbar}|\gamma_N f_{x, p}^{\hbar}\rangle \d x\d p\notag\\
			&= \int_{\R^d} V(x) \Big(\rho_{\gamma_N} * |f^{\hbar}|^2\Big)(x) \d x \notag\\
			&= \int_{\R^d} \Big(V * |f^{\hbar}|^2 \Big)(x) \rho_{\gamma_N}(x) \d x.
		\end{align}
		Recall that 
		\begin{equation}
			\supp f^{\hbar} \subset B(0, C\sqrt{\hbar_x}),
		\end{equation}
		then
		\begin{equation}
			\int_{\R^d} \bigg| V(x) - \int_{\R^d} V(x - y) |f^{\hbar}(y)|^2\d y\bigg| \rho_{\gamma_N}(x)\d x \leq C\sqrt{\hbar_x} \int_{\R^d} \|\nabla V\|_{L^{\infty}\big(B(0, |x|+C\sqrt{\hbar_x})\big)}\rho_{\gamma_N(x)}\d x.
		\end{equation}
		Using Assumption \ref{assumption_V} and Young's inequality,
		\begin{equation}
			\int_{\R^d} \big|V - V*|f^{\hbar}|^2\big|\rho_{\gamma_N} \leq C\sqrt{\hbar_x} \int_{\R^d} \big(|x|^{s-1} + 1\big) \rho_{\gamma_N}(x)\d x \leq C\sqrt{\hbar_x} \int_{\R^d} V \rho_{\gamma_N} + C N \sqrt{\hbar_x}.
		\end{equation}
		The a priori estimate \eqref{equation_borne_apriori1} then gives
		\begin{equation}
			\int_{\R^d} \big|V - V*|f^{\hbar}|^2\big|\rho_{\gamma_N} \leq C \sqrt{\hbar_x} N^{1 + \beta d/2}
		\end{equation}
		and therefore
		\begin{equation}\label{equation_energie_potentielle_preuve_lemme33}
			\frac{N}{(2\pi)^d}\iint_{\R^d\times\R^d} V(x) m_{\Psi_N}^{(1)}(x, p)\d x\d p = \tr(V\gamma_N) + O\big(N^{1 + \beta d/2} \sqrt{\hbar_x}\big)
		\end{equation}
		Finally, we compute the interaction energy using \eqref{equation_lien_m_rho}. We have
		\begin{multline}\label{equation_la_quon_va_reecrire_ensuite}
			\frac{N^2}{(2\pi)^{2d}}\iint_{\R^{2d}\times\R^{2d}} w_N(x-y)m_{\Psi_N}^{(2)}(x, y; p, q)\d x \d y \d p \d q \\
			= 2 \int_{\R^{4d}} w_N(x_1 - y_1) \rho_{\Psi_N}^{(2)}(x_2, y_2) \big|f^{\hbar}(x_1 - x_2)\big|^2 \big|f^{\hbar}(y_1 - y_2)\big|^2 \d x_1 \d x_2 \d y_1 \d y_2.
		\end{multline}
		Let
		\begin{equation} 
			v_N = w_N - \big(w_N*|f^{\hbar}|^2\big)*|f^{\hbar}|^2,
		\end{equation} 
		for $x, y\in \R^d$, we have
		\begin{align}
			v_N(x-y) &= w_N(x-y) - \Big(\big(w_N*|f^{\hbar}|^2\big)*|f^{\hbar}|^2\Big)(x-y)\notag\\
			&= w_N(x - y) - \int_{\R^d}\big(w_N*|f^{\hbar}|^2\big)(x-y - x') |f^{\hbar}(x')|^2\d x'\notag\\
			&= w_N(x - y) - \int_{\R^{2d}} w_N(x - y - x' - y') |f^{\hbar}(y')|^2|f^{\hbar}(x')|^2\d y' \d x'\notag\\
			&= w_N(x - y) - \int_{\R^{2d}} w_N(x - y - x' + y') |f^{\hbar}(y')|^2|f^{\hbar}(x')|^2\d y' \d x'.
		\end{align}
		By a change of variable $x'' = x-x'$ and $y'' = y - y'$, we find
		\begin{equation}
			v_N(x - y) = w_N(x-y) - \int w_N\big(x''-y''\big)|f^{\hbar}(x-x'')|^2|f^{\hbar}(y-y'')|^2\d x''\d y''.
		\end{equation}
		Then, with this new notation, we can rewrite \eqref{equation_la_quon_va_reecrire_ensuite} as
		\begin{multline}
			\frac{N^2}{(2\pi)^{2d}}\iint_{\R^{2d}\times\R^{2d}} w_N(x-y)m_{\Psi_N}^{(2)}(x, y; p, q)\d x \d y \d p \d q = 2\int_{\R^{2d}} w_N(x-y) \rho_{\Psi_N}^{(2)}(x, y)\d x\d y \\- 2\int_{\R^{2d}} v_N(x-y) \rho_{\Psi_N}^{(2)}(x, y)\d x\d y.
		\end{multline}
		By \eqref{equation_borne_apriori2} applied to $v_N$, we can bound the difference to $\int w_N\rho_{\Psi_N}^{(2)}$:
		\begin{multline}
			\frac{N^2}{(2\pi)^{2d}}\iint_{\R^{2d}\times\R^{2d}} w_N(x-y)m_{\Psi_N}^{(2)}(x, y; p, q)\d x \d y \d p \d q\\ = 2\int_{\R^{2d}} w_N(x - y)\rho_{\Psi_N}^{(2)}(x, y)\d x\d y + O\big( N^{1 + \frac{\beta d^2}{2(d+2)}} \|v_N\|_{L^{1 + d/2}}\big).
		\end{multline}
		Because of \eqref{equation_resultat_intermediaire},
		\begin{equation}
			\|v_N\|_{L^{1+d/2}} = \big\|w_N - \big(w_N*|f^{\hbar}|^2\big)*|f^{\hbar}|^2\big\|_{L^{1 + d/2}} \leq \sqrt{\hbar_x}\|\nabla w_N\|_{L^{1+d/2}} = \sqrt{\hbar_x} N^{(d+1)\beta - \frac{d\beta}{1 + d/2}}\|\nabla w\|_{L^{1 + d/2}}.
		\end{equation}
		Moreover,  
		\begin{equation}
			\frac{\beta d^2}{2(d + 2)} - \frac{d \beta }{1 +d/2} = \beta\frac{d^2 - 4d}{2(d+2)} = \frac{-\beta}{2}
		\end{equation}
		for $d =1$ or $2$, and we finally find
		\begin{multline}\label{equation_energie_interaction_preuve_lemme33}
			\frac{N^2}{(2\pi)^{2d}}\iint_{\R^{2d}\times\R^{2d}} w_N(x-y)m_{\Psi_N}^{(2)}(x, y; p, q)\d x \d y \d p \d q\\ = 2\int_{\R^{2d}} w_N(x - y)\rho_{\Psi_N}^{(2)}(x, y)\d x\d y + O\big( N^{1 + (d + 1/2)\beta}\sqrt{\hbar_x}\big).
		\end{multline}
		Collecting \eqref{equation_energie_cinetique_preuve_lemme33}, \eqref{equation_energie_potentielle_preuve_lemme33} and \eqref{equation_energie_interaction_preuve_lemme33}, we find
		\begin{multline}\label{equation_fin_preuve_lemme33}
			\frac{\langle \Psi_N|H_N\Psi_N\rangle}{N} = \frac{1}{(2\pi)^d}\iint_{\R^d\times\R^d} \big(|p|^2 + V(x)\big)m_{\Psi_N}^{(1)}(x, p)\d x \d p \\ - \frac{1}{2(2\pi)^{2d}}\iint_{\R^{2d}\times\R^{2d}} w_N(x-y)m_{\Psi_N}^{(2)}(x, y; p, q)\d x\d y \d p \d q + O(N\hbar_p) + O\big( N^{1 + (d + 1/2)\beta}\sqrt{\hbar_x}\big).
		\end{multline}
		Now, using \eqref{equation_choix_hbarx_hbarp_lemme33}, we recover \eqref{equation_semi_classique_borne_inf} from \eqref{equation_fin_preuve_lemme33}.
	\end{proof}
	
	\subsection{The Diaconis-Freedman theorem and its direct consequences}\label{subsection_diaconis_freedman}
	From now on, $\beta$ always satisfies $\beta < \frac{2}{d(2d+1)}$, and we fix a sequence $(\Psi_N)$ such that 
	\begin{equation}\label{equation_proprietes_psi}
		\Psi_N\in \Hcal_N,~~~~~~~~~~\|\Psi_N\| = 1,~~~~~~~~~~\langle\Psi_N |H_N\Psi_N\rangle = E(N) + o(N).
	\end{equation}
	In this section, we use the Diaconis-Freedman theorem to rewrite the energy. First, we introduce the notation for empirical measures in order to state the theorem. Then, we apply Theorem \ref{theorem_diaconis_freedman} and Lemma \ref{lemma_first_marginals} to $m_{\Psi_N}^{(N)}$.
	\begin{definition}[Empirical measure]
		For $Z_N = (z_j) = (x_j, p_j) \in \mathbb{R}^{2dN}$ a set of points in the phase space, we define the empirical measure associated with $Z_N$ by
		\begin{equation}
			\mathrm{Emp}_{Z_N} = \frac{1}{N}\sum_{j = 1}^N \delta_{z_j}.
		\end{equation}
	\end{definition}
	Now we can state the Diaconis-Freedman theorem \cite{diaconis_finite_1980}:
	\begin{theorem}[Diaconis-Freedman]\label{theorem_diaconis_freedman}
		Let $m_N$ be a symmetric probability measure over $\mathbb{R}^{2dN}$. Let $P_N^{\mathrm{DF}}$ be the probability measure over $\mathcal{P}(\mathbb{R}^{2d})$ defined by
		\begin{equation}\label{equation_definition_diaconis_freedman}
			P_N^{\mathrm{DF}}(\sigma) = \int_{\mathbb{R}^{2dN}}\delta_{\sigma = \mathrm{Emp}_{Z_N}}\d m_N(Z_N). 
		\end{equation}
		Moreover, we set
		\begin{equation}\label{equation_definition_approximation_diaconis_freedman}
			\widetilde{m}_N = \int_{\mathcal{P}(\mathbb{R}^{2d})} \sigma^{\otimes N}\d P_N^{\mathrm{DF}}(\sigma)
		\end{equation}
		with marginals
		\begin{equation}
			\widetilde{m}_N^{(k)} = \int_{\mathcal{P}(\mathbb{R}^{2d})} \sigma^{\otimes k}\d P_N^{\mathrm{DF}}(\sigma).
		\end{equation}
		Then, 
		\begin{equation}\label{equation_difference_m_tilde_ou_non_TV}
			\|m_N^{(k)} - \widetilde{m}_N^{(k)}\|_{\mathrm{TV}} \leq \frac{2k(k-1)}{N}.
		\end{equation}
		Furthermore, if the sequence of measures $(m_N^{(1)})$ is tight, there exists a probability measure $P^{\mathrm{DF}}$ on $\mathcal{P}(\mathbb{R}^{2dN})$ such that up to extraction, we have the weak convergence of $P_N^{\mathrm{DF}}$ as measures when $N \to + \infty$
		\begin{equation}
			P_N^{\mathrm{DF}} \rightharpoonup P^{\mathrm{DF}},
		\end{equation}
		and for any $k \in \mathbb{N}$, 
		\begin{equation}
			\widetilde{m}_N^{(k)} \rightharpoonup \int_{\mathcal{P}(\mathbb{R}^{2d})} \sigma^{(k)} \d P^{\mathrm{DF}}(\sigma),
		\end{equation}
		weakly as measures when $N\to+\infty$.
	\end{theorem}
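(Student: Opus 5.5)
The total variation bound \eqref{equation_difference_m_tilde_ou_non_TV} is obtained by the classical sampling-with-versus-without-replacement argument. The weak convergence is obtained by a compactness argument on $\mathcal{P}(\mathcal{P}(\R^{2d}))$.

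\emph{Step 1: rewrite $\widetilde m_N^{(k)}$ as sampling with replacement.} For a fixed configuration $Z_N$, the measure $\mathrm{Emp}_{Z_N}^{\otimes k}$ is the law of $(z_{i_1},\dots,z_{i_k})$ with $i_1,\dots,i_k$ drawn independently and uniformly in $\{1,\dots,N\}$. Hence
\begin{equation*}
\widetilde m_N^{(k)}=\int \mathrm{Emp}_{Z_N}^{\otimes k}\d m_N(Z_N).
\end{equation*}
Now let $\mu_{Z_N}^{(k)}$ be the law of $(z_{i_1},\dots,z_{i_k})$ with distinct indices drawn uniformly without replacement. By symmetry of $m_N$,
\begin{equation*}
\int \mu_{Z_N}^{(k)}\d m_N(Z_N)=m_N^{(k)}.
\end{equation*}
By convexity of the total variation norm, it then suffices to bound $\|\mathrm{Emp}_{Z_N}^{\otimes k}-\mu_{Z_N}^{(k)}\|_{\mathrm{TV}}$ uniformly in $Z_N$.

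\emph{Step 2: the coupling estimate.} Both laws are push-forwards of index distributions under the same map $(i_1,\dots,i_k)\mapsto(z_{i_1},\dots,z_{i_k})$, so it is enough to compare the uniform law on $\{1,\dots,N\}^k$ with the uniform law on $k$-tuples of distinct indices. The first law, conditioned on the event that all indices are distinct, is exactly the second. The probability of a collision is at most $\binom{k}{2}/N\le k(k-1)/(2N)$. Therefore the total variation distance is at most $2\cdot\mathbb{P}(\text{collision})\le k(k-1)/N$, which is below the stated $2k(k-1)/N$, whatever normalization of $\|\cdot\|_{\mathrm{TV}}$ is used. This is the only genuinely combinatorial step, and it is standard.

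\emph{Step 3: compactness.} Suppose $(m_N^{(1)})$ is tight. Since
\begin{equation*}
\int \sigma\d P_N^{\mathrm{DF}}(\sigma)=\int \mathrm{Emp}_{Z_N}\d m_N(Z_N)=m_N^{(1)},
\end{equation*}
the sequence of intensity measures of $P_N^{\mathrm{DF}}$ is tight. By Markov's inequality, for every $\varepsilon$ there exist compact sets $K_j$ such that
\begin{equation*}
P_N^{\mathrm{DF}}\big(\sigma(K_j^c)>1/j\big)\le \varepsilon 2^{-j}.
\end{equation*}
The set $\{\sigma:\sigma(K_j^c)\le 1/j~\forall j\}$ is compact in $\mathcal{P}(\R^{2d})$ for the weak topology by Prokhorov's theorem, so $(P_N^{\mathrm{DF}})$ is tight. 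Prokhorov's theorem applied again gives a subsequence with $P_N^{\mathrm{DF}}\rightharpoonup P^{\mathrm{DF}}$.

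For $\varphi_1,\dots,\varphi_k\in C_b(\R^{2d})$, the map $\sigma\mapsto\prod_j\int\varphi_j\d\sigma$ is bounded and continuous. Hence $\int\varphi_1\otimes\cdots\otimes\varphi_k\d\widetilde m_N^{(k)}$ converges to the corresponding integral against $\int\sigma^{\otimes k}\d P^{\mathrm{DF}}$. Finite linear combinations of tensor products are dense, and the family $\widetilde m_N^{(k)}$ is tight, since its marginals are all equal to $m_N^{(1)}$. Together these give weak convergence of $\widetilde m_N^{(k)}$, and combining with Step 2 the same limit holds for $m_N^{(k)}$.

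The main obstacle I expect is Step 3, namely lifting tightness from $\R^{2d}$ to $\mathcal{P}(\R^{2d})$. The choice of compact sets $K_j$ above should handle it.
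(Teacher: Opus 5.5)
Your proposal is correct and takes essentially the route the paper relies on: the paper gives no proof and cites Rougerie's notes (Theorems 2.1--2.2), where the total-variation bound comes from the same splitting of $\mathrm{Emp}_{Z_N}^{\otimes k}$ into injective and non-injective index maps (your with/without-replacement coupling), and the limit statement from a compactness extraction of $P_N^{\mathrm{DF}}$ as in the Hewitt--Savage argument. One detail to make explicit in Step 3: the set $\{\sigma:\sigma(K_j^c)\le 1/j~\forall j\}$ is weakly closed, because $\sigma\mapsto\sigma(K_j^c)$ is lower semicontinuous when $K_j^c$ is open, so it is genuinely compact and not merely relatively compact.
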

	The limit probability $P^{\mathrm{DF}}$ in the second part of the theorem is that appearing in \cite{hewitt_symmetric_1955}. For a proof of Theorem \ref{theorem_diaconis_freedman}, see \cite[Theorem 2.1 and 2.2]{rougerie_finetti_2020}. The following lemma computes explicitly the first marginals of the measure $\widetilde{m}_N$ constructed above.
	\begin{lemma}[First marginals]\label{lemma_first_marginals}
		Let $m_N$ be a symmetric probability measure over $\mathbb{R}^{2dN}$ and its associated Diaconis-Freedman approximation $\widetilde{m}_N$ defined by \eqref{equation_definition_approximation_diaconis_freedman}. Then for all $x_1, x_2,p_1,p_2 \in \mathbb{R}^d$, we have
		\begin{equation}\label{equation_first_marginals1}
			\widetilde{m}_N^{(1)} (x_1, p_1) = m_N^{(1)}(x_1, p_1)
		\end{equation}
		and
		\begin{equation}\label{equation_first_marginals2}
			\widetilde{m}_N^{(2)}(x_1, p_1 ; x_2, p_2) = \frac{N-1}{N}m_N^{(2)}(x_1, p_1 ; x_2, p_2)+ \frac{1}{N}m_N^{(1)}(x_1, p_1)\delta_{(x_1, p_1) = (x_2, p_2)}. 
		\end{equation}
	\end{lemma}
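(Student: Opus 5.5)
The plan is to compute the marginals of $\widetilde{m}_N$ directly by testing against functions. Fix bounded continuous test functions $\varphi$ on $\R^{2d}$ and $\Phi$ on $\R^{2d}\times\R^{2d}$. By the definition \eqref{equation_definition_diaconis_freedman} of $P_N^{\mathrm{DF}}$ as the push-forward of $m_N$ under $Z_N\mapsto \mathrm{Emp}_{Z_N}$, we have
\begin{equation*}
\int \varphi \,\d\widetilde{m}_N^{(1)} = \int_{\mathcal{P}(\R^{2d})} \Big(\int \varphi\,\d\sigma\Big)\d P_N^{\mathrm{DF}}(\sigma) = \int_{\R^{2dN}} \frac{1}{N}\sum_{j=1}^N \varphi(z_j)\,\d m_N(Z_N).
\end{equation*}
Since $m_N$ is symmetric, every term in the sum equals $\int \varphi\,\d m_N^{(1)}$. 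This gives \eqref{equation_first_marginals1}.

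For the second marginal I would proceed the same way:
\begin{equation*}
\int \Phi \,\d\widetilde{m}_N^{(2)} = \int_{\R^{2dN}} \iint \Phi\,\d\mathrm{Emp}_{Z_N}^{\otimes 2}\,\d m_N = \int_{\R^{2dN}} \frac{1}{N^2}\sum_{j,k=1}^N \Phi(z_j,z_k)\,\d m_N(Z_N).
\end{equation*}
I then split the double sum into its off-diagonal part ($j\neq k$, which has $N(N-1)$ terms) and its diagonal part ($j=k$, which has $N$ terms). By symmetry of $m_N$, each off-diagonal term equals $\int\Phi\,\d m_N^{(2)}$, which produces the coefficient $\frac{N-1}{N}$. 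Each diagonal term equals $\int \Phi(z,z)\,\d m_N^{(1)}(z)$. This is exactly the integral of $\Phi$ against the measure $m_N^{(1)}(z_1)\delta_{z_1=z_2}$, i.e. the push-forward of $m_N^{(1)}$ by the diagonal map $z\mapsto(z,z)$, and it comes with the coefficient $\frac{1}{N}$. Since bounded continuous functions determine measures, \eqref{equation_first_marginals2} follows.

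There is no real obstacle here. The only care needed is in interpreting the identities. They hold as equalities of measures, with the diagonal term read as the push-forward described above rather than as a pointwise density. Here $m_N^{(k)}$ is understood as the $k$-th marginal of the probability measure $m_N$. In the intended application, $m_N$ is the normalized $(2\pi)^{-dN}m^{(N)}_{\Psi_N}/N!$, and one has to keep track of the normalization of the Husimi marginals, which differ by factors $N!/(N-k)!$.
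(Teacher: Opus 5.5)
Your proposal is correct and is essentially the argument the paper relies on. The paper gives no proof of its own; it refers to Rougerie's lecture notes (Remark 2.3), where the formula is obtained exactly as you do: write $\widetilde{m}_N^{(2)} = \int \mathrm{Emp}_{Z_N}^{\otimes 2}\d m_N(Z_N)$, split the double sum into the $N(N-1)$ off-diagonal terms and the $N$ diagonal terms, and use the symmetry of $m_N$. Your reading of the diagonal term as the push-forward of $m_N^{(1)}$ under $z\mapsto(z,z)$ is the correct interpretation of the paper's notation.
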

	A proof of Lemma \ref{lemma_first_marginals} can be found in \cite[Remark 2.3]{rougerie_finetti_2020} for instance. In the rest of the section, we will use the following notation:
	\begin{equation}\label{equation_definition_m_N}
		m_N = \frac{m_{\Psi_N}^{(N)}}{N! (2\pi \hbar)^d}
	\end{equation}
	which implies, for all $k\leq N$,
	\begin{equation}\label{equation_definition_m_N_k}
		m_{\Psi_N}^{(k)} = \frac{N!}{(N-k)!}(2\pi \hbar)^{dk}m_N^{(k)}.
	\end{equation}
	We still have to introduce a last notation before rewriting the energy in terms of the Diaconis-Freedman measure.
	\begin{definition}[Semi-classical energy of a one-body measure]
		Let $\mu$ be a probability measure on the phase space $\R^{2d}$. We define its semi-classical energy by
		\begin{equation}\label{equation_definition_mathcalE_N}
			\mathcal{E}_N[\mu] = \iint_{\R^d\times\R^d} \big(|p|^2 + V(x)\big) \d \mu(x, p) - \frac{1}{2} \iint_{\R^{2d}\times\R^{2d}} w_N(x-y) \d \mu^{\otimes 2}(x, p ; y, q).
		\end{equation}
	\end{definition}
	\begin{remark}[Link with the $N$-Vlasov energy]
		This energy is closely related to the Vlasov energy. Indeed, for $m$ regular enough, we have
		\begin{equation}
			\mathcal{E}_N\Big[(2\pi)^{-d}m\Big] = \mathcal{E}_N^{\mathrm{V}}[m].
		\end{equation}
		The definition of $\mathcal{E}_N$, however, allows us to consider singular measures whose spatial density is not defined, and is also more convenient for probability measures -- as opposed to the constraints on $m$ in \eqref{equation_egalite_energie_thomas_fermi_vlasov}.
		\hfill $\diamond$
	\end{remark}
	With this new notation, Theorem \ref{theorem_diaconis_freedman} and Lemma \ref{lemma_first_marginals}, we have the following
	\begin{lemma}[Energy in terms of the Diaconis-Freedman measure]\label{lemma_energy_involving_diaconis_freedman}
		Let $V$ satisfy Assumption \ref{assumption_V}. Let $\Psi_N\in \Hcal_N$ satisfy \eqref{equation_proprietes_psi}, $P_N^{\mathrm{DF}}$ defined by \eqref{equation_definition_diaconis_freedman} with the notation \eqref{equation_definition_m_N}, and $\mathcal{E}_N$ defined by \eqref{equation_definition_mathcalE_N}. Then, we have
		\begin{equation}\label{equation_energy_involving_diaconis_freedman}
			\frac{1}{N}\langle\Psi_N| H_N \Psi_N\rangle = \int_{\mathcal{P}(\R^{2d})} \mathcal{E}_N[\mu] \d P_N^{\mathrm{DF}}(\mu) + o(1).
		\end{equation}
	\end{lemma}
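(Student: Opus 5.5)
The plan is to start from the semi-classical approximation of Lemma \ref{lemma_semi_classical_approximation_energy}, which applies since $\langle\Psi_N|H_N\Psi_N\rangle = E(N)+o(N) = O(N)$ by Proposition \ref{proposition_upper_bound} and the a priori lower bound. We then rewrite both terms on its right-hand side through the normalized marginals $m_N^{(k)}$ of \eqref{equation_definition_m_N}, and finally replace $m_N^{(k)}$ by the Diaconis-Freedman marginals $\widetilde m_N^{(k)}$ using the exact formulas of Lemma \ref{lemma_first_marginals}. Working with exact marginal identities, rather than the total variation bound \eqref{equation_difference_m_tilde_ou_non_TV}, avoids multiplying a $1/N$ error by $\|w_N\|_{L^\infty}=N^{d\beta}$ in an uncontrolled way; only the explicit diagonal term needs to be estimated.

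First, by \eqref{equation_definition_m_N_k} with $(2\pi\hbar)^d=(2\pi)^d/N$, we have $(2\pi)^{-d}m^{(1)}_{\Psi_N} = m_N^{(1)}$ and $(2\pi)^{-2d}m^{(2)}_{\Psi_N} = \frac{N-1}{N}m_N^{(2)}$. Hence the one-body term of \eqref{equation_semi_classique_borne_inf} is $\int(|p|^2+V)\,\d m_N^{(1)}$. By \eqref{equation_first_marginals1} and the definition of $P_N^{\mathrm{DF}}$, this equals $\int_{\mathcal P}\int(|p|^2+V)\,\d\mu\,\d P_N^{\mathrm{DF}}(\mu)$; the exchange of integrals is justified by Tonelli since the integrand is nonnegative. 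For the two-body term, \eqref{equation_first_marginals2} gives
\begin{equation*}
\frac{N-1}{N}\int w_N(x-y)\,\d m_N^{(2)} = \int w_N(x-y)\,\d\widetilde m_N^{(2)} - \frac1N w_N(0)\int \d m_N^{(1)},
\end{equation*}
and $\int w_N(x-y)\,\d\widetilde m_N^{(2)} = \int_{\mathcal P}\iint w_N(x-y)\,\d\mu^{\otimes2}\,\d P_N^{\mathrm{DF}}(\mu)$. Since $w_N\ge0$ is bounded and $P_N^{\mathrm{DF}}$ is a probability measure, Fubini applies here as well.

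The remaining error is $\frac1{2N}w_N(0)$, which is at most $\frac{1}{2N}\|w_N\|_{L^\infty} = \frac12 N^{d\beta-1}\|w\|_{L^\infty} = o(1)$ because $\beta<1/d$. One technical point needs care: $w_N(0)$ should be understood as the value of the diagonal term $\int w_N(x-x)\,\d m^{(1)}_N$, and this is well defined because $m_N^{(2)}$ and the $\widetilde m_N^{(2)}$ are genuine measures. If $w$ is only defined a.e., one can either use a continuous representative or simply bound the diagonal contribution by $\|w_N\|_{L^\infty}/N$. Here $w$ is Lipschitz by Assumption \ref{assumption_w}, so it is continuous and no difficulty arises. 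Collecting terms yields \eqref{equation_energy_involving_diaconis_freedman}.

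I expect the main obstacle to be precisely this diagonal Dirac term in \eqref{equation_first_marginals2}. Since $w_N$ blows up like $N^{d\beta}$, it must be checked that this factor is beaten by the prefactor $1/N$; this works for every $\beta<1/d$, so no further restriction on $\beta$ arises at this step. A secondary point is checking that the normalizations line up, i.e.\ that $m_N$ is a symmetric probability measure, which follows from Properties \ref{property3} with $k=N$.
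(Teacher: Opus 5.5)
Your proposal is correct and follows the paper's proof essentially line by line: invoke Lemma \ref{lemma_semi_classical_approximation_energy}, convert the Husimi marginals via \eqref{equation_definition_m_N_k}, use the exact identities of Lemma \ref{lemma_first_marginals} with Tonelli/Fubini, and absorb the diagonal term $\frac{1}{2N}w_N(0)=O(N^{d\beta-1})$ using $\beta<1/d$. One small aside: the total variation bound \eqref{equation_difference_m_tilde_ou_non_TV} would also have handled the two-body term, since it produces the same $O(N^{d\beta-1})$ error, so your stated reason for preferring the exact identities is somewhat overstated.
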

	\begin{proof} First, since $\Psi_N$ satisfies \eqref{equation_proprietes_psi}, we can use Lemma \ref{lemma_semi_classical_approximation_energy}. Then, we have by \eqref{equation_definition_m_N_k} and \eqref{equation_first_marginals1}
		\begin{align}
			\frac{1}{(2\pi)^d}\iint_{\R^d\times\R^d}\big(|p|^2 + V(x)\big)\d m_{\Psi_N}^{(1)}(x, p) &= \iint_{\R^d\times\R^d}\big(|p|^2 + V(x)\big)\d m_N^{(1)}(x, p)\notag\\
			&= \iint_{\R^d\times\R^d}\big(|p|^2 + V(x)\big)\d \widetilde{m}_N^{(1)}(x, p)\notag\\
			&= \int_{\mathcal{P}(\R^{2d})} \iint_{\R^d\times\R^d} \big(|p|^2 + V(x)\big)\d \mu \d P^{\mathrm{DF}}_N(\mu),
		\end{align}
		and by \eqref{equation_first_marginals2},
		\begin{align}
			\frac{1}{(2\pi)^{2d}}\iint_{\R^{2d}\times\R^{2d}} w_N&(x - y)\d m_{\Psi_N}^{(2)}(x, y; p, q) = \frac{N-1}{N}\iint_{\R^{2d}\times\R^{2d}} w_N(x - y)\d m_{N}^{(2)}(x, y; p, q) \notag\\
			&= \iint_{\R^{2d}\times\R^{2d}} w_N(x - y)\d \widetilde{m}_{N}^{(2)}(x, y; p, q) - \frac{1}{N} w_N(0)\iint_{\R^d\times \R^d} \d \widetilde{m}_N(x, p) \notag\\
			&= \int_{\mathcal{P}(\R^{2d})} \iint_{\R^{2d}\times\R^{2d}} w_N(x-y)\d \mu^{\otimes 2}(x, y; p, q) \d P^{\mathrm{DF}}_N(\mu) + O(N^{d\beta - 1}).
		\end{align}
		It only remains to recall that $\beta < 1/d$ to conclude the proof.
	\end{proof}
	\subsection{Averaging the measure}\label{subsection_averaging}
	It is clear from its definition in \eqref{equation_definition_diaconis_freedman} that $P^{\mathrm{DF}}_N$ only charges empirical measures, that, in particular, do not respect the Pauli principle \eqref{equation_principe_pauli_df}. Hence, we have to average the measures $\mu$ that appear in \eqref{equation_energy_involving_diaconis_freedman}, and then check ($i$) that replacing $\mu$ by an average on small boxes does not modify the energy too much and ($ii$) that the averaged measures do satisfy a Pauli principle with large probability. Before we do so, we introduce the tiling on which we will average the measures.
	\begin{definition}[Tiling]\label{definition_tiling}
		We divide the phase space into small hyperrectangles $\Omega_j$ 
		\begin{equation}
			\R^{2d} = \overline{\bigcup_{m\in \N} \Omega_j}~~~~~~~~~~~ \forall j\neq n,~\Omega_j\cap\Omega_n = \emptyset.
		\end{equation}
		Let $\gamma>0$, we denote by $l_x$ and $l_p$ respectively the side-length of $\Omega_j$ in the space and momentum space, so that
		\begin{equation}\label{equation_volume_hyperrectangle}
			|\Omega_j| = l_x^d l_p^d = N^{- \gamma}.
		\end{equation}
	\end{definition}
	\begin{definition}[Empirical measures that violate the approximate Pauli principle on $\Omega_j$]~\\
		For a given $\varepsilon >0$, let $\Gamma_{\varepsilon}^j$ be the set of empirical measures that do not respect an approximated Pauli principle on $\Omega_j$, i.e.
		\begin{equation}
			\Gamma_{\varepsilon}^j = \bigg\{ \mathrm{Emp}_{Z_N}, \int_{\Omega_j} \mathrm{Emp}_{Z_N} \geq \frac{(1 + \varepsilon)}{(2\pi)^d}|\Omega_j|\bigg\}.
		\end{equation}
	\end{definition}
	We can then directly apply \cite[Theorem 4.4]{girardot_semiclassical_2021} since $m_N$ satisfies the following Pauli principle for any $k\leq N$
	\begin{equation}
		m_N^{(k)} \leq \frac{N^k}{(2\pi)^k N...(N-k+1)}.
	\end{equation}
	We reproduce the statement here.
	\begin{proposition}[Probability of violating the approximated Pauli principle on $\Omega_j$]\label{proposition_proba_violating_pauli_principle}
		Let $m_N$ be defined by \eqref{equation_definition_m_N}, $P_N^{\mathrm{DF}}$ the associated Diaconis-Freedman measure given by \eqref{equation_definition_diaconis_freedman}, and $\Omega_j$ from the tiling introduced in Definition \ref{definition_tiling}. Then, for $\delta$ satisfying 
		\begin{equation}\label{equation_condition_delta}
			0 < \delta < \frac{1 - \gamma}{2},
		\end{equation}
		with $\gamma > 0$ as in Definition \ref{definition_tiling}, there exists two constant $C_{\delta} > 0$ and $c_{\delta} > 0$ such that for any $\varepsilon > 0$,
		\begin{equation}
			P_N^{\mathrm{DF}}\big(\Gamma_{\varepsilon}^j\big) \leq C_{\delta}e^{-c_{\delta}N^{\delta}\ln(1 + \varepsilon)} .
		\end{equation}
	\end{proposition}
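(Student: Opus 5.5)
The plan is a moment (Chernoff/Markov) argument on the number of points that fall in $\Omega_j$, using the Pauli bound on the marginals of $m_N$. Write $\chi=\mathds{1}_{\Omega_j}$ and, for $Z_N$ distributed according to $m_N$, let $X=\sum_{i=1}^N \chi(z_i)$. By definition of $P_N^{\mathrm{DF}}$ in \eqref{equation_definition_diaconis_freedman},
\begin{equation*}
P_N^{\mathrm{DF}}\big(\Gamma_\varepsilon^j\big) = m_N\Big(X \geq (1+\varepsilon)\frac{N|\Omega_j|}{(2\pi)^d}\Big),
\end{equation*}
and the expected number of points is $\lambda := N|\Omega_j|(2\pi)^{-d} = (2\pi)^{-d}N^{1-\gamma}$. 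Everything therefore reduces to a large-deviation estimate for $X$ above its Pauli-allowed mean.

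I will first control the factorial moments of $X$. By symmetry of $m_N$,
\begin{equation*}
\mathbb{E}\big[X(X-1)\cdots(X-k+1)\big] = \frac{N!}{(N-k)!}\int_{\Omega_j^k} \d m_N^{(k)},
\end{equation*}
and the Pauli bound $m_N^{(k)} \leq \frac{N^k}{(2\pi)^{dk}N\cdots(N-k+1)}$ gives $\mathbb{E}[(X)_k] \leq \lambda^k$. Hence $X$ is dominated in factorial moments by a Poisson variable of parameter $\lambda$. For $t>0$, expand $e^{tX}=\sum_k (e^t-1)^k \binom{X}{k}$, which yields $\mathbb{E}[e^{tX}] \leq \exp\big(\lambda(e^t-1)\big)$. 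Markov's inequality with the optimal choice $t=\ln(1+\varepsilon)$ then gives
\begin{equation*}
m_N\big(X\geq(1+\varepsilon)\lambda\big) \leq \exp\Big(-\lambda\big[(1+\varepsilon)\ln(1+\varepsilon)-\varepsilon\big]\Big).
\end{equation*}

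The last step is to put this in the stated form. Since $\lambda \propto N^{1-\gamma}$ and $\delta<(1-\gamma)/2<1-\gamma$, we have $N^{1-\gamma}\gg N^{\delta}$. The rate $h(\varepsilon)=(1+\varepsilon)\ln(1+\varepsilon)-\varepsilon$ must be compared with $\ln(1+\varepsilon)$. For $\varepsilon$ large, $h(\varepsilon)\geq c\ln(1+\varepsilon)$ directly. For $\varepsilon$ small, $h(\varepsilon)\sim\varepsilon^2/2$, which is only quadratic in $\ln(1+\varepsilon)$, so no $\varepsilon$-uniform bound follows from this alone. This is where the margin $\delta<(1-\gamma)/2$ is spent: if $\varepsilon\geq N^{-(1-\gamma-\delta)/2}$, say, then $\lambda\varepsilon^2\gtrsim N^{\delta}\varepsilon\gtrsim N^\delta\ln(1+\varepsilon)$. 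If instead $\varepsilon$ is smaller than that, then $N^\delta\ln(1+\varepsilon)\lesssim N^{\delta-(1-\gamma-\delta)/2}$, and this is bounded because $\delta<(1-\gamma)/2$ gives $3\delta<1-\gamma+\delta$. In that case the right-hand side $C_\delta e^{-c_\delta N^\delta\ln(1+\varepsilon)}\geq C_\delta e^{-c_\delta C}$ can be made $\geq1$ by choosing $C_\delta$ large, so the bound is trivial. Patching the two regimes gives the claim.

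The main obstacle is the moment step. The Pauli bound is stated for the marginals as densities, so one must check that integrating over $\Omega_j^k$ (including the diagonal behaviour, which does not occur for absolutely continuous $m_N$) is legitimate. One must also justify exchanging sum and expectation in the exponential series; this is fine since $X\leq N$ and the series is finite. A backup, if the factorial-moment argument runs into trouble, is to use only the moment $k\approx N^{\delta}$ and apply Markov to $(X)_k$, which is presumably what the cited \cite[Theorem 4.4]{girardot_semiclassical_2021} does and explains the $N^\delta$ exponent.
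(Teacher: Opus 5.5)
Everything up to your Chernoff bound $m_N\big(X\geq(1+\varepsilon)\lambda\big)\leq e^{-\lambda h(\varepsilon)}$ is correct. $(X)_k$ counts ordered $k$-tuples of distinct indices $i$ with $z_i\in\Omega_j$, so only the marginals $m_N^{(k)}$ of the absolutely continuous measure $m_N$ enter, and there is no diagonal issue. The Pauli bound then gives $\mathbb{E}[(X)_k]\leq\lambda^k$, and the expansion of $e^{tX}$ is a finite sum.

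\textbf{The error is in the final patching of regimes.} With your threshold $N^{-(1-\gamma-\delta)/2}$, the small-$\varepsilon$ case gives
$N^\delta\ln(1+\varepsilon)\lesssim N^{\delta-(1-\gamma-\delta)/2}=N^{(3\delta-(1-\gamma))/2}$.
This is bounded only if $3\delta\leq 1-\gamma$. The inequality $3\delta<1-\gamma+\delta$ you invoke is just $2\delta<1-\gamma$ restated, which is not what is needed. Hence for $(1-\gamma)/3<\delta<(1-\gamma)/2$ your trivial-bound argument fails on the range $N^{-\delta}\ll\varepsilon<N^{-(1-\gamma-\delta)/2}$.

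Your case-(a) estimate in fact works down to $\varepsilon\sim N^{\delta-(1-\gamma)}$, so it suffices to move the threshold there. The cleanest version uses $h(\varepsilon)=\int_0^\varepsilon\ln(1+s)\,\d s\geq\frac{\varepsilon}{2}\ln(1+\varepsilon)$, by concavity of $s\mapsto\ln(1+s)$.
If $\lambda\varepsilon\geq N^\delta$, this gives $e^{-\lambda h(\varepsilon)}\leq e^{-\frac12 N^\delta\ln(1+\varepsilon)}$.
Otherwise $N^\delta\ln(1+\varepsilon)\leq N^\delta\varepsilon<N^{2\delta}/\lambda=(2\pi)^dN^{2\delta-(1-\gamma)}\leq(2\pi)^d$, so the claim is trivial with $c_\delta=\frac12$ and $C_\delta=e^{(2\pi)^d/2}$.
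This covers all $\varepsilon>0$ at once, with no separate large-$\varepsilon$ case, and it is exactly where $2\delta\leq1-\gamma$ is used.

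\textbf{Comparison with the paper.} With this repair your proof is complete, and it takes a different route from the paper. The paper gives no argument of its own: it applies \cite[Theorem 4.4]{girardot_semiclassical_2021}, using the Pauli bound on all marginals of $m_N$.

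The form $(1+\varepsilon)^{-c_\delta N^\delta}$ of the statement is what Markov's inequality on a single moment $\int\mu(\Omega_j)^k\,\d P_N^{\mathrm{DF}}(\mu)=\int_{\Omega_j^k}\d\widetilde m_N^{(k)}$ with $k\sim N^\delta$ produces, as you guessed. These are ordinary moments of $X/N$, so coincident indices must be controlled; these are the diagonal terms of $\widetilde m_N^{(k)}$, visible for $k=2$ in Lemma \ref{lemma_first_marginals}. Via Stirling numbers they contribute a factor $\sum_{i\leq k}S(k,i)\lambda^{i-k}$, which remains $O(1)$ precisely when $k^2\lesssim\lambda\sim N^{1-\gamma}$. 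That is the origin of the restriction on $\delta$ in such an argument.

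Your factorial moments avoid diagonal terms altogether. Summing them into an exponential moment gives the Poisson (Bennett-type) rate $\lambda h(\varepsilon)\sim N^{1-\gamma}h(\varepsilon)$. For fixed $\varepsilon$, the only regime used in the proof of Proposition \ref{proposition_lower_bound}, this is much stronger than $N^\delta\ln(1+\varepsilon)$ and makes the auxiliary parameter $\delta$ unnecessary there. The condition $\delta<(1-\gamma)/2$ is only needed to recast your bound in the stated form as $\varepsilon\to0$.
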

	Now, we want to restrict the measure $\mu$ to a compact support. 
	\begin{notation}[Hypercube]\label{notation_hypercube}
		For $n\in \N$, let 
		\begin{equation}\label{equation_definition_SL}
			L = n N^{- \gamma/2d},~~~~~~~~~~S_L = [-L, L]^{2d}.
		\end{equation}
		Note that 
		\begin{equation}
			|S_L| = (2L)^{2d} = (2n)^{2d} N^{-\gamma} = (2n)^{2d} |\Omega_j|.
		\end{equation}
		In particular, we can choose $(\Omega_j)$ such that 
		\begin{equation}
			S_L = \overline{\bigcup_{j\leq 4^d n^{2d}}\Omega_j}.
		\end{equation}
	\end{notation}
	\begin{lemma}[Restriction to a compact support]\label{lemme_restriction_support_compact}
		Let $\mu$ be a probability measure satisfying 
		\begin{equation}\label{equation_hypothese_energie_bornee}
			\iint_{\R^d\times\R^d} \big(|p|^2 + V(x)\big)\d\mu(x, p) \leq \tau.
		\end{equation}
		Then,
		\begin{equation}
			\mathcal{E}_N[\mu]\geq \mathcal{E}_N[\mu\mathds{1}_{S_L}] - \frac{C \tau^2 N^{d\beta}}{\inf(L^4, L^{2s})}.
		\end{equation}
	\end{lemma}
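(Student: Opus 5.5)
Write $\mu = \mu_1 + \mu_2$ with $\mu_1 = \mu\mathds{1}_{S_L}$ and $\mu_2 = \mu\mathds{1}_{S_L^c}$. The one-body term is linear and nonnegative, since $|p|^2+V\geq 0$. Therefore
\begin{equation*}
\iint (|p|^2+V)\d\mu \geq \iint (|p|^2+V)\d\mu_1 .
\end{equation*}
Only the interaction term needs care, and because of the minus sign it can only hurt us. Expanding the quadratic form gives
\begin{equation*}
\iint w_N(x-y)\d\mu^{\otimes 2} = \iint w_N(x-y)\d\mu_1^{\otimes 2} + 2\iint w_N(x-y)\d\mu_1\otimes\d\mu_2 + \iint w_N(x-y)\d\mu_2^{\otimes 2}.
\end{equation*}
Using $w_N\geq 0$ and $\|w_N\|_{L^\infty}= N^{d\beta}\|w\|_{L^\infty}$, the two extra terms are bounded by
\begin{equation*}
N^{d\beta}\|w\|_{L^\infty}\big(2\mu(S_L)\mu(S_L^c)+\mu(S_L^c)^2\big)\leq 2N^{d\beta}\|w\|_{L^\infty}\,\mu(S_L^c),
\end{equation*}
since $\mu$ is a probability measure. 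Hence
\begin{equation*}
\mathcal{E}_N[\mu]\geq\mathcal{E}_N[\mu_1] - C N^{d\beta}\mu(S_L^c).
\end{equation*}

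It remains to bound the mass outside $S_L$ via Markov's inequality. Outside $S_L$, either $|p_i|>L$ for some coordinate, so that $|p|^2\geq L^2$, or $|x_i|>L$ for some coordinate, so that $|x|\geq L$ and, by Assumption \ref{assumption_V}, $V(x)\geq C L^s - c$. For $L$ large this last bound is $\geq C' L^s$. Thus $|p|^2+V\geq c\inf(L^2,L^s)$ on $S_L^c$, and Markov's inequality gives
\begin{equation*}
\mu(S_L^c)\leq \frac{\tau}{c\inf(L^2,L^s)}.
\end{equation*}

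This yields an error $C\tau N^{d\beta}/\inf(L^2,L^s)$, which is linear in $\tau$ rather than the stated $\tau^2/\inf(L^4,L^{2s})$. To obtain the stated form, I would note that the cross term as written is not squared. One option is to bound only $\mu_2^{\otimes2}$, which gives $\mu(S_L^c)^2\leq \tau^2/\inf(L^4,L^{2s})$, and to handle the cross term $2\iint w_N\d\mu_1\otimes\d\mu_2$ separately. The cleanest way to do this is to keep it inside the comparison by arguing on the quadratic form: write $\mathcal{E}_N[\mu]-\mathcal{E}_N[\mu_1]$, which equals the one-body energy of $\mu_2$ minus the cross term minus the $\mu_2^{\otimes 2}$ term. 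The cross term is at most $2N^{d\beta}\|w\|_\infty\mu(S_L^c)$, while the one-body energy of $\mu_2$ is at least $c\inf(L^2,L^s)\mu(S_L^c)$. For $L$ large relative to $N^{d\beta}$ the one-body part absorbs the cross term. In the opposite regime, where $\inf(L^2,L^s)\lesssim N^{d\beta}$, the stated error $N^{d\beta}\tau^2/\inf(L^4,L^{2s})$ is larger than a constant times $\tau^2$ (modulo the trivial regime $\tau\lesssim1$) and dominates the crude bound $CN^{d\beta}\mu(S_L^c)$ anyway.

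The main obstacle is exactly this bookkeeping of the cross term: reconciling it with the squared form of the error, or accepting a constant that depends on $\tau$ and absorbing the case distinction into $C$. Everything else is Markov's inequality plus positivity.
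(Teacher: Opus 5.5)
Your reduction to $\mathcal{E}_N[\mu]\ge\mathcal{E}_N[\mu\mathds{1}_{S_L}]-CN^{d\beta}\mu(S_L^c)$ is correct, and so are the Markov bound and the absorption step. Write $I:=\inf(L^2,L^s)$, so that $|p|^2+V\ge cI$ on $S_L^c$. If $cI\ge N^{d\beta}\|w\|_{L^\infty}$, the one-body energy of $\mu\mathds{1}_{S_L^c}$ dominates the cross term. Only $\frac12N^{d\beta}\|w\|_{L^\infty}\mu(S_L^c)^2\le CN^{d\beta}\tau^2/\inf(L^4,L^{2s})$ is left.

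The genuine gap is your treatment of the complementary regime $cI<N^{d\beta}\|w\|_{L^\infty}$. There $N^{d\beta}\tau^2/I^2$ is only $\gtrsim\tau^2/I$, not $\gtrsim\tau^2$. Beating the crude error $N^{d\beta}\mu(S_L^c)\le N^{d\beta}\min\big(1,\tau/(cI)\big)$ would require $\tau\gtrsim I$, and nothing provides that.

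This cannot be patched, because the inequality is false in that regime. Take $\mu=(1-m)\delta_{(x_0,0)}+m\,\delta_{(x_0,p_0)}$ with $V(x_0)=\min V$. Choose $p_0$ with one coordinate just above $L$ and the others zero, so $(x_0,p_0)\notin S_L$ and $|p_0|^2\simeq L^2$. Since all the mass sits at the same position,
\begin{equation*}
\mathcal{E}_N[\mu]-\mathcal{E}_N[\mu\mathds{1}_{S_L}]=m\big(|p_0|^2+V(x_0)-N^{d\beta}w(0)\big)+\tfrac12 m^2N^{d\beta}w(0).
\end{equation*}
For $w(0)>0$, choose $L^2=N^{d\beta}w(0)/2$ and $m=L^{-2}$. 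Then this difference tends to $-1$. Meanwhile $\tau=V(x_0)+m|p_0|^2$ stays bounded, and $C\tau^2N^{d\beta}/\inf(L^4,L^{2s})\to0$ because $s>1$. In particular, letting $C$ depend on $\tau$ does not help either.

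The missing ingredient is therefore a hypothesis, not an argument. The lemma holds, by exactly your absorption step, under $c\inf(L^2,L^s)\ge N^{d\beta}\|w\|_{L^\infty}$. This is harmless for the paper, which later takes $L\sim N^{2d\beta+1}$.

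Do not try to close the gap by imitating the paper's proof of the cross term. It assumes a finite range $R_w$ for $w$, which is not part of Assumption \ref{assumption_w}. It then asserts that only the layer $S_L\setminus S_{L-R_wN^{-\beta}}$ interacts with $S_L^c$, giving $\|w_N\|_{L^\infty}\mu(S^c_{L-R_wN^{-\beta}})^2$. That is valid for points leaving $S_L$ through the position variables. But $S_L^c$ also contains points $(y,q)$ with a momentum coordinate exceeding $L$ and $y$ deep inside $[-L,L]^d$, and these interact with the bulk of $S_L$. The example above is exactly such a configuration. So your absorption argument is the correct mechanism, and the paper's boundary-layer step is not a valid substitute for it.
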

	\begin{proof} 
		First, it is clear by positivity of $V$ that
		\begin{equation}
			\iint_{\R^d\times\R^d}\big(|p|^2 + V(x)\big)\d \mu(x, p) \geq \int_{S_L} \big(|p|^2 + V(x)\big) \d \mu(x, p).
		\end{equation}
		For the interaction energy, we decompose the measure $\mu^{\otimes 2}$
		\begin{equation}\label{equation_decomposition_mu_carre}
			\mu^{\otimes 2} = \mathds{1}_{S_L} \mu \otimes \mathds{1}_{S_L}\mu + \mathds{1}_{S_L^c} \mu \otimes \mathds{1}_{S_L}\mu + \mathds{1}_{S_L} \mu \otimes \mathds{1}_{S_L^c}\mu + \mathds{1}_{S_L^c} \mu \otimes \mathds{1}_{S_L^c}\mu.
		\end{equation}
		By Young's inequality, we have 
		\begin{equation}
			\left|\iint_{\R^{2d}\times\R^{2d}} w_N(x - y) \d \big(\mathds{1}_{S_L^c}\mu\big)^{\otimes 2}\right| \leq \|w_N\|_{L^{\infty}}\big\|\mathds{1}_{S_L^c}\big\|_{L^1(\d \mu)}^2.
		\end{equation}
		Moreover, by \eqref{hypothese_V1} and \eqref{equation_hypothese_energie_bornee},
		\begin{equation}
			\iint_{S_L^c}\d \mu \leq C\iint_{S_L^c} \frac{\big(|p|^2 + V(x)\big)}{\inf(L^2, L^s)}\d \mu(x, p) \leq C \frac{\tau}{\inf(L^2, L^s)},
		\end{equation}
		and thus
		\begin{equation}
			\left|\iint_{\R^{2d}\times\R^{2d}} w_N(x - y) \d \big(\mathds{1}_{S_L^c}\mu\big)^{\otimes 2}\right| \leq C N^{d\beta}\frac{\tau^2}{\inf(L^4, L^{2s})}.
		\end{equation}
		The cross term from \eqref{equation_decomposition_mu_carre} can be dealt with using the same argument. Indeed, by Young's inequality, we have
		\begin{align}
			\left|\iint_{\R^{2d}\times\R^{2d}} w_N(x - y)\d \big(\mathds{1}_{S_L}\mu \otimes \mathds{1}_{S_L^c}\mu\big)\right| &= \left|\iint_{\R^{2d}\times\R^{2d}} w_N(x - y) \d \big(\mathds{1}_{S_L\setminus S_{L - R_wN^{-\beta}}}\mu \otimes \mathds{1}_{S_L^c}\mu\big)\right|\notag\\
			&\leq \|w_N\|_{L^{\infty}} \|\mathds{1}_{S_{L - R_w N^{-\beta}}^c}\|_{L^1(\d \mu)}^2\notag\\
			&\leq C N^{d\beta} \frac{\tau^2}{\inf(L^4, L^{2s})}.
		\end{align}
		Therefore,
		\begin{equation}
			\iint_{\R^{2d}\times\R^{2d}} w_N(x - y) \d \mu^{\otimes 2}(x, y; p, q) \geq \iint_{\R^{2d}\times\R^{2d}} w_N(x - y) \d\big(\mathds{1}_{S_L}\mu\big)^{\otimes 2} - CN^{d\beta}\frac{\tau^2}{\inf(L^4, L^{2s})}.
		\end{equation}
		This concludes the proof of Lemma \ref{lemme_restriction_support_compact}.
	\end{proof}
	Now, we can define the averaged measure.
	\begin{definition}[Averaged measure]
		Let $\mu$ be a probability measure, $(\Omega_j)$ and $n$ like in Definition \ref{definition_tiling} and Notation \ref{notation_hypercube}, we define the averaged measure of $\mu$ by
		\begin{equation}\label{equation_definition_averaged_measure}
			\bar{\mu} = \sum_{j = 1}^{4^d n^{2d}}\mathds{1}_{\Omega_j} \iint_{\Omega_j}\frac{\d \mu}{|\Omega_j|}.
		\end{equation}
	\end{definition}
	\begin{definition}[Empirical measures that violate the local Pauli principle] Let $(\Omega_j)$, $S_L$ and $n$ be like in Definition \ref{definition_tiling} and Notation \ref{notation_hypercube}. Let $\Gamma_{\varepsilon}$ be the set of empirical measures violating the approximated Pauli principle on at least one $\Omega_j \subset S_L$
		\begin{equation}\label{equation_definition_gamma_varepsilon}
			\Gamma_{\varepsilon} = \Big\{\mathrm{Emp}_{Z_N}, \exists \Omega_j \subset S_L, \iint_{\Omega_j} \mathrm{Emp}_{Z_N}\geq \frac{(1 + \varepsilon)}{(2\pi)^d}|\Omega_j|\Big\} = \bigcup_{j = 1}^{4^d n^{2d}} \Gamma_{\varepsilon}^j.
		\end{equation}
	\end{definition}
	\begin{lemma}[Averaging]\label{lemma_averaging}
		Let $V$ and $w$ satisfy Assumptions \ref{assumption_V} and \ref{assumption_w}. For $\varepsilon > 0$ small enough, let $Z_N$ such that $\mathrm{Emp}_{Z_N} \in \Gamma_{\varepsilon}^c$. Moreover, let $\beta < \frac{1}{d(d+1)}$, and assume
		\begin{equation}\label{equation_condition_lx_lp}
			l_x \ll N^{-(d+1)\beta},~~~~~l_p \ll 1
		\end{equation}
		and
		\begin{equation}
			\iint_{\R^d \times \R^d} \big(|p|^2 + V(x)\big) \d \mathrm{Emp}_{Z_N}(x, p) \leq \tau.
		\end{equation}
		Then,
		\begin{equation}\label{equation_lemme_averaging}
			\mathcal{E}_N[\mathrm{Emp}_{Z_N}] \geq \big(1 - o(1)\big)\mathcal{E}_N\Big[\overline{\mathrm{Emp}_{Z_N}}\Big]- o(1) - \frac{C \tau^2N^{d\beta}}{\inf(L^4, L^{2s})},
		\end{equation}
		using the notation introduced in \eqref{equation_definition_averaged_measure}.
	\end{lemma}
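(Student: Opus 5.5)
First reduce to a compactly supported measure: apply Lemma \ref{lemme_restriction_support_compact} to $\mu=\mathrm{Emp}_{Z_N}$, which gives $\mathcal{E}_N[\mathrm{Emp}_{Z_N}]\geq \mathcal{E}_N[\mu_L]-C\tau^2N^{d\beta}/\inf(L^4,L^{2s})$ with $\mu_L=\mathds{1}_{S_L}\mathrm{Emp}_{Z_N}$. This accounts for the last error term in \eqref{equation_lemme_averaging}. It then suffices to compare $\mathcal{E}_N[\mu_L]$ with $\mathcal{E}_N[\overline{\mathrm{Emp}_{Z_N}}]$. By definition $\bar\mu$ only sees the mass of $\mu$ inside $S_L$, so $\overline{\mu_L}=\overline{\mathrm{Emp}_{Z_N}}$.

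The one-body terms are handled by Lipschitz-type arguments on each box. For the kinetic part, on a box $\Omega_j$ of momentum side $l_p$ lying in $S_L$, we have $\big||p|^2-|p'|^2\big|\leq Cl_p(|p|+l_p)$, and summing over boxes gives
\begin{equation}
\left|\iint |p|^2\d(\mu_L-\bar\mu)\right|\leq Cl_p\Big(\iint |p|\d\mu_L+l_p\Big)\leq Cl_p(\sqrt{\tau}+1)=o(1).
\end{equation}
For the potential part, Assumption \ref{assumption_V} gives $|V(x)-V(x')|\leq Cl_x(|x|^{s-1}+1)$ on a box, so the error is $Cl_x(\tau+1)=o(1)$. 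Both errors can be absorbed into the $-o(1)$ term, or into the multiplicative factor if that is more convenient.

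The main work is the interaction term. Write $\bar\mu$ through its spatial density $\rho_{\bar\mu}$, which is piecewise constant on spatial cubes $Q_k$ of side $l_x$. The Pauli principle on $\Gamma_\varepsilon^c$ gives, for each $\Omega_j$, $\mu(\Omega_j)\leq (1+\varepsilon)(2\pi)^{-d}|\Omega_j|$, hence $\bar\mu\leq (1+\varepsilon)(2\pi)^{-d}$. Combining this with the kinetic bound $\iint|p|^2\d\bar\mu\leq\tau+o(1)$ and the bathtub principle, as in Remark \ref{remark_link_vlasov_thomas_fermi}, yields $\|\rho_{\bar\mu}\|_{L^{1+2/d}}\leq C(\tau)$, and with $\int\rho_{\bar\mu}\leq1$ also $\|\rho_{\bar\mu}\|_{L^2}\leq C$ for $d=1,2$.

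We then compare
\begin{equation}
\iint w_N(x-y)\d\mu_L^{\otimes2} \quad\text{with}\quad \int (w_N*\rho_{\bar\mu})\rho_{\bar\mu}.
\end{equation}
Within each product of boxes $\Omega_j\times\Omega_{j'}$, replacing the point masses by uniform mass changes $w_N(x-y)$ by at most $2l_x\|\nabla w_N\|_{L^\infty}=Cl_xN^{(d+1)\beta}$. Summing against total mass at most $1$ gives an error $Cl_xN^{(d+1)\beta}$, which is $o(1)$ precisely by \eqref{equation_condition_lx_lp}. This is where $\beta<1/(d(d+1))$ enters: we need $l_x\ll N^{-(d+1)\beta}$ while $l_x^dl_p^d=N^{-\gamma}$ with $\gamma<1$, as required by Proposition \ref{proposition_proba_violating_pauli_principle}. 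I therefore take $l_p=N^{-\eta}$ with small $\eta$, and $l_x=N^{-\gamma/d+\eta}$, so that $\gamma/d>(d+1)\beta$ is possible exactly when $\beta<1/(d(d+1))$.

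The expected obstacle is that the plain $L^\infty$ Lipschitz bound might be too crude if the interaction term must carry a multiplicative $(1+\varepsilon)$. If so, I would instead use $\|\nabla w_N\|_{L^1}$ together with the $L^\infty$ bound on $\rho_{\bar\mu}$ coming from the Pauli principle. Assembling the three comparisons gives \eqref{equation_lemme_averaging}, with the $(1-o(1))$ factor absorbing any normalization loss from $\mu_L(S_L)\leq1$.
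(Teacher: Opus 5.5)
The first step and the interaction step match the paper: you reduce to $\mathds{1}_{S_L}\mathrm{Emp}_{Z_N}$ via Lemma~\ref{lemme_restriction_support_compact}, and you use the Lipschitz bound $2l_x\|\nabla w_N\|_{L^\infty}\le Cl_xN^{(d+1)\beta}$ for the interaction. The one-body step, however, has a genuine gap.

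You bound the kinetic and potential averaging errors by $Cl_p(\sqrt{\tau}+1)+Cl_x(\tau+1)$ and call this $o(1)$. Nothing in the hypotheses ties $l_p,l_x$ to $\tau$, and $\tau$ is not bounded. It is kept explicit in the last error term because it has to grow: the term $CN^{3d\beta/2}\tau^{-1}$ in Lemma~\ref{lemme_restriction_measures_bornees_pauli} forces $\tau\gg N^{3d\beta/2}$. The paper takes $\tau=N^{3d\beta/2+\sigma}$ and $l_p=N^{-\sigma}$. Your kinetic error is then $l_p\sqrt{\tau}=N^{3d\beta/4-\sigma/2}$, which diverges whenever $\sigma<3d\beta/2$. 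This is unavoidable for $\beta$ close to $1/(d(d+1))$, where $\sigma<\frac{1}{2d}-\frac{(d+1)\beta}{2}$ is forced to be small. You could re-tune so that $l_p\sqrt{\tau}+l_x\tau\to0$ while keeping $l_x\ll N^{-(d+1)\beta}$ and $(l_xl_p)^d=N^{-\gamma}$ with $\gamma<1$, but that only works for a strictly smaller range of $\beta$. So you have proved the lemma for bounded $\tau$, not the uniform-in-$\tau$ version that is actually used.

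The missing idea is to measure the one-body error against $\bar\mu$ itself rather than against $\tau$, and then control it by the energy of $\bar\mu$. Since $\mu(\Omega_j)=\bar\mu(\Omega_j)$, and $|p|$ and $|\nabla V|\le C(|x|^{s-1}+1)\le C(V+1)$ vary little across a box, one gets
\[
\Big|\iint\big(|p|^2+V\big)\d\big(\bar\mu-\mathds{1}_{S_L}\mu\big)\Big|\le C(l_p+l_x)\Big(\iint\big(|p|^2+V\big)\d\bar\mu+1\Big).
\]
On $\Gamma_\varepsilon^c$ one has $(2\pi)^d\bar\mu\le1+\varepsilon$. With $\rho=\int\bar\mu(\cdot,p)\d p$, the bathtub principle gives $\iint|p|^2\d\bar\mu\ge(1+\varepsilon)^{-2/d}c_{\mathrm{TF}}\int\rho^{1+2/d}$, while $\frac12\iint w_N(x-y)\d\bar\mu^{\otimes2}\le I_w\int\rho^2$.

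For $d=2$ this yields $\mathcal{E}_N[\bar\mu]\ge\big(1-(1+\varepsilon)I_w/c_{\mathrm{TF}}\big)\iint(|p|^2+V)\d\bar\mu$. This is where $I_w<c_{\mathrm{TF}}$ and ``$\varepsilon$ small enough'' enter; your argument uses neither. For $d=1$, Young's inequality gives a bound of the same form up to an additive constant.

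Hence $\iint(|p|^2+V)\d\bar\mu\le C(\mathcal{E}_N[\bar\mu]+1)$. The one-body error then becomes a factor $1-C(l_p+l_x)$ in front of $\mathcal{E}_N[\bar\mu]$ plus an additive $C(l_p+l_x)$, uniformly in $\tau$. That is the role of the $(1-o(1))$ in \eqref{equation_lemme_averaging}, not absorbing a normalization loss from $\mu_L(S_L)\le1$. You do invoke the Pauli principle and the bathtub principle, but you spend them on the interaction term, where the $L^\infty$ Lipschitz bound already suffices. They are needed instead on the one-body term.
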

	\begin{proof}
		We denote by $\mu$ the empirical measure $\mathrm{Emp}_{Z_N}$. First,
		\begin{align}
			\Delta_1 :&= \left|\iint_{\R^d\times \R^d} \big(|p|^2 +V(x)\big) \d \big(\bar{\mu} - \mathds{1}_{S_L} \mu\big)(x, p)\right|\notag\\
			&\leq \sum_{j = 1}^{4^d n^{2d}}\int_{\Omega_j}\big(|p|^2 + V(x)\big) \d\left|\mu - \int_{\Omega_ j}\frac{\d \mu}{|\Omega_j|}\right|\notag\\
			&\leq \sum_{m=1}^{4^d n^{2d}} \sup_{\Omega_j} \Big(2 l_p |p| + |\nabla V(x)| l_x\Big) \int_{\Omega_j}\d \mu \notag\\
			&\leq \frac{2}{(2\pi)^d}\sum_{m=1}^{4^d n^{2d}} \sup_{\Omega_j} \Big(2 l_p |p| + |\nabla V(x)| l_x\Big)|\Omega_j|.
		\end{align}
		Then, we use Rieman sums, and the inequality
		\begin{equation}
			\forall \alpha >0, ~\alpha^{s-1}\leq \frac{s-1}{s}\alpha^s + \frac{1}{s},
		\end{equation}
		to find, using Assumption \ref{assumption_V},
		\begin{align}
			\Delta_1 &\leq l_p\iint_{\R^d\times\R^d} \big(|p|^2 + 1\big) \d \bar{\mu} + Cl_x\iint_{\R^d\times\R^d} \Big(\frac{s-1}{s}V(x) + \frac{1}{s}\Big) \d \bar{\mu}\notag\\
			&\leq C (l_p + l_x) \iint_{\R^d\times\R^d} \big(|p|^2 + V(x)\big) \d \bar{\mu} + C (l_p + l_x)\iint_{\R^d\times\R^d} \d\bar{\mu}.\label{equation_borne_Delta1}
		\end{align}
		Let us show that since $\bar{\mu}$ is regular, we can bound $\int(|p|^2 +V(x))\d \bar{\mu}(x, p)$ by a constant times the whole energy $\mathcal{E}_N[\bar{\mu}]$. Let
		\begin{equation}
			m := \frac{(2\pi)^d}{1+\varepsilon}\bar{\mu} \leq 1
		\end{equation} 
		since $\mu \in \Gamma_{\varepsilon}^{\mathrm{c}}$. Then, by the bath-tube principle,
		\begin{equation}
			\frac{1}{1+\varepsilon}\iint_{\R^d\times\R^d} |p|^2 \d \bar{\mu}(x, p) = \frac{1}{(2\pi)^d}\iint_{\R^d\times\R^d} |p|^2 \d m(x, p) \geq c_{\mathrm{TF}}\int_{\R^d} \rho_m^{1 + 2/d}.
		\end{equation}
		When $d=2$, we have
		\begin{equation}
			\frac{1}{(1+\varepsilon)^2}\iint_{\R^{2d}\times\R^{2d}} w_N(x-y)\d \bar{\mu}^{\otimes 2}(x, y; p, q) = \int_{\R^d} (w_N*\rho_{m})\rho_{m} \leq I_w \int_{\R^d} \rho_{m}^2 
		\end{equation}
		and therefore
		\begin{equation}
			\mathcal{E}_N[\bar{\mu}] \geq \bigg(1 - (1+\varepsilon)\frac{I_w}{c_{\mathrm{TF}}}\bigg)\iint_{\R^d\times\R^d} \big(|p|^2 + V(x)\big)\d \bar{\mu}(x, p).
		\end{equation}
		For $\varepsilon$ small enough, we have indeed
		\begin{equation}
			1 - (1+\varepsilon)\frac{I_w}{c_{\mathrm{TF}}} > 0.
		\end{equation}
		Likewise, when $d=1$, by Young's inequality,
		\begin{multline}
			\int_{\R^d} (w_N*\rho_{m})\rho_{m} \leq I_w \int_{\R^d} \rho_{m}^2 \leq I_w\bigg(\frac{I_w}{2c_{\mathrm{TF}}}\int_{\R^d} \rho_{m} + \frac{c_{\mathrm{TF}}}{2I_w}\int_{\R^d} \rho_{m}^{3}\bigg) \\ \leq \frac{I_w^2}{2c_{\mathrm{TF}}}(1+\varepsilon)^{-1} + \frac{1}{2}(1 + \varepsilon)^{-1}\iint_{\R^d\times\R^d} |p|^2\d\bar{\mu}(x, p),
		\end{multline}
		and thus, by positivity of $V$, 
		\begin{equation}
			\mathcal{E}_N[\bar{\mu}] \geq \frac{1-\varepsilon}{2}\iint_{\R^d\times\R^d} \big(|p|^2 + V(x)\big)\d \bar{\mu}(x, p) - \frac{I_w^2}{2c_{\mathrm{TF}}}(1+\varepsilon).
		\end{equation}
		Therefore, for $d=1$ or $2$, and for $\varepsilon < 1$, we find
		\begin{equation}
			\Delta_1 \leq C(l_p + l_x) \big(\mathcal{E}_N[\bar{\mu}] + 1\big) + C(l_p + l_x).
		\end{equation}
		Moreover,
		\begin{align}
			\Delta_2 :&= \left|\iint_{\R^{2d}\times\R^{2d}} w_N(x - y)\d \Big(\bar{\mu}^{\otimes 2} - (\mathds{1}_{S_L}\mu)^{\otimes 2}\Big)\right| \notag\\
			&\leq \sum_{j = 1}^{4^dn^{2d}}\sum_{q = 1}^{4^d n^{2d}}\left|\iint_{\Omega_j\times\Omega_q} w_N(x - y)\d \Big(\bar{\mu}^{\otimes 2} - (\mathds{1}_{S_L}\mu)^{\otimes 2}\Big)\right|\notag\\
			&\leq \sum_{j = 1}^{4^dn^{2d}}\sum_{q = 1}^{4^d n^{2d}}\left|\iint_{\Omega_j\times\Omega_q} \bigg( \frac{1}{|\Omega_j|}\frac{1}{|\Omega_q|}\iint_{\Omega_j\times\Omega_q} w_N(x' - y')\d x' \d p' \d y' \d k' - w_N(x - y)\bigg) \d \mu(x, p)\d \mu(y, k)\right|\notag\\
			&\leq 2 l_x \|\nabla w_N\|_{L^{\infty}}\sum_{j = 1}^{4^dn^{2d}}\sum_{q = 1}^{4^d n^{2d}}\mu(\Omega_j)\mu(\Omega_k)\notag\\
			&\leq 2 \|\nabla w\|_{L^{\infty}} l_x N^{(d+1)\beta}. \label{equation_borne_Delta2}
		\end{align}
		Using \eqref{equation_borne_Delta1}, \eqref{equation_borne_Delta2} and Lemma \ref{lemme_restriction_support_compact}, as well as the assumption \eqref{equation_condition_lx_lp}, we find \eqref{equation_lemme_averaging}.
	\end{proof}
	
	\subsection{Conclusion}\label{subsection_conclusion}
	In this section, we always require $\beta < \frac{1}{d(d+1)}$. As seen in Lemma \ref{lemme_restriction_support_compact} and Lemma \ref{lemma_averaging}, we have to restrict ourselves to measures $\mu$ whose kinetic and potential energies are bounded. We first define the set consisting of such measures.
	\begin{definition}[Measures of bounded energy]
		For $\tau > 0$, we set
		\begin{equation}\label{equation_definition_Xitau}
			\Xi_{\tau} = \bigg\{ \mu \in \mathcal{P}(\R^{2d}),~\iint_{\R^{2d}\times\R^{2d}} \big(|p|^2 + V(x)\big)\d \mu(x, p) \leq \tau\bigg\}.
		\end{equation}
	\end{definition}
	\begin{lemma}[Restricting to measures of bounded energy satisfying the Pauli principle] \label{lemme_restriction_measures_bornees_pauli}
		Let $P_N^{\mathrm{DF}}$ be the Diaconis-Freedman measure defined by \eqref{equation_definition_diaconis_freedman} with the convention \eqref{equation_definition_m_N}, $\mathcal{E}_N$ the semi-classical energy defined by \eqref{equation_definition_mathcalE_N}, and $\Gamma_{\varepsilon}$ and $\Xi_{\tau}$ be the sets of measures defined by \eqref{equation_definition_gamma_varepsilon} and \eqref{equation_definition_Xitau}. Then, we have the following lower bound on the energy:
		\begin{equation}
			\int_{\mathcal{P}(\R^{2d})} \mathcal{E}_N[\mu]\d P_N^{\mathrm{DF}}(\mu) \geq \int_{\Xi_{\tau}\setminus\Gamma_{\varepsilon}} \mathcal{E}_N[\mu]\d P_N^{\mathrm{DF}}(\mu) - C N^{d \beta} n^{2d}P^{\mathrm{DF}}_N(\Gamma_{\varepsilon}^j) - CN^{3d\beta/2}\tau^{-1} - o(1).
		\end{equation}
	\end{lemma}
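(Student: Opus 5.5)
Write the integral over $\mathcal{P}(\R^{2d})$ as the sum of the integrals over $\Xi_\tau\setminus\Gamma_\varepsilon$, over $\Gamma_\varepsilon$, and over $\Xi_\tau^c\setminus\Gamma_\varepsilon$. The first piece is kept as it is. The other two are bounded from below separately, using only that $P_N^{\mathrm{DF}}$ charges empirical measures $\mu$ of the $N$ points, together with the two a priori facts at hand: $|w_N|\le N^{d\beta}\|w\|_{L^\infty}$, and the kinetic-plus-potential energy bound coming from Lemma \ref{lemma_apriori}.

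\emph{Contribution of $\Gamma_\varepsilon$.} For any probability measure $\mu$, the kinetic and potential part of $\mathcal{E}_N[\mu]$ is nonnegative, and the interaction term is at least $-\frac12\|w_N\|_{L^\infty}=-CN^{d\beta}$. Hence $\mathcal{E}_N[\mu]\ge -CN^{d\beta}$ pointwise. The union bound $P_N^{\mathrm{DF}}(\Gamma_\varepsilon)\le\sum_{j\le 4^dn^{2d}}P_N^{\mathrm{DF}}(\Gamma_\varepsilon^j)\le 4^d n^{2d}\sup_j P_N^{\mathrm{DF}}(\Gamma^j_\varepsilon)$ then gives the term $-CN^{d\beta}n^{2d}P_N^{\mathrm{DF}}(\Gamma_\varepsilon^j)$, where $P_N^{\mathrm{DF}}(\Gamma_\varepsilon^j)$ stands for the uniform bound of Proposition \ref{proposition_proba_violating_pauli_principle}.

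\emph{Contribution of $\Xi_\tau^c$.} Here the pointwise bound is too crude, and I would use Markov's inequality instead. Set $K(\mu)=\iint(|p|^2+V)\,\d\mu$. By Lemma \ref{lemma_first_marginals}, \eqref{equation_definition_m_N_k} and the kinetic and potential computations in the proof of Lemma \ref{lemma_semi_classical_approximation_energy}, $\int K\,\d P_N^{\mathrm{DF}}=N^{-1}\langle\Psi_N|H_N^0\Psi_N\rangle+O(\hbar_p)+O(N^{\beta d/2}\sqrt{\hbar_x})$, which is $O(N^{\beta d/2})$ by \eqref{equation_borne_apriori1}. On $\Xi_\tau^c$ we still have $\mathcal{E}_N[\mu]\ge K(\mu)-CN^{d\beta}\ge -CN^{d\beta}$. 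Bounding the negative part by the Markov estimate $P_N^{\mathrm{DF}}(\Xi_\tau^c)\le\tau^{-1}\int K\,\d P_N^{\mathrm{DF}}\le C\tau^{-1}N^{\beta d/2}$ yields $\int_{\Xi_\tau^c\setminus\Gamma_\varepsilon}\mathcal{E}_N\,\d P_N^{\mathrm{DF}}\ge -CN^{d\beta}\cdot C\tau^{-1}N^{d\beta/2}=-CN^{3d\beta/2}\tau^{-1}$, which is exactly the announced error term. I will keep the nonnegative part $\int_{\Xi_\tau^c} K$ rather than discard it, in case it is needed later. The $o(1)$ accounts for the semi-classical remainders in converting $\langle H_N^0\rangle$ into $\int K\,\d P_N^{\mathrm{DF}}$, or it may simply be harmless slack.

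The main point to check is the kinetic a priori bound in Markov's inequality: the bound has to be $N^{\beta d/2}$ and not $O(1)$, which is what produces the exponent $3d\beta/2$. This requires the Husimi $K$ to be controlled by $N^{-1}\langle H_N^0\rangle$ up to the errors $N\hbar_p$ and $N^{1+\beta d/2}\sqrt{\hbar_x}$ appearing in \eqref{equation_energie_cinetique_preuve_lemme33} and \eqref{equation_energie_potentielle_preuve_lemme33}, and these are small under \eqref{equation_choix_hbarx_hbarp_lemme33}. Summing the three pieces gives the claimed lower bound.
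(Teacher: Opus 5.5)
Your proof is correct and follows the paper's own argument. Off $\Xi_\tau\setminus\Gamma_\varepsilon$ you drop the nonnegative kinetic and potential part and bound the interaction by $\frac12\|w_N\|_{L^\infty}=CN^{d\beta}$ times the probability. You then control $P_N^{\mathrm{DF}}(\Gamma_\varepsilon)$ by the union bound (your reading via $\sup_j$ is the right one) and $P_N^{\mathrm{DF}}(\Xi_\tau^c)$ by Markov with the $O(N^{d\beta/2})$ a priori bound. Absorbing the semi-classical remainders into $CN^{d\beta/2}$ before multiplying by $N^{d\beta}$ is slightly cleaner than the paper, whose stray $N^{d\beta}\cdot o(1)$ is not literally $o(1)$.
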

	\begin{proof} First, it is clear that
		\begin{equation}
			\int_{\mathcal{P}(\R^{2d})} \iint_{\R^d\times\R^d}\big(|p|^2 + V(x)\big)\d \mu(x, p)\d P_N^{\mathrm{DF}}(\mu) \geq \int_{\Xi_{\tau}\setminus \Gamma_{\varepsilon}} \iint_{\R^d\times\R^d}\big(|p|^2 + V(x)\big)\d \mu(x, p)\d P_N^{\mathrm{DF}}(\mu).
		\end{equation}
		On the other hand, 
		\begin{multline}
			\int_{\mathcal{P}(\R^{2d})}\iint_{\R^{2d}\times\R^{2d}} w_N(x-y)\d\mu^{\otimes 2}(x, y; p, q) \leq \int_{\Xi_{\tau} \setminus \Gamma_{\varepsilon}}\iint_{\R^{2d}\times\R^{2d}} w_N(x-y)\d\mu^{\otimes 2}(x, y; p, q) \\ - C\|w_N\|_{L^{\infty}}\Big(P^{\mathrm{DF}}_N(\Gamma_{\varepsilon}) + P^{\mathrm{DF}}_N(\Xi_{\tau}^c)\Big),
		\end{multline}
		and therefore
		\begin{equation}
			\int_{\mathcal{P}(\R^{2d})} \mathcal{E}_N[\mu]\d P_N^{\mathrm{DF}}(\mu) \geq \int_{\Xi_{\tau}\setminus\Gamma_{\varepsilon}} \mathcal{E}_N[\mu]\d P_N^{\mathrm{DF}}(\mu) - C N^{d\beta} P^{\mathrm{DF}}_N(\Gamma_{\varepsilon}) - CN^{d\beta}P^{\mathrm{DF}}_N(\Xi_{\tau}^c).
		\end{equation}
		It only remains to bound the probability for an empirical measure to be in $\Gamma_{\varepsilon}$ or in $\Xi_{\tau}$. Because of the definition \eqref{equation_definition_gamma_varepsilon}, we have for any $j \leq 4^d n^{2d}$,
		\begin{equation}\label{equation_borne_proba_Gamma}
			P^{\mathrm{DF}}_N(\Gamma_{\varepsilon}) \leq 4^d n^{2d} P^{\mathrm{DF}}_N(\Gamma_{\varepsilon}^j).
		\end{equation}
		Furthermore, by Markov's inequality and Lemmas \ref{lemma_apriori}, \ref{lemma_semi_classical_approximation_energy} and \ref{lemma_first_marginals}, we have
		\begin{align}
			P^{\mathrm{DF}}_N(\Xi_{\tau}^c) &\leq \tau^{-1} \int_{\mathcal{P}(\R^{2d})}\iint_{\R^d\times\R^d} \big(|p|^2 + V(x)\big)\d \mu(x, p) \d P^{\mathrm{DF}}_N(\mu)\notag\\
			&\leq \tau^{-1} \iint_{\R^d\times\R^d} \big(|p|^2 + V(x)\big) \d m_N^{(1)}(x, p)\notag\\
			&\leq \tau^{-1}N^{-1} \Big\langle \Psi_N\Big| \sum_j \big(-\hbar^2 \Delta_j + V(x_j)\big) \Psi_N\Big\rangle + o(1)\notag\\
			&\leq C \tau^{-1} N^{d\beta / 2} + o(1).\label{equation_borne_proba_Xi}
		\end{align}
	\end{proof}
	We can get rid of $w_N$ and replace it by a singular interaction, that is informally, we want to replace $w_N$ by $I_w\delta_0$. Let us introduce this singular energy.
	\begin{definition}[Singular semi-classical energy]
		Let $\nu \in L^{1}(\R^{2d})$ be a positive density of norm 1. We define its singular semi-classical energy by
		\begin{equation}\label{equation_definition_singular_semi_classical_energy}
			\mathcal{E}_{\infty}[\nu] =  \iint_{\R^d\times\R^d}\big(|p|^2 + V(x)\big)\nu(x, p)\d x\d p -  I_w \int_{\R^d} \rho_{(2\pi)^d\nu}(x)^2 \d x
		\end{equation}
		with
		\begin{equation}
			\rho_{(2\pi)^d\nu}(x) = \int_{\R^d} \nu(x, p)\d p.
		\end{equation}
		Note that 
		\begin{equation}\label{equation_egalite_energie_semi_classique_singuliere_vlasov}
			\mathcal{E}_{\infty}[\nu] = \mathcal{E}^{\mathrm{V}}\big[(2\pi)^d\nu\big].
		\end{equation}
	\end{definition}
	Now, we want to bound from below the energy obtained in Lemma \ref{lemme_restriction_measures_bornees_pauli} by using the averaged measures and replacing $\mathcal{E}_N$ by $\mathcal{E}_{\infty}$. Note that because of its definition \eqref{equation_definition_averaged_measure}, an averaged measure is always in $L^1$ and we can then always consider its singular semi-classical energy.
	\begin{lemma}\label{lemma_singular_or_not_energy}
		Let $w$ satisfy Assumption \ref{assumption_w}. Let $P_N^{\mathrm{DF}}$ be the Diaconis-Freedman measure defined by \eqref{equation_definition_diaconis_freedman} with the convention \eqref{equation_definition_m_N}, $\mathcal{E}_N$ and $\mathcal{E}_{\infty}$ the semi-classical energies defined by \eqref{equation_definition_mathcalE_N} and \eqref{equation_definition_singular_semi_classical_energy}, and $\Gamma_{\varepsilon}$ and $\Xi_{\tau}$ be the sets of measures defined by \eqref{equation_definition_gamma_varepsilon} and \eqref{equation_definition_Xitau}. Then, we have
		\begin{equation}
			\int_{\Xi_{\tau}\setminus\Gamma_{\varepsilon}} \mathcal{E}_N[\mu]\d P_N^{\mathrm{DF}}(\mu) \geq \big(1 - o(1)\big)\int_{\Xi_{\tau}\setminus\Gamma_{\varepsilon}} \mathcal{E}_{\infty}[\bar{\mu}]\d P_N^{\mathrm{DF}}(\mu) - o(1) - \frac{C \tau^2 N^{d\beta}}{\inf(L^4, L^{2s})}.
		\end{equation}
	\end{lemma}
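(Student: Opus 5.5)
The plan is to apply Lemma \ref{lemma_averaging} pointwise to every $\mu = \mathrm{Emp}_{Z_N}$ in $\Xi_\tau\setminus\Gamma_\varepsilon$, and then replace $\mathcal{E}_N[\bar\mu]$ by $\mathcal{E}_\infty[\bar\mu]$ with an error that is uniform over this set. The set $\Xi_\tau\setminus\Gamma_\varepsilon$ is exactly where the hypotheses of Lemma \ref{lemma_averaging} hold: the kinetic plus potential energy is at most $\tau$, and $\mu\notin\Gamma_\varepsilon$. So for each such $\mu$,
\begin{equation*}
\mathcal{E}_N[\mu]\geq (1-o(1))\,\mathcal{E}_N[\bar\mu] - o(1) - \frac{C\tau^2N^{d\beta}}{\inf(L^4,L^{2s})},
\end{equation*}
and the $o(1)$ terms do not depend on $\mu$. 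Integrating against $P_N^{\mathrm{DF}}$ over $\Xi_\tau\setminus\Gamma_\varepsilon$, which has mass at most $1$, gives the statement with $\mathcal{E}_N[\bar\mu]$ in place of $\mathcal{E}_\infty[\bar\mu]$.

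Next we compare $\mathcal{E}_N[\bar\mu]$ and $\mathcal{E}_\infty[\bar\mu]$. The kinetic and potential parts coincide, so only the interaction term changes. Writing $\rho=\int\bar\mu\,\d p$, we have
\begin{equation*}
\mathcal{E}_N[\bar\mu]-\mathcal{E}_\infty[\bar\mu] = -\frac12\int (w_N*\rho)\rho + I_w\int\rho^2 = \frac12\int \big(\rho*\textstyle\int w\,\delta_0 - w_N*\rho\big)\rho .
\end{equation*}
Since $\bar\mu\le (1+\varepsilon)(2\pi)^{-d}$, and since $\mu\notin\Gamma_\varepsilon$ gives this bound on every tile, $\rho$ is bounded by $C\,l_p^{-d}$ times the mass per tile. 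More directly, $\rho$ is constant in $x$ on cells of side $l_x$, and its $L^2$ norm is controlled through the bath-tube bound already used in Lemma \ref{lemma_averaging}: $\int\rho^{1+2/d}\lesssim\int|p|^2\d\bar\mu\lesssim \mathcal{E}_N[\bar\mu]+1$. In $d=1$ we additionally use $\rho\le C$ together with $\int\rho=1$, since $\rho^2\le\rho\|\rho\|_\infty$ and $\|\rho\|_\infty\lesssim$ the Pauli bound times the momentum extent, which is bounded via $L$.

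To make the difference small, I would use $w_N*\rho - 2I_w\rho = \int w(z)\big(\rho(\cdot-N^{-\beta}z)-\rho\big)\d z$. This gives
\begin{equation*}
\Big|\int (w_N*\rho)\rho - 2I_w\int\rho^2\Big| \le \|\rho\|_{L^2}\int |w(z)|\,\|\tau_{N^{-\beta}z}\rho-\rho\|_{L^2}\,\d z .
\end{equation*}
This is the main obstacle. Because $\rho$ is piecewise constant on scale $l_x\gg N^{-\beta}$ (indeed $l_x$ can be chosen with $N^{-\beta}\ll l_x\ll N^{-(d+1)\beta}$ only if that window exists; otherwise reverse the roles), a shift by $N^{-\beta}z$ only changes $\rho$ in boundary layers of relative width $N^{-\beta}|z|/l_x$ of each cell. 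That gives $\|\tau_h\rho-\rho\|_2^2\lesssim (|h|/l_x)\|\rho\|_2^2$ in the worst case, using the bounded jumps between neighbouring cells supplied by the Pauli bound.

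If the scales are incompatible, i.e. $l_x\ll N^{-\beta}$, which is what the constraint \eqref{equation_condition_lx_lp} actually forces, I would instead argue by monotonicity. By Young, $\int(w_N*\rho)\rho\le 2I_w\|\rho\|_2^2$, which is the inequality already used in Lemma \ref{lemma_averaging}. This yields $-\frac12\int(w_N*\rho)\rho\ge -I_w\int\rho^2$, hence $\mathcal{E}_N[\bar\mu]\ge\mathcal{E}_\infty[\bar\mu]$ pointwise, with no error at all. This is the route I would commit to, since it needs only $w\ge0$ and is exactly the direction required.

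Finally we handle the prefactor $(1-o(1))$ when $\mathcal{E}_\infty[\bar\mu]$ may be negative. Using the coercivity established in Lemma \ref{lemma_averaging}, $\mathcal{E}_\infty[\bar\mu]\ge -C$ uniformly. Hence $(1-o(1))\mathcal{E}_N[\bar\mu]\ge(1-o(1))\mathcal{E}_\infty[\bar\mu]-o(1)$ after absorbing $o(1)\cdot C$. Integrating then concludes the proof.
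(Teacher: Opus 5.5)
Your committed argument is exactly the paper's: apply Lemma \ref{lemma_averaging} pointwise on $\Xi_\tau\setminus\Gamma_\varepsilon$, where its error terms are uniform in $\mu$. Then use Young's inequality $\int (w_N*\rho)\rho\le \|w_N\|_{L^1}\|\rho\|_{L^2}^2 = 2I_w\int\rho^2$ to get $\mathcal{E}_N[\bar\mu]\ge\mathcal{E}_\infty[\bar\mu]$ pointwise. The translation-based detour is not needed; the window $N^{-\beta}\ll l_x\ll N^{-(d+1)\beta}$ it requires never exists. The final coercivity step is also superfluous, since multiplying a pointwise inequality by the positive factor $1-o(1)$ preserves it regardless of sign.
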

	\begin{proof} 
		Let $\mu \in \Xi_{\tau} \setminus \Gamma_{\varepsilon}$. By Young's inequality, we have
		\begin{multline}
			\iint_{\R^{2d}\times\R^{2d}} w_N(x - y)\d \bar{\mu}^{\otimes 2}(x, y; p, q) = \int_{\R^d} (w_N * \rho_{(2\pi)^{d}\bar{\mu}})\rho_{(2\pi)^{d}\bar{\mu}} \\ \leq \|w_N\|_{L^1} \|\rho_{(2\pi)^{d}\bar{\mu}}\|_{L^2}^2 = I_w \int_{\R^d} \rho_{(2\pi)^{d}\bar{\mu}}(x)^2 \d x.
		\end{multline}
		Therefore, the semi-classical energy is greater than the singular semi-classical energy:
		\begin{equation}\label{equation_borne_energie_semi_classique_singuliere_ou_non}
			\mathcal{E}_N[\bar{\mu}] \geq \mathcal{E}_{\infty}[\bar{\mu}].
		\end{equation}
		Equation \eqref{equation_borne_energie_semi_classique_singuliere_ou_non} together with Lemma \ref{lemma_averaging} concludes the proof.
	\end{proof}
	In order to prove Proposition \ref{proposition_lower_bound}, we have to prove that the singular semi-classical energy is bounded from below by the Thomas-Fermi energy. Before we prove this, we introduce the set of probability measures whose mass outside of $S_L$ -- defined in \eqref{equation_definition_SL} -- is small.
	\begin{definition}[Measures almost compactly supported]
		Let
		\begin{equation}\label{equation_definition_Theta_eta}
			\Theta_{\eta} = \bigg\{\mu \in \mathcal{P}(\R^{2d}),~ \iint_{S_L^c} \d \mu < \eta\bigg\}.
		\end{equation}
	\end{definition}
	With this notation, we can bound the singular semi-classical energy from below.
	\begin{lemma}[Comparison to the Thomas-Fermi energy]\label{lemma_energie_singuliere_bornee_thomas_fermi}
		Let $\Theta_{\eta}$, $\Gamma_{\varepsilon}$, $\mathcal{E}_{\infty}$ and $E^{\mathrm{TF}}$ be defined respectively by \eqref{equation_definition_Theta_eta}, \eqref{equation_definition_gamma_varepsilon}, \eqref{equation_definition_singular_semi_classical_energy} and \eqref{equation_definition_energie_thomas_fermi}, and let $\mu \in \Theta_{\eta} \setminus \Gamma_{\varepsilon}$, and $w$ satisfy Assumption \ref{assumption_w}. There exists a constant $C>0$ that does not depend on $\varepsilon$ or $\eta$ such that
		\begin{equation}
			\mathcal{E}_{\infty}[\bar{\mu}] \geq E^{\mathrm{TF}} - C(\varepsilon + \eta).
		\end{equation}
	\end{lemma}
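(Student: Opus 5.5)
The plan is to rescale $\bar\mu$ into an admissible competitor for the Vlasov problem \eqref{equation_egalite_energie_thomas_fermi_vlasov} and then invoke Remark \ref{remark_link_vlasov_thomas_fermi}. Since $\mu\notin\Gamma_\varepsilon$, every box $\Omega_j\subset S_L$ satisfies $\mu(\Omega_j)<(1+\varepsilon)(2\pi)^{-d}|\Omega_j|$. Hence $\bar\mu\leq(1+\varepsilon)(2\pi)^{-d}$ pointwise, and $\bar\mu$ is supported in $S_L$. Because $\mu\in\Theta_\eta$, its total mass satisfies $t:=\iint\bar\mu=\mu(S_L)\in[1-\eta,1]$.

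I will set
\[
m=\frac{(2\pi)^d}{1+\varepsilon}\,\bar\mu ,
\]
so that $0\leq m\leq 1$ and $\iint m=(2\pi)^d t/(1+\varepsilon)$. The mass is then slightly below $(2\pi)^d$, by a factor $\lambda:=t/(1+\varepsilon)\in[(1-\eta)/(1+\varepsilon),1]$.

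To restore the mass constraint, I will add to $m$ a small piece $m'$ with $0\le m+m'\le1$ and $\iint m'=(2\pi)^d(1-\lambda)$. A clean choice is to place $m'$ far away in phase space, in a region where $m$ vanishes (outside $S_L$). I will localize it in $x$ near a fixed point $x_0$ and in $p$ near $0$. Because the Pauli bound allows density $1$, the mass $(1-\lambda)(2\pi)^d$ fits in a set of volume $O(1-\lambda)$. Its kinetic and potential energy are then $O(1-\lambda)=O(\varepsilon+\eta)$ (with $V$ bounded on a fixed ball), and its interaction contribution is $-I_w\int\rho_{m'}^2$ plus a cross term.

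There is a subtlety: $S_L$ grows with $N$ and may cover $x_0$. Instead I will put $m'$ at large momentum $|p|\le$ const inside the ball of $x$ near the origin but at $p$ outside the momentum range of $S_L$; that costs kinetic energy $L^2$, which is bad. A better alternative avoids adding mass altogether: replace $m$ by the Thomas–Fermi scaling $m_\lambda(x,p)=m(x,\lambda^{-1/d}p)$ in momentum. Then $\iint m_\lambda=(2\pi)^d$, we keep $0\le m_\lambda\le1$, $\rho_{m_\lambda}=\lambda^{-1}\rho_m$, the kinetic term scales by $\lambda^{-1-2/d}$, the potential by $\lambda^{-1}$, and the interaction by $\lambda^{-2}$. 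This is the route I will take.

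This yields
\[
E^{\rm TF}\le\mathcal E^{\rm V}[m_\lambda]\le \lambda^{-1-2/d}\frac{1}{(2\pi)^d}\iint(|p|^2+V)m-\lambda^{-2}I_w\int\rho_m^2 .
\]
Next I will express these terms through $\bar\mu$: the first term equals $(1+\varepsilon)^{-1}\iint(|p|^2+V)\bar\mu$, and $\rho_m=(1+\varepsilon)^{-1}\rho_{(2\pi)^d\bar\mu}$. Hence
\[
\mathcal E^{\rm V}[m_\lambda]=\mathcal E_\infty[\bar\mu]+O\big(|\lambda^{-1-2/d}(1+\varepsilon)^{-1}-1|\,K+|\lambda^{-2}(1+\varepsilon)^{-2}-1|\,U\big),
\]
where $K=\iint(|p|^2+V)\bar\mu$ and $U=I_w\int\rho_{(2\pi)^d\bar\mu}^2$. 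The prefactors are $O(\varepsilon+\eta)$.

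The main obstacle is bounding $K$ and $U$ by a constant independent of $\varepsilon,\eta$. Following the computation in the proof of Lemma \ref{lemma_averaging}, the bath-tube principle gives $U\le C K+C$; in $d=1$ this uses the $\rho^3$ interpolation, and in $d=2$ it uses $I_w<c_{\rm TF}$. It also gives $\mathcal E_\infty[\bar\mu]\ge cK-C$. So either $\mathcal E_\infty[\bar\mu]\ge E^{\rm TF}+1$ already, in which case we are done, or $K$ is bounded by an absolute constant. In the latter case the error is $\le C(\varepsilon+\eta)$, giving $\mathcal E_\infty[\bar\mu]\ge E^{\rm TF}-C(\varepsilon+\eta)$. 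A final check is that $\lambda^{-1}$ stays bounded for $\varepsilon,\eta$ small; for large $\varepsilon,\eta$ the statement is trivial after enlarging $C$, using the lower bound $\mathcal E_\infty\ge -C$.
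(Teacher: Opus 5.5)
Your argument is correct for $\varepsilon$ small. It uses the same mechanism as the paper: restore the unit-mass constraint by a homogeneous rescaling, then compare the three terms through their scaling degrees $1+2/d$, $1$ and $2$. Your competitor even has density $\rho_{m_\lambda}=\lambda^{-1}\rho_m$, which is exactly the paper's $\widetilde\rho=\rho/(1-\alpha)$; your $\lambda$ is the paper's $1-\alpha$.

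What differs is where the rescaling is done and how the error is controlled. The paper divides $\bar\mu$ by $1+\varepsilon$ and passes to $\mathcal{E}^{\mathrm{TF}}[\rho]$ by the bath-tub principle. It then compares $\mathcal{E}^{\mathrm{TF}}[\rho]$ with $\mathcal{E}^{\mathrm{TF}}[\widetilde\rho]$ using only sign information: positivity of $\mathcal{E}^{\mathrm{TF}}$ in $d=2$, a Young inequality in $d=1$. So the mass correction needs no a priori bound on the energy of $\bar\mu$.

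You stay in phase space, where the momentum dilation keeps $0\le m_\lambda\le 1$, and you absorb both the factor $1+\varepsilon$ and the mass deficit into one dilation. The price is an error $(a-1)K$ with $a=t^{-1-2/d}(1+\varepsilon)^{2/d}$, which your coercivity dichotomy controls. The interaction prefactor is $t^{-2}\ge 1$, so that error has a favourable sign and you do not actually need to bound $U$.

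This buys something real in the $\varepsilon$-part. The paper's chain $\mathcal{E}_\infty[\bar\mu]\ge(1+\varepsilon)\mathcal{E}_\infty[\nu]\ge\mathcal{E}_\infty[\nu]$ does not hold as written. In fact $\mathcal{E}_\infty[\bar\mu]=(1+\varepsilon)\mathcal{E}_\infty[\nu]-\varepsilon(1+\varepsilon)I_w\int\rho^2$, and $\mathcal{E}_\infty[\nu]$ can be negative in $d=1$. Repairing that step requires exactly the a priori bound on $K$ (or on $\int\rho^2$) that your dichotomy supplies.

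There are two small corrections.

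\begin{itemize}
\item The dilation must be $m_\lambda(x,p)=m(x,\lambda^{1/d}p)$. With $m(x,\lambda^{-1/d}p)$ as written, the mass is $(2\pi)^d\lambda^2$ rather than $(2\pi)^d$. The scalings you list afterwards are those of the correct dilation.
\item Your closing remark about large $\varepsilon$ is not right. The uniform bound $\mathcal{E}_\infty[\bar\mu]\ge cK-C$ needs $(1+\varepsilon)I_w<c_{\mathrm{TF}}$ in $d=2$, and in $d=1$ its constant grows like $(1+\varepsilon)^2$. Concentrating $\bar\mu$ at density close to $(1+\varepsilon)(2\pi)^{-d}$ shows that the estimate with $C$ independent of $\varepsilon$ is false for large $\varepsilon$: in $d=1$, $\mathcal{E}_\infty[\bar\mu]$ can be of order $-(1+\varepsilon)^2$. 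Since the lemma is only used as $\varepsilon\to 0$, simply state it for $\varepsilon\le\varepsilon_0$. Large $\eta$ is handled by your argument as it stands.
\end{itemize}
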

	\begin{proof}
		Since the averaged measure $\bar{\mu}$ satisfies the approximated local Pauli principle, i.e. $\bar{\mu} \in \Gamma_{\varepsilon}^c$,
		it is clear that 
		\begin{equation}
			\nu := (1 + \varepsilon)^{-1}\bar{\mu}
		\end{equation} 
		satisfies the true Pauli principle, i.e.
		\begin{equation}
			\nu \leq \frac{1}{(2\pi)^d}.
		\end{equation}
		Let 
		\begin{equation}
			\rho := \int_{\R^d} \nu(\cdot, p) \d p = \rho_{(2\pi)^d\nu}.
		\end{equation} 
		By \eqref{equation_egalite_energie_semi_classique_singuliere_vlasov} and Remark \ref{remark_link_vlasov_thomas_fermi}, we have
		\begin{align}
			\mathcal{E}_{\infty}[\bar{\mu}] = \mathcal{E}_{\infty}\big[(1 + \varepsilon)\nu\big] &= (1 + \varepsilon)\iint_{\R^d\times\R^d}\big(|p|^2 + V(x)\big)\nu(x, p)\d x \d p - (1 + \varepsilon)^2 I_w \int_{\R^d} \rho(x)^2 \d x\notag\\
			&\geq (1 + \varepsilon)\mathcal{E}_{\infty}[\nu] \geq \mathcal{E}_{\infty}[\nu] = \mathcal{E}^{\mathrm{V}}\big[(2\pi)^d\nu\big] \geq \mathcal{E}^{\mathrm{TF}}[\rho].
		\end{align}
		To conclude, we have to bound $\mathcal{E}^{\mathrm{TF}}[\rho]$ from below. We have
		\begin{equation}\label{equation_definition_alpha}
			\int_{\R^d} \rho = \frac{1}{1 + \varepsilon}\iint_{S_L}\mu =: 1 - \alpha.
		\end{equation}
		Let us now denote by $\widetilde{\rho}$ the normed density proportional to $\rho$, i.e.
		\begin{equation}
			\widetilde{\rho} := \frac{1}{1 - \alpha}\rho.
		\end{equation}
		Then, since
		\begin{equation}\label{equation_integrale_rho_tilde_egale_1}
			\int_{\R^d} \widetilde{\rho} = 1,
		\end{equation}
		its energy is bounded from below by the Thomas-Fermi energy:
		\begin{equation}
			\mathcal{E}^{\mathrm{TF}}[\widetilde{\rho}] \geq E^{\mathrm{TF}}.
		\end{equation}
		Let us come back to $\rho$. Its energy can be written
		\begin{equation}
			\mathcal{E}^{\mathrm{TF}}[\rho] = (1 - \alpha)^{1 + 2/d}c_{\mathrm{TF}} \int_{\R^d} \widetilde{\rho}^{1 + 2/d} + (1 - \alpha)\int_{\R^d} V \widetilde{\rho} - (1 - \alpha)^2 I_w \int_{\R^d} \widetilde{\rho}^2.
		\end{equation}
		If $d = 2$, we have directly
		\begin{equation}
			\mathcal{E}^{\mathrm{TF}}[\rho] \geq (1-\alpha)^2\mathcal{E}^{\mathrm{TF}}[\widetilde{\rho}]\geq (1 - 2\alpha)\mathcal{E}^{\mathrm{TF}}[\widetilde{\rho}].
		\end{equation}
		If $d = 1$, the computations are a bit more tedious. We have
		\begin{equation}
			\mathcal{E}^{\mathrm{TF}}[\rho] \geq (1 - 3\alpha)c_{\mathrm{TF}}\int_{\R^d}\widetilde{\rho}^3 + (1 - 3\alpha)\int_{\R^d} V \widetilde{\rho} - (1 - \alpha)I_w\int_{\R^d} \widetilde{\rho}^2 = (1 - 3\alpha)\mathcal{E}^{\mathrm{TF}}[\widetilde{\rho}] - 2 \alpha I_w\int_{\R^d} \widetilde{\rho}^2.
		\end{equation}
		By Young's inequality and \eqref{equation_integrale_rho_tilde_egale_1}, 
		\begin{equation}
			2\alpha I_w \int_{\R^d} \widetilde{\rho}^2 \leq \alpha c_{\mathrm{TF}}\int_{\R^d} \widetilde{\rho}^3 + \alpha\frac{I_w^2}{c_{\mathrm{TF}}} 
		\end{equation}
		and therefore
		\begin{equation}
			\mathcal{E}^{\mathrm{TF}}[\rho] \geq (1 - 4\alpha)\mathcal{E}^{\mathrm{TF}}[\widetilde{\rho}] - C\alpha . 
		\end{equation}
		Therefore, for both values of $d$, we have
		\begin{equation}
			\mathcal{E}^{\mathrm{TF}}[\rho] \geq (1 - 4\alpha)E^{\mathrm{TF}} - C.
		\end{equation}
		Because of \eqref{equation_definition_alpha} and $\mu \in \Theta_{\eta}$, we have
		\begin{equation}
			1 - \alpha > \frac{1 - \eta}{1 + \varepsilon} > 1 - \eta - \varepsilon, 
		\end{equation}
		and hence
		\begin{equation}
			\mathcal{E}^{\mathrm{TF}}[\rho] \geq (1 - 4\varepsilon - 4\eta)E^{\mathrm{TF}} - C(\varepsilon + \eta).
		\end{equation}
	\end{proof}
	We can now use the several lemmas we have established to bound the energy from below.
	\begin{proof}[Proof of Proposition \ref{proposition_lower_bound}]
		First, by Lemmas \ref{lemme_restriction_measures_bornees_pauli} and \ref{lemma_singular_or_not_energy}, we have the following lower bound
		\begin{multline}
			\int_{\mathcal{P}(\R^{2d})} \mathcal{E}_N[\mu]\d P_N^{\mathrm{DF}}(\mu) \geq \big(1 - o(1)\big)\int_{\Xi_{\tau}\setminus\Gamma_{\varepsilon}} \mathcal{E}_{\infty}[\bar{\mu}]\d P_N^{\mathrm{DF}}(\mu) - o(1)  \\- C \bigg(\frac{\tau^2 N^{d\beta}}{\inf(L^4, L^{2s})} + N^{d \beta} n^{2d}P^{\mathrm{DF}}_N(\Gamma_{\varepsilon}^j) + N^{3d\beta/2}\tau^{-1}\bigg).
		\end{multline}
		To use Lemma \ref{lemma_energie_singuliere_bornee_thomas_fermi}, we notice that
		\begin{equation}\label{equation_inclusion_xitau_thetaeta}
			\xi_{\tau} \subset \Theta_{\eta}~~~~~\mathrm{with}~~~~~ \eta = \frac{\tau}{\inf(L^2, L^s)}
		\end{equation}
		and therefore
		\begin{multline}
			\int_{\mathcal{P}(\R^{2d})} \mathcal{E}_N[\mu]\d P_N^{\mathrm{DF}}(\mu) \geq \big(1 - o(1)\big) E^{\mathrm{TF}} - o(1)  - C \bigg(\varepsilon + \frac{\tau}{\inf(L^2, L^s)} +\frac{\tau^2 N^{d\beta}}{\inf(L^4, L^{2s})} \\+ N^{d \beta} n^{2d}P^{\mathrm{DF}}_N(\Gamma_{\varepsilon}^j) + N^{3d\beta/2}\tau^{-1}\bigg).
		\end{multline}
		We now fix $\varepsilon>0$ and take the following choice of parameters\footnote{This choice does not optimize the errors.} (recall that $s$ is fixed in Assumption \ref{assumption_V})
		\begin{equation}\label{equation_scaling}
			\begin{cases}
				\tau = N^{3\beta d/2 + \sigma}\\
				\gamma = 1 - d\sigma\\
				\delta = \sigma/4 \\
				L \sim N^{2d\beta + 1}\\
				l_p = N^{- \sigma}\\
				l_x = N^{-\frac{1}{d} + 2\sigma}\\
				0< \sigma < \frac{1}{2d} - \frac{d+1}{2}\beta,
			\end{cases}
		\end{equation}
		which are compatible with \eqref{equation_volume_hyperrectangle}, \eqref{equation_condition_delta} and \eqref{equation_condition_lx_lp}. Then, we have immediately 
		\begin{equation}
			\frac{\tau}{\inf(L^2, L^s)} +\frac{\tau^2 N^{d\beta}}{\inf(L^4, L^{2s})} + N^{3d\beta/2}\tau^{-1} \ll 1,
		\end{equation}
		and by Proposition \ref{proposition_proba_violating_pauli_principle},
		\begin{equation}
			N^{d \beta} n^{2d}P^{\mathrm{DF}}_N(\Gamma_{\varepsilon}^j) \ll 1.
		\end{equation}
		Therefore, 
		\begin{equation}
			\liminf \int_{\mathcal{P}(\R^{2d})} \mathcal{E}_N[\mu]\d P_N^{\mathrm{DF}}(\mu) \geq E^{\mathrm{TF}} - C\varepsilon.
		\end{equation}
		We can now take the limit $\varepsilon \to 0$ to recover
		\begin{equation}
			\liminf \int_{\mathcal{P}(\R^{2d})} \mathcal{E}_N[\mu]\d P_N^{\mathrm{DF}}(\mu) \geq E^{\mathrm{TF}},
		\end{equation}
		and by Lemma \ref{lemma_energy_involving_diaconis_freedman},
		\begin{equation}
			\frac{1}{N}\langle \Psi_N|H_N\Psi_N\rangle \geq E^{\mathrm{TF}} - o(1).
		\end{equation}
	\end{proof}
	\section{Convergence of states}\label{section_convergence_states}
	In this section, we prove Theorem \ref{theorem2}. To do so, we use the upper bound along with Lemmas \ref{lemma_energy_involving_diaconis_freedman}, \ref{lemme_restriction_measures_bornees_pauli} and \ref{lemma_singular_or_not_energy}, and the scaling \eqref{equation_scaling}, to find
	\begin{equation}\label{equation_premiere_derniere_section}
		E^{\mathrm{TF}} \geq \int_{\Xi_{\tau} \setminus \Gamma_{\varepsilon}} \mathcal{E}_{\infty}[\bar{\mu}] \d P_N^{\mathrm{DF}}(\mu) - o(1).
	\end{equation}
	We show that $P_N^{\mathrm{DF}}$ converges to a measure $P^{\mathrm{DF}}$ concentrated on probability measures satisfying a Pauli principle, and that we can replace $\bar{\mu}$ by $\mu$ to find
	\begin{equation}
		E^{\mathrm{TF}} \geq \int_{\mathcal{P}(\R^{2d})} \sigma^{\otimes k} \d P^{\mathrm{DF}}(\sigma).
	\end{equation}
	To prove that $P_N^{\mathrm{DF}}$ converges, we need the following lemma.
	\begin{lemma}[Tightness of the one-body Husimi functions]\label{lemma_m_tight}
		Let $V$ and $w$ satisfy Assumptions \ref{assumption_V} and \ref{assumption_w}. Let $\Psi_N$ such that 
		\begin{equation}
			\langle \Psi_N  |H_N \Psi_N\rangle  = O(N),
		\end{equation}
		then $(m_{\Psi_N}^{(1)})$ is tight.
	\end{lemma}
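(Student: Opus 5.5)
The plan is to show that $(2\pi)^{-d} m_{\Psi_N}^{(1)}$, a measure on $\R^{2d}$ of total mass $N/(2\pi\hbar)^0\cdot$ (equivalently $m_N^{(1)}$, a probability measure), has uniformly small mass outside large balls. The natural route is a Markov inequality with the weight $|p|^2+V(x)$. This needs a bound on
\begin{equation*}
\iint_{\R^d\times\R^d}\big(|p|^2+V(x)\big)\d m_N^{(1)}(x,p) = \frac{1}{N}\Big(\tr(-\hbar^2\Delta\gamma_N)+\tr\big((V*|f^\hbar|^2)\gamma_N\big)\Big)+\hbar_p\|\nabla f\|_{L^2}^2
\end{equation*}
that is \emph{uniform} in $N$. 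The identity comes from the computations in the proof of Lemma~\ref{lemma_semi_classical_approximation_energy}.

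The a priori estimate \eqref{equation_borne_apriori1} is not good enough, since it only gives $O(N^{\beta d/2})$. So the first step is to improve it to $O(N)$ for the one-body energy $\langle\Psi_N|H_N^0\Psi_N\rangle$. I would do this by reusing the Lieb--Thirring argument of Lemma~\ref{lemma_apriori}, but now comparing the attractive term with the kinetic energy at the level of the Thomas--Fermi functional, in the spirit of the stability argument of Lemma~\ref{lemma_averaging}.

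Concretely, write $H_N = (1-\kappa)H_N^0 + (\kappa H_N^0 - N^{-1}\sum_{j<k}w_N(x_j-x_k))$ for a small $\kappa>0$. The task is then to show that the second bracket is bounded below by $-CN$. The energy of this bracket is, by the semiclassical lower bound of Section~\ref{section_lower_bound} applied with the Hamiltonian $\kappa H^0_N$ in place of $H_N^0$ (i.e. Lemmas~\ref{lemma_semi_classical_approximation_energy}--\ref{lemma_energie_singuliere_bornee_thomas_fermi} with $V,\hbar^2$ scaled by $\kappa$), at least $N$ times a Thomas--Fermi energy with constant $\kappa c_{\mathrm{TF}}$ and potential $\kappa V$. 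That Thomas--Fermi energy is finite when $d=1$ by Remark~\ref{remark_constraint_dimension}. When $d=2$ it is finite provided $\kappa c_{\mathrm{TF}}\ge I_w$, and this is where the strict inequality \eqref{hypothese_w} leaves room. In that case I would use $\kappa$ close to $1$ with $\kappa c_{\mathrm{TF}}>I_w$, so that $1-\kappa>0$ still remains. Thus $\langle\Psi_N|(\kappa H_N^0 - N^{-1}\sum w_N)\Psi_N\rangle\ge -CN$.

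One caveat is that the lower bound of Section~\ref{section_lower_bound} was proven for approximate minimizers. I expect this is the \textbf{main obstacle}: its proof only used $\langle H\rangle=O(N)$ through the a priori bounds, so applying it to a minimizer of the auxiliary Hamiltonian gives the bound for all states. A cleaner alternative I would try first is a direct operator inequality. For this, combine the Lieb--Thirring inequality \eqref{equation_lieb_thirring_inequality} with a kinetic energy inequality for the two-body density, controlling $N^{-1}\int w_N\rho^{(2)}$ by $\|w\|_{L^1}\int\rho^2/N$ up to lower order. That direct route may fail because $\rho^{(2)}\ne\rho\otimes\rho$; if so, I fall back on the semiclassical argument.

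Once the improvement is in hand, we get $(1-\kappa)\langle H_N^0\rangle\le \langle H_N\rangle + CN = O(N)$. Then $V*|f^\hbar|^2\le CV+C$ by Assumption~\ref{assumption_V} and the compact support of $f$, and $\hbar_p\to0$. Together these give $\iint(|p|^2+V)\d m_N^{(1)}\le C$ uniformly in $N$. Since $V(x)\to+\infty$, the sets $\{|p|^2+V(x)\le R\}$ are compact. Markov's inequality then gives $m_N^{(1)}(\{|p|^2+V>R\})\le C/R$, which is tightness of $(m^{(1)}_{\Psi_N})$ after the normalization \eqref{equation_definition_m_N_k}.
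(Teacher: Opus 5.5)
Your proposal is correct and is essentially the paper's argument. The paper likewise peels off a fraction of the one-body energy: it writes the interaction as $(1+\varepsilon)w$ with $(1+\varepsilon)I_w<c_{\mathrm{TF}}$ when $d=2$ (your $\kappa=1/(1+\varepsilon)$) and $\varepsilon=1$ when $d=1$, applies the lower bound of Proposition~\ref{proposition_lower_bound} to this strengthened interaction, and concludes by coercivity of $|p|^2+V(x)$. The differences are minor: the paper splits directly on the Husimi expression of Lemma~\ref{lemma_semi_classical_approximation_energy} rather than first bounding $\langle\Psi_N|H_N^0\Psi_N\rangle$ and transferring via $V*|f^{\hbar}|^2\le CV+C$. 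Also, your bound of an arbitrary state's energy by the auxiliary ground-state energy correctly removes the approximate-minimizer caveat, so the tentative direct operator route can simply be dropped.
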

	From this lemma, we can deduce 
	\begin{corollary}[Convergence of the Diaconis-Freedman measure]\label{corollary_convergence_P}
		Let $\Psi_N$ such that 
		\begin{equation}
			\langle \Psi_N  |H_N \Psi_N\rangle  = O(N),
		\end{equation}
		then there exists a limit measure $P^{\mathrm{DF}}$ concentrated on measures that satisfy the Pauli principle such that $P_N^{\mathrm{DF}}$ converges to $P^{\mathrm{DF}}$ weakly as measures. Moreover, we have for all $k \in \N$,
		\begin{equation}
			(2\pi)^{-dk}m_{\Psi_N}^{(k)} \rightharpoonup \int_{\mathcal{P}(\R^{2d})} \sigma^{\otimes k} \d P^{\mathrm{DF}}(\sigma)
		\end{equation}
		weakly as measures when $N\to+\infty$.
	\end{corollary}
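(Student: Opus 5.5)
The proof deduces the corollary from Lemma \ref{lemma_m_tight} and Theorem \ref{theorem_diaconis_freedman}, applied to the symmetric probability measure $m_N$ of \eqref{equation_definition_m_N}.

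\textbf{Step 1: convergence of $P_N^{\mathrm{DF}}$.} By \eqref{equation_definition_m_N_k} with $k=1$, $m_N^{(1)} = (2\pi)^{-d} m_{\Psi_N}^{(1)}$. Lemma \ref{lemma_m_tight} then gives tightness of $(m_N^{(1)})$. The second part of Theorem \ref{theorem_diaconis_freedman} yields, up to extraction, a probability measure $P^{\mathrm{DF}}$ on $\mathcal{P}(\R^{2d})$ with $P_N^{\mathrm{DF}} \rightharpoonup P^{\mathrm{DF}}$. It also gives $\widetilde m_N^{(k)} \rightharpoonup \int \sigma^{\otimes k}\d P^{\mathrm{DF}}(\sigma)$ for every $k$.

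\textbf{Step 2: convergence of the Husimi functions.} By \eqref{equation_definition_m_N_k},
\begin{equation*}
(2\pi)^{-dk} m_{\Psi_N}^{(k)} = \frac{N!}{(N-k)!\,N^k}\, m_N^{(k)},
\end{equation*}
and the prefactor tends to $1$ for fixed $k$. By \eqref{equation_difference_m_tilde_ou_non_TV}, $\|m_N^{(k)} - \widetilde m_N^{(k)}\|_{\mathrm{TV}} \le 2k(k-1)/N \to 0$. Testing against bounded continuous functions therefore gives the claimed weak convergence of $(2\pi)^{-dk}m_{\Psi_N}^{(k)}$ to $\int \sigma^{\otimes k}\d P^{\mathrm{DF}}(\sigma)$.

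\textbf{Step 3: the Pauli principle, which I expect to be the main obstacle.} I would first note that the $k=1$ marginal is uniformly bounded: $(2\pi)^{-d}m_{\Psi_N}^{(1)}(x,p) = (2\pi)^{-d}\langle f^\hbar_{x,p}|\gamma_N f^\hbar_{x,p}\rangle \cdot N^{-1}\cdot N \le (2\pi)^{-d}$, using $\gamma_{\Psi_N}^{(1)}\le 1$ with the normalisation of \eqref{equation_definition_m_N_k}. This only yields $\int \sigma\, \d P^{\mathrm{DF}} \le (2\pi)^{-d}$, which is a bound on the average over $\sigma$, not a bound $P^{\mathrm{DF}}$-almost surely.

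To get the almost-sure statement, I would fix a box $\Omega$ (a product of rational rectangles) and $\varepsilon>0$, and consider the closed set
\begin{equation*}
F_{\Omega,\varepsilon} = \big\{\sigma:\ \sigma(\overline\Omega)\ge (1+\varepsilon)(2\pi)^{-d}|\Omega|\big\}.
\end{equation*}
I would then show $P_N^{\mathrm{DF}}(F_{\Omega,\varepsilon}) \to 0$, so that by Portmanteau (upper semicontinuity on closed sets) $P^{\mathrm{DF}}(F_{\Omega,\varepsilon}) = 0$.

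For the bound, I would cover $\Omega$ by about $|\Omega| N^{\gamma}$ tiles $\Omega_j$ as in Definition \ref{definition_tiling}. If $\sigma(\Omega)\ge(1+\varepsilon)(2\pi)^{-d}|\Omega|$, then some tile carries mass at least $(1+\varepsilon/2)(2\pi)^{-d}|\Omega_j|$, after absorbing the boundary tiles, whose proportion is negligible. Proposition \ref{proposition_proba_violating_pauli_principle} and a union bound then give $P_N^{\mathrm{DF}}(F_{\Omega,\varepsilon}) \le C N^{\gamma} e^{-c N^{\delta}\ln(1+\varepsilon/2)} \to 0$.

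Taking a countable union over rational boxes and $\varepsilon = 1/q$ shows that $P^{\mathrm{DF}}$-almost every $\sigma$ satisfies $\sigma(\Omega)\le(2\pi)^{-d}|\Omega|$ for all rational boxes. Hence $\sigma$ is absolutely continuous with density at most $(2\pi)^{-d}$, which is \eqref{equation_principe_pauli_df}. The delicate point is handling the boundary tiles together with the tile-mass pigeonhole argument, and checking that the polynomial number of tiles is beaten by the stretched-exponential bound.
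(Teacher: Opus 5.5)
\textbf{Gap in Step 3.} Your Steps 1 and 2 coincide with the paper's proof. The problem is the Portmanteau step in Step 3. For a closed set $F$ of measures, $P_N^{\mathrm{DF}}\rightharpoonup P^{\mathrm{DF}}$ only gives $\limsup_N P_N^{\mathrm{DF}}(F)\le P^{\mathrm{DF}}(F)$. That is a \emph{lower} bound on $P^{\mathrm{DF}}(F)$, so $P_N^{\mathrm{DF}}(F_{\Omega,\varepsilon})\to 0$ tells you nothing about $P^{\mathrm{DF}}(F_{\Omega,\varepsilon})$. The inference fails in general: take $P_N^{\mathrm{DF}}=\delta_{\sigma_N}$ with $\sigma_N(\overline\Omega)$ below the threshold, but with mass just outside $\overline\Omega$ converging onto $\partial\Omega$. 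Then the limit charges $F_{\Omega,\varepsilon}$ while no $P_N^{\mathrm{DF}}$ does.

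The repair is to work with an open set of measures. Set $G_{\Omega,\varepsilon}=\{\sigma:\ \sigma(\Omega^{\circ})>(1+\varepsilon)(2\pi)^{-d}|\Omega|\}$.
\begin{itemize}
\item $G_{\Omega,\varepsilon}$ is weakly open, because $\sigma\mapsto\sigma(\Omega^{\circ})$ is weakly lower semicontinuous.
\item $G_{\Omega,\varepsilon}\subset F_{\Omega,\varepsilon}$, so $P^{\mathrm{DF}}(G_{\Omega,\varepsilon})\le\liminf_N P_N^{\mathrm{DF}}(G_{\Omega,\varepsilon})\le\lim_N P_N^{\mathrm{DF}}(F_{\Omega,\varepsilon})=0$.
\item Your tiling, pigeonhole and union bound do give $P_N^{\mathrm{DF}}(F_{\Omega,\varepsilon})\to 0$, provided both $l_x,l_p\to 0$ (so boundary tiles are negligible) and $0<\delta<(1-\gamma)/2$.
\end{itemize}
You then get, almost surely, $\sigma(\Omega^{\circ})\le(2\pi)^{-d}|\Omega|$ for all rational open boxes. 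Covering a compact Lebesgue-null set by finitely many such boxes of small total volume shows that box boundaries are $\sigma$-null. From there $\sigma\le(2\pi)^{-d}$ follows as you intended.

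\textbf{Comparison with the paper.} Once repaired, your Step 3 is a genuinely different route. The paper disposes of the Pauli principle in one sentence, appealing to the Pauli bound on all the marginals $m_N^{(k)}$. This is the Fournais--Lewin--Solovej moment argument:
\begin{itemize}
\item For each fixed $k$, $(2\pi)^{-dk}m_{\Psi_N}^{(k)}\le(2\pi)^{-dk}$, and this bound passes to the limit through your Step 2.
\item Hence for continuous compactly supported $\varphi\ge 0$, $\int\big(\int\varphi\,\mathrm{d}\sigma\big)^k\,\mathrm{d}P^{\mathrm{DF}}(\sigma)\le\big((2\pi)^{-d}\int\varphi\big)^k$ for every $k$.
\item Letting $k\to\infty$ (the $L^k(P^{\mathrm{DF}})$ norms tend to the $L^\infty$ norm) gives $\int\varphi\,\mathrm{d}\sigma\le(2\pi)^{-d}\int\varphi$ almost surely.
\end{itemize}
So you were right that $k=1$ alone only bounds the average. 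But all fixed $k$ together already give the almost-sure bound, with no tiling, no boundary-tile counting and no open/closed subtlety. Your route relies on the heavier Proposition \ref{proposition_proba_violating_pauli_principle}, which is itself proved from the $k$-body Pauli bound with $k$ growing like $N^{\delta}$. In exchange it shows that $P_N^{\mathrm{DF}}$ already concentrates on approximately Pauli measures at finite $N$.
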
 
	\begin{definition}[Wasserstein distance]
		The $1$-Wasserstein distance, that metricizes the weak convergence on $\mathcal{P}(\R^{2d})$, is defined by
		\begin{equation}
			d_1^\mathrm{W}(\mu, \nu) = \sup_{\|\phi\|_{\mathrm{Lip}}\leq 1} \bigg|\int \phi\d \mu - \int \phi \d \nu\bigg|.
		\end{equation}
	\end{definition}
	\begin{proof}[Proof of Corollary \ref{corollary_convergence_P}]
		Because of Theorem \ref{theorem_diaconis_freedman}, Lemma \ref{lemma_m_tight} implies that there exists a limit measure $P^{\mathrm{DF}}$ such that $P_N^{\mathrm{DF}}$ converges weakly to $P^{\mathrm{DF}}$. Then, since the $m_N^{(k)}$ satisfy a Pauli principle, $P^{\mathrm{DF}}$ is concentrated on measures $\mu$ satisfying
		\begin{equation}\label{equation_pauli_principle_P}
			\mu \leq (2\pi)^{-d}.
		\end{equation}
		Moreover, by Theorem \ref{theorem_diaconis_freedman}, we have
		\begin{equation}\label{equation_convergence_m_tilde_corollaire}
			\widetilde{m}_N^{(k)} \rightharpoonup \int_{\mathcal{P}(\R^{2d})} \sigma^{\otimes k} \d P^{\mathrm{DF}}(\sigma).
		\end{equation}
		Besides, \eqref{equation_difference_m_tilde_ou_non_TV} and \eqref{equation_definition_m_N_k} imply that
		\begin{equation}
			\big\|(2\pi)^{-dk} m_{\Psi_N}^{(k)} - \widetilde{m}_N^{(k)}\big\|_{\mathrm{TV}} \leq C \frac{(k - 1)^2}{N}\|m_{\Psi_N}^{(k)}\|_{\mathrm{TV}} + \frac{2k(k-1)}{N}\leq C \frac{k^2}{N}.
		\end{equation}
		In particular, we have
		\begin{equation}
			d_1^{\mathrm{W}}\Big((2\pi)^{-dk} m_{\Psi_N}^{(k)}, \widetilde{m}_N^{(k)}\Big) \to 0,
		\end{equation}
		which together with \eqref{equation_convergence_m_tilde_corollaire} gives
		\begin{equation}
			(2\pi)^{-dk} m_{\Psi_N}^{(k)} \rightharpoonup \int_{\mathcal{P}(\R^{2d})} \sigma^{\otimes k} \d P^{\mathrm{DF}}(\sigma).
		\end{equation}
	\end{proof}
	Now that we have proved Corollary \ref{corollary_convergence_P}, we have to prove Lemma \ref{lemma_m_tight}.
	\begin{proof}[Proof of Lemma \ref{lemma_m_tight}]
		Let us treat the case $d=2$ first. Let $\varepsilon > 0$ and $\widetilde{w} = (1 + \varepsilon)w$. Then, we can rewrite Lemma \ref{lemma_semi_classical_approximation_energy}:
		\begin{multline} 
			(1 + \varepsilon) E^{\mathrm{TF}} \geq (1 + \varepsilon)\frac{\langle \Psi_N|H_N\Psi_N\rangle}{N} = \frac{1 + \varepsilon}{(2\pi)^d}\iint_{\R^d\times\R^d} \big(|p|^2 + V(x)\big)\d m_{\Psi_N}^{(1)}(x, p) \\ - \frac{1}{2(2\pi)^{2d}}\iint_{\R^{2d}\times\R^{2d}} \widetilde{w}_N(x-y)\d m_{\Psi_N}^{(2)}(x, y; p, q) + o(1).
		\end{multline}
		If we take $\varepsilon$ small enough, namely
		\begin{equation}
			\varepsilon < \frac{c_{\mathrm{TF}}}{I_w} - 1,
		\end{equation} 
		we have
		\begin{equation}
			\frac{1}{(2\pi)^d}\iint_{\R^d\times\R^d} \big(|p|^2 + V(x)\big)\d m_{\Psi_N}^{(1)}(x, p) - \frac{1}{2(2\pi)^{2d}}\iint_{\R^{2d}\times\R^{2d}} \widetilde{w}_N(x-y)\d m_{\Psi_N}^{(2)}(x, y; p, q) \geq - o(1).
		\end{equation}
		Therefore, we have
		\begin{equation}
			\iint_{\R^d\times\R^d} \big(|p|^2 + V(x)\big) \d m_{\Psi_N}^{(1)}(x, p) \leq C\varepsilon^{-1},
		\end{equation}
		and by coercivity of $(x, p) \mapsto |p|^2 + V(x)$, $m_{\Psi_N}^{(1)}$ is indeed tight.\\
		
		Now let us consider the case $d=1$, and $\varepsilon = 1$. Here, we can apply Proposition \ref{proposition_lower_bound} to $2w$ and find
		\begin{equation}
			\frac{1}{(2\pi)^d} \iint_{\R^d\times\R^d} \big(|p|^2 + V(x)\big) \d m^{(1)}_{\Psi_N}(x, p) - \frac{1}{(2\pi)^{2d}} \iint_{\R^{2d}\times\R^{2d}} w_N(x-y)\d m^{(2)}_{\Psi_N}(x, y; p, q) \geq -C 
		\end{equation}
		and thus
		\begin{equation}
			\frac{1}{(2\pi)^d} \iint_{\R^d\times\R^d} \big(|p|^2 + V(x)\big) \d m^{(1)}_{\Psi_N}(x, p) \leq 2E^{\mathrm{TF}} + C,
		\end{equation}
		which implies the tightness of $m^{(1)}_{\Psi_N}$.
	\end{proof}
	To prove Theorem \ref{theorem2}, let us first state Lemma \ref{lemma_convergence_averaged_measure}, whose proof can be found in \cite[Lemma 4.10]{girardot_semiclassical_2021}.
	\begin{lemma}[Convergence of the averaged measure]\label{lemma_convergence_averaged_measure}
		Let $\mu \in \mathcal{P}(\R^d)$, we have the weak convergence of measures
		\begin{equation}
			\bar{\mu} = \sum_{j = 1}^{4^d n^{2d}} \mathds{1}_{\Omega_j}\iint_{\Omega_j}\frac{\d \mu}{|\Omega_j|} \rightharpoonup \mu
		\end{equation}
		when $N, L \to + \infty$, uniformly in $\mu$ i.e.
		\begin{equation}
			d_1^{\mathrm{W}}(\bar{\mu}, \mu) = o(1).
		\end{equation}
	\end{lemma}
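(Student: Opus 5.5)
The plan is to split the difference $\bar\mu-\mu$ into a local averaging error inside the hypercube $S_L$ of Notation \ref{notation_hypercube} and a tail error outside it. We write
\begin{equation*}
	\bar\mu-\mu=\big(\bar\mu-\mathds{1}_{S_L}\mu\big)-\mathds{1}_{S_L^c}\mu ,
\end{equation*}
and estimate $\int\phi\,\d(\cdot)$ for each term separately, with $\phi$ ranging over $\|\phi\|_{\mathrm{Lip}}\le 1$. The first term carries the genuinely uniform content of Lemma \ref{lemma_convergence_averaged_measure}. The second is where $L\to+\infty$ is used.

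\emph{Step 1 (averaging inside $S_L$).} By the definition \eqref{equation_definition_averaged_measure}, for every cell $\Omega_j\subset S_L$ we have
\begin{equation*}
	\iint_{\Omega_j}\phi\,\d\bar\mu=\mu(\Omega_j)\,\frac{1}{|\Omega_j|}\iint_{\Omega_j}\phi .
\end{equation*}
Hence
\begin{equation*}
	\bigg|\int\phi\,\d\big(\bar\mu-\mathds{1}_{S_L}\mu\big)\bigg|\le\sum_{j=1}^{4^dn^{2d}}\iint_{\Omega_j}\bigg|\phi(z)-\frac{1}{|\Omega_j|}\iint_{\Omega_j}\phi\bigg|\d\mu(z)\le \mathrm{diam}(\Omega_j)\,\mu(S_L).
\end{equation*}
The last expression is at most $\sqrt d\,(l_x+l_p)$. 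With the choice \eqref{equation_scaling}, $l_x,l_p\to0$, so this bound is $o(1)$ uniformly in $\mu\in\mathcal P(\R^{2d})$ and in $\phi$. Note that an arbitrary additive constant in $\phi$ is harmless here, since $\bar\mu$ and $\mathds{1}_{S_L}\mu$ have the same mass.

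\emph{Step 2 (the tail).} The remaining term is $\int_{S_L^c}\phi\,\d\mu$. Since only differences of equal-mass measures are insensitive to constants, the first thing to do is restore equal mass. Concretely, I would compare $\bar\mu+\mu(S_L^c)\delta_0$, or equivalently renormalise $\bar\mu$, with $\mu$. Their $d_1^{\mathrm W}$ distance is then bounded by
\begin{equation*}
	\sqrt d\,(l_x+l_p)+\iint_{S_L^c}|z|\,\d\mu(z).
\end{equation*}
This last quantity tends to $0$ as $L\to+\infty$, and it does so uniformly over any family with a uniform tail or moment bound. This is exactly the situation in which the lemma is used, namely $\mu\in\Xi_\tau$ (or $\Theta_\eta$) with $L$ growing faster than the relevant power of $\tau$, as in \eqref{equation_inclusion_xitau_thetaeta}. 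In that regime Assumption \ref{assumption_V} gives
\begin{equation*}
	\iint_{S_L^c}|z|\,\d\mu\le C\tau/\inf(L,L^{s-1}).
\end{equation*}

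\emph{Main obstacle.} The obstacle is precisely Step 2, read against the literal wording ``uniformly in $\mu$''. For a measure $\mu$ concentrated far outside $S_L$, $\bar\mu$ is essentially zero, so no bound independent of $\mu$ can hold for the tail term. I would therefore prove the statement in the following form. The averaging error is uniform over all of $\mathcal P(\R^{2d})$. The full distance $d_1^{\mathrm W}(\bar\mu,\mu)$ is $o(1)$ pointwise in $\mu$ (dominated convergence as $L\to\infty$), and uniformly on the tight families $\Xi_\tau$ appearing in Section \ref{section_convergence_states}. I would then check that this is all that is needed to pass from $\bar\mu$ to $\mu$ in \eqref{equation_premiere_derniere_section}.
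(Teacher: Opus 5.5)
Your Step 1 is the whole content of the lemma. It is the same cell-by-cell Lipschitz estimate that underlies the result the paper cites (Girardot's Lemma 4.10), and it is correct, provided the cells are taken half-open so that they partition $S_L$ also for empirical measures. Your objection to the literal ``uniformly in $\mu$'' is also correct: $\bar\mu$ lives on $S_L$, so $\mu=\delta_z$ with $z\notin S_L$ gives $\bar\mu=0$, and no bound independent of $\mu$ can hold.

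The weak point is Step 2. You read $\|\phi\|_{\mathrm{Lip}}\le 1$ as a bound on the Lipschitz constant alone. That forces the $\delta_0$ mass-restoration and the first-moment tail $\iint_{S_L^c}|z|\,\d\mu$. This reading is incompatible with how the lemma is used. In the proof of Corollary \ref{corollary_concergence_averaged_sequence} the paper asserts $d_1^{\mathrm{W}}(\mu_N,\mu)=o(1)$ for an \emph{arbitrary} weakly convergent sequence, and it calls $d_1^{\mathrm{W}}$ a metric for weak convergence on $\mathcal{P}(\R^{2d})$. Both claims fail for the Kantorovich--Rubinstein distance. Take $\mu_N=(1-\frac1N)\delta_0+\frac1N\delta_{z_N}$ with $|z_N|\ge NL$: then $\mu_N\rightharpoonup\delta_0$, but $\iint_{S_L^c}|z|\,\d\mu_N\ge L\to+\infty$. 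So the intended norm is the bounded-Lipschitz one, $\|\phi\|_{L^\infty}+\mathrm{Lip}(\phi)\le 1$. With it, no mass restoration is needed, and directly
\[
d_1^{\mathrm{W}}(\bar\mu,\mu)\le \sqrt d\,(l_x+l_p)+\mu(S_L^c).
\]
This is uniform over any family with $\sup\mu(S_L^c)\to 0$ as $L\to+\infty$, i.e.\ over tight families, and that is exactly what is needed:
\begin{itemize}
\item A weakly convergent sequence is tight by Prokhorov. This gives Corollary \ref{corollary_concergence_averaged_sequence}, and hence the Fatou step.
\item On $\Xi_\tau$, the paper's scaling gives $\mu(S_L^c)\le\eta=\tau/\inf(L^2,L^s)$ by \eqref{equation_inclusion_xitau_thetaeta}.
\end{itemize}
Your moment bound, by contrast, does not cover weakly convergent sequences, as the example above shows. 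On $\Xi_\tau$ it also requires the stronger condition $\tau\ll\inf(L,L^{s-1})$. So your displayed bound does not deliver uniformity on tight families, as you assert. Replace the moment estimate by the $\mu(S_L^c)$ bound.
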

	\begin{corollary}[Convergence of an averaged sequence]\label{corollary_concergence_averaged_sequence}
		Let $(\mu_N)$ be a sequence of probability measures converging weakly to a probability measure $\mu$. Then, we have
		\begin{equation}
			\overline{\mu_N} \rightharpoonup \mu
		\end{equation}
		weakly as measures.
	\end{corollary}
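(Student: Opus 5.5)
The plan is to combine Lemma \ref{lemma_convergence_averaged_measure} with the triangle inequality for the Wasserstein distance $d_1^{\mathrm{W}}$. Convergence in $d_1^{\mathrm{W}}$ is exactly weak convergence only when first moments are controlled, so I will work with the bounded-Lipschitz (Fortet--Mourier) version of $d_1^{\mathrm{W}}$, i.e. the supremum over test functions $\phi$ with $\|\phi\|_{L^\infty}+\|\phi\|_{\mathrm{Lip}}\leq 1$. This metric metrizes weak convergence of probability measures on $\R^{2d}$, and Lemma \ref{lemma_convergence_averaged_measure} is presumably proved for it, uniformly in $\mu$.

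The argument is then a single line. For any admissible $\phi$,
\begin{equation*}
\bigg|\int\phi\d\overline{\mu_N}-\int\phi\d\mu\bigg| \leq \bigg|\int\phi\d\overline{\mu_N}-\int\phi\d\mu_N\bigg| + \bigg|\int\phi\d\mu_N-\int\phi\d\mu\bigg|.
\end{equation*}
Taking the supremum over $\phi$, the first term is at most $\sup_{\nu\in\mathcal{P}}d_1^{\mathrm{W}}(\bar\nu,\nu)$. This supremum is $o(1)$ because Lemma \ref{lemma_convergence_averaged_measure} is uniform in $\nu$, so we may apply it to $\nu=\mu_N$ even though $\mu_N$ changes with $N$. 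The second term tends to $0$ because $\mu_N\rightharpoonup\mu$ and the bounded-Lipschitz metric metrizes weak convergence. Hence $d_1^{\mathrm{W}}(\overline{\mu_N},\mu)\to0$, which gives $\overline{\mu_N}\rightharpoonup\mu$.

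One point still needs care: $\overline{\mu_N}$ need not be a probability measure, since the mass outside $S_L$ is discarded. To handle this, I use tightness of $(\mu_N)$, which follows from its weak convergence. For every $\eta>0$ there is $R$ with $\mu_N(B_R^c)<\eta$ for all $N$. Since $L\to+\infty$, we have $B_R\subset S_L$ eventually, so the mass lost is at most $\eta$. The total mass $\overline{\mu_N}(\R^{2d})=\mu_N(S_L)$ therefore tends to $1$, and the limit has no escape of mass.

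The main obstacle is the uniformity in Lemma \ref{lemma_convergence_averaged_measure}. With bounded Lipschitz tests it can be checked directly. On each cell $\Omega_j$ one has $|\int_{\Omega_j}\phi\d\bar\mu-\int_{\Omega_j}\phi\d\mu|\leq \mathrm{diam}(\Omega_j)\,\mu(\Omega_j)$, and $\mathrm{diam}(\Omega_j)\lesssim l_x+l_p\to0$. Outside $S_L$ the error is bounded by $\|\phi\|_{L^\infty}\,\mu(S_L^c)$. Summing over cells gives
\begin{equation*}
\bigg|\int\phi\d\overline{\mu_N}-\int\phi\d\mu_N\bigg| \leq C(l_x+l_p)+\mu_N(S_L^c),
\end{equation*}
and the right-hand side tends to $0$ uniformly along the tight sequence $(\mu_N)$.
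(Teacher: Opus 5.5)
Your argument is correct and follows the paper's proof, which is only the triangle inequality $d_1^{\mathrm{W}}(\overline{\mu_N},\mu)\leq d_1^{\mathrm{W}}(\overline{\mu_N},\mu_N)+d_1^{\mathrm{W}}(\mu_N,\mu)$ combined with the uniform averaging estimate of Lemma \ref{lemma_convergence_averaged_measure}; your switch to the bounded-Lipschitz metric and your use of tightness of $(\mu_N)$ to control $\mu_N(S_L^c)$ make rigorous two points that the one-line proof passes over. Drop the claim in your second paragraph that $\sup_{\nu\in\mathcal{P}}d_1^{\mathrm{W}}(\bar\nu,\nu)=o(1)$: it is false, since for $\nu=\delta_z$ with $z\notin S_L$ one has $\bar\nu=0$, and your last paragraph already replaces it with the bound $C(l_x+l_p)+\mu_N(S_L^c)$, which is uniform only along the tight sequence and is all the argument needs.
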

	\begin{proof}
		Because of Lemma \ref{lemma_convergence_averaged_measure}, we have 
		\begin{equation}
			d_1^{\mathrm{W}}(\overline{\mu_N}, \mu) \leq d_1^{\mathrm{W}}(\overline{\mu_N}, \mu_N) + d_1^{\mathrm{W}}(\mu_N, \mu) = o(1).
		\end{equation}
	\end{proof}
	\begin{definition}[Modified Diaconis-Freedman measure]\label{definition_modified_diaconis_freedman_measure}
		Let 
		\begin{equation}\label{equation_definition_qn}
			\d Q_N = \mathds{1}_{\Xi_{\tau}} \mathds{1}_{\Gamma_{\varepsilon}^c}\d P^{\mathrm{DF}}_N.
		\end{equation}
	\end{definition}
	\begin{lemma}[Direct properties of $Q_N$]\label{lemma_properties_qn}
		We have
		\begin{equation}\label{equation_etf_qn}
			E^{\mathrm{TF}} \geq \int_{\mathcal{P}(\R^{2d})} \mathcal{E}_{\infty}[\bar{\mu}]\d Q_N - o(1)
		\end{equation}
		and the following weak convergence
		\begin{equation}\label{equation_wcv_qn}
			Q_N \rightharpoonup P^{\mathrm{DF}}.
		\end{equation}
	\end{lemma}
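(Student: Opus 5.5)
For \eqref{equation_etf_qn}, I would start from \eqref{equation_premiere_derniere_section}. By Definition \ref{definition_modified_diaconis_freedman_measure}, the integral over $\Xi_\tau\setminus\Gamma_\varepsilon$ against $P_N^{\mathrm{DF}}$ is exactly $\int \mathcal{E}_\infty[\bar\mu]\,\d Q_N(\mu)$, so \eqref{equation_etf_qn} is a restatement. I would still check how \eqref{equation_premiere_derniere_section} is obtained. Combine the upper bound of Proposition \ref{proposition_upper_bound} with \eqref{equation_proprietes_psi} to get $N^{-1}\langle\Psi_N|H_N\Psi_N\rangle\le E^{\mathrm{TF}}+o(1)$. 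Then Lemma \ref{lemma_energy_involving_diaconis_freedman} rewrites the left side as $\int\mathcal{E}_N\,\d P_N^{\mathrm{DF}}+o(1)$. Lemmas \ref{lemme_restriction_measures_bornees_pauli} and \ref{lemma_singular_or_not_energy}, with the scaling \eqref{equation_scaling}, make all error terms $o(1)$.

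The remaining issue is the factor $(1-o(1))$ in front of $\int\mathcal{E}_\infty[\bar\mu]\,\d Q_N$. It is harmless if this integral is bounded from below uniformly in $N$. Lemma \ref{lemma_energie_singuliere_bornee_thomas_fermi} gives $\mathcal{E}_\infty[\bar\mu]\ge E^{\mathrm{TF}}-C(\varepsilon+\eta)$ on $\Xi_\tau\setminus\Gamma_\varepsilon\subset\Theta_\eta\setminus\Gamma_\varepsilon$, using \eqref{equation_inclusion_xitau_thetaeta}. Since $Q_N$ has mass at most $1$, the integral is bounded below by a constant. Hence $(1-o(1))X\ge X-o(1)$ whenever $X\le E^{\mathrm{TF}}+o(1)$ and $X\ge -C$, and \eqref{equation_etf_qn} follows.

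For \eqref{equation_wcv_qn}, I would write $P_N^{\mathrm{DF}}-Q_N=\mathds{1}_{\Xi_\tau^c\cup\Gamma_\varepsilon}P_N^{\mathrm{DF}}$, a positive measure. Its total variation is at most $P_N^{\mathrm{DF}}(\Xi_\tau^c)+P_N^{\mathrm{DF}}(\Gamma_\varepsilon)$. By \eqref{equation_borne_proba_Xi}, $P_N^{\mathrm{DF}}(\Xi_\tau^c)\le C\tau^{-1}N^{d\beta/2}+o(1)$, which tends to $0$ because $\tau=N^{3\beta d/2+\sigma}$. By \eqref{equation_borne_proba_Gamma} and Proposition \ref{proposition_proba_violating_pauli_principle}, $P_N^{\mathrm{DF}}(\Gamma_\varepsilon)\le 4^dn^{2d}C_\delta e^{-c_\delta N^\delta\ln(1+\varepsilon)}$. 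The number of tiles $n^{2d}$ grows only polynomially in $N$ under \eqref{equation_scaling}, so this also tends to $0$. Thus $\|P_N^{\mathrm{DF}}-Q_N\|_{\mathrm{TV}}\to0$. Testing against bounded continuous functions on $\mathcal{P}(\R^{2d})$ and using $P_N^{\mathrm{DF}}\rightharpoonup P^{\mathrm{DF}}$ from Corollary \ref{corollary_convergence_P} (along the extracted subsequence) gives $Q_N\rightharpoonup P^{\mathrm{DF}}$.

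The main obstacle I anticipate is bookkeeping rather than analysis. One must verify that the parameters in \eqref{equation_scaling} are the same ones used to define $Q_N$, so that the polynomial growth of $n$ is beaten by the stretched exponential. One must also confirm that the lower bound $E^{\mathrm{TF}}-C(\varepsilon+\eta)$ is uniform in $N$, since $\eta=\tau/\inf(L^2,L^s)\to0$, so that absorbing the $(1-o(1))$ factor is legitimate.
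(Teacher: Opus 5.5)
Your proposal is correct and follows the paper's proof. There, \eqref{equation_etf_qn} is \eqref{equation_premiere_derniere_section} rewritten with the notation $Q_N$, and \eqref{equation_wcv_qn} comes from \eqref{equation_borne_proba_Gamma}, \eqref{equation_borne_proba_Xi} and Corollary \ref{corollary_convergence_P}, which is exactly your total-variation argument for $P_N^{\mathrm{DF}}-Q_N$. You also check that the $(1-o(1))$ factor from Lemma \ref{lemma_singular_or_not_energy} can be absorbed because $\int\mathcal{E}_\infty[\bar\mu]\,\d Q_N$ is bounded uniformly in $N$; the paper leaves this implicit, and you handle it correctly.
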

	\begin{proof}
		Equation \eqref{equation_etf_qn} is just \eqref{equation_premiere_derniere_section} with our new notation \eqref{equation_definition_qn}, and \eqref{equation_wcv_qn} comes from \eqref{equation_borne_proba_Gamma}, \eqref{equation_borne_proba_Xi}  and Corollary \ref{corollary_convergence_P}.
	\end{proof}
	To go from $Q_N$ to $P^{\mathrm{DF}}$, we will need the following lemma, that is a consequence of Fatou's lemma for weakly converging probabilities \cite[Theorem 1.1]{Lemme_Fatou} and \eqref{equation_wcv_qn}:
	\begin{lemma}[Fatou's lemma for weakly converging probabilities]\label{lemma_fatou_bizarre}
		Let $\mathcal{E}$ be a measurable positive function on $\big(\mathcal{P}(\R^d), \mathcal{A}\big)$, then, for $\Lambda \in \mathcal{A}$, 
		\begin{equation}
			\liminf_{N\to+\infty}\frac{1}{Q_N(\Lambda)} \int_{\Lambda} \mathcal{E}\big[\overline{\mu}\big]\d Q_N(\mu) \geq \frac{1}{P^{\mathrm{DF}}(\Lambda)}\int_{\Lambda}\liminf_{N\to + \infty, \nu\rightharpoonup\mu}\mathcal{E}\big[\overline{\nu}\big]\d P^{\mathrm{DF}}(\mu).
		\end{equation}
		In particular, if $\mathcal{E}$ is lower semi-continuous for the weak convergence of measures, we have
		\begin{equation}
			\liminf_{N\to+\infty}\frac{1}{Q_N(\Lambda)} \int_{\Lambda} \mathcal{E}\big[\overline{\mu}\big]\d Q_N(\mu) \geq \frac{1}{P^{\mathrm{DF}}(\Lambda)} \int_{\Lambda}\mathcal{E}[\mu]\d P^{\mathrm{DF}}(\mu).
		\end{equation}
	\end{lemma}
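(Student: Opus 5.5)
The statement to prove is Lemma \ref{lemma_fatou_bizarre}, and I would deduce it from the cited Fatou lemma for weakly converging probabilities \cite[Theorem 1.1]{Lemme_Fatou}. That result says: if probability measures $\pi_N \rightharpoonup \pi$ and $g_N \geq 0$ are measurable, then
\begin{equation*}
\liminf_N \int g_N \d\pi_N \geq \int \underline{g}\d \pi,
\qquad
\underline{g}(\mu) = \liminf_{N\to+\infty,\ \nu\rightharpoonup\mu} g_N(\nu).
\end{equation*}
Here $\underline{g}$ is the lower limit taken jointly in $N$ and in $\nu$ converging weakly to $\mu$. The plan is to apply this with the normalized restricted measures
\begin{equation*}
\pi_N = \frac{\mathds{1}_{\Lambda} Q_N}{Q_N(\Lambda)},
\qquad
\pi = \frac{\mathds{1}_{\Lambda} P^{\mathrm{DF}}}{P^{\mathrm{DF}}(\Lambda)},
\end{equation*}
and with $g_N(\mu) = \mathcal{E}[\overline{\mu}]$, the bar depending on $N$ through the tiling parameters $\Omega_j$ and $L$. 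Since $\mathcal{E} \geq 0$, positivity holds. The space $\mathcal{P}(\R^{2d})$ with the weak topology is metrizable and separable (for instance via a bounded-Lipschitz distance, or $d_1^{\mathrm{W}}$ on bounded-moment sets), so the cited theorem applies.

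The first step is to check that $\pi_N \rightharpoonup \pi$. From \eqref{equation_wcv_qn}, $Q_N \rightharpoonup P^{\mathrm{DF}}$. Restricting to $\Lambda$ preserves weak convergence, and $Q_N(\Lambda) \to P^{\mathrm{DF}}(\Lambda)$, as soon as $\Lambda$ is a $P^{\mathrm{DF}}$-continuity set ($P^{\mathrm{DF}}(\partial\Lambda)=0$) with $P^{\mathrm{DF}}(\Lambda) > 0$; this is the Portmanteau theorem. I would impose these conditions on $\Lambda$, implicitly or explicitly, since for general $\Lambda\in\mathcal{A}$ the statement can fail. In the application, $\Lambda$ is the whole space or a nice set. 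Dividing by $Q_N(\Lambda)\to P^{\mathrm{DF}}(\Lambda)$ then yields the first inequality directly.

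For the second inequality, assume $\mathcal{E}$ is weakly lower semi-continuous. Take any sequence $\nu_N \rightharpoonup \mu$. By Corollary \ref{corollary_concergence_averaged_sequence}, which rests on the uniform estimate of Lemma \ref{lemma_convergence_averaged_measure}, $\overline{\nu_N} \rightharpoonup \mu$. Lower semi-continuity then gives $\liminf \mathcal{E}[\overline{\nu_N}] \geq \mathcal{E}[\mu]$, so $\underline{g}(\mu) \geq \mathcal{E}[\mu]$, and substituting into the first inequality concludes.

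The main obstacle is the hypotheses rather than the computation. One issue is the continuity-set requirement on $\Lambda$ just discussed. The other is measurability of $\underline{g}$; it is automatic, because a joint lower limit of this form is lower semi-continuous in $\mu$, hence Borel. I would spend the care on verifying these measure-theoretic conditions, and on noting that the uniformity in Lemma \ref{lemma_convergence_averaged_measure} is what allows the joint limit $N\to+\infty$, $\nu\rightharpoonup\mu$.
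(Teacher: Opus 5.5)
Your proposal is correct and is essentially the paper's own argument: the paper gives no separate proof and simply obtains the lemma from \cite[Theorem 1.1]{Lemme_Fatou} combined with $Q_N \rightharpoonup P^{\mathrm{DF}}$ from \eqref{equation_wcv_qn}, which is exactly what you do with the normalized restrictions $\pi_N,\pi$; the second inequality then follows from Corollary \ref{corollary_concergence_averaged_sequence} and lower semi-continuity just as you describe. Your added requirement that $\Lambda$ be a $P^{\mathrm{DF}}$-continuity set with $P^{\mathrm{DF}}(\Lambda)>0$ is a genuine precision the statement omits, since weak convergence alone cannot give the inequality for arbitrary $\Lambda\in\mathcal{A}$. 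Your aside that the sets used later are ``nice'' is too quick, however: the requirement is clearly met for $\Lambda=\mathcal{P}(\R^{2})$ in the case $d=2$, but not for sublevel sets such as $\Lambda_A$, which have empty interior for the weak topology (moving a tiny amount of mass to very large momenta keeps a measure in any prescribed weak neighbourhood while pushing its energy above $A$).
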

	\begin{lemma}[The Diaconis-Freedman measure charges minimizers of Vlasov energy, $d=2$]\label{lemma_limit_df_fatou_2}
		Let $d=2$. Let $V$ and $w$ satisfy Assumptions \ref{assumption_V} and \ref{assumption_w}. The limit Diaconis-Freedman measure $P^{\mathrm{DF}}$ only charges $(2\pi)^2\mu$ where $\mu$ is the minimizer of the Vlasov energy $\mathcal{E}^{\mathrm{V}}$.
	\end{lemma}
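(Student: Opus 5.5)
Starting from Lemma \ref{lemma_properties_qn}, I plan to pass to the limit with Lemma \ref{lemma_fatou_bizarre}, using lower semi-continuity of $\mathcal{E}_\infty$ along averaged sequences, and then use strict convexity/uniqueness for $d=2$. The key point is that for $d=2$ the functional $\mathcal{E}_\infty$ is bounded below on measures satisfying the (approximate) Pauli principle. Lemma \ref{lemma_averaging} gives $\mathcal{E}_N[\bar\mu]\geq c\iint(|p|^2+V)\d\bar\mu\geq 0$ for $\mu\in\Gamma_\varepsilon^c$, with $c = 1-(1+\varepsilon)I_w/c_{\mathrm{TF}}>0$. The same bath-tube computation gives the analogous bound for $\mathcal{E}_\infty[\bar\mu]$, since $\int\rho^2$ is controlled by $c_{\mathrm{TF}}^{-1}(1+\varepsilon)\int|p|^2\d\bar\mu$. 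Thus $\mathcal{E}_\infty[\bar\mu]\geq 0$ on the support of $Q_N$, so Lemma \ref{lemma_fatou_bizarre} applies with $\Lambda=\mathcal{P}(\R^{4})$. Since $Q_N(\Lambda)\to 1$, it yields
\begin{equation*}
E^{\mathrm{TF}}\geq \int \liminf_{N\to\infty,\ \nu\rightharpoonup\mu,\ \nu\in\Gamma_\varepsilon^c}\mathcal{E}_\infty[\bar\nu]\,\d P^{\mathrm{DF}}(\mu).
\end{equation*}

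The main step, and the expected obstacle, is the liminf inequality: if $\nu_N\rightharpoonup\mu$ with $\bar\nu_N\leq(1+\varepsilon)(2\pi)^{-2}$, then $\liminf\mathcal{E}_\infty[\bar\nu_N]\geq \mathcal{E}_\infty[\mu]$. First, Corollary \ref{corollary_concergence_averaged_sequence} gives $\bar\nu_N\rightharpoonup\mu$. The $L^\infty$ bound passes to the limit, so $\mu\leq(1+\varepsilon)(2\pi)^{-2}$, and $\mu$ has a density. Next, write $\mathcal{E}_\infty[\nu]=\iint(|p|^2+V)\nu-I_w\int\rho_\nu^2$. The difficulty is the negative term $-I_w\int\rho^2$, which is not weakly l.s.c. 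To handle it, I will use the bath-tube bound $\iint|p|^2\nu\geq (1+\varepsilon)^{-2/d}c_{\mathrm{TF}}\int\rho_\nu^2$ pointwise in $x$. This gives, for each $x$, $\int|p|^2\nu(x,p)\d p - I_w\rho_\nu(x)^2 = \int |p|^2\nu(x,p)\,\d p-\Phi(\rho_\nu(x))+\big(\Phi(\rho_\nu(x))-I_w\rho_\nu(x)^2\big)$ with $\Phi(\rho)=(1+\varepsilon)^{-1}c_{\mathrm{TF}}\rho^2$.

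I would then argue via a convex reformulation rather than pointwise. Since $c_{\mathrm{TF}}(1+\varepsilon)^{-1}>I_w$ for small $\varepsilon$, write $\mathcal{E}_\infty[\nu]=\big(1-\theta\big)\iint(|p|^2+V)\nu+\big(\theta\iint |p|^2\nu - I_w\int\rho_\nu^2\big)+\theta\iint V\nu$, with $\theta=I_w(1+\varepsilon)/c_{\mathrm{TF}}<1$. The linear terms are l.s.c. by Fatou/positivity. For the middle term, I would instead avoid it by testing: the map $\nu\mapsto\iint|p|^2\nu-(1+\varepsilon)^{-1}c_{\mathrm{TF}}\int\rho_\nu^2$ is nonnegative, and $\int\rho_\nu^2$ itself is weakly l.s.c. only from below, which is the wrong direction. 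So I plan to upgrade to strong convergence of $\rho_{\bar\nu_N}$ in $L^2_{\mathrm{loc}}$ and tightness. I do not fully trust this step. If it fails, I would fall back on the $\varepsilon\to0$ limit plus energy identities: from $E^{\mathrm{TF}}\geq\int\liminf$ and the upper bound, all inequalities along the way are equalities. That forces $\iint|p|^2\bar\nu_N$ to converge to the bath-tube value, and this in turn gives strong convergence of $\rho$. This rigidity argument is exactly where I expect most of the work.

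Once $\liminf\mathcal{E}_\infty[\bar\nu_N]\geq\mathcal{E}_\infty[\mu]$ is established, we obtain $E^{\mathrm{TF}}\geq\int\mathcal{E}_\infty[\mu]\d P^{\mathrm{DF}}(\mu)$. Letting $\varepsilon\to0$, Corollary \ref{corollary_convergence_P} shows $P^{\mathrm{DF}}$-a.s. that $(2\pi)^2\mu\leq1$ with $\iint(2\pi)^2\mu=(2\pi)^2$; mass is conserved by tightness, Lemma \ref{lemma_m_tight}. Hence $\mathcal{E}_\infty[\mu]=\mathcal{E}^{\mathrm{V}}[(2\pi)^2\mu]\geq E^{\mathrm{TF}}$ by \eqref{equation_egalite_energie_thomas_fermi_vlasov}. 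Combined with the integral inequality, equality holds $P^{\mathrm{DF}}$-a.s. By the uniqueness of the Vlasov minimizer for $d=2$, from the remark after Theorem \ref{theorem2}, $P^{\mathrm{DF}}$ is a Dirac mass at that minimizer (rescaled by $(2\pi)^{-2}$).
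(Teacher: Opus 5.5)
There is a genuine gap, located exactly in the step you flagged. The inequality $\liminf_N\mathcal{E}_\infty[\bar\nu_N]\geq\mathcal{E}_\infty[\mu]$ is false, even for measures obeying the Pauli bound. So neither it nor strong $L^2_{\mathrm{loc}}$ convergence of the densities can be proved.

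Here is a counterexample. Take $\bar\nu_N(x,p)=(2\pi)^{-2}\mathds{1}\big(|p|\leq\sqrt{4\pi\rho_N(x)}\big)$, where $\rho_N$ equals $2$ and $0$ on alternating stripes of width $\to0$ inside a set $K$ with $|K|=1$. Up to rounding, such profiles are averaged empirical measures in $\Xi_\tau\setminus\Gamma_\varepsilon$ once the stripe width is much larger than $l_x$.
\begin{itemize}
\item The weak limit is $\bar\nu_N\rightharpoonup\mu=\tfrac12(2\pi)^{-2}\mathds{1}_K(x)\mathds{1}\big(|p|\leq\sqrt{8\pi}\big)$, which satisfies the Pauli principle.
\item The kinetic plus potential energy converges to $2c_{\mathrm{TF}}+\int_K V$, which is also its value at $\mu$.
\item However $\int\rho_N^2\to2$ whereas $\int\rho_{(2\pi)^2\mu}^2=1$.
\end{itemize}
Hence $\lim_N\mathcal{E}_\infty[\bar\nu_N]=\mathcal{E}_\infty[\mu]-I_w$.

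Your fallback does not repair this. Along this very sequence the bath-tube inequality is saturated for every $N$, yet the densities do not converge strongly, so saturation of the kinetic bound is not a source of compactness. Moreover, ``all inequalities are equalities'' presupposes a chain that already closes at $E^{\mathrm{TF}}$. That closing is precisely what the missing lower bound at the limit was supposed to provide. Strong convergence could only come from strict convexity of $\mathcal{E}^{\mathrm{TF}}$ applied to $Q_N$-typical approximate minimizers, a quantitative argument you do not set up.

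The missing idea is already in your own decomposition: do not pass $\mathcal{E}_\infty$ to the limit, just \emph{discard} the nonnegative middle term. For $\mu\in\Gamma_\varepsilon^c$, the bath-tube bound $(1+\varepsilon)\iint|p|^2\,\d\bar\mu\geq c_{\mathrm{TF}}\int\rho_{(2\pi)^2\bar\mu}^2$ gives
\begin{equation*}
\mathcal{E}_\infty[\bar\mu]\geq\iint_{\R^2\times\R^2}\big((1-\theta)|p|^2+V(x)\big)\,\d\bar\mu(x,p)=:\mathcal{E}^{\varepsilon}[\bar\mu],\qquad\theta=\frac{(1+\varepsilon)I_w}{c_{\mathrm{TF}}}<1 .
\end{equation*}
$\mathcal{E}^{\varepsilon}$ is linear in the measure with a nonnegative continuous integrand, hence nonnegative and weakly lower semi-continuous. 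Lemmas \ref{lemma_properties_qn} and \ref{lemma_fatou_bizarre}, together with Corollary \ref{corollary_concergence_averaged_sequence}, then give $E^{\mathrm{TF}}\geq\int\mathcal{E}^{\varepsilon}[\mu]\,\d P^{\mathrm{DF}}(\mu)$. By monotone convergence as $\varepsilon\to0$, $E^{\mathrm{TF}}\geq\int\mathcal{E}^{0}[\mu]\,\d P^{\mathrm{DF}}(\mu)$.

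The discarded interaction is recovered \emph{after} the limit. $P^{\mathrm{DF}}$-almost every $\mu$ is a probability measure with $\mu\leq(2\pi)^{-2}$. With $\rho=\rho_{(2\pi)^2\mu}$, the bath-tube principle then yields $\mathcal{E}^{0}[\mu]\geq(c_{\mathrm{TF}}-I_w)\int\rho^2+\int V\rho=\mathcal{E}^{\mathrm{TF}}[\rho]\geq E^{\mathrm{TF}}$. The chain closes, so equality holds $P^{\mathrm{DF}}$-almost surely. Since $1-I_w/c_{\mathrm{TF}}>0$, this forces two things: saturation of the bath-tube inequality, so $\mu$ is the bath-tube profile over $\rho$; and $\rho=\rho_{\mathrm{TF}}$, by uniqueness.

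This is where $d=2$ is used. The kinetic lower bound and the interaction have the same homogeneity $\int\rho^2$. So one only needs lower semi-continuity of the linear minorant $\mathcal{E}^{\varepsilon}$, which holds, rather than of $\mathcal{E}_\infty$, which fails. Consistently, in the stripe example above one has exactly $\lim_N\mathcal{E}_\infty[\bar\nu_N]=\mathcal{E}^{0}[\mu]$.
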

	\begin{proof}
		Let $\varepsilon > 0$ and $\mu\in \Gamma_{\varepsilon}^c$, we have
		\begin{equation}
			(1+\varepsilon)\iint_{\R^2\times\R^2} |p|^2 \d \overline{\mu} \geq c_{\mathrm{TF}}\int_{\R^2}\rho_{(2\pi)^2\overline{\mu}}^2,
		\end{equation}
		and hence
		\begin{equation}
			\mathcal{E}_{\infty}[\overline{\mu}] \geq \iint_{\R^2\times\R^2} \Bigg(\bigg(1 - \frac{I_w}{(1+\varepsilon)c_{\mathrm{TF}}}\bigg)|p|^2+V(x)\Bigg)\d \overline{\mu} =: \mathcal{E}^{\varepsilon}\big[\overline{\mu}\big].
		\end{equation}
		For $\varepsilon$ small enough, we have
		\begin{equation}
			1 - \frac{I_w}{(1+\varepsilon)c_{\mathrm{TF}}} > 0,
		\end{equation}
		and then $\mathcal{E}$ is a positive and weakly lower semi-continuous. Hence, by Lemma \ref{lemma_properties_qn} and Lemma \ref{lemma_fatou_bizarre}, we have
		\begin{equation}
			E^{\mathrm{TF}} \geq \liminf_{N\to+\infty} \int_{\mathcal{P}(\R^2)} \mathcal{E}_{\infty}\big[\overline{\mu}\big]\d Q_N(\mu) \geq \liminf_{N\to+\infty} \int_{\mathcal{P}(\R^2)} \mathcal{E}^{\varepsilon}\big[\overline{\mu}\big]\d Q_N(\mu) \geq \int_{\mathcal{P}(\R^2)}\mathcal{E}^{\varepsilon}[\mu]\d P^{\mathrm{DF}}(\mu).
		\end{equation}
		Then, we can take the limit $\varepsilon \to 0$ to find
		\begin{equation}
			E^{\mathrm{TF}} \geq \int_{\mathcal{P}(\R^2)}\mathcal{E}^{0}[\mu]\d P^{\mathrm{DF}}(\mu).
		\end{equation}
		Since $P^{\mathrm{DF}}$ only charges probability measures that satisfy the Pauli principle \eqref{equation_pauli_principle_P}, we have
		\begin{equation}\label{equation_d_2_a_la_fin}
			E^{\mathrm{TF}}\geq \int_{\mathcal{P}(\R^2)} \iint_{\R^d\times\R^d} \Bigg(\bigg(1 - \frac{I_w}{c_{\mathrm{TF}}}\bigg)|p|^2+V(x)\Bigg)\d \mu\d P^{\mathrm{DF}}(\mu)\geq \int_{\mathcal{P}(\R^2)}\mathcal{E}^{\mathrm{TF}}\big[(2\pi)^2\rho_{\mu}\big]\geq E^{\mathrm{TF}}.
		\end{equation}
		Thus, we have equality in all the inequalities of \eqref{equation_d_2_a_la_fin}. In particular, $P^{\mathrm{DF}}$ only charges the measre $\mu$ such that $(2\pi)^d\rho_{\mu}$ is the unique minimizer of $\mathcal{E}^{\mathrm{TF}}$, and such that 
		\begin{equation}
			\mu(x, p) = (2\pi)^{-2}\mathds{1}_{|p|\leq \sqrt{4\pi\rho(x)}}
		\end{equation}
		with $\rho$ this minimizer of $\mathcal{E}^{\mathrm{TF}}$. Then $(2\pi)^2\mu$ is the minimizer of the Vlasov energy $\mathcal{E}^{\mathrm{V}}$.
	\end{proof}
	\begin{lemma}[The Diaconis-Freedman measure charges minimizers of Vlasov energy, $d=1$]\label{lemma_limit_df_fatou_1}
		Let $d=1$. Let $V$ and $w$ satisfy Assumptions \ref{assumption_V}, \ref{assumption_V2} and \ref{assumption_w}. The limit Diaconis-Freedman measure $P^{\mathrm{DF}}$ only charges probability measures $\mu$ such that $(2\pi) \mu$ is a minimizer of the Vlasov energy $\mathcal{E}^{\mathrm{V}}$.
	\end{lemma}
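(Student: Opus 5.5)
The plan mirrors the proof of Lemma \ref{lemma_limit_df_fatou_2}, except that the interaction term can no longer be absorbed into the kinetic energy by a convex weakly l.s.c. functional. For $d=1$, $\int\rho^3$ does not control $\int\rho^2$ with a coefficient strictly smaller than one uniformly. We therefore pass to a relaxed (lower semi-continuous) functional, in the spirit of Appendices \ref{section_appendixe_1} and \ref{section_appendixe_2} and of \cite{benguria_von_1979}.

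\textbf{Step 1: the relaxed functional.} For a probability measure $\nu$ on $\R^2$ with $\nu\le (1+\varepsilon)(2\pi)^{-1}$, we write the energy in $x$-fibers. For fixed spatial density $\rho$, the bath-tube principle gives a kinetic energy at least $c_{\mathrm{TF}}(1+\varepsilon)^{-2}\int\rho^3$. Pointwise in $x$, we then use
\[
c\rho^3 - I_w\rho^2 \ge \phi(\rho) := \text{convex hull of } t\mapsto c t^3 - I_w t^2 \text{ on } [0,\infty),
\]
where $c=c_{\mathrm{TF}}(1+\varepsilon)^{-2}$. The function $\phi$ is linear, with negative slope, on $[0,t_*]$ for an explicit $t_*$; this is where the threshold $I_w/(2c_{\mathrm{TF}})$ of Theorem \ref{theorem3} comes from. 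This yields
\[
\mathcal{E}_\infty[\bar\mu]\ \ge\ \mathcal{E}^{\varepsilon}_{\mathrm{rel}}[\bar\mu] := \iint \big(\theta|p|^2 + V\big)\d\bar\mu + \int \phi_\varepsilon(\rho_{\bar\mu}) ,
\]
after splitting off a small part $\theta|p|^2$ of the kinetic energy, which we keep for coercivity. The remaining term $\int\phi_\varepsilon(\rho)$ is convex in $\rho$ with $\phi_\varepsilon(t)\ge -Ct$. Adding $C$ times the mass gives a nonnegative functional, and it is weakly lower semi-continuous on measures with bounded $\int(|p|^2+V)$ by standard convex integrand results. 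The singular part can be handled by the recession function of $\phi_\varepsilon$, which is $+\infty$, since $\phi_\varepsilon$ grows cubically.

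\textbf{Step 2: pass to the limit.} Since $\mathcal{E}^{\varepsilon}_{\mathrm{rel}}+C$ is positive and l.s.c., Lemma \ref{lemma_properties_qn} and Lemma \ref{lemma_fatou_bizarre} give
\[
E^{\mathrm{TF}}\ge \int \mathcal{E}^{\varepsilon}_{\mathrm{rel}}[\mu]\d P^{\mathrm{DF}}(\mu).
\]
Letting $\varepsilon\to0$ and $\theta\to0$ by monotone convergence, we obtain
\[
E^{\mathrm{TF}}\ge \int\mathcal{E}_{\mathrm{rel}}[\mu]\d P^{\mathrm{DF}}(\mu),
\]
where $\mathcal{E}_{\mathrm{rel}}[\mu]=\iint V\d\mu+\int\phi(2\pi\rho_\mu)$, suitably normalized. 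Since $P^{\mathrm{DF}}$-a.e. $\mu$ is a probability measure, we then need $\inf\mathcal{E}_{\mathrm{rel}}=E^{\mathrm{TF}}$ over probability densities. I expect this to be proved in Appendix \ref{section_appendixe_2}: a relaxed minimizer has density avoiding $(0,t_*)$ a.e. when the level sets of $V$ are null (Assumption \ref{assumption_V2}), so that $\phi=c\,t^3-I_wt^2$ there. Granting this, we get $\mathcal{E}_{\mathrm{rel}}[\mu]=E^{\mathrm{TF}}$ for $P^{\mathrm{DF}}$-a.e. $\mu$, and every inequality in the chain is an equality almost surely.

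\textbf{Step 3: identify the limits.} Equality in the chain means three things. First, $\rho=2\pi\rho_\mu$ minimizes the relaxed problem. Second, $\phi(\rho)=c_{\mathrm{TF}}\rho^3-I_w\rho^2$ a.e., so $\mathcal{E}^{\mathrm{TF}}[\rho]=E^{\mathrm{TF}}$ and $\rho\in\mathcal M^{\mathrm{TF}}$. Third, the bath-tube inequality is saturated, which together with $\mu\le(2\pi)^{-1}$ forces $2\pi\mu=\mathds 1_{|p|\le\pi\rho(x)}$. By Theorem \ref{theorem3}, $2\pi\mu$ is then a minimizer of $\mathcal{E}^{\mathrm V}$.

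The main obstacle is Step 1's lower semi-continuity combined with Step 2's identity $\inf\mathcal{E}_{\mathrm{rel}}=E^{\mathrm{TF}}$, which is exactly where the null level set assumption enters. A secondary technical point is keeping the $\varepsilon$-dependent Pauli bound uniform while passing to the limit.
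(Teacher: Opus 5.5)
Your Steps 1--2 reproduce the paper's relaxation: the paper's $J_\eta$ is exactly your convex envelope, shifted by the linear term $\alpha_\eta t$. Items (i)--(ii) of Step 3 are also fine. The gap is item (iii).

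After $\theta\to0$, your limit functional $\mathcal{E}_{\mathrm{rel}}[\mu]=\iint V\d\mu+\int\phi(2\pi\rho_\mu)$ depends on $\mu$ only through its spatial density. The bath-tube inequality was applied to $\overline{\mu}$ at finite $N$, and the kinetic energy of the limit $\mu$ appears nowhere in the chain $E^{\mathrm{TF}}\ge\int\mathcal{E}_{\mathrm{rel}}\d P^{\mathrm{DF}}\ge E^{\mathrm{TF}}$. Equality there only gives $2\pi\rho_\mu\in\mathcal{M}^{\mathrm{TF}}$ almost surely. The shifted sea $\frac{1}{2\pi}\mathds{1}_{0\le p\le 2\pi\rho(x)}$ satisfies this just as well, so nothing forces $2\pi\mu=\mathds{1}_{|p|\le\pi\rho(x)}$.

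Keeping $\theta>0$ fixed does not rescue this. Set $c=c_{\mathrm{TF}}$, $t_\theta=I_w/(2(1-\theta)c)$ and $\alpha_\theta=I_w^2/(4(1-\theta)c)$, so that $t_0=\rho_\alpha$ and $\alpha_0=\alpha$ as in Lemma \ref{lemma_min_ealpha}. The fixed-$\theta$ integrand $g_\theta(t)=\theta ct^3+\mathrm{conv}\big((1-\theta)ct^3-I_wt^2\big)$ equals $\theta ct^3-\alpha_\theta t$ on $[0,t_\theta]$. It lies strictly below $g_0=\mathrm{conv}(ct^3-I_wt^2)$ on $(0,t_\theta)$. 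A Thomas--Fermi minimizer takes values in $(t_0,t_\theta)$ near the edge of its support, so evaluating at its bath-tube shows the fixed-$\theta$ infimum is strictly below $E^{\mathrm{TF}}$, and no equality can be forced. For the same reason, the paper's inequality $\mathcal{E}^{0,\eta}[\mu]+\mathcal{J}_\eta[\rho_{2\pi\mu}]-\alpha\ge\mathcal{E}^{\mathrm{TF}}_{J_0}[\rho_{2\pi\mu}]$ with $\eta=1/2$ fails as stated, for instance for bath-tubes with small density.

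What is missing is to use the whole family in $\theta$ quantitatively. For $\theta\le1/2$ one has $g_0(t)-g_\theta(t)\le 2\theta\alpha_0 t\,\mathds{1}_{t<t_0}+C\theta^2 t$. Indeed, on $[t_0,t_\theta]$ the difference is $(1-\theta)ct(t-t_\theta)^2$ with $t_\theta-t_0=O(\theta)$, and it vanishes beyond $t_\theta$. Write $\rho=2\pi\rho_\mu$ and $D[\mu]=\iint p^2\d\mu-c\int\rho^3$, which is nonnegative for Pauli $\mu$. Your Step 2 at fixed $\theta$ (after $\varepsilon\to0$) then gives, for every $\theta$,
\[
E^{\mathrm{TF}}\ \ge\ \theta\int D\,\d P^{\mathrm{DF}}+\int\mathcal{E}^{\mathrm{TF}}_J[\rho]\,\d P^{\mathrm{DF}}-2\theta\alpha_0\int\Big(\int_{\{\rho<t_0\}}\rho\Big)\d P^{\mathrm{DF}}-C\theta^2 .
\]
Letting $\theta\to0$ and using $\mathcal{E}^{\mathrm{TF}}_J\ge E^{\mathrm{TF}}_J=E^{\mathrm{TF}}$ shows that $\rho$ is a relaxed minimizer almost surely. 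By Assumption \ref{assumption_V2} and Corollary \ref{corollary_same_minimizers}, $\rho$ then takes no values in $(0,t_0)$, so the third term vanishes. Dividing by $\theta$ gives $\int D\,\d P^{\mathrm{DF}}\le C\theta$ for every $\theta$, hence $D=0$ almost surely. Only at this point does the equality case of the bath-tube principle give $2\pi\mu=\mathds{1}_{|p|\le\pi\rho(x)}$.
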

	To prove Lemma \ref{lemma_limit_df_fatou_1}, we have to introduce some notation.
	\begin{definition}[Relaxed energy]\label{definition_relaxed_energy_section4}
		For $\eta \in [0, 1]$, let  
		\begin{equation}
			e_{\eta}(x) = (1-\eta) c_{\mathrm{TF}}x^3 - I_w x^2 + \alpha x
		\end{equation}
		with $\alpha > 0$ such that $e_{\eta}$ has exactly two minimizers on $\R_+$, $0$ and $\rho_{\alpha}\in \R_+^*$, with minimum $0$ (see Definition \ref{definition_local_energy_appendix} and Lemma \ref{lemma_min_ealpha} for more detail). We set
		\begin{equation}
			J_{\eta}(x) = \mathds{1}_{x\geq\rho_{\alpha}}e_{\eta}(x) \leq e_{\eta}(x)~~~~~\mathrm{and}~~~~~\mathcal{J}_{\eta}[\rho] = \int_{\R^d} J_{\eta}(\rho).
		\end{equation}
	\end{definition}
	We use relaxed functionals to recover some weak lower semi-continuity. Indeed, the 1D Vlasov functional is not weakly lower semi-continuous itself. See Appendix \ref{section_appendixe_2}.
	\begin{proof}[Proof of Lemma \ref{lemma_limit_df_fatou_1}]
		Let $\varepsilon > 0$ and $\mu \in \Gamma_{\varepsilon}^c$, we have 
		\begin{equation}
			(1+\varepsilon)^2 \iint_{\R\times\R}|p|^2 \d \overline{\mu} \geq c_{\mathrm{TF}}\int_{\R}\rho_{2\pi\overline{\mu}}^3
		\end{equation}
		and hence for $\eta\in(0,1)$ fixed\footnote{One can take $\eta = 1/2$ for instance.}
		\begin{equation}\label{equation_borne_inf_en_separant_moitie_densite_moitie_husimi}
			\mathcal{E}_{\infty}[\overline{\mu}] \geq \mathcal{E}^{\varepsilon, \eta}[\overline{\mu}] + \mathcal{J}_{\eta}[\rho_{(2\pi)^d\overline{\mu}}] - \alpha
		\end{equation}
		with
		\begin{equation}
			\mathcal{E}^{\varepsilon, \eta}[\overline{\mu}] = \iint_{\R^d\times\R^d} \bigg((1-3\varepsilon)\eta |p|^2 + V(x)\bigg)\d \overline{\mu}.
		\end{equation}
		Therefore, by Lemma \ref{lemma_properties_qn}, 
		\begin{equation}
			E^{\mathrm{TF}} \geq \liminf_{N\to+\infty} \int_{\mathcal{P}(\R)} \Big(\mathcal{E}^{\varepsilon, \eta}[\overline{\mu}] + \mathcal{J}_{\eta}[\rho_{2\pi\overline{\mu}}] - \alpha\Big) \d Q_N(\mu).
		\end{equation}
		Let 
		\begin{equation}\label{equation_def_lambdaa}
			\Lambda_A = \big\{\mu\in \mathcal{P}(\R),~\mathcal{E}_{\infty}[\mu] \leq A\big\},
		\end{equation}
		by Markov inequality, we have 
		\begin{equation}
			P_N^{\mathrm{DF}}(\Lambda_A^c) \leq \frac{E^{\mathrm{TF}}}{A} + o(1).
		\end{equation}
		Moreover, as $Q_N \rightharpoonup P^{\mathrm{DF}}$, we have for almost every $A \in \R_+$
		\begin{equation}\label{convergence_QN_LambdaA}
			Q_N(\Lambda_A) \to P^{\mathrm{DF}}(\Lambda_A).
		\end{equation}  
		Hence, using Lemma \ref{lemma_fatou_bizarre} and Lemma \ref{lemme_semi_conitnuite_inferieure_kcal}, we find
		\begin{equation}
			E^{\mathrm{TF}} \geq \liminf_{N\to+\infty} \int_{\Lambda_A} \Big(\mathcal{E}^{\varepsilon, \eta}[\overline{\mu}] + \mathcal{J}_{\eta}[\rho_{2\pi\overline{\mu}}]-\alpha\Big) \d Q_N(\mu) \geq \int_{\Lambda_A} \Big(\mathcal{E}^{\varepsilon, \eta}[\mu] + \mathcal{J}_{\eta}[\rho_{2\pi\mu}]-\alpha\Big) \d P^{\mathrm{DF}}(\mu).
		\end{equation}
		Then, we can take the limit $\varepsilon \to 0$ and $A\to + \infty$ to get
		\begin{equation}\label{equation_quasiment_laè_derniere}
			E^{\mathrm{TF}} \geq \int_{\mathcal{P}(\R)} \Big(\mathcal{E}^{0, \eta}[\mu] + \mathcal{J}_{\eta}[\rho_{2\pi\mu}]-\alpha\Big) \d P^{\mathrm{DF}}(\mu) \geq \int_{\mathcal{P}(\R)} \mathcal{E}^{\mathrm{TF}}_{J_0}\big[\rho_{2\pi\mu}]\d P^{\mathrm{DF}}(\mu) \geq E^{\mathrm{TF}}_{J_0},
		\end{equation}
		where we used the notation introduced in Definition \ref{definition_relaxed_energy_appendice_la}. By Corollary \ref{corollary_same_minimizers}, we know that 
		\begin{equation}
			E^{\mathrm{TF}}_{J_0} = E^{\mathrm{TF}},
		\end{equation}
		and hence  $P^{\mathrm{DF}}$ only charges measures $\mu$ such that $\rho_{2\pi\mu}$ is a minimizer of $\mathcal{E}^{\mathrm{TF}}_{J_0}$, that is a minimizer of $\mathcal{E}^{\mathrm{TF}}$, once again by Corollary \ref{corollary_same_minimizers}. Furthermore, there is equality in the inequalities of \eqref{equation_quasiment_laè_derniere}, and in particular
		\begin{equation}\label{equation_ohlala_lenergie_cest_la_meme_en_vlasov_ou_pas}
			\int_{\mathcal{P}}\iint_{\R\times\R} |p|^2 \d \mu(x, p)\d P^{\mathrm{DF}}(\mu) = \int_{\mathcal{P}}c_{\mathrm{TF}}\int_{\R} \rho_{2\pi\mu}^3\d P^{\mathrm{DF}}(\mu).
		\end{equation}
		Since $P^{\mathrm{DF}}$ only charges measures $\mu$ that satisfy the Pauli principle \eqref{equation_pauli_principle_P}, \eqref{equation_ohlala_lenergie_cest_la_meme_en_vlasov_ou_pas} implies that $P^{\mathrm{DF}}$ only charges probability measures of the form
		\begin{equation}
			\mu(x, p) = \frac{1}{2\pi}\mathds{1}_{|p|\leq \pi \rho(x)}
		\end{equation}
		with $\rho$ a minimizer of $\mathcal{E}^{\mathrm{TF}}$, that is probability measures such that $2\pi\mu$ is a minimizer of $\mathcal{E}^{\mathrm{V}}$.
	\end{proof}
	With those lemmas, Theorem \ref{theorem2} can be deduced directly.
	\begin{proof}[Proof of Theorem \ref{theorem2}]
		Theorem \ref{theorem2} is a consequence of Corollary \ref{corollary_convergence_P}, Lemma \ref{lemma_limit_df_fatou_2} and Lemma \ref{lemma_limit_df_fatou_1} .
	\end{proof}
	
	\section{The non-purely attractive case}\label{section5}
	In this section, we do not intend to give a full proof of Theorem \ref{theorem_repulsif}, with every details, but rather to point out what in the proof differs from that of the purely attractive case.
	
	\subsection{Upper bound on the energy}
	First, let us define the Hartree and $N$-Vlasov energies (these correspond to Definitions \ref{definition_hartree_energy} and \ref{definition_N_Vlasov} in our new context).
	\begin{definition}[Hartree energy]\label{definition_hartree_energy2}
		For $\gamma$ a trace-class operator on $\Hcal$ satisfying 
		\begin{equation}
			0\leq \gamma \leq 1,
		\end{equation}  
		let $\mathcal{E}^{\mathrm{H}}[\gamma]$ be the Hartree energy of $\gamma$ defined by
		\begin{equation}\label{equation_definition_fonctionnelle_energie_hartree2}
			\mathcal{E}^{\mathrm{H}}[\gamma] = \hbar^2 \tr(-\Delta\gamma) + \tr(V\gamma) + \frac{N^{-1}}{2}\int_{\R^d} (w_N * \rho_{\gamma})\rho_{\gamma},
		\end{equation}
		using the notation introduced in Definition \ref{definition_space_momentum_density} for the density of $\gamma$.
	\end{definition}
	
	\begin{definition}[$N$-Vlasov energy]\label{definition_N_Vlasov2}
		We set
		\begin{equation}
			\mathcal{E}_{\infty,N}^{\mathrm{V}}[m] = \frac{1}{(2\pi)^d}\iint_{\R^d \times \R^d} \big(|p|^2 + V(x)\big) m(x, p)\d x\d p + I_{w^+}\int_{\R^d}\rho_m^2  - \frac{1}{2}\int_{\R^{2d}} w_N^-(x - y) \rho_m(x)\rho_m(y)\d x\d y
		\end{equation}
		the $N$-Vlasov energy functional. Note that formally, we have taken $N = \infty$ for the repulsive part (i.e. local interactions), but not the attractive one.
	\end{definition}
	Then, we have the following upper bound, for which we do not need the assumption on the sign of $\widehat{w_+}$.
	\begin{proposition}[Upper bound]\label{proposition_upper_bound2}
		For $d= 1$ or $2$ and for $0 < \beta < 1/d$, let $V$ and $w$ satisfy Assumptions \ref{assumption_V} and \ref{assumption_w}, we have
		\begin{equation}
			E(N) \leq N E^{\mathrm{TF}} + o(N).
		\end{equation}
	\end{proposition}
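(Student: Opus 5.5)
The proof follows the same route as Proposition \ref{proposition_upper_bound}: reduce to the Hartree energy via Lieb's variational principle, then evaluate the Hartree energy of a coherent-state operator built from an almost-minimizer of the Vlasov functional. We fix $\varepsilon>0$ and choose $m_\varepsilon$ satisfying \eqref{equation_hypotheses_m_lemme_approximation_semiclassique2}, with in addition $\rho_{m_\varepsilon}\in L^2\cap L^\infty$ after a harmless truncation. We ask that
\begin{equation*}
\mathcal{E}^{\mathrm{V}}[m_\varepsilon]\leq E^{\mathrm{TF}}+\varepsilon
\end{equation*}
for the functional of Definition \ref{definition_energies_vlasov_thomas_fermi2}. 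Existence follows from the bath-tube principle applied to a TF almost-minimizer whose mass outside a large ball is moved inside, exactly as before. We then set $\gamma^\hbar$ as in \eqref{equation_defintion_gamma_hbar_lemme_approx}, with $\hbar_x=N^{-\beta-1/d}$ and $\hbar_p=N^{\beta-1/d}$.

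\textbf{Step 1 (reduction to Hartree).} Proposition \ref{proposition_lieb_variational} and Lemma \ref{lemma_L_is_small} are insensitive to the sign of the interaction. In the proof of Proposition \ref{proposition_reduction_hartree_energy}, the error between $\tr(N^{-1}w_N\gamma^{\otimes2})$ and $\tr(N^{-1}w_N\Gamma^{(2)})$ is controlled by
\begin{equation*}
N^{-1}\|w_N\|_{L^\infty}\big(\tr(\gamma\otimes\gamma\,\mathrm{Ex})+\tr(L)\big)=O(N^{d\beta}),
\end{equation*}
using \eqref{equation_borne_trace_exchange2}. This requires only $w^\pm\in L^\infty$, and $O(N^{d\beta})=o(N)$ since $\beta<1/d$. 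It yields
\begin{equation*}
E(N)\leq(1+o(1))\,\mathcal{E}^{\mathrm{H}}[\gamma^\hbar]+o(N),
\end{equation*}
with the Hartree functional of Definition \ref{definition_hartree_energy2}. A subtlety is that the multiplicative factor $(1+o(1))$ multiplies an energy which may be negative. To handle this, I would instead write the identity additively: the relation is exact up to the factor $\frac{N-1}{N}$ and an additive $O(N^{d\beta})$, and the interaction term is itself $O(N)$ by Young's inequality, so no sign issue arises.

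\textbf{Step 2 (semi-classics).} The kinetic and potential estimates \eqref{equation_semi_classique_approximation_cinetique} and \eqref{equation_semi_classique_approximation_potentiel} carry over verbatim. For the interaction, we split $w_N=w_N^+-w_N^-$. The term $w_N^-$ is treated exactly as in \eqref{equation_semi_classique_approximation_interaction}, using $\nabla w^-\in L^1$ and \eqref{equation_resultat_intermediaire}, with an error of order $\sqrt{\hbar_x}N^{1+\beta}=o(N)$. For $w_N^+$ I would avoid any hypothesis on $\nabla w^+$ in $L^1$ and proceed more crudely. Note first that $\rho_{\gamma^\hbar}=(2\pi\hbar)^{-d}\rho_m*|f^\hbar|^2\cdot(2\pi)^d$, which up to normalisation is $N\rho_m*|f^\hbar|^2$. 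The interaction term therefore equals
\begin{equation*}
N\int\big(w_N^+*\tilde\rho\big)\tilde\rho,\qquad \tilde\rho=\rho_m*|f^\hbar|^2 .
\end{equation*}
Since $w_N^+$ is an approximate identity of mass $\int w^+$ and $\tilde\rho\to\rho_m$ in $L^2$ (because $\hbar_x\to0$), we get $\int(w_N^+*\tilde\rho)\tilde\rho\to 2I_{w^+}\int\rho_m^2$ by continuity of translations in $L^2$. The same approximate-identity argument applied to $w_N^-$ gives convergence to $2I_{w^-}\int\rho_m^2$. This is the limit $N\to\infty$ of the $N$-Vlasov energy of Definition \ref{definition_N_Vlasov2} (the repulsive part is already local there), and it shows
\begin{equation*}
\mathcal{E}^{\mathrm{H}}[\gamma^\hbar]=N\,\mathcal{E}^{\mathrm{V}}[m_\varepsilon]+o(N).
\end{equation*}
In fact, for the attractive part as well, this $L^2$-continuity argument with $\rho_m\in L^2$ suffices and no gradient bounds are needed. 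Combining Steps 1 and 2 gives $E(N)\leq N(E^{\mathrm{TF}}+\varepsilon)+o(N)$, and letting $\varepsilon\to0$ concludes.

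\textbf{Main obstacle.} I expect the only delicate point to be the sign bookkeeping in Step 1. The $(1+o(1))$ factor in \eqref{equation_borne_sup_energie_vraie_hartree} must be replaced by the additive statement
\begin{equation*}
\tr(H_N\Gamma_N)=\mathcal{E}^{\mathrm{H}}[\gamma^\hbar]+O(N^{d\beta})+O\big(N^{-1}|\mathcal{E}_{\mathrm{int}}|\big),
\end{equation*}
and one must check that $\mathcal{E}_{\mathrm{int}}=O(N)$ for this $\gamma^\hbar$. This holds because
\begin{equation*}
N^{-1}\int(|w_N|*\rho_\gamma)\rho_\gamma\leq N\,\|w\|_{L^1}\,\|\tilde\rho\|_{L^2}^2\leq CN .
\end{equation*}
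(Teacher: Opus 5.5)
Your proposal is correct. It shares the paper's skeleton: Lieb's variational principle to pass to $\mathcal{E}^{\mathrm{H}}$, then the coherent-state operator $\gamma^\hbar$, with the kinetic and potential terms handled exactly as in Lemma \ref{lemma_approximation_semi_classique2}. It differs in how the interaction and the trial state are handled.

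The paper treats the two signs of the interaction asymmetrically.
\begin{itemize}
\item The repulsive part is simply bounded from above by Young's inequality, $\int (w_N^+*\tilde\rho)\tilde\rho\leq \|w^+\|_{L^1}\|\rho_m\|_{L^2}^2$ with $\tilde\rho=\rho_m*|f^{\hbar}|^2$. This gives only $\mathcal{E}^{\mathrm{H}}[\gamma^\hbar]\leq N\mathcal{E}^{\mathrm{V}}_{\infty,N}[m]+o(N)$, which is all an upper bound needs.
\item The attractive part goes through the gradient estimate \eqref{equation_resultat_intermediaire}, which uses $\nabla w^-\in L^1$ and $\hbar_x\ll N^{-2\beta}$. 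It is followed by the limit $\frac12\int (w_N^-*\rho_m)\rho_m\to I_{w^-}\int\rho_m^2$.
\end{itemize}
Your $L^2$ approximate-identity argument treats both signs at once and yields an equality. It needs only $w^\pm\in L^1\cap L^\infty$, $\rho_m\in L^2$ and $\hbar_x,\hbar_p\to 0$, so it shows that the gradient hypotheses on $w$ play no role in the upper bound. On the other hand, the paper's one-line Young bound is the more economical choice for $w^+$.

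The trial state also differs. The paper plugs in an exact minimizer of $\mathcal{E}^{\mathrm{V}}$. In $d=1$, the existence of such a minimizer is only established, via the relaxed functional of the appendix, under Assumption \ref{assumption_V2}, which is not among the hypotheses of this proposition. Your use of $\varepsilon$-almost minimizers, as in the proof of Proposition \ref{proposition_upper_bound}, is therefore the more robust choice.

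Two smaller remarks.
\begin{itemize}
\item The sign issue you flag in Step 1 is milder than you suggest. The interaction part of $\tr(H_N\Gamma_N)$ is exactly $\frac{1}{2N}\tr(w_N\Gamma_N^{(2)})$, and $\tr(\gamma^2)+\tr(L)=N$. Hence $|\tr(H_N\Gamma_N)-\mathcal{E}^{\mathrm{H}}[\gamma]|\leq \|w_N\|_{L^\infty}$ directly, with no factor $\frac{N-1}{N}$ at all.
\item The $L^\infty$ truncation of $\rho_{m_\varepsilon}$ is unnecessary. The bound $\rho_{m_\varepsilon}\in L^2$ is automatic: it is assumed when $d=2$ and follows from $L^1\cap L^3$ when $d=1$.
\end{itemize}
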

	
	\begin{proof}
		The proof is essentially the same as that of Proposition \ref{proposition_upper_bound}. First, Proposition \ref{proposition_reduction_hartree_energy} remains true since $w^+ \in L^{\infty}(\R^d)$. This means that for $\gamma_N$ satisfying \eqref{equation_condition_gamma_N},
		\begin{equation}\label{equation_inegalite_energie_fondamentale_energie_hartree}
			E(N) \leq \big(1 + o(1)\big) \mathcal{E}^{\mathrm{H}}[\gamma_N] + o(N).
		\end{equation}
		Moreover, Lemma \ref{lemma_approximation_semi_classique2} remains true if we replace the equality in \eqref{equation_lemme_approximation_semi_classique2} by an inequality. Namely, for $m$ satisfying \eqref{equation_hypotheses_m_lemme_approximation_semiclassique2} and $\gamma^{\hbar}$ defined by \eqref{equation_defintion_gamma_hbar_lemme_approx}, with $\hbar_x$ and $\hbar_p$ defined as in Definition \ref{definition_coherent_state}, we have 
		\begin{equation}\label{equation_inequality_energy_hartree_vlasov_infty_n}
			\mathcal{E}^{\mathrm{H}}[\gamma^{\hbar}] \leq N \mathcal{E}_{\infty, N}^{\mathrm{V}}[m] + o(N).
		\end{equation}
		Indeed, 
		\begin{align}
			\int_{\R^d} (w_N^+ * \rho_{\gamma^{\hbar}})\rho_{\gamma^{\hbar}} &= \int_{\R^d} \Big(w_N^+ * \big(\rho_m * |f^{\hbar}|^2\big)\Big) \big(\rho_m * |f^{\hbar}|^2\big)\notag\\
			&\leq \big\|w_N^+\big\|_{L^1} \big\|\rho_m * |f^{\hbar}|^2\big\|_{L^2}^2 = \int_{\R^d} w^+ \int_{\R^d} \rho_m^2.
		\end{align}
		Now let $m$ be a minimizer of the Vlasov energy $\mathcal{E}^{\mathrm{V}}$ defined by \eqref{equation_definition_vlasov_energy_functional2}, under the constraints stated in \eqref{equation_egalite_energie_thomas_fermi_vlasov}. One can check that $m$ satisfies \eqref{equation_hypotheses_m_lemme_approximation_semiclassique2}. Then,
		\begin{equation}
			\frac{1}{2}\int_{\R^d} (w_N^-*\rho_m)\rho_m \to I_{w^-} \int_{\R^d}\rho_m^2,
		\end{equation}
		so that \eqref{equation_inequality_energy_hartree_vlasov_infty_n} implies
		\begin{equation}
			\mathcal{E}^{\mathrm{H}}[\gamma^{\hbar}] \leq N E^{\mathrm{TF}} + o(N).
		\end{equation}
		Together with \eqref{equation_inegalite_energie_fondamentale_energie_hartree}, this proves that
		\begin{equation}
			E(N) \leq N E^{\mathrm{TF}} + o(N).
		\end{equation}
	\end{proof}
	
	\subsection{Lower bound on the energy}
	We have the following lower bound on the energy.
	\begin{proposition}[Lower bound]\label{proposition_lower_bound2}
		Let $d=1$ or $2$, $\beta < \frac{1}{d(d + 1)}$, $V$ and $w$ satisfy Assumptions \ref{assumption_V} and \ref{assumption_w2}. Let $(\Psi_N)$ be a sequence of normalized approximate minimizers of the energy, i.e.
		\begin{equation}
			\langle \Psi_N|H_N\Psi_N\rangle = E(N) + o(N)~~~~~\mathrm{and}~~~~~\langle \Psi_N|\Psi_N\rangle = 1,
		\end{equation}
		then we have
		\begin{equation}
			\langle\Psi_N |H_N\Psi_N\rangle \geq N E^{\mathrm{TF}} + o(N).
		\end{equation}
		In particular, 
		\begin{equation}
			E(N) \geq N E^{\mathrm{TF}} + o(N).
		\end{equation}
	\end{proposition}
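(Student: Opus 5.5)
The strategy is to follow the proof of Proposition \ref{proposition_lower_bound} step by step and to treat the repulsive part $w^+$ separately. Since $\widehat{w^+}\geq 0$, the repulsive interaction energy of a state is bounded below by its mean-field (Hartree) energy up to a small self-interaction error; this is where the Onsager-type argument enters. After that step only the attractive part $w^-$ remains, and it is handled exactly as before.

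\textbf{Step 1: a priori bounds.} I first prove the analogue of Lemma \ref{lemma_apriori}. The positive-definiteness of $w^+$ gives
\[
\sum_{j<k}w_N^+(x_j-x_k)\geq -\frac{N}{2}w_N^+(0)=-\frac N2 N^{d\beta}w^+(0).
\]
After multiplication by $N^{-1}$, this contributes at most $O(N^{d\beta})=o(N)$. We may therefore drop the repulsive term up to $o(N)$. Lemma \ref{lemma_apriori} applied to $w^-$ then yields \eqref{equation_borne_apriori1}, and \eqref{equation_borne_apriori2} holds with $v\in L^{1+d/2}$. Lemma \ref{lemma_m_tight} carries over in the same way. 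In the case $d=2$, the condition \eqref{hypothese_w2} on $w^-$ alone plays the role of \eqref{hypothese_w}.

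\textbf{Step 2: semi-classical approximation and Diaconis--Freedman.} Lemma \ref{lemma_semi_classical_approximation_energy} holds verbatim with $w_N$ replaced by $w_N^+-w_N^-$. The only new input is the bound on $\|w_N^+-(w_N^+*|f^\hbar|^2)*|f^\hbar|^2\|_{L^{1+d/2}}$ via \eqref{equation_resultat_intermediaire}, which uses exactly the hypothesis $\nabla w^+\in L^{1+d/2}$. Lemma \ref{lemma_energy_involving_diaconis_freedman} then rewrites the energy as
\[
\int \mathcal E_N[\mu]\,\d P_N^{\mathrm{DF}}(\mu)+o(1),
\]
where $\mathcal E_N$ now contains the term $+\frac12\iint w_N^+(x-y)\,\d\mu^{\otimes2}$. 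The diagonal correction $N^{-1}w_N(0)$ is again $O(N^{d\beta-1})$.

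\textbf{Step 3: averaging and the singular limit.} This is the main obstacle. For the repulsive part we cannot simply throw positive terms away, because the limit must retain $I_{w^+}\int\rho^2$. The averaging step of Lemma \ref{lemma_averaging} required $\nabla w\in L^\infty$, which $w^+$ is not assumed to satisfy. I would avoid averaging $w^+$ in $x$ directly. Instead, I note that $\widehat{w^+}\geq0$ makes
\[
\mu\mapsto\iint w_N^+\,\d\mu^{\otimes2}
\]
convex and nonnegative. Since $\bar\mu$ is a conditional expectation of $\mu$ on the tiling, Jensen's inequality applied cell by cell should give
\[
\iint w_N^+\,\d(\mathds 1_{S_L}\mu)^{\otimes2}\geq \iint \tilde w^+_N\,\d\bar\mu^{\otimes2},
\]
with $\tilde w^+_N$ a suitably smoothed kernel. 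If this Jensen step fails, I would instead first mollify $w^+$ at a scale $N^{-\beta}\varepsilon'$, using positivity of the Fourier transform to lower the energy, and then apply Lemma \ref{lemma_averaging} to the smooth kernel. Restriction to $S_L$ (Lemma \ref{lemme_restriction_support_compact}) is harmless because the cross terms of $w^+$ are positive and can be dropped, while the $S_L^c$ mass is controlled as before. Finally, to pass to $\mathcal E_\infty$ we need a lower bound (not an upper bound) of the form
\[
\int (w_N^+*\rho)\rho\geq 2I_{w^+}\int\rho^2-o(1)
\]
for $\rho=\rho_{(2\pi)^d\bar\mu}$. This holds because $\bar\mu$ is piecewise constant on cells of size $l_x\gg N^{-\beta}$. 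Indeed, writing $\int(w_N^+*\rho)\rho=\int\widehat{w^+}(k/N^\beta)|\hat\rho(k)|^2$, the quantity $\rho$ is essentially constant on the interaction range, with a boundary-layer error of relative size $N^{-\beta}/l_x$ controlled by the Pauli bound $\rho\leq C$. Choosing $l_x\gg N^{-\beta}$ is compatible with \eqref{equation_condition_lx_lp}, possibly after lowering $\sigma$. For the $w^-$ part, Young's inequality is used as in Lemma \ref{lemma_singular_or_not_energy}.

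\textbf{Step 4: conclusion.} Once $\mathcal E_\infty[\bar\mu]$ contains $+I_w\int\rho^2$, Lemmas \ref{lemme_restriction_measures_bornees_pauli} and \ref{lemma_energie_singuliere_bornee_thomas_fermi} go through with the new $E^{\mathrm{TF}}$, using coercivity from \eqref{hypothese_w2}. The parameter choice \eqref{equation_scaling} then gives $\liminf N^{-1}\langle\Psi_N|H_N\Psi_N\rangle\geq E^{\mathrm{TF}}$.
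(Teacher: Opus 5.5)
The genuine gap is the final step of Step 3, where you pass from the nonlocal repulsion to $I_{w^+}\int\rho^2$. The rest of your plan is sound.

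\textbf{What works.} Your Steps 1--2 and your treatment of $w^-$ agree with the paper. Your mollification fallback in Step 3 is also sound. Replacing $w^+$ by $w^+*\chi_{\varepsilon'}*\chi_{\varepsilon'}$ (with $\chi\geq 0$ even and $\int\chi=1$) lowers the quadratic form, because $\widehat{w^+}\,\widehat{\chi}^2\leq\widehat{w^+}$. It preserves $\int w^+$ and the sign of the Fourier transform. The rescaled kernel has an $L^\infty$ gradient of order $N^{(d+1)\beta}/\varepsilon'$, so Lemma \ref{lemma_averaging} applies to it. This addresses a point the paper does not treat, since Assumption \ref{assumption_w2} does not give $\nabla w^+\in L^\infty$.

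\textbf{The gap.} You need $\int(w_N^+*\rho)\rho\geq 2I_{w^+}\int\rho^2-o(1)$ for $\rho=\rho_{(2\pi)^d\bar\mu}$, and you justify it by $l_x\gg N^{-\beta}$. This contradicts \eqref{equation_condition_lx_lp}. Averaging requires $l_x\ll N^{-(d+1)\beta}\ll N^{-\beta}$ (in \eqref{equation_scaling}, $l_x=N^{-1/d+2\sigma}$). So the cells are always much finer than the interaction range, whatever $\sigma$ is.

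Worse, the inequality is false for admissible densities. Up to normalisation, $\int(w_N^+*\rho)\rho=\int\widehat{w^+}(k/N^{\beta})|\widehat{\rho}(k)|^2\d k$, and $\widehat{w^+}(k)\leq\widehat{w^+}(0)$ because $w^+\geq 0$. Hence the Young direction $\frac12\int(w_N^+*\rho)\rho\leq I_{w^+}\int\rho^2$ always holds. Near-equality holds only when $\widehat{\rho}$ is concentrated at $|k|\ll N^{\beta}$.

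Take $\rho(x)=\bar\rho(x)(1+\cos(Kx_1))$ with $N^{\beta}\ll K\ll l_x^{-1}$. It is compatible with the tiling and the Pauli bound, and $c_{\mathrm{TF}}\int\rho^{1+2/d}$ stays bounded, since the semiclassical kinetic energy does not see oscillations in $x$. Yet its nonlocal repulsion is only about $\frac23 I_{w^+}\int\rho^2$. So no estimate valid measure by measure can produce the local term $+I_{w^+}\int\rho^2$; pointwise, positive-definiteness gives the wrong inequality.

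\textbf{The missing idea.} The paper uses a compactness argument at the Thomas--Fermi level. Keep $w_N^+$ nonlocal in $\mathcal{E}_{N,\infty}$, making only $w^-$ local via Young. Obtain $\mathcal{E}_{N,\infty}[\bar\mu]\geq E_N^{\mathrm{TF}}-C(\varepsilon+\eta)$ as in Lemma \ref{lemma_energie_singuliere_bornee_thomas_fermi}. Then prove $\liminf_N E_N^{\mathrm{TF}}\geq E^{\mathrm{TF}}$, which is Lemma \ref{equation_convergence_energie_thomas_fermi}.

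For near-minimisers $\rho_N\rightharpoonup\rho_\infty$, apply Cauchy--Schwarz to the positive semi-definite form $(f,g)\mapsto\int(w_N^+*f)g$. Together with $w_N^+*\rho_\infty\to(\int w^+)\rho_\infty$ in $L^2$, this gives $\liminf\frac12\int(w_N^+*\rho_N)\rho_N\geq I_{w^+}\int\rho_\infty^2$. This weak lower semicontinuity is where $\widehat{w^+}\geq0$ is really needed: oscillations can only lower the repulsion, and the weak limit removes them.

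The remaining terms must also be weakly l.s.c. For $d=2$ this follows from $c_{\mathrm{TF}}>I_{w^-}$. For $d=1$, the local part $c_{\mathrm{TF}}\rho^3-I_{w^-}\rho^2$ is not convex. One then needs the relaxed functional $\mathcal{J}$ and Corollary \ref{corollary_same_minimizers}, with $\rho^2$ replaced by $f_\alpha(\rho)$ in the repulsive term, to identify the limit with $E^{\mathrm{TF}}$. Your proposal contains neither ingredient.

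A measure-by-measure repair would be to convolve $\rho$ at a scale $\gg N^{-\beta}$, which does lower the $w^+$ term by positive-definiteness. But this again requires convexity of the remaining local energy, so it needs the same relaxation when $d=1$.
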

	Most of the intermediary results (including Sections \ref{subsection_a_priori_bounds}--\ref{subsection_averaging}) for the lower bound remain directly true for a non-necessarily negative $w$, either because we only bound the norm of $w$, or because the positive part can be easily thrown away. Then, we just have to keep the $N$-interaction energy for the repulsive part in Lemma \ref{lemma_singular_or_not_energy2} (rather than replacing it with the local one) and prove the convergence of the mixed Thomas-Fermi energy in Lemma \ref{equation_convergence_energie_thomas_fermi} to conclude. Before we can state a proper result, we have to define the semi-classical energy of a one-body measure.
	\begin{definition}[Semi-classical energy of a one-body measure]
		Let $\mu$ be a probability measure on the phase space $\R^{2d}$. We define its semi-classical energy by
		\begin{equation}\label{equation_definition_mathcalE_N2}
			\mathcal{E}_N[\mu] = \iint_{\R^d\times\R^d} \big(|p|^2 + V(x)\big) \d \mu(x, p) + \frac{1}{2} \iint_{\R^{2d}\times\R^{2d}} w_N(x-y) \d \mu^{\otimes 2}(x, p ; y, q).
		\end{equation}
	\end{definition}
	From now on, we fix a sequence of states $\Psi_N \in \Hcal_N$ of norm 1 that approximately minimizes the energy in the following sense
	\begin{equation}
		\langle \Psi_N|H_N\Psi_N\rangle = E(N) + o(N).
	\end{equation} 
	Then, we have the following result.
	\begin{lemma}[Direct adaptation of previous arguments] \label{lemme_restriction_measures_bornees_pauli2}
		Let $P_N^{\mathrm{DF}}$ be the Diaconis-Freedman measure defined by \eqref{equation_definition_diaconis_freedman} with the convention \eqref{equation_definition_m_N}, $\mathcal{E}_N$ the semi-classical energy defined by \eqref{equation_definition_mathcalE_N2}, and $\Gamma_{\varepsilon}$ and $\Xi_{\tau}$ be the sets of measures defined by \eqref{equation_definition_gamma_varepsilon} and \eqref{equation_definition_Xitau}. Then, we have the following lower bound on the energy:
		\begin{align}
			\frac{1}{N}\langle \Psi_N|H_N\Psi_N\rangle &= \int_{\mathcal{P}(\R^{2d})} \mathcal{E}_N[\mu]\d P_N^{\mathrm{DF}}(\mu) + o(1)\notag\\
			&\geq \big(1 - o(1)\big)\int_{\Xi_{\tau}\setminus\Gamma_{\varepsilon}} \mathcal{E}_N[\overline{\mu}]\d P_N^{\mathrm{DF}}(\mu) \notag\\ 
			&~~~~~~~~~~~~~~~~~~~~~~~~~~~- C N^{d \beta} n^{2d}P^{\mathrm{DF}}_N(\Gamma_{\varepsilon}^j) - CN^{3d\beta/2}\tau^{-1} - \frac{C \tau^2 N^{d\beta}}{\inf(L^4, L^{2s})} - o(1).
		\end{align}
	\end{lemma}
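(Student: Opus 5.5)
The plan is to go through the chain of Sections \ref{subsection_a_priori_bounds}--\ref{subsection_averaging} and check that each step survives the sign change. The only structural inputs there are norms of $w_N$ and lower bounds on the kinetic plus potential energy. The first thing to re-establish is the a priori bound of Lemma \ref{lemma_apriori}. Since $H_N \geq \sum_j(-\hbar^2\Delta_j+V(x_j)) - N^{-1}\sum_{j<k} w_N^-(x_j-x_k)$, the proof of \eqref{equation_borne_apriori1} applies verbatim with $w$ replaced by $w^-$. The bound \eqref{equation_borne_apriori2} only uses $\|v\|_{L^{1+d/2}}$ and the kinetic bound, so it holds unchanged.

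\textbf{The equality.} Lemma \ref{lemma_semi_classical_approximation_energy} is then redone for $w_N = w_N^+ - w_N^-$. The kinetic and potential parts are identical. For the interaction, I apply \eqref{equation_borne_apriori2} to $v_N^\pm = w_N^\pm - (w_N^\pm*|f^\hbar|^2)*|f^\hbar|^2$ separately, and use \eqref{equation_resultat_intermediaire} with $p = 1+d/2$. This is where $\nabla w^+ \in L^{1+d/2}$ is needed, and it gives the same error $O(N^{1+(d+1/2)\beta}\sqrt{\hbar_x})$. Lemmas \ref{lemma_first_marginals} and \ref{lemma_energy_involving_diaconis_freedman} then go through unchanged, using the error term $N^{-1}|w_N(0)| = O(N^{d\beta-1})$. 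This proves the first line of the statement, with $\mathcal{E}_N$ as in \eqref{equation_definition_mathcalE_N2}.

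\textbf{The inequality.} I would split the repulsive and attractive pieces. Write $\mathcal{E}_N[\mu] = \mathcal{E}_N^-[\mu] + \frac12\iint w_N^+\,\d\mu^{\otimes2}$, where $\mathcal{E}_N^-$ is the functional of \eqref{equation_definition_mathcalE_N} built with $w^-$.

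For the restriction step (the analogue of Lemma \ref{lemme_restriction_measures_bornees_pauli}), I use that the repulsive integrand is nonnegative. The nonnegativity of $\iint w_N^+\,\d\mu^{\otimes 2}$ together with $w_N^+ \geq 0$ lets me drop the $w^+$ contribution on $\Gamma_\varepsilon\cup\Xi_\tau^c$ entirely. On the same sets, the $w^-$ part is controlled by $\|w_N^-\|_{L^\infty}(P_N^{\mathrm{DF}}(\Gamma_\varepsilon)+P_N^{\mathrm{DF}}(\Xi_\tau^c))$ exactly as before, with $P_N^{\mathrm{DF}}(\Xi_\tau^c)$ bounded by Markov and the new a priori bound.

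Lemma \ref{lemme_restriction_support_compact} is applied to $\mathcal{E}_N^-$. For the $w^+$ part, I just use $\iint w_N^+\,\d\mu^{\otimes2} \geq \iint w_N^+\,\d(\mathds{1}_{S_L}\mu)^{\otimes 2}$, which again follows from the positivity of $w_N^+$.

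\textbf{Main obstacle.} The hard step is the averaging Lemma \ref{lemma_averaging} for the repulsive term. The estimate \eqref{equation_borne_Delta2} used $\|\nabla w\|_{L^\infty}$, which is not assumed for $w^+$. Even if it were, that estimate would not give a one-sided inequality in the right direction without the Fourier condition. I would treat $w^+$ using $\widehat{w^+}\geq 0$.

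The averaged measure $\bar\mu$ restricted to $S_L$ can be written as $\mathds{1}_{S_L}\mu$ convolved, cell by cell, with a probability kernel. My first attempt is to compare instead with a genuine convolution $\mathds{1}_{S_L}\mu * \chi$, where $\chi$ is the normalized indicator of a box of size $l_x\times l_p$. For a convolution, positive definiteness gives directly
\begin{equation}
	\iint w_N^+\,\d(\mu*\chi)^{\otimes 2} = \int \widehat{w_N^+}\,|\widehat{\mu}|^2|\widehat{\chi}|^2 \leq \iint w_N^+\,\d\mu^{\otimes 2},
\end{equation}
since $|\widehat\chi|\leq1$. The remaining task is the gap between the tiling average and this convolution. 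If that comparison is awkward, the fallback is to average over random translations of the tiling: the translation-averaged $\bar\mu$ is exactly a convolution, and one picks a good translate by a pigeonhole-type argument, keeping the Pauli event $\Gamma_\varepsilon^c$ and the $w^-$ estimates, which are uniform in the translate.

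The attractive part $\mathcal{E}_N^-$ is handled by Lemma \ref{lemma_averaging} as written. Its coercivity step (bounding $\int(|p|^2+V)\,\d\bar\mu$ by $\mathcal{E}_N[\bar\mu]$) still works because $w_N^+\geq0$ only increases $\mathcal{E}_N[\bar\mu]$, and it uses $I_{w^-}<c_{\mathrm{TF}}$. Collecting the errors yields the stated inequality with $\mathcal{E}_N[\overline{\mu}]$.
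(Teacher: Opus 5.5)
Everything outside the averaging step is correct and matches the paper's ``direct adaptation'':
\begin{itemize}
\item dropping $w_N^+\geq 0$ in the a priori bound and in the two restriction lemmas;
\item treating $v_N^{\pm}$ separately with $\nabla w^{\pm}\in L^{1+d/2}$ in the semi-classical lemma;
\item keeping the coercivity step with $I_{w^-}<c_{\mathrm{TF}}$.
\end{itemize}
The gap is the step you flag yourself. With $Q^+(\nu):=\iint w_N^+(x-y)\d\nu^{\otimes 2}$, you need $Q^+(\mathds{1}_{S_L}\mu)\geq Q^+(\bar\mu)-o(1)$, and neither proposed route gives it.

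\textbf{Why your routes fail.} Comparing with $\mathds{1}_{S_L}\mu*\chi$ only moves the difficulty. The difference $Q^+(\bar\mu)-Q^+(\mathds{1}_{S_L}\mu*\chi)$ has no sign. Controlling it for an empirical measure needs a modulus of continuity of $w_N^+$ at scale $l_x$, which is exactly what is missing.

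The random-translation fallback fails structurally. Since $\widehat{w^+}\geq 0$, $Q^+$ is convex, so $Q^+(\mathbb{E}_t\bar\mu^{(t)})\leq\mathbb{E}_tQ^+(\bar\mu^{(t)})$. This is the wrong direction for a pigeonhole argument. The excess $\mathbb{E}_tQ^+(\bar\mu^{(t)}-\mathbb{E}_t\bar\mu^{(t)})\geq 0$ is an aliasing term, and bounding it again needs regularity of $w_N^+$ at scale $l_x$. Moreover, a $\mu$-dependent translate changes both $\bar\mu$ and $\Gamma_\varepsilon$, so it would not prove the inequality as stated.

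Your side remark is also wrong. \eqref{equation_borne_Delta2} is an absolute-value bound, valid for a kernel of either sign, and it is what the paper's adaptation reuses. Fourier positivity only enters later, for the liminf of the non-local repulsive energy (Lemma \ref{equation_convergence_energie_thomas_fermi} and the convergence of states). The only defect of \eqref{equation_borne_Delta2} is that it needs $\nabla w^+\in L^\infty$. As you rightly note, this is not in Assumption \ref{assumption_w2}, and the paper's one-line adaptation passes over it.

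\textbf{The missing idea.} The lemma is an integrated statement over $P_N^{\mathrm{DF}}$, so no pointwise one-sided bound is needed. Since the cells have spatial side $l_x$, pointwise
\[
|Q^+(\mathds{1}_{S_L}\mu)-Q^+(\bar\mu)|\leq 4^d\iint h(x-y)\d\mu^{\otimes 2},
\]
where
\[
h(z)=(2l_x)^{-2d}\iint_{[-l_x,l_x]^{2d}}\big|w_N^+(z)-w_N^+(z+a-b)\big|\d a\d b,\qquad \|h\|_{L^{1+d/2}}\leq Cl_x\|\nabla w_N^+\|_{L^{1+d/2}}.
\]
The right-hand side is nonnegative, so you may integrate it over all of $\mathcal{P}(\R^{2d})$. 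By Lemma \ref{lemma_first_marginals} and \eqref{equation_lien_m_rho}, this gives
\[
O(N^{d\beta-1})+CN^{-2}\iint\big(h*|f^{\hbar}|^2*|f^{\hbar}|^2\big)(x-y)\,\rho^{(2)}_{\Psi_N}(x,y)\d x\d y .
\]
Then \eqref{equation_borne_apriori2} bounds the last term by
\[
CN^{\frac{\beta d^2}{2(d+2)}}\,l_x\|\nabla w_N^+\|_{L^{1+d/2}}=O\big(l_xN^{(d+1/2)\beta}\big)=o(1)
\]
under \eqref{equation_condition_lx_lp}. This is the same bookkeeping as in Lemma \ref{lemma_semi_classical_approximation_energy}. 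It uses exactly the hypothesis $\nabla w^+\in L^{1+d/2}$ and needs no Fourier positivity.
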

	Then let us introduce the partially local semi-classical energy, where the negative part is made local, and the positive part is kept non-local.
	\begin{definition}[Partially local semi-classical energy]\label{definition_partially_singular_semiclassical_energy}
		Let $\nu \in L^{1}(\R^{2d})$ be a positive density of norm 1. We define its (partially) local semi-classical energy by
		\begin{equation}\label{equation_definition_singular_semi_classical_energy2}
			\mathcal{E}_{N, \infty}[\nu] =  \iint_{\R^d\times\R^d}\big(|p|^2 + V(x)\big)\nu(x, p)\d x\d p +\frac{1}{2} \int_{\R^d} \big(w_N^+ * \rho_{(2\pi)^d\nu}\big)\rho_{(2\pi)^d\nu} -  I_{w^-} \int_{\R^d} \rho_{(2\pi)^d\nu}^2 
		\end{equation}
		with
		\begin{equation}
			\rho_{(2\pi)^d\nu}(x) = \int_{\R^d} \nu(x, p)\d p.
		\end{equation}
	\end{definition}
	Now, we want to bound from below the energy obtained in Lemma \ref{lemme_restriction_measures_bornees_pauli2} by using the averaged measures and replacing $\mathcal{E}_N$ by $\mathcal{E}_{N, \infty}$. Note that because of its definition \eqref{equation_definition_averaged_measure}, an averaged measure is always in $L^1$ and we can then always consider its singular semi-classical energy.
	\begin{lemma}[Bounding the non-local energy by the partially local energy]\label{lemma_singular_or_not_energy2}
		Let $w$ satisfy Assumption \ref{assumption_w2}. Let $n$ and $L$ be defined by Notation \ref{notation_hypercube}, $P_N^{\mathrm{DF}}$ be the Diaconis-Freedman measure defined by \eqref{equation_definition_diaconis_freedman} with the convention \eqref{equation_definition_m_N}, $\mathcal{E}_N$ and $\mathcal{E}_{N, \infty}$ the semi-classical energies defined by \eqref{equation_definition_mathcalE_N2} and \eqref{equation_definition_singular_semi_classical_energy2}, and $\Gamma_{\varepsilon}$ and $\Xi_{\tau}$ be the sets of measures defined by \eqref{equation_definition_gamma_varepsilon} and \eqref{equation_definition_Xitau}. Then, we have
		\begin{equation}
			\int_{\Xi_{\tau}\setminus\Gamma_{\varepsilon}} \mathcal{E}_N[\overline{\mu}]\d P_N^{\mathrm{DF}}(\mu) \geq \big(1 - o(1)\big)\int_{\Xi_{\tau}\setminus\Gamma_{\varepsilon}} \mathcal{E}_{N, \infty}[\bar{\mu}]\d P_N^{\mathrm{DF}}(\mu) - o(1).
		\end{equation}
	\end{lemma}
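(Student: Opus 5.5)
The plan is to follow Lemma \ref{lemma_singular_or_not_energy} verbatim for the attractive part and to leave the repulsive part untouched. Fix $\mu\in\Xi_{\tau}\setminus\Gamma_{\varepsilon}$, so that $\bar{\mu}\in L^1$ and, up to the factor $(1+\varepsilon)$, satisfies the Pauli principle. Write $\rho=\rho_{(2\pi)^d\bar{\mu}}$. Since $\bar{\mu}$ has a density, the interaction term of $\mathcal{E}_N[\bar{\mu}]$ is
\begin{equation*}
\frac{1}{2}\int_{\R^d} (w_N^+*\rho)\rho - \frac{1}{2}\int_{\R^d} (w_N^-*\rho)\rho .
\end{equation*}
The kinetic and potential terms and the $w_N^+$ term coincide exactly with those of $\mathcal{E}_{N,\infty}[\bar{\mu}]$. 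It therefore only remains to compare the attractive terms.

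For the attractive term I would use Young's inequality, exactly as in the proof of Lemma \ref{lemma_singular_or_not_energy}:
\begin{equation*}
\frac{1}{2}\int_{\R^d} (w_N^-*\rho)\rho \leq \frac{1}{2}\|w_N^-\|_{L^1}\|\rho\|_{L^2}^2 = I_{w^-}\int_{\R^d}\rho^2 .
\end{equation*}
This gives the pointwise inequality $\mathcal{E}_N[\bar{\mu}]\geq \mathcal{E}_{N,\infty}[\bar{\mu}]$ for every such $\mu$, and integrating against $P_N^{\mathrm{DF}}$ over $\Xi_{\tau}\setminus\Gamma_{\varepsilon}$ yields the claim, even without the $(1-o(1))$ factor and the $o(1)$ error.

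The only point requiring care, and the main potential obstacle, is that the right-hand side is not a meaningful lower bound unless $\mathcal{E}_{N,\infty}[\bar{\mu}]$ is finite and bounded below uniformly. Otherwise the $(1-o(1))$ prefactor inherited from Lemma \ref{lemme_restriction_measures_bornees_pauli2} could not be absorbed later. Finiteness is clear: $\bar{\mu}$ is bounded and compactly supported on $S_L$, so $\rho\in L^1\cap L^\infty$ and all terms are finite. For the lower bound, the $w_N^+$ term is nonnegative because $\widehat{w^+}\geq 0$ gives $\int (w_N^+*\rho)\rho = (2\pi)^{d/2}\int \widehat{w_N^+}|\widehat{\rho}|^2\geq 0$, and this can simply be dropped. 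The remaining terms are then controlled exactly as in the proof of Lemma \ref{lemma_averaging}, using the bath-tube bound $(1+\varepsilon)^{2/d}\iint|p|^2\d\bar{\mu}\geq c_{\mathrm{TF}}\int\rho^{1+2/d}$. For $d=2$ one additionally uses $I_{w^-}<c_{\mathrm{TF}}$ (the hypothesis \eqref{hypothese_w2}) with $\varepsilon$ small; for $d=1$ one uses Young's inequality against the mass. This shows $\mathcal{E}_{N,\infty}[\bar{\mu}]\geq c\iint(|p|^2+V)\d\bar{\mu}-C\geq -C$ uniformly in $N$ and $\mu$.

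The multiplicative $(1-o(1))$ in the statement is thus harmless: it can be moved onto a quantity bounded below by $-C$ at the cost of an additive $o(1)$.
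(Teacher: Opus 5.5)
Your proof is correct and follows the paper's argument: the paper also applies Young's inequality only to the attractive part, $\frac12\int(w_N^-*\rho)\rho\le I_{w^-}\int\rho^2$, which gives $\mathcal{E}_N[\bar\mu]\ge\mathcal{E}_{N,\infty}[\bar\mu]$ pointwise on $\Xi_\tau\setminus\Gamma_\varepsilon$, and then integrates. Your uniform lower bound on $\mathcal{E}_{N,\infty}[\bar\mu]$ is correct but not needed for the lemma itself, since both $o(1)$ terms may simply be taken to be zero; likewise, nonnegativity of the $w_N^+$ term already follows from $w^+\ge 0$ and $\rho\ge0$, without using $\widehat{w^+}\ge 0$.
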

	The proof of this lemma is exactly that of Lemma \ref{lemma_singular_or_not_energy}. We cannot directly adapt Lemma \ref{lemma_energie_singuliere_bornee_thomas_fermi}, so we have to introduce an $N$-dependant Thomas-Fermi energy.
	\begin{definition}[$N$-Thomas-Fermi energy]
		Let 
		\begin{equation}
			\mathcal{E}_N^{\mathrm{TF}}[\rho] = c_{\mathrm{TF}}\int_{\R^d} \rho^{1+2/d} + \int_{\R^d} V\rho + \frac{1}{2}\int_{\R^d}\big(w_N^+ * \rho\big)\rho - I_{w^-}\int_{\R^d} \rho^2
		\end{equation}
	and 
	\begin{equation}\label{equation_definition_energie_thomas_fermi3}
		E_N^{\mathrm{TF}} = \inf \left\{\mathcal{E}_N^{\mathrm{TF}}[\rho],~\rho\geq 0,~\int_{\R^d} \rho =1\right\}.
	\end{equation}
	\end{definition}
	\begin{lemma}[Comparison to the $N$-Thomas-Fermi energy]\label{lemma_energie_singuliere_bornee_thomas_fermi2}
		Let $\Theta_{\eta}$, $\Gamma_{\varepsilon}$, $\mathcal{E}_{N, \infty}$ and $E_N^{\mathrm{TF}}$ be defined respectively by \eqref{equation_definition_Theta_eta}, \eqref{equation_definition_gamma_varepsilon}, \eqref{equation_definition_singular_semi_classical_energy2} and \eqref{equation_definition_energie_thomas_fermi3}, and let $\mu \in \Theta_{\eta} \setminus \Gamma_{\varepsilon}$, and $w$ satisfy Assumption \ref{assumption_w2}. There exists a constant $C>0$ that does not depend on $\varepsilon$ or $\eta$ such that
		\begin{equation}
			\mathcal{E}_{N, \infty}[\bar{\mu}] \geq E_N^{\mathrm{TF}} - C(\varepsilon + \eta).
		\end{equation}
	\end{lemma}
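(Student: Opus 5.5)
I will follow the proof of Lemma \ref{lemma_energie_singuliere_bornee_thomas_fermi} step by step. The only new ingredient is the nonlocal repulsive term $\frac12\int(w_N^+*\rho)\rho$, and it has to be handled under rescalings.

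\textbf{Step 1: restoring the Pauli principle.} Since $\mu\in\Gamma_\varepsilon^c$, the measure $\nu:=(1+\varepsilon)^{-1}\bar\mu$ satisfies $\nu\le(2\pi)^{-d}$. Set $\rho:=\rho_{(2\pi)^d\nu}$. Then
\begin{equation*}
\mathcal{E}_{N,\infty}[\bar\mu]=(1+\varepsilon)\iint(|p|^2+V)\nu+(1+\varepsilon)^2\Big(\tfrac12\int(w_N^+*\rho)\rho-I_{w^-}\int\rho^2\Big).
\end{equation*}
The term $\frac12\int(w_N^+*\rho)\rho$ is nonnegative, because $\widehat{w^+}\ge0$ means it equals $c\int\widehat{w_N^+}|\widehat\rho|^2$. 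Hence its prefactor $(1+\varepsilon)^2$ can be lowered to $1+\varepsilon$. For the attractive term I use the a priori bound from Lemma \ref{lemma_averaging}: $\int\rho^2$ is bounded uniformly, since in $d=2$ it is controlled by the kinetic energy and in $d=1$ by Young's inequality with $\int\rho^3$. This gives
\begin{equation*}
\mathcal{E}_{N,\infty}[\bar\mu]\ge(1+\varepsilon)\,\mathcal{E}'_{N}[\nu]-C\varepsilon,
\end{equation*}
where $\mathcal{E}'_N$ denotes the same functional with prefactors equal to $1$. Alternatively, one can write $(1+\varepsilon)^2\ge(1+\varepsilon)$ on the repulsive part and absorb the excess of the attractive part into $C\varepsilon$. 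By the bathtub principle (Remark \ref{remark_link_vlasov_thomas_fermi}), $\iint|p|^2\nu\ge c_{\mathrm{TF}}\int\rho^{1+2/d}$, so $\mathcal{E}'_N[\nu]\ge\mathcal{E}_N^{\mathrm{TF}}[\rho]$. If $\mathcal{E}_N^{\mathrm{TF}}[\rho]\ge0$, the factor $1+\varepsilon$ is harmless; otherwise it costs at most $C\varepsilon$, because $\mathcal{E}_N^{\mathrm{TF}}$ is bounded below uniformly in $N$. That uniform bound holds since the repulsive term is nonnegative and the remaining part is controlled exactly as in Remark \ref{remark_constraint_dimension}, using \eqref{hypothese_w2} when $d=2$.

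\textbf{Step 2: mass normalisation.} As before, $\int\rho=1-\alpha$ with $\alpha\le\varepsilon+\eta$. Set $\widetilde\rho=\rho/(1-\alpha)$. The quadratic terms scale by $(1-\alpha)^2$, the potential term by $1-\alpha$, and the kinetic term by $(1-\alpha)^{1+2/d}$. For $d=2$ both quadratic terms and the kinetic term carry the factor $(1-\alpha)^2$. Since $(1-\alpha)^2\le1-\alpha$ and $V\ge0$, this gives $\mathcal{E}_N^{\mathrm{TF}}[\rho]\ge(1-\alpha)^2\mathcal{E}_N^{\mathrm{TF}}[\widetilde\rho]\ge(1-2\alpha)E_N^{\mathrm{TF}}-C\alpha$; the $-C\alpha$ accounts for the case where $E_N^{\mathrm{TF}}$ is negative. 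For $d=1$ I follow the earlier computation. I bound $(1-\alpha)^3\ge1-3\alpha$ on the cubic term and $1-\alpha\ge1-3\alpha$ on the potential term. The repulsive term is nonnegative, and I keep its coefficient $(1-\alpha)^2\ge1-3\alpha$, so it can be lowered to $(1-3\alpha)$. The attractive term leaves the defect $-2\alpha I_{w^-}\int\widetilde\rho^2$, which is absorbed by Young's inequality as before. The result is $\mathcal{E}_N^{\mathrm{TF}}[\rho]\ge(1-4\alpha)\mathcal{E}_N^{\mathrm{TF}}[\widetilde\rho]-C\alpha$.

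\textbf{Step 3: conclusion.} Combining the two steps with $E_N^{\mathrm{TF}}$ bounded uniformly in $N$ gives $\mathcal{E}_{N,\infty}[\bar\mu]\ge E_N^{\mathrm{TF}}-C(\varepsilon+\eta)$, with $C$ independent of $\varepsilon$, $\eta$ and $N$. The upper bound on $E_N^{\mathrm{TF}}$ comes from testing with a fixed $\rho$ and using $\frac12\int(w_N^+*\rho)\rho\le I_{w^+}\int\rho^2$.

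\textbf{Main obstacle.} The delicate point is the uniformity in $N$ of the constants. Specifically, I need a bound $\mathcal{E}_N^{\mathrm{TF}}\ge-C$ and a bound on $\int\rho^2$, both independent of $N$. Positivity of $\widehat{w^+}$ is what allows me to discard the repulsive term in these bounds and to lower its prefactors freely. Without that positivity, the sign-dependent rescalings in Steps 1 and 2 would fail.
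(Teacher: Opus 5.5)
Your route is the paper's: divide by $1+\varepsilon$ to restore $\nu\le(2\pi)^{-d}$, use the bathtub bound to reach $\mathcal{E}_N^{\mathrm{TF}}[\rho]$, then renormalise the mass. Step 2 and the two-sided uniform bound on $E_N^{\mathrm{TF}}$ are fine.

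The gap is in Step 1. You dispose of the excess attractive term $\varepsilon(1+\varepsilon)I_{w^-}\int\rho^2$ by asserting that $\int\rho^2$ is bounded uniformly, and nothing gives you that:
\begin{itemize}
\item The lemma only assumes $\mu\in\Theta_\eta\setminus\Gamma_\varepsilon$, with no energy bound.
\item The a priori estimate in Lemma \ref{lemma_averaging} bounds $\iint(|p|^2+V)\d\bar\mu$ by the energy of $\bar\mu$ plus a constant, not by a constant.
\item Even in the application, one only knows $\mu\in\Xi_\tau$ with $\tau=N^{3\beta d/2+\sigma}\to\infty$.
\end{itemize}
So as written your error term is $\varepsilon\int\rho^2$, not $O(\varepsilon)$. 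You were right to worry about this prefactor. The chain the paper inherits from Lemma \ref{lemma_energie_singuliere_bornee_thomas_fermi}, $(1+\varepsilon)\iint(|p|^2+V)\nu-(1+\varepsilon)^2I_w\int\rho^2\ge(1+\varepsilon)\mathcal{E}_\infty[\nu]\ge\mathcal{E}_\infty[\nu]$, has its first inequality reversed, and its second needs $\mathcal{E}_\infty[\nu]\ge0$.

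The repair is to use coercivity instead of a uniform bound. One has the exact identity
\[
\mathcal{E}_{N,\infty}[\bar\mu]=(1+\varepsilon)^2\,\mathcal{E}'_N[\nu]-\varepsilon(1+\varepsilon)\iint(|p|^2+V)\nu ,
\]
and the last integral is controlled by $\mathcal{E}'_N[\nu]$:
\begin{itemize}
\item In $d=2$, the bathtub bound and \eqref{hypothese_w2} give $\mathcal{E}'_N[\nu]\ge(1-I_{w^-}/c_{\mathrm{TF}})\iint(|p|^2+V)\nu\ge0$.
\item In $d=1$, Young's inequality gives $\mathcal{E}'_N[\nu]\ge\frac12\iint(|p|^2+V)\nu-C$.
\end{itemize}
Hence $\mathcal{E}_{N,\infty}[\bar\mu]\ge c_\varepsilon\,\mathcal{E}'_N[\nu]-C\varepsilon$, with $c_\varepsilon>0$ and $|c_\varepsilon-1|\le C\varepsilon$ for $\varepsilon$ small. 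Because $c_\varepsilon>0$, you may then insert your lower bound $\mathcal{E}'_N[\nu]\ge\mathcal{E}_N^{\mathrm{TF}}[\rho]\ge E_N^{\mathrm{TF}}-C\alpha$ and use $|E_N^{\mathrm{TF}}|\le C$ to conclude. Equivalently, split cases: if $\mathcal{E}_{N,\infty}[\bar\mu]\ge\sup_N E_N^{\mathrm{TF}}+1$ there is nothing to prove, and otherwise coercivity yields the bound on $\int\rho^2$ you wanted.

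A smaller point: $\frac12\int(w_N^+*\rho)\rho\ge0$ already follows from $w^+\ge0$ and $\rho\ge0$. So $\widehat{w^+}\ge0$ plays no role in this lemma; it is needed in Lemma \ref{equation_convergence_energie_thomas_fermi}. Your closing claim that the rescalings would fail without it is inaccurate.
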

	The proof is the same as that of Lemma \ref{lemma_energie_singuliere_bornee_thomas_fermi}. In order to conclude in the same way as the purely attractive case, we have to check the convergence of $E_N^{\mathrm{TF}}$ to $E^{\mathrm{TF}}$.
	\begin{lemma}[Convergence of the $N$-Thomas-Fermi energy]\label{equation_convergence_energie_thomas_fermi}
		We have 
		\begin{equation}
			\liminf_{N\to + \infty} E_N^{\mathrm{TF}} \geq E^{\mathrm{TF}} + o(1).
		\end{equation}
	\end{lemma}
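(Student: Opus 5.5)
We need $\liminf_N E_N^{\mathrm{TF}} \geq E^{\mathrm{TF}}$. The difference between $\mathcal{E}_N^{\mathrm{TF}}$ and $\mathcal{E}^{\mathrm{TF}}$ lies only in the repulsive term, $\frac12\int (w_N^+*\rho)\rho$ versus $I_{w^+}\int\rho^2$. The plan is to take near-minimizers $\rho_N$ of $\mathcal{E}_N^{\mathrm{TF}}$, extract compactness from uniform bounds, and pass to the limit. The key structural input is $\widehat{w^+}\ge 0$, which makes the repulsive term convex and lower semi-continuous under weak convergence. We also need that, for fixed $\rho$, it converges to $I_{w^+}\int\rho^2$.

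\textbf{Step 1: uniform bounds.} Fix $\rho_N\ge0$ with $\int\rho_N=1$ and $\mathcal{E}_N^{\mathrm{TF}}[\rho_N]\le E_N^{\mathrm{TF}}+1/N$. Since $\widehat{w^+}\ge0$, we have $\int (w_N^+*\rho)\rho=(2\pi)^{d/2}\int\widehat{w^+}(\xi/N^\beta)|\widehat\rho(\xi)|^2\ge 0$. Dropping this term gives
\[
\mathcal{E}_N^{\mathrm{TF}}[\rho_N]\ge c_{\mathrm{TF}}\int\rho_N^{1+2/d}+\int V\rho_N-I_{w^-}\int\rho_N^2 .
\]
This is bounded below and coercive exactly as in Remark \ref{remark_constraint_dimension}: for $d=2$ use $I_{w^-}<c_{\mathrm{TF}}$, and for $d=1$ use Young's inequality $\rho^2\le \epsilon\rho^3+C_\epsilon\rho$. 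Testing with a fixed $\rho$ and using $\int(w_N^+*\rho)\rho\le \|w^+\|_{L^1}\|\rho\|_2^2$ shows $E_N^{\mathrm{TF}}$ is bounded above. Hence $\int\rho_N^{1+2/d}$ and $\int V\rho_N$ are uniformly bounded, and $\frac12\int(w_N^+*\rho_N)\rho_N$ is bounded too.

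\textbf{Step 2: compactness and the limit.} Since $V\to\infty$, the sequence $(\rho_N)$ is tight. It is bounded in $L^{1+2/d}$, which equals $L^2$ for $d=2$ and is $L^3\cap L^1\subset L^2$ for $d=1$. Up to extraction, $\rho_N\rightharpoonup\rho$ weakly in $L^2\cap L^{1+2/d}$ with $\int\rho=1$ (by tightness) and $\rho\ge 0$. By Fatou and weak lower semi-continuity, $\liminf\int V\rho_N\ge\int V\rho$.

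The nonconvex pair $c_{\mathrm{TF}}\int\rho^{1+2/d}-I_{w^-}\int\rho^2$ is the obstacle. It is handled by merging it with the repulsion. Write
\[
\tfrac12\int(w_N^+*\rho_N)\rho_N - I_{w^-}\int\rho_N^2=\tfrac12\int (w_N^+*\rho_N)\rho_N - I_{w^+}\int\rho_N^2 + I_w\int\rho_N^2 .
\]
In Fourier space the first difference equals $\frac{(2\pi)^{d/2}}2\int(\widehat{w^+}(\xi N^{-\beta})-\widehat{w^+}(0))|\widehat{\rho_N}|^2$. This is where the gradient assumption $\nabla w^+\in L^{1+d/2}$ enters: via \eqref{equation_resultat_intermediaire}-type estimates it should give $\|w_N^+*\rho_N-2I_{w^+}\rho_N\|$ small once $\rho_N$ has some uniform regularity.

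Regularity of near-minimizers is not available in general. So instead of controlling this error, I would first replace $\rho_N$ by $\rho_N*\chi_\delta$ for a fixed mollifier $\chi_\delta$. The convex terms only decrease under convolution (Jensen), $\int V(\rho_N*\chi_\delta)$ differs by $o_\delta(1)$ using Assumption \ref{assumption_V}, and the $-I_{w^-}\int\rho^2$ term changes by at most $\|\rho_N-\rho_N*\chi_\delta\|_2^2$-type quantities.

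\textbf{The main obstacle.} That last $L^2$ term is not obviously small without strong compactness. A cleaner route, which I would try first, avoids mollifying $\rho_N$ and instead treats the repulsion from below directly. Since $\widehat{w^+}\ge0$ and $\widehat{w^+}$ is continuous with $\widehat{w^+}(\xi N^{-\beta})\to\widehat{w^+}(0)$ pointwise, for every fixed $R$
\[
\int(w_N^+*\rho_N)\rho_N\ge (2\pi)^{d/2}\int_{|\xi|\le R}\widehat{w^+}(\xi N^{-\beta})|\widehat{\rho_N}|^2 .
\]
Because $\widehat{\rho_N}\to\widehat\rho$ locally uniformly (by tightness), the liminf of the right side is at least $2I_{w^+}\int_{|\xi|\le R}|\widehat\rho|^2$. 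Letting $R\to\infty$ yields $\liminf\frac12\int(w_N^+*\rho_N)\rho_N\ge I_{w^+}\int\rho^2$.

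It remains to handle $c_{\mathrm{TF}}\int\rho^{1+2/d}-I_{w^-}\int\rho^2$ along weak limits. For $d=2$ this combination is $(c_{\mathrm{TF}}-I_{w^-})\int\rho^2$, which is convex and hence weakly lower semi-continuous, so the proof concludes with $\liminf E_N^{\mathrm{TF}}\ge\mathcal{E}^{\mathrm{TF}}[\rho]\ge E^{\mathrm{TF}}$. For $d=1$ the combination is not lower semi-continuous, and this is the hardest point. There I would use the relaxed functional of Appendix \ref{section_appendixe_2}, namely the lower convex envelope $\mathcal{E}^{\mathrm{TF}}_{J_0}$, which is weakly lower semi-continuous. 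Applying it to the local part, including the repulsive $I_{w^+}\int\rho^2$ obtained above (possibly by using a relaxation adapted to the coefficient $I_w$ instead of $-I_{w^-}$, after splitting off $(1-\eta)$ of the kinetic term as in the proof of Lemma \ref{lemma_limit_df_fatou_1}), and invoking Corollary \ref{corollary_same_minimizers} so that $E^{\mathrm{TF}}_{J_0}=E^{\mathrm{TF}}$, concludes the proof.
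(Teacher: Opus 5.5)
\textbf{Case $d=2$.} Your argument here is correct. It differs from the paper only in how you get $\liminf_N\frac12\int(w_N^+*\rho_N)\rho_N\ge I_{w^+}\int\rho^2$. You discard the frequencies $|\xi|>R$ (using $\widehat{w^+}\ge0$) and pass to the limit on $|\xi|\le R$ through the convergence of $\widehat{\rho_N}$ given by tightness. The paper instead pairs $\rho_N$ with $w_N^+*\rho_\infty$, which converges strongly in $L^2$ to $2I_{w^+}\rho_\infty$, and applies Cauchy--Schwarz for the nonnegative form $(f,g)\mapsto\int(w_N^+*f)g$. Both routes use $\widehat{w^+}\ge0$ in the same way. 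Yours has the advantage of making explicit that the repulsion only sees frequencies $\lesssim N^{\beta}$, and that is exactly what matters in $d=1$.

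\textbf{Case $d=1$.} Here there is a genuine gap, and it cannot be closed. Your separate lower semicontinuity arguments only give
\[
\liminf_{N\to+\infty}\mathcal{E}_N^{\mathrm{TF}}[\rho_N]\ \ge\ \mathcal{F}[\rho]:=\int_{\mathbb{R}}\big(J(\rho)-\alpha\rho\big)+I_{w^+}\int_{\mathbb{R}}\rho^2+\int_{\mathbb{R}}V\rho ,
\]
where $J$, $\alpha$ and $\rho_\alpha=I_{w^-}/(2c_{\mathrm{TF}})$ are built from $I_{w^-}$. Three things go wrong from here.

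\begin{itemize}
\item A ``relaxation adapted to $I_w$'' cannot be extracted from this bound. The function $x\mapsto J(x)-\alpha x+I_{w^+}x^2$ is a convex minorant of $c_{\mathrm{TF}}x^3+I_wx^2$, so it lies below the convex envelope of that function, not above it.
\item Corollary~\ref{corollary_same_minimizers} does not apply to $\mathcal{F}$. On $\{0<\rho<\rho_\alpha\}$ its Euler--Lagrange equation is $2I_{w^+}\rho+V-\alpha=\lambda$, which does not force $V$ to be constant.
\item In fact $\mathcal{E}^{\mathrm{TF}}[\rho]-\mathcal{F}[\rho]=\int_{\{0<\rho<\rho_\alpha\}}c_{\mathrm{TF}}\rho(\rho-\rho_\alpha)^2$. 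This is strictly positive at a Thomas--Fermi minimizer $\rho_{\mathrm{TF}}$ as soon as $\rho_{\mathrm{TF}}$ takes values in $(0,\rho_\alpha)$ on a set of positive measure. That happens, for example, with $V(x)=x^2$ and $I_w\ge0$, where $\rho_{\mathrm{TF}}$ is continuous and vanishes at the edge of its support.
\end{itemize}

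\textbf{Why the inequality itself fails.} $E_N^{\mathrm{TF}}$ really does drop below $E^{\mathrm{TF}}$. Cut $\mathbb{R}$ into cells of length $\ell_N=N^{-2\beta}$ and let $m$ be the average of $\rho_{\mathrm{TF}}$ on a cell. If $m\ge\rho_\alpha$, set $\rho_N=m$ on that cell. Otherwise set $\rho_N=\rho_\alpha$ on a fraction $m/\rho_\alpha$ of the cell and $0$ elsewhere.
\begin{itemize}
\item The local terms then converge to $\int(J(\rho_{\mathrm{TF}})-\alpha\rho_{\mathrm{TF}})$, and $\int V\rho_N\to\int V\rho_{\mathrm{TF}}$.
\item The deviation $g_N$ of $\rho_N$ from its cell averages has zero mean on each cell, so $|\widehat{g_N}(\xi)|\lesssim\ell_N|\xi|$.
\item Splitting the Fourier integral at $|\xi|=KN^{\beta}$ gives $\int(w_N^+*g_N)g_N\lesssim K^3N^{-\beta}+\sup_{|\eta|\ge K}\widehat{w^+}(\eta)\to0$ for a slowly growing $K$.
\item The cross term is controlled by Cauchy--Schwarz for the nonnegative form.
\end{itemize}
Hence $\limsup_N E_N^{\mathrm{TF}}\le\mathcal{F}[\rho_{\mathrm{TF}}]<E^{\mathrm{TF}}$. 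A local attraction combined with a repulsion that is blind to scales $\ll N^{-\beta}$ rewards microstructure. So in $d=1$ with $I_{w^+},I_{w^-}>0$ the stated inequality fails. This is also why your first idea, merging the repulsion into the local term, cannot work: along such sequences $\frac12\int(w_N^+*\rho_N)\rho_N-I_{w^+}\int\rho_N^2$ is of order one.

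\textbf{The paper's own argument.} The paper's $d=1$ proof breaks at the same point. Its $f_\alpha$ has a kink at $\rho_\alpha/2$, where the Euler--Lagrange relation is only an inclusion and does not force $V$ to be constant. So $E^{\mathrm{TF}}_J=E^{\mathrm{TF}}$ does not follow, and indeed $E^{\mathrm{TF}}_J\le\mathcal{F}[\rho_{\mathrm{TF}}]<E^{\mathrm{TF}}$.
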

	\begin{proof}
		Let us treat separately the cases $d=1$ and $d = 2$. \\~\\
		\textbf{Case $d=2$.} Let $\rho_N\in L^1(\R^d)$ be such that\footnote{We could take minimizers of $\mathcal{E}_N^{\mathrm{TF}}$ -- they exist -- but we do not need to, as shown in the proof.}
		\begin{equation}
			\rho_N \geq0,~~~~~~~~\int_{\R^d}\rho_N = 1,~~~~~~~~\mathcal{E}_N^{\mathrm{TF}}[\rho_N] = E_N^{\mathrm{TF}} + o(1).
		\end{equation} 
	Then, the energy of $\rho_N$ is bounded: by Young's inequality for convolutions, we have, for $\rho_{\mathrm{TF}}$ minimizer of 
	\begin{equation}\label{equation_inegalites_alasuite}
		\mathcal{E}_N^{\mathrm{TF}}[\rho_N] \leq \mathcal{E}_N^{\mathrm{TF}}[\rho_{\mathrm{TF}}]+o(1) \leq \mathcal{E}^{\mathrm{TF}}[\rho_{\mathrm{TF}}]+o(1) = E^{\mathrm{TF}} + o(1).
	\end{equation}
	Then, since $c_{\mathrm{TF}}>I_{w^-}$, $(\rho_N)$ is bounded in $L^2(\R^d)$. Moreover,
	\begin{equation}
		\int_{\R^d} V\rho_N \leq C
	\end{equation}
	and hence, up to a subsequence, $\rho_N$ converges weakly in $L^2(\R^d)$ to a positive density $\rho_{\infty}$ of integral 1, and we have
	\begin{equation}\label{equation_liminf_d2_immediat}
		\liminf_{N\to+\infty} \int_{\R^d} \rho_N^2 \geq \int_{\R^d} \rho_{\infty}^2~~~~~\text{and}~~~~~\liminf_{N\to+\infty} \int_{\R^d} V\rho_N \geq \int_{\R^d} V\rho_{\infty}.
	\end{equation}
	Let's assume for the moment that 
	\begin{equation}\label{equation_liminf_d2_pas_immediat}
		\liminf_{N\to + \infty}\frac{1}{2}\int_{\R^d}\big(w_N^+*\rho_N\big)\rho_N \geq I_{w^+}\int_{\R^d}\rho_{\infty}^2.
	\end{equation}
	Then, using \eqref{equation_liminf_d2_immediat} and \eqref{equation_liminf_d2_pas_immediat}, we have
	\begin{equation}
		\liminf \mathcal{E}_N^{\mathrm{TF}}[\rho_N] \geq \mathcal{E}^{\mathrm{TF}}[\rho_{\infty}] \geq E^{\mathrm{TF}}.
	\end{equation}
	In order to conclude the proof, we have to check that \eqref{equation_liminf_d2_pas_immediat} holds true. When $N\to + \infty$, we have the following convergences\footnote{Respectively weak and strong in $L^2(\R^d)$.}
	\begin{equation}
		\rho_N \overset{L^2}{\rightharpoonup}\rho_{\infty}~~~~~~~~\text{and}~~~~~~~~w^+_N * \rho_{\infty}\overset{L^ 2}{\to} \left(\int_{\R^d}w^+\right) \rho_{\infty}.
	\end{equation}
	Thus,
	\begin{equation}\label{equation_interaction_singuliere_limite_interaction_N}
		I_{w^+}\int_{\R^d}\rho_{\infty}^2 = \lim_{N\to + \infty}\frac{1}{2}\int_{\R^d} \big(w^+_N*\rho_{\infty}\big)\rho_N.
	\end{equation}
	For a given $N$, as $\widehat{w^+} \geq 0$, we have
	\begin{equation}\label{equation_pseudo_cauchy_schwarz}
		\int_{\R^d} \big(w^+_N*\rho_{\infty}\big)\rho_N \leq \sqrt{\int_{\R^d} \big(w^+_N*\rho_{\infty}\big)\rho_{\infty}}\sqrt{\int_{\R^d} \big(w^+_N*\rho_N\big)\rho_N}.
	\end{equation}
	Then, \eqref{equation_interaction_singuliere_limite_interaction_N} and \eqref{equation_pseudo_cauchy_schwarz} imply \eqref{equation_liminf_d2_pas_immediat}.\\~\\
	\textbf{Case $d=1$.} In this case, we have to use the relaxed functionals introduced in Appendix \ref{section_appendixe_1}. In particular, we use the notation\footnote{With $w$ replaced by $w^-$.} introduced in Definition \ref{definition_local_energy_appendix}, Lemma \ref{lemma_min_ealpha} and \eqref{equation_def_J}. Let
	\begin{equation}
		\mathcal{E}_{N, J}^{\mathrm{TF}}[\rho] = \mathcal{J}[\rho] + \int_{\R^d} V\rho + \frac{1}{2}\int_{\R^d}\big(w_N^+ * \rho\big)\rho - \alpha\int_{\R^d}\rho := \int_{\R^d} J(\rho) + \int_{\R^d} V\rho + \frac{1}{2}\int_{\R^d}\big(w_N^+ * \rho\big)\rho - \alpha\int_{\R^d}\rho.
	\end{equation} 
	Then, for $\rho \geq 0$ of integral 1, we have
	\begin{equation}\label{equation_rapport_differentes_fonctionnelles}
		\mathcal{E}_{N, J}^{\mathrm{TF}}[\rho]\leq \mathcal{E}_{N}^{\mathrm{TF}}[\rho] \leq \mathcal{E}^{\mathrm{TF}}[\rho].
	\end{equation}
	Now, let $\rho_N \in L^1(\R^d)$ be such that
	\begin{equation}\label{equation}
		\rho_N \geq0,~~~~~~~~\int_{\R^d}\rho_N = 1,~~~~~~~~\mathcal{E}_{N, J}^{\mathrm{TF}}[\rho_N] = E_{N, J}^{\mathrm{TF}} + o(1) := \inf\left\{\mathcal{E}_{N, J}^{\mathrm{TF}}[\rho],~\rho\geq0,~\int_{\R^d}\rho= 1 \right\} + o(1).
	\end{equation} 
	Moreover, let
	\begin{equation}\label{equation_relaxed_eneergy_mixed}
		\mathcal{E}_{J}^{\mathrm{TF}}[\rho] = \mathcal{J}[\rho] + \int_{\R^d} V\rho + I_{w^+}\int_{\R^d}f_{\alpha}(\rho) - \alpha\int_{\R^d}\rho~~~~~\text{and}~~~~~E^{\mathrm{TF}}_J = \inf\bigg\{\mathcal{E}^{\mathrm{TF}}_J[\rho],~\rho\geq 0,~\int_{\R^d}\rho = 1\bigg\}
	\end{equation}
	with
	\begin{equation}
		f_{\alpha}(\rho) = \mathds{1}_{\rho_{\alpha}/2 \leq \rho \leq \rho_{\alpha}}\big(2\rho_{\alpha} \rho - \rho_{\alpha}^2\big) + \mathds{1}_{\rho\geq \rho_{\alpha}^2} \rho^2\leq \rho^2.
	\end{equation}
	We replace $\rho^2$ by $f_{\alpha}(\rho)$, that is still positive and convex, in order to keep the result of Corrolary \ref{corollary_same_minimizers}. Since 
	\begin{equation}
		\mathcal{E}_{N, J}^{\mathrm{TF}}[\rho_N]\leq E_{J}^{\mathrm{TF}},
	\end{equation}
	we have (see the proof of Proposition \ref{proposition_minimizing_problem_relaxed} for details)
	\begin{equation}
		\int_{\R^d}\rho_N^3 + \int_{\R^d} V\rho_N \leq C.
	\end{equation}
	Thus, up to extraction, $\rho_N$ converges weakly in $L^3(\R^d)$ and in $L^2(\R^d)$ to a positive density $\rho_{\infty}$ of integral 1, and
	\begin{equation}\label{equation_liminf_vrho_d1}
		\liminf_{N\to+\infty}\int_{\R^d}V\rho_N \geq \int_{\R^d}V\rho_{\infty}.
	\end{equation}
	Since $\mathcal{J}$ is $L^3$-weakly lower semi continuous, we also have
	\begin{equation}\label{equation_liminf_jcal_d1}
		\liminf_{N\to+\infty}\mathcal{J}[\rho_N] \geq \mathcal{J}[\rho_{\infty}].
	\end{equation}
	Furthermore, when $N\to+\infty$, we have the following convergences 
	\begin{equation}
		\rho_N \overset{L^2}{\rightharpoonup}\rho_{\infty}~~~~~~~~\text{and}~~~~~~~~w^+_N * \rho_{\infty}\overset{L^ 2}{\to} \left(\int_{\R^d}w^+\right) \rho_{\infty},
	\end{equation}
	hence
	\begin{equation}
		I_{w^+}\int_{\R^d}\rho_{\infty}^2 = \lim_{N\to + \infty} \frac{1}{2}\int_{\R^d}\big(w_N^+ * \rho_{\infty}\big)\rho_N \leq \liminf_{N\to+\infty} \sqrt{\frac{1}{2}\int_{\R^d}\big(w_N^+ * \rho_{\infty}\big)\rho_{\infty}}\sqrt{\frac{1}{2}\int_{\R^d}\big(w_N^+ * \rho_{N}\big)\rho_{N}}
	\end{equation}
	and thus
	\begin{equation}\label{equation_liminf_interaction_d1}
		\liminf_{N\to+\infty} \frac{1}{2}\int_{\R^d}\big(w_N^+ * \rho_{N}\big)\rho_{N} \geq I_{w^+}\int_{\R^d}\rho_{\infty}^2\geq I_{w^+}\int_{\R^d}\rho_{\infty}^2\mathds{1}_{\rho\geq \rho_{\alpha}}.
	\end{equation}
	Therefore, using \eqref{equation_rapport_differentes_fonctionnelles}, \eqref{equation_liminf_vrho_d1}, \eqref{equation_liminf_jcal_d1} and \eqref{equation_liminf_interaction_d1}, as well as a direct adaptation of Corollary \ref{corollary_same_minimizers}, we find
	\begin{equation}
		\liminf_{N\to+\infty} E_N^{\mathrm{TF}}\geq\liminf_{N\to+\infty} E_{N, J}^{\mathrm{TF}} = \liminf_{N\to+\infty}\mathcal{E}_{N, J}^{\mathrm{TF}}[\rho_N] \geq \mathcal{E}_J^{\mathrm{TF}}[\rho_{\infty}]\geq E_J^{\mathrm{TF}} = E^{\mathrm{TF}}. 
	\end{equation}
	\begin{remark}[Sign of $\widehat{w^+}$]
		The assumption on the sign of $\widehat{w^+}$ was added in order to prove \eqref{equation_liminf_d2_pas_immediat}, in a fashion inspired by \cite{lieb_thomas-fermi_1977}. 
	\end{remark}
	\end{proof}
	We can finally conclude the proof of the lower bound.
	\begin{proof}[Proof of Proposition \ref{proposition_lower_bound2}]
		Using Lemmas \ref{lemme_restriction_measures_bornees_pauli2}, \ref{lemma_singular_or_not_energy2}, \ref{lemma_energie_singuliere_bornee_thomas_fermi2} and \ref{equation_convergence_energie_thomas_fermi}, as well as the choices of parameters \eqref{equation_inclusion_xitau_thetaeta} and \eqref{equation_scaling}, and taking the limit $\varepsilon\to 0$, we have indeed
		\begin{equation}
			\frac{1}{N}\langle \Psi_N|H_N\Psi_N\rangle \geq E^{\mathrm{TF}} - o(1).
		\end{equation}
	\end{proof}
	
	\subsection{Convergence of states}
	
	Since $w^+ \geq 0$, Lemma \ref{lemma_m_tight} remains true with a repulsive part in the interaction (we can discard it in the proof). Then, Lemma \ref{lemma_convergence_averaged_measure} as well as Corollaries \ref{corollary_convergence_P} and \ref{corollary_concergence_averaged_sequence} do not depend on the interaction $w$. Moreover, Lemma \ref{lemma_properties_qn} is replaced by
	\begin{lemma}[Direct properties of $Q_N$]\label{lemma_properties_qn2}
		Using the notation introduced in Definitions \ref{definition_modified_diaconis_freedman_measure} and \ref{definition_partially_singular_semiclassical_energy}, we have
		\begin{equation}\label{equation_etf_qn2}
			E^{\mathrm{TF}} \geq \int_{\mathcal{P}(\R^{2d})} \mathcal{E}_{N, \infty}[\bar{\mu}]\d Q_N - o(1)
		\end{equation}
		and the following weak convergence
		\begin{equation}\label{equation_wcv_qn2}
			Q_N \rightharpoonup P^{\mathrm{DF}}.
		\end{equation}
	\end{lemma}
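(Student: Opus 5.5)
The plan is to mirror the proof of Lemma \ref{lemma_properties_qn}, with the fully local energy $\mathcal{E}_\infty$ replaced by the partially local energy $\mathcal{E}_{N,\infty}$.

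\textbf{Energy inequality \eqref{equation_etf_qn2}.} We start from Proposition \ref{proposition_upper_bound2}, which gives
\begin{equation*}
\frac{1}{N}\langle\Psi_N|H_N\Psi_N\rangle = \frac{E(N)}{N} + o(1) \leq E^{\mathrm{TF}} + o(1).
\end{equation*}
Next we chain the lower bounds: Lemma \ref{lemme_restriction_measures_bornees_pauli2}, followed by Lemma \ref{lemma_singular_or_not_energy2}. With the parameter choice \eqref{equation_scaling}, the error terms
\begin{equation*}
N^{d\beta}n^{2d}P_N^{\mathrm{DF}}(\Gamma_\varepsilon^j),\qquad N^{3d\beta/2}\tau^{-1},\qquad \tau^2N^{d\beta}/\inf(L^4,L^{2s})
\end{equation*}
are all $o(1)$, by Proposition \ref{proposition_proba_violating_pauli_principle} and the computation in the proof of Proposition \ref{proposition_lower_bound}. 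This yields
\begin{equation*}
E^{\mathrm{TF}} \geq \big(1-o(1)\big)\int_{\Xi_\tau\setminus\Gamma_\varepsilon}\mathcal{E}_{N,\infty}[\bar\mu]\,\d P_N^{\mathrm{DF}} - o(1).
\end{equation*}
The integral is $\int \mathcal{E}_{N,\infty}[\bar\mu]\,\d Q_N$ by Definition \ref{definition_modified_diaconis_freedman_measure}. To absorb the factor $(1-o(1))$ we need this integral to be bounded in absolute value. The lower bound follows from Lemma \ref{lemma_energie_singuliere_bornee_thomas_fermi2} combined with Lemma \ref{equation_convergence_energie_thomas_fermi}: using the inclusion \eqref{equation_inclusion_xitau_thetaeta}, $\mathcal{E}_{N,\infty}[\bar\mu]\geq E_N^{\mathrm{TF}}-C(\varepsilon+\eta)$ is bounded below uniformly in $N$. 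The displayed inequality then gives the upper bound. Hence $o(1)\cdot\int\mathcal{E}_{N,\infty}[\bar\mu]\,\d Q_N = o(1)$, which proves \eqref{equation_etf_qn2}.

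\textbf{Weak convergence \eqref{equation_wcv_qn2}.} Corollary \ref{corollary_convergence_P} applies: Lemma \ref{lemma_m_tight} still holds because $w^+\geq0$ can be discarded. It gives $P_N^{\mathrm{DF}}\rightharpoonup P^{\mathrm{DF}}$ up to extraction. By construction, $P_N^{\mathrm{DF}}-Q_N$ is a nonnegative measure of total mass at most $P_N^{\mathrm{DF}}(\Gamma_\varepsilon)+P_N^{\mathrm{DF}}(\Xi_\tau^c)$. By \eqref{equation_borne_proba_Gamma} and Proposition \ref{proposition_proba_violating_pauli_principle}, the first term is at most $Cn^{2d}e^{-cN^\delta}\to0$. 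The second is controlled by \eqref{equation_borne_proba_Xi}, which gives $C\tau^{-1}N^{d\beta/2}+o(1)$ and tends to $0$ under \eqref{equation_scaling}.

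The estimate \eqref{equation_borne_proba_Xi} uses only the a priori bound of Lemma \ref{lemma_apriori}. That bound relied on Lieb--Thirring together with controlling the interaction; here the repulsive part $w_N^+$ is nonnegative and can be dropped from below, so the bound persists. Therefore $\|P_N^{\mathrm{DF}}-Q_N\|_{\mathrm{TV}}\to0$. Testing against bounded continuous functions then gives $Q_N\rightharpoonup P^{\mathrm{DF}}$.

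\textbf{Main obstacle.} The delicate step is justifying that the a priori and semi-classical estimates (Lemmas \ref{lemma_apriori} and \ref{lemma_semi_classical_approximation_energy}) survive for $w^+$, which is only assumed to satisfy $\nabla w^+\in L^{1+d/2}$. The averaging argument of Lemma \ref{lemma_averaging} used $\nabla w\in L^\infty$. For $w^+$ I would first try to avoid that bound: since $\widehat{w^+}\geq0$, averaging should not increase the repulsive energy up to small errors. Failing that, I would simply take this as part of Lemma \ref{lemme_restriction_measures_bornees_pauli2}, which is already stated.
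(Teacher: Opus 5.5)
Your proposal is correct and follows the paper's proof. As in the paper, the energy inequality comes from chaining Proposition \ref{proposition_upper_bound2} with Lemmas \ref{lemme_restriction_measures_bornees_pauli2} and \ref{lemma_singular_or_not_energy2}, and the weak convergence comes from \eqref{equation_borne_proba_Gamma}, \eqref{equation_borne_proba_Xi} and Corollary \ref{corollary_convergence_P}; your boundedness argument for absorbing the $(1-o(1))$ factor and the estimate $\|P_N^{\mathrm{DF}}-Q_N\|_{\mathrm{TV}}\to 0$ just make explicit what the paper leaves implicit, while the concern in your last paragraph (averaging the repulsive part without $\nabla w^+\in L^\infty$) bears on the cited Lemma \ref{lemme_restriction_measures_bornees_pauli2}, which the paper also takes as given.
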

	\begin{proof}
		Equation \eqref{equation_etf_qn} comes from Proposition \ref{proposition_upper_bound2}, Lemmas \ref{lemme_restriction_measures_bornees_pauli2} and \ref{lemma_singular_or_not_energy2}, as well as the notation \eqref{equation_definition_qn}. Equation \eqref{equation_wcv_qn} comes from \eqref{equation_borne_proba_Gamma}, \eqref{equation_borne_proba_Xi}  and Corollary \ref{corollary_convergence_P}.
	\end{proof}
	
	\begin{lemma}[The Diaconis-Freedman measure charges minimizers of the Vlasov energy, $d=2$]\label{lemma_limit_df_fatou_2d2}
		Let $d=2$. Let $V$ and $w$ satisfy Assumptions \ref{assumption_V} and \ref{assumption_w2}. The limit Diaconis-Freedman measure $P^{\mathrm{DF}}$ only charges $(2\pi)^2\mu$ where $\mu$ is the minimizer of the Vlasov energy $\mathcal{E}^{\mathrm{V}}$.
	\end{lemma}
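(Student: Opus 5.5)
We follow the strategy of Lemma \ref{lemma_limit_df_fatou_2}, keeping the repulsive part non-local until the weak limit, where it is handled through the positivity of $\widehat{w^+}$ as in Lemma \ref{equation_convergence_energie_thomas_fermi}.

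\textbf{Step 1: a positive lower bound before passing to the limit.} Fix $\varepsilon>0$ small and $\mu\in\Gamma_\varepsilon^c$. The bath-tube principle gives
\begin{equation}
(1+\varepsilon)\iint|p|^2\d\overline{\mu}\geq c_{\mathrm{TF}}\int\rho_{(2\pi)^2\overline\mu}^2 .
\end{equation}
Discarding the nonnegative term $\frac12\int(w_N^+*\rho)\rho$ would lose information, so we keep it and write
\begin{equation}
\mathcal{E}_{N,\infty}[\overline\mu]\geq \iint\Big(\big(1-\tfrac{(1+\varepsilon) I_{w^-}}{c_{\mathrm{TF}}}\big)|p|^2+V\Big)\d\overline\mu+\tfrac12\int\big(w_N^+*\rho_{(2\pi)^2\overline\mu}\big)\rho_{(2\pi)^2\overline\mu}=:\mathcal{E}^{\varepsilon}_N[\overline\mu].
\end{equation}
By \eqref{hypothese_w2} and since $\widehat{w^+}\ge0$, $\mathcal{E}^\varepsilon_N$ is nonnegative for $\varepsilon$ small. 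Lemma \ref{lemma_properties_qn2} then yields
\begin{equation}
E^{\mathrm{TF}}\geq\liminf_N\int\mathcal{E}^\varepsilon_N[\overline\mu]\d Q_N(\mu).
\end{equation}

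\textbf{Step 2: a Fatou-type lower bound with $N$-dependent integrand.} We apply the first form of Lemma \ref{lemma_fatou_bizarre}, which allows the integrand to depend on $N$ through the $\liminf_{N,\nu\rightharpoonup\mu}$. We therefore need the following $\Gamma$-liminf property: if $\nu_N\rightharpoonup\mu$ with bounded energy and $\nu_N\in\Gamma_\varepsilon^c$, then
\begin{equation}
\liminf_N\mathcal{E}^\varepsilon_N[\overline{\nu_N}]\geq \iint\big(c_\varepsilon|p|^2+V\big)\d\mu+I_{w^+}\int\rho_{(2\pi)^2\mu}^2 .
\end{equation}
For the first term we use weak lower semicontinuity together with Corollary \ref{corollary_concergence_averaged_sequence}. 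For the interaction term, the Pauli bound $\overline{\nu_N}\le(1+\varepsilon)(2\pi)^{-2}$ and the bound on $\iint|p|^2\d\overline{\nu_N}$ make the densities $\rho_N=\rho_{(2\pi)^2\overline{\nu_N}}$ bounded in $L^2(\R^2)$. Since $V$ is confining, they converge weakly in $L^2$ to $\rho_{(2\pi)^2\mu}$, and this identification follows from the weak convergence of the measures. The pseudo Cauchy--Schwarz argument \eqref{equation_pseudo_cauchy_schwarz}--\eqref{equation_interaction_singuliere_limite_interaction_N}, which uses $\widehat{w^+}\ge0$ and $w_N^+*\rho_\infty\to(\int w^+)\rho_\infty$ strongly in $L^2$, then gives the liminf bound for the interaction term.

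Restricting to sets $\Lambda_A$ of bounded energy, as in the proof of Lemma \ref{lemma_limit_df_fatou_1} with \eqref{convergence_QN_LambdaA}, and letting $A\to\infty$, we obtain
\begin{equation}
E^{\mathrm{TF}}\geq\int\Big[\iint(c_\varepsilon|p|^2+V)\d\mu+I_{w^+}\int\rho_{(2\pi)^2\mu}^2\Big]\d P^{\mathrm{DF}}(\mu).
\end{equation}
We then let $\varepsilon\to0$ by monotone convergence.

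\textbf{Step 3: identifying the limit.} By Corollary \ref{corollary_convergence_P}, $P^{\mathrm{DF}}$ charges only measures with $\mu\le(2\pi)^{-2}$. For such $\mu$ the bath-tube principle gives
\begin{equation}
\iint(1-\tfrac{I_{w^-}}{c_{\mathrm{TF}}})|p|^2\d\mu\geq(c_{\mathrm{TF}}-I_{w^-})\int\rho^2 .
\end{equation}
Hence the bracket is at least $\mathcal{E}^{\mathrm{TF}}[\rho_{(2\pi)^2\mu}]\ge E^{\mathrm{TF}}$, and all inequalities in the chain are equalities $P^{\mathrm{DF}}$-almost surely. Two consequences follow:
\begin{itemize}
\item $\rho_{(2\pi)^2\mu}$ minimizes $\mathcal{E}^{\mathrm{TF}}$, which in $d=2$ is strictly convex with coefficient $c_{\mathrm{TF}}+I_w>0$, so the minimizer is unique.
\item Equality in the bath-tube step forces $\mu=(2\pi)^{-2}\mathds 1_{|p|\le\sqrt{4\pi\rho(x)}}$, that is, $(2\pi)^2\mu$ is the unique Vlasov minimizer.
\end{itemize}

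\textbf{Main obstacle.} The delicate step is the $N$-dependent liminf of the repulsive term in Step 2. One must check that weak convergence of the averaged measures, together with the uniform $L^2$ bound from the Pauli principle, really gives weak $L^2$ convergence of the spatial densities. One must also check that Lemma \ref{lemma_fatou_bizarre} applies to an integrand depending on $N$. If that form proves problematic, the fallback is to obtain first the pointwise $\Gamma$-liminf above and then rerun the proof of \cite[Theorem 1.1]{Lemme_Fatou} directly.
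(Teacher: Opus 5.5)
Your proposal is correct and follows essentially the paper's proof. The shared ingredients are the bath-tube bound on $\Gamma_\varepsilon^c$, Fatou's lemma for weakly converging probabilities with an $N$-dependent integrand, the weak $L^2$ identification of the averaged densities via Corollary~\ref{corollary_concergence_averaged_sequence}, the pseudo Cauchy--Schwarz argument based on $\widehat{w^+}\geq 0$, and the equality case with uniqueness from strict convexity.

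The only difference is organizational. You keep the repulsive term inside one nonnegative functional and apply Fatou once, which already bounds the kinetic energy of $\overline{\nu_N}$ along any sequence with finite liminf, so the $\Lambda_A$ truncation becomes optional. The paper instead runs separate Fatou steps for the kinetic--potential part and the repulsive part. Your coefficient $1-(1+\varepsilon)I_{w^-}/c_{\mathrm{TF}}$ is also the correct form of the one written in the paper.
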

	\begin{proof}
		Let $\varepsilon > 0$ and $\mu\in \Gamma_{\varepsilon}^c$, we have
		\begin{equation}\label{equation_lien_deux_energies_cinetiques_epsilon}
			(1+\varepsilon)\iint_{\R^2\times\R^2} |p|^2 \d \overline{\mu} \geq c_{\mathrm{TF}}\int_{\R^2}\rho^2,
		\end{equation}
		with
		\begin{equation}
			\rho := \rho_{(2\pi)^2\overline{\mu}},
		\end{equation}
		and hence
		\begin{equation}
			\mathcal{E}_{N, \infty}[\overline{\mu}] - \frac{1}{2}\int_{\R^2}\big(w^+_N *\rho\big)\rho \geq \iint_{\R^2\times\R^2} \Bigg(\bigg(1 - \frac{I_w}{(1+\varepsilon)c_{\mathrm{TF}}}\bigg)|p|^2+V(x)\Bigg)\d \overline{\mu} =: \mathcal{E}^{\varepsilon}\big[\overline{\mu}\big].
		\end{equation}
		For $\varepsilon$ small enough, we have
		\begin{equation}
			1 - \frac{I_w}{(1+\varepsilon)c_{\mathrm{TF}}} > 0,
		\end{equation}
		and then $\mathcal{E}^{\varepsilon}$ is a positive and weakly lower semi-continuous functional. Hence, by Lemma \ref{lemma_properties_qn} and Lemma \ref{lemma_fatou_bizarre}, we have
		\begin{equation}
			\liminf_{N\to+\infty} \int_{\mathcal{P}(\R^2)} \mathcal{E}^{\varepsilon}\big[\overline{\mu}\big]\d Q_N(\mu) \geq \int_{\mathcal{P}(\R^2)}\mathcal{E}^{\varepsilon}[\mu]\d P^{\mathrm{DF}}(\mu).
		\end{equation}
		Let $(\mu_N)\in (\Lambda_A)^{\N}$ converging weakly as measures toward $\mu\in \Lambda_A$, with the notation \eqref{equation_def_lambdaa}. Moreover, let
		\begin{equation}
			\rho_N = \rho_{(2\pi)^2\overline{\mu_N}}.
		\end{equation} 
		Then, by the definition of $\Lambda_A$ and \eqref{equation_lien_deux_energies_cinetiques_epsilon}, $(\rho_N)$ is bounded in $L^2(\R^2)$, and hence converges weakly up to extraction. Since we also have convergence of $\overline{\mu_N}$ to $\mu$ -- because of Corollary \ref{corollary_concergence_averaged_sequence} -- $(\rho_N)$ converges\footnote{Without extraction.} weakly in $L^2(\R^2)$ to $\rho_{(2\pi)^2\mu}$ (test against $f\in C^{\infty}(\R^2, \R)$). Thus, as for \eqref{equation_liminf_d2_pas_immediat}, we have, using Fatou's lemma for weakly converging probabilities (Lemma \ref{lemma_fatou_bizarre}) and \eqref{convergence_QN_LambdaA},
		\begin{align}\label{equation_liminf_interaction_repulsive}
			 \liminf_{N\to + \infty} \int_{\Lambda_A} \frac{1}{2}\left(\int_{\R^2}\big(w^+_N*\rho_{(2\pi)^2\overline{\mu}}\big)\rho_{(2\pi)^2\overline{\mu}}\right) \d Q_N(\mu) &\geq \int_{\Lambda_A}\underset{\substack{N\to + \infty\\ \mu_N\rightharpoonup\mu}}{\liminf}\frac{1}{2}\left(\int_{\R^2}\big(w^+_N*\rho_{(2\pi)^2\overline{\mu_N}}\big)\rho_{(2\pi)^2\overline{\mu_N}}\right) \d Q_N(\mu)\notag\\
			 &\geq \int_{\Lambda_A}I_{w^+}\left(\int_{\R^2}\big(\rho_{(2\pi)^2\mu}\big)^2\right) \d P^{\mathrm{DF}}(\mu).
		\end{align}
		By Markov's inequality, we know that
		\begin{equation}
			P^{\mathrm{DF}}(\Lambda_A) \to 1~~\mathrm{when}~~ A\to + \infty, 
		\end{equation}
		and hence
		\begin{equation}
			\liminf_{N\to + \infty} \int_{\mathcal{P}(\R^2)} \frac{1}{2}\left(\int_{\R^2}\big(w^+_N*\rho_{(2\pi)^2\overline{\mu}}\big)\rho_{(2\pi)^2\overline{\mu}}\right) \d Q_N(\mu) \geq \int_{\mathcal{P}(\R^2)}I_{w^+}\left(\int_{\R^2}\big(\rho_{(2\pi)^2\mu}\big)^2\right) \d P^{\mathrm{DF}}(\mu).
		\end{equation}	
		Then, we can take the limit $\varepsilon \to 0$ to find
		\begin{equation}
			E^{\mathrm{TF}}\geq \int_{\mathcal{P}(\R^2)} \Bigg\{\iint_{\R^d\times\R^d} \Bigg(\bigg(1 - \frac{I_{w^-}}{c_{\mathrm{TF}}}\bigg)|p|^2+V(x)\Bigg)\d \mu + I_{w^+}\left(\int_{\R^2}\big(\rho_{(2\pi)^2\mu}\big)^2\right) \Bigg\} \d P^{\mathrm{DF}}(\mu).
		\end{equation}
		Since $P^{\mathrm{DF}}$ only charges probability measures that satisfy the Pauli principle \eqref{equation_pauli_principle_P}, we have
		\begin{multline}\label{equation_d_2_a_la_fin2}
			E^{\mathrm{TF}}\geq\int_{\mathcal{P}(\R^2)} \Bigg\{\iint_{\R^d\times\R^d} \Bigg(\bigg(1 - \frac{I_{w^-}}{c_{\mathrm{TF}}}\bigg)|p|^2+V(x)\Bigg)\d \mu + I_{w^+}\left(\int_{\R^2}\big(\rho_{(2\pi)^2\mu}\big)^2\right) \Bigg\} \d P^{\mathrm{DF}}(\mu)\\ \geq \int_{\mathcal{P}(\R^2)}\mathcal{E}^{\mathrm{TF}}\big[(2\pi)^2\rho_{\mu}\big]\geq E^{\mathrm{TF}}.
		\end{multline}
		Thus, we have equality in all the inequalities of \eqref{equation_d_2_a_la_fin2}. In particular, $P^{\mathrm{DF}}$ only charges the measure $\mu$ such that $(2\pi)^d\rho_{\mu}$ is the unique minimizer of $\mathcal{E}^{\mathrm{TF}}$, and such that 
		\begin{equation}
			\mu(x, p) = (2\pi)^{-2}\mathds{1}_{|p|\leq \sqrt{4\pi\rho(x)}}
		\end{equation}
		with $\rho$ this minimizer of $\mathcal{E}^{\mathrm{TF}}$. Then $(2\pi)^2\mu$ is the minimizer of the Vlasov energy $\mathcal{E}^{\mathrm{V}}$.
	\end{proof}
	
	\begin{lemma}[The Diaconis-Freedman measure charges minimizers of the Vlasov energy, $d=1$]\label{lemma_limit_df_fatou_1_2}
		Let $d=1$. Let $V$ and $w$ satisfy Assumptions \ref{assumption_V}, \ref{assumption_V2} and \ref{assumption_w2}. The limit Diaconis-Freedman measure $P^{\mathrm{DF}}$ only charges probability measures $\mu$ such that $(2\pi) \mu$ is a minimizer of the Vlasov energy $\mathcal{E}^{\mathrm{V}}$.
	\end{lemma}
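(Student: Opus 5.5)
The plan is to merge the proof of Lemma \ref{lemma_limit_df_fatou_1} (relaxed functionals in $d=1$) with the treatment of the repulsive term in Lemma \ref{lemma_limit_df_fatou_2d2}. We start from Lemma \ref{lemma_properties_qn2}:
\begin{equation*}
E^{\mathrm{TF}} \geq \int \mathcal{E}_{N,\infty}[\overline{\mu}]\d Q_N - o(1).
\end{equation*}
For $\mu\in\Gamma_\varepsilon^c$ we split $\mathcal{E}_{N,\infty}[\overline{\mu}]$ into two parts. The first is the attractive part together with a fraction $(1-\eta)$ of the kinetic energy. As in \eqref{equation_borne_inf_en_separant_moitie_densite_moitie_husimi}, with $w$ replaced by $w^-$ in Definition \ref{definition_relaxed_energy_section4}, it is bounded below by
\begin{equation*}
\mathcal{E}^{\varepsilon,\eta}[\overline{\mu}] + \mathcal{J}_\eta[\rho_{2\pi\overline{\mu}}] - \alpha .
\end{equation*}
The second part is the repulsive term $\frac12\int (w_N^+*\rho_{2\pi\overline{\mu}})\rho_{2\pi\overline{\mu}}$, which is nonnegative. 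We restrict to $\Lambda_A$, whose complement has small $P^{\mathrm{DF}}$ and $Q_N$ measure by Markov's inequality and \eqref{convergence_QN_LambdaA}. We then apply Lemma \ref{lemma_fatou_bizarre} to each piece separately, which is legitimate since each piece is nonnegative up to the constant $\alpha$.

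For the first piece we follow the argument of Lemma \ref{lemma_limit_df_fatou_1} unchanged, using the lower semicontinuity of Lemma \ref{lemme_semi_conitnuite_inferieure_kcal}. For the repulsive piece we take $\mu_N\in\Lambda_A$ with $\mu_N\rightharpoonup\mu$ and set $\rho_N=\rho_{2\pi\overline{\mu_N}}$. On $\Lambda_A\cap\Gamma_\varepsilon^c$ the bound
\begin{equation*}
(1+\varepsilon)^2\iint|p|^2\d\overline{\mu_N}\geq c_{\mathrm{TF}}\int\rho_N^3
\end{equation*}
gives uniform bounds in $L^3\cap L^1$, hence in $L^2$. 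By Corollary \ref{corollary_concergence_averaged_sequence}, $\rho_N\rightharpoonup\rho_{2\pi\mu}$ weakly in $L^2$, without extraction. Since $\widehat{w^+}\geq0$, the Cauchy--Schwarz argument \eqref{equation_pseudo_cauchy_schwarz}, combined with $w_N^+*\rho_{2\pi\mu}\to(\int w^+)\rho_{2\pi\mu}$ strongly in $L^2$, yields
\begin{equation*}
\liminf \frac12\int (w_N^+*\rho_N)\rho_N\geq I_{w^+}\int\rho_{2\pi\mu}^2 .
\end{equation*}
Fatou's lemma (Lemma \ref{lemma_fatou_bizarre}) then gives the corresponding inequality integrated against $P^{\mathrm{DF}}$ on $\Lambda_A$. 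Letting $\varepsilon\to0$ and $A\to\infty$, we obtain
\begin{equation*}
E^{\mathrm{TF}}\geq\int\Big(\mathcal{E}^{0,\eta}[\mu]+\mathcal{J}_\eta[\rho_{2\pi\mu}]-\alpha+I_{w^+}\int\rho_{2\pi\mu}^2\Big)\d P^{\mathrm{DF}}(\mu).
\end{equation*}

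We then use that $P^{\mathrm{DF}}$ only charges measures satisfying the Pauli principle \eqref{equation_pauli_principle_P}. By the bathtub principle, $\eta\iint|p|^2\d\mu\geq\eta c_{\mathrm{TF}}\int\rho^3$, so the integrand is bounded below by a relaxed mixed Thomas--Fermi functional of $\rho_{2\pi\mu}$. This functional is of the type $\mathcal{E}^{\mathrm{TF}}_J$ in \eqref{equation_relaxed_eneergy_mixed}, and since $\rho^2\geq f_\alpha(\rho)$ it dominates it. By the direct adaptation of Corollary \ref{corollary_same_minimizers} already invoked in the proof of Lemma \ref{equation_convergence_energie_thomas_fermi}, its infimum equals $E^{\mathrm{TF}}$, and its minimizers are exactly the minimizers of $\mathcal{E}^{\mathrm{TF}}$. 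All inequalities therefore become equalities $P^{\mathrm{DF}}$-almost surely, and $\rho_{2\pi\mu}$ is a minimizer of $\mathcal{E}^{\mathrm{TF}}$.

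Equality in the kinetic bathtub step, as in \eqref{equation_ohlala_lenergie_cest_la_meme_en_vlasov_ou_pas}, together with the Pauli constraint, forces $\mu=\frac1{2\pi}\mathds{1}_{|p|\leq\pi\rho(x)}$. Hence $2\pi\mu$ minimizes $\mathcal{E}^{\mathrm{V}}$.

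The main obstacle is the identification step. The relaxed functional contains the extra nonlocal-turned-local term $I_{w^+}\int\rho^2$, which is not of the form treated in the appendix, so one must check carefully that the Corollary \ref{corollary_same_minimizers} argument still gives the same infimum and the same minimizers. This is also why $f_\alpha$ is introduced: it keeps convexity while remaining below $\rho^2$. We also need to verify that equality in the chain forces $\mathcal{J}_\eta$ to coincide with the unrelaxed cubic and quadratic terms. A secondary point is justifying the $L^2$ bound on $\rho_N$ uniformly on $\Lambda_A$ in $d=1$, via $L^1\cap L^3$ interpolation.
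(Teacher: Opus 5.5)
You follow the paper's own route. Your liminf for the repulsive term (weak $L^2$ convergence of $\rho_N$ plus the Cauchy--Schwarz inequality from $\widehat{w^+}\geq 0$) is fine. The gap is the identification step you flag, and it cannot be closed as stated.

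\textbf{The relaxation of the attractive part alone loses too much.} Write $\rho_\alpha=I_{w^-}/(2c_{\mathrm{TF}})$ and $\alpha_0=I_{w^-}^2/(4c_{\mathrm{TF}})$, so that $e_{\alpha_0}(x)=c_{\mathrm{TF}}x(x-\rho_\alpha)^2$.
\begin{itemize}
\item Even after $\eta\to0$, your limiting local energy is $\phi(x)=J(x)-\alpha_0x+I_{w^+}x^2$, with $J$ as in \eqref{equation_def_J} for $w^-$. On $[0,\rho_\alpha]$ this equals $-\alpha_0x+I_{w^+}x^2$.
\item The Thomas--Fermi local energy $c_{\mathrm{TF}}x^3-(I_{w^-}-I_{w^+})x^2$ exceeds $\phi$ by exactly $e_{\alpha_0}(x)>0$ on $(0,\rho_\alpha)$.
\item The Euler--Lagrange argument of Corollary \ref{corollary_same_minimizers} does not carry over. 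On $\{0<\rho<\rho_\alpha\}$ it now reads $2I_{w^+}\rho+V=\mathrm{const}$, which Assumption \ref{assumption_V2} does not exclude.
\item When $I_{w^+}>0$, a Thomas--Fermi minimizer is bounded below on its support only by $(I_{w^-}-I_{w^+})_+/(2c_{\mathrm{TF}})<\rho_\alpha$. Since $V$ is continuous, it takes values in $(0,\rho_\alpha)$ on a set of positive measure near the edge of its support. So the infimum of your relaxed functional is strictly below $E^{\mathrm{TF}}$.
\item Because $f_\alpha\leq\rho^2$, the same holds for \eqref{equation_relaxed_eneergy_mixed}. The paper's claim $E^{\mathrm{TF}}_J=E^{\mathrm{TF}}$ therefore fails as well, and citing it does not help.
\end{itemize}

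\textbf{The loss is already in $\mathcal{E}_{N,\infty}$, and it is not an artefact.} A density alternating between $0$ and $\rho_\alpha$ on scales between $l_x$ and $N^{-\beta}$ keeps the full local attraction, while $w_N^+$ only sees its average. Hence the inequality of Lemma \ref{lemma_properties_qn2} cannot force concentration on minimizers.

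Now feed $\rho_N=\bar\rho\,(1+a\cos(kN^\beta x))$, with bathtub momenta, into Lemma \ref{lemma_approximation_semi_classique2} and Proposition \ref{proposition_reduction_hartree_energy}. This gives $\limsup E(N)/N\leq\mathcal{E}^{\mathrm{TF}}[\bar\rho]+a^2\big(\frac32c_{\mathrm{TF}}\int\bar\rho^3+\frac14\int w(z)\cos(kz)\d z\int\bar\rho^2\big)$. Take $\bar\rho$ a Thomas--Fermi minimizer, $w^\pm$ Gaussians of equal large mass with $w^-$ narrow and $w^+$ wide, and $k$ of order one. Then $\int w(z)\cos(kz)\d z\approx-\int w^-$ and the right-hand side is $<E^{\mathrm{TF}}$. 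So in $d=1$ the mixed results (already the energy asymptotics of Theorem \ref{theorem_repulsif}) need an extra hypothesis on $w$. No reorganization of your chain can prove the statement in general.

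\textbf{A separate problem at fixed $\eta$, already present when $w^+=0$ (so in Lemma \ref{lemma_limit_df_fatou_1} too).} In Definition \ref{definition_relaxed_energy_section4} the constants are $\alpha_\eta=\alpha_0/(1-\eta)$ and $\rho_{\alpha_\eta}=\rho_\alpha/(1-\eta)$. After the bathtub bound, the attractive local part is $\eta c_{\mathrm{TF}}x^3+J_\eta(x)-\alpha_\eta x$. This is convex and lies below $c_{\mathrm{TF}}x^3-I_{w^-}x^2$, hence below its convex envelope $J(x)-\alpha_0x$. The inequality is strict on $(0,\rho_{\alpha_\eta})$; at $x=\rho_\alpha$ the deficit is $\alpha_0\rho_\alpha\eta^2/(1-\eta)$. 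So the domination you assert fails for every fixed $\eta$.

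For $w^+=0$ this can be repaired in two steps:
\begin{enumerate}
\item Apply Fatou as $\eta\to0$. This shows that $\rho_{2\pi\mu}$ is a Thomas--Fermi minimizer $P^{\mathrm{DF}}$-almost surely.
\item At fixed small $\eta$, use $\rho\geq\rho_\alpha$ on its support (so $|\supp\rho|\leq\rho_\alpha^{-1}$) and $e_\eta(x)=(1-\eta)c_{\mathrm{TF}}x(x-\rho_{\alpha_\eta})^2=O(\eta^2)$ on $[\rho_\alpha,\rho_{\alpha_\eta}]$. This gives $\int\big(\iint|p|^2\d\mu-c_{\mathrm{TF}}\int\rho_{2\pi\mu}^3\big)\d P^{\mathrm{DF}}(\mu)\leq C\eta$, hence the kinetic equality and the bathtub form.
\end{enumerate}
The order matters: on $(0,\rho_\alpha)$ the deficit is only $O(\eta)$, which would give nothing.
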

	We use relaxed functionals to recover some weak lower semi-continuity (see Definition \ref{definition_decomposition_fof_the_energy}, with $w$ replaced by $w^-$, for the notation used in the proof). Indeed, the 1D Vlasov functional is not weakly lower semi-continuous itself. See Appendix \ref{section_appendixe_2}.
	\begin{proof}[Proof of Lemma \ref{lemma_limit_df_fatou_1_2}]
		Let $\varepsilon > 0$ and $\mu \in \Gamma_{\varepsilon}^c$, we have 
		\begin{equation}
			(1+\varepsilon)^2 \iint_{\R\times\R}|p|^2 \d \overline{\mu} \geq c_{\mathrm{TF}}\int_{\R}\rho_{2\pi\overline{\mu}}^3
		\end{equation}
		and hence for $\eta\in(0,1)$ fixed, $\eta = 1/2$ for instance,
		\begin{equation}\label{equation_borne_inf_en_separant_moitie_densite_moitie_husimi2}
			\mathcal{E}_{N,\infty}[\overline{\mu}] \geq \mathcal{E}_N^{\varepsilon, \eta}[\overline{\mu}] + \mathcal{J}_{\eta}[\rho_{2\pi\overline{\mu}}] - \alpha
		\end{equation}
		with
		\begin{equation}
			\mathcal{E}_N^{\varepsilon, \eta}[\overline{\mu}] = \iint_{\R\times\R} \bigg((1-3\varepsilon)\eta |p|^2 + V(x)\bigg)\d \overline{\mu} + \frac{1}{2}\int_{\R}\big(w_N^+*\rho_{2\pi\overline{\mu}}\big)\rho_{2\pi\overline{\mu}}.
		\end{equation}
		Therefore, by Lemma \ref{lemma_properties_qn}, 
		\begin{equation}
			E^{\mathrm{TF}} \geq \liminf_{N\to+\infty} \int_{\mathcal{P}(\R)} \Big(\mathcal{E}_N^{\varepsilon, \eta}[\overline{\mu}] + \mathcal{J}_{\eta}[\rho_{2\pi\overline{\mu}}] - \alpha\Big) \d Q_N(\mu).
		\end{equation}
		Let 
		\begin{equation}\label{equation_def_lambdaaa}
			\Lambda_A = \big\{\mu\in \mathcal{P}(\R),~\mathcal{E}_{\infty}[\mu] \leq A\big\},
		\end{equation}
		by Markov inequality, we have 
		\begin{equation}
			P_N^{\mathrm{DF}}(\Lambda_A^c) \leq \frac{E^{\mathrm{TF}}}{A} + o(1).
		\end{equation}
		Moreover, as in the previous proof, we have \eqref{equation_liminf_interaction_repulsive}. Let
		\begin{equation}
			\mathcal{E}^{\varepsilon, \eta}[\overline{\mu}] = \iint_{\R\times\R} \bigg((1-3\varepsilon)\eta |p|^2 + V(x)\bigg)\d \overline{\mu} + I_{w^+}\int_{\R}\big(\rho_{2\pi\overline{\mu}}\big)^2.
		\end{equation}
		Then, using Lemma \ref{lemma_fatou_bizarre} and Lemma \ref{lemme_semi_conitnuite_inferieure_kcal}, we find
		\begin{equation}
			E^{\mathrm{TF}} \geq \liminf_{N\to+\infty} \int_{\Lambda_A} \Big(\mathcal{E}_N^{\varepsilon, \eta}[\overline{\mu}] + \mathcal{J}_{\eta}[\rho_{2\pi\overline{\mu}}]-\alpha\Big) \d Q_N(\mu) \geq \int_{\Lambda_A} \Big(\mathcal{E}^{\varepsilon, \eta}[\mu] + \mathcal{J}_{\eta}[\rho_{2\pi\mu}]-\alpha\Big) \d P^{\mathrm{DF}}(\mu).
		\end{equation}
		Hence, we can take the limit $\varepsilon \to 0$ and $A\to + \infty$ to get
		\begin{equation}\label{equation_quasiment_laè_derniere2}
			E^{\mathrm{TF}} \geq \int_{\mathcal{P}(\R)} \Big(\mathcal{E}^{0, \eta}[\mu] + \mathcal{J}_{\eta}[\rho_{2\pi\mu}]-\alpha\Big) \d P^{\mathrm{DF}}(\mu) \geq \int_{\mathcal{P}(\R)} \mathcal{E}^{\mathrm{TF}}_{J}\big[\rho_{2\pi\mu}]\d P^{\mathrm{DF}}(\mu) \geq E^{\mathrm{TF}}_{J},
		\end{equation}
		where we used the notation introduced in \eqref{equation_relaxed_eneergy_mixed}. By a direct adaptation of Corollary \ref{corollary_same_minimizers}, we know that 
		\begin{equation}
			E^{\mathrm{TF}}_{J} = E^{\mathrm{TF}},
		\end{equation}
		and hence  $P^{\mathrm{DF}}$ only charges measures $\mu$ such that $\rho_{2\pi\mu}$ is a minimizer of $\mathcal{E}^{\mathrm{TF}}_{J}$, that is a minimizer of $\mathcal{E}^{\mathrm{TF}}$, once again by Corollary \ref{corollary_same_minimizers}. Furthermore, there is equality in the inequalities of \eqref{equation_quasiment_laè_derniere2}, and in particular
		\begin{equation}\label{equation_ohlala_lenergie_cest_la_meme_en_vlasov_ou_pas2}
			\int_{\mathcal{P}}\iint_{\R\times\R} |p|^2 \d \mu(x, p)\d P^{\mathrm{DF}}(\mu) = \int_{\mathcal{P}}c_{\mathrm{TF}}\int_{\R} \rho_{2\pi\mu}^3\d P^{\mathrm{DF}}(\mu).
		\end{equation}
		Since $P^{\mathrm{DF}}$ only charges measures $\mu$ that satisfy the Pauli principle \eqref{equation_pauli_principle_P}, \eqref{equation_ohlala_lenergie_cest_la_meme_en_vlasov_ou_pas2} implies that $P^{\mathrm{DF}}$ only charges probability measures of the form
		\begin{equation}
			\mu(x, p) = \frac{1}{2\pi}\mathds{1}_{|p|\leq \pi \rho(x)}
		\end{equation}
		with $\rho$ a minimizer of $\mathcal{E}^{\mathrm{TF}}$, that is probability measures such that $2\pi\mu$ is a minimizer of $\mathcal{E}^{\mathrm{V}}$.
	\end{proof}
	
	\appendix
	\section{Appendix}
	\subsection{Existence of Thomas-Fermi minimizers when $d=1$}\label{section_appendixe_1}
	In this appendix, we prove Theorem \ref{theorem3}. To do so, we prove the existence of minimizers for a relaxed energy, and we then show that these minimizers are actually minimizers of the Thomas-Fermi energy. In this section, we use Assumption \ref{assumption_V2}, i.e. we assume that the trapping potential has no flat spots.
	\begin{definition}[Local energy]\label{definition_local_energy_appendix}
		For $\alpha\in \R_+$, let
		\begin{equation}
			e_{\alpha}:x\in \R_+ \mapsto c_{\mathrm{TF}}x^3 - I_w x^2 + \alpha x
		\end{equation}
		with 
		\begin{equation}
			\alpha = \frac{I_w^2}{4c_{\mathrm{TF}}}.
		\end{equation}
	\end{definition}
	We add $\alpha x$ here to have a positive function. Since we will then use it with $\rho$ of integral $1$, we can add freely the integral of $\rho$ without changing the minimizers.
	\begin{lemma}[Minimum of $e_{\alpha}$]\label{lemma_min_ealpha}
		We have
		\begin{equation}\label{equation_minimiseur_e_alpha}
			\min_{x\in\R_+} e_{\alpha}(x) = e_{\alpha}(0) = e_{\alpha}(\rho_{\alpha})=0
		\end{equation}
		with
		\begin{equation}
			\rho_{\alpha} = \frac{I_w}{2c_{\mathrm{TF}}}.
		\end{equation}
	\end{lemma}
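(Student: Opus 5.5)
The plan is to factor the cubic directly. With $\alpha = I_w^2/(4c_{\mathrm{TF}})$ and $\rho_\alpha = I_w/(2c_{\mathrm{TF}})$, I will show that
\begin{equation*}
e_\alpha(x) = c_{\mathrm{TF}}\,x\Big(x^2 - \frac{I_w}{c_{\mathrm{TF}}}x + \frac{I_w^2}{4c_{\mathrm{TF}}^2}\Big) = c_{\mathrm{TF}}\,x\,(x-\rho_\alpha)^2 .
\end{equation*}
To check this, expand $x(x-\rho_\alpha)^2 = x^3 - 2\rho_\alpha x^2 + \rho_\alpha^2 x$. Multiplying by $c_{\mathrm{TF}}$ gives the coefficients $2c_{\mathrm{TF}}\rho_\alpha = I_w$ and $c_{\mathrm{TF}}\rho_\alpha^2 = I_w^2/(4c_{\mathrm{TF}}) = \alpha$, which are exactly those of Definition \ref{definition_local_energy_appendix}.

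From this factorization, $e_\alpha(x)\geq 0$ for every $x\in\R_+$, since $c_{\mathrm{TF}}>0$, $x\geq 0$ and $(x-\rho_\alpha)^2\geq 0$. Moreover $e_\alpha(0)=e_\alpha(\rho_\alpha)=0$, so the minimum over $\R_+$ equals $0$ and is attained at both points, which proves \eqref{equation_minimiseur_e_alpha}.

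The same factorization also shows that these are the only zeros on $\R_+$, because $x(x-\rho_\alpha)^2=0$ forces $x=0$ or $x=\rho_\alpha$. This confirms the claim, used in Definition \ref{definition_relaxed_energy_section4}, that $e_\alpha$ has exactly two minimizers (here $\rho_\alpha>0$ as long as $I_w>0$). There is no real obstacle; the only step requiring care is checking the coefficients in the expansion.
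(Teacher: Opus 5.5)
Your proof is correct and is essentially the paper's argument. The paper checks $e_\alpha(0)=e_\alpha(\rho_\alpha)=0$ and $e_\alpha'(\rho_\alpha)=0$, so $\rho_\alpha$ is a double root of a cubic with positive leading coefficient; this is exactly your factorization $e_\alpha(x)=c_{\mathrm{TF}}\,x\,(x-\rho_\alpha)^2$, and writing it out makes the nonnegativity on $\R_+$ immediate, where the paper leaves it implicit in ``since $c_{\mathrm{TF}}>0$''.
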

	\begin{proof} 
		With such a choice of $\alpha$, it is clear that 
		\begin{equation}
			e_{\alpha}(\rho_{\alpha}) = c_{\mathrm{TF}}\left(\frac{I_w}{2c_{\mathrm{TF}}}\right)^3 - I_w \left(\frac{I_w}{2c_{\mathrm{TF}}}\right)^2 +  \frac{I_w^2}{4c_{\mathrm{TF}}}\frac{I_w}{2c_{\mathrm{TF}}} = 0 = e_{\alpha}(0).
		\end{equation}
		Moreover, we have
		\begin{equation}
			e_{\alpha}'(\rho_{\alpha}) = 3c_{\mathrm{TF}}\left(\frac{I_w}{2c_{\mathrm{TF}}}\right)^2 - 2I_w \frac{I_w}{2c_{\mathrm{TF}}} + \frac{I_w^2}{4c_{\mathrm{TF}}} = 0.
		\end{equation}
		Then, since $c_{\mathrm{TF}}>0$, we have indeed \eqref{equation_minimiseur_e_alpha}.
	\end{proof}
	\begin{definition}[Relaxed energy]\label{definition_relaxed_energy_appendice_la}
		Let $J$ be the local relaxed energy defined by
		\begin{equation}\label{equation_def_J}
			J:x\in \R_+ \mapsto e_{\alpha}(x)\mathds{1}_{x\geq \rho_{\alpha}}.
		\end{equation}
		We set the relaxed energy
		\begin{equation}
			\mathcal{E}^{\mathrm{TF}}_J[\rho] := \mathcal{J}[\rho] + \int_{\R^d}V\rho - \alpha\int_{\R^d}\rho := \int_{\R^d}J(\rho) + \int_{\R^d}V\rho - \alpha\int_{\R^d}\rho
		\end{equation}
		and its minimum
		\begin{equation}\label{equation_definition_etfj}
			E^{\mathrm{TF}}_J = \inf\bigg\{\mathcal{E}^{\mathrm{TF}}_J[\rho],~\rho\geq 0,~\int_{\R^d}\rho = 1\bigg\}.
		\end{equation}
	\end{definition}
	Let us state first some properties of this relaxed energy.
	\begin{properties}[On the relaxed energy]\label{properties_jcal}
		The relaxed energy is bounded from above by the Thomas-Fermi energy: for any $\rho \geq 0$ of integral $1$, we have
		\begin{equation}
			\mathcal{E}^{\mathrm{TF}}_J[\rho] \leq \mathcal{E}^{\mathrm{TF}}[\rho].
		\end{equation}
		Moreover, the functional $\mathcal{J}$ is weakly lower semi-continuous on the sets of positive $L^3$ functions of integral 1.
	\end{properties}
	\begin{proof}
		First, since $e_{\alpha}$ is positive on $\R_+$, we have for $\rho\geq 0$ of integral 1
		\begin{equation}
			\mathcal{E}^{\mathrm{TF}}_J[\rho] = \int_{\R^d}J(\rho) + \int_{\R^d}V\rho - \alpha\int_{\R^d}\rho \leq 
			\int_{\R^d}e_{\alpha}(\rho) + \int_{\R^d}V\rho - \alpha\int_{\R^d}\rho = \mathcal{E}^{\mathrm{TF}}[\rho].
		\end{equation}
		Moreover, let us write
		\begin{equation}
			S = \left\{\rho\in L^3(\R),~\rho\geq 0,~\int_{\R} = 1\right\}.
		\end{equation}
		The set $S$ is clealy convex, and since $J$ is convex on $\R_+$, $\mathcal{J}$ is convex on $S$. Furthermore, $\mathcal{J}$ is continuous for the $L^3$ norm on $S$ because $J$ is continuous and for $\rho, \sigma\in S$
		\begin{equation}
			\int_{\R} |\rho-\sigma|^2 \leq \sqrt{\int_{\R} |\rho-\sigma|}\sqrt{\int_{\R}|\rho-\sigma|^3} \leq \sqrt{2}\sqrt{\int_{\R}|\rho-\sigma|^3}.
		\end{equation}
		Hence, $\mathcal{J}$ is weakly lower semi-continuous (see \cite[Corollary 3.9]{brezis_functional_2011} for instance). 
	\end{proof}
	\begin{proposition}[Minimization problem for the relaxed energy]\label{proposition_minimizing_problem_relaxed}
		Problem \eqref{equation_definition_etfj} admits at least one minimizer. All the minimizers $\rho$ satisfy
		\begin{equation}\label{equation_espece_euler_lagrange}
			\begin{cases}
				e_{\alpha}'(\rho)\mathds{1}_{\rho\geq\rho_{\alpha}} + V - \alpha = \lambda& \mathrm{ on }~~ \supp (\rho)\\
				V - \alpha \geq \lambda& \mathrm{ on }~~ \supp (\rho)^c.
			\end{cases}
		\end{equation}
	\end{proposition}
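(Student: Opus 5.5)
We plan to use the direct method of the calculus of variations, then derive the Euler--Lagrange conditions through mass-preserving variations. Here $d=1$, and $\rho_\alpha$, $\alpha$, $J$ are those of Definition \ref{definition_local_energy_appendix} and Lemma \ref{lemma_min_ealpha}.

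\textbf{Existence.} Take a minimizing sequence $(\rho_n)$ with $\rho_n\geq 0$ and $\int\rho_n=1$.

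1. The infimum is finite. Since $J\geq 0$ and $V\geq 0$, we have $\mathcal{E}^{\mathrm{TF}}_J[\rho]\geq -\alpha$. By Properties \ref{properties_jcal}, $E^{\mathrm{TF}}_J\leq E^{\mathrm{TF}}<+\infty$.

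2. Uniform bounds. For $x\geq 2\rho_\alpha$ one has $J(x)=e_\alpha(x)\geq c\,x^3$. Hence $\int \rho_n^3\mathds{1}_{\rho_n\geq 2\rho_\alpha}\leq C$, while $\int\rho_n^3\mathds{1}_{\rho_n<2\rho_\alpha}\leq 4\rho_\alpha^2$. So $(\rho_n)$ is bounded in $L^3$, and also in $L^1$. Moreover $\int V\rho_n\leq C$.

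3. Passage to the limit. By Assumption \ref{assumption_V}, $V\to+\infty$, so the sequence is tight. Extract a subsequence with $\rho_n\rightharpoonup\rho$ weakly in $L^3$. Tightness gives $\rho\geq0$ and $\int\rho=1$. The potential term is weakly l.s.c. by Fatou, after truncating $V$ by $\min(V,M)$ and letting $M\to\infty$. The term $\mathcal{J}$ is weakly l.s.c. by Properties \ref{properties_jcal}, because $J$ is convex on $\R_+$: it vanishes on $[0,\rho_\alpha]$, and $e_\alpha$ is convex on $[\rho_\alpha,\infty)$ since $\rho_\alpha\geq I_w/(3c_{\mathrm{TF}})$. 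The mass term is constant. Thus $\rho$ is a minimizer.

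\textbf{Euler--Lagrange.} Let $\rho$ be any minimizer. First observe that $J$ is $C^1$ on $\R_+$, with $J'(x)=e_\alpha'(x)\mathds{1}_{x\geq\rho_\alpha}$; this holds because $e_\alpha'(\rho_\alpha)=0$. We use the standard Thomas--Fermi variations:
\[
\rho_t=\rho+t(\phi-\psi),\qquad \int\phi=\int\psi,
\]
where $\phi\geq0$ is supported in $\{V\leq M\}$, and $\psi\geq 0$ satisfies $\psi\leq\rho$ and is supported on $\{\rho\geq\delta\}$.

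Since $\rho\in L^3$ and $J'$ has quadratic growth, dominated convergence gives
\[
\frac{d}{dt}\Big|_{t=0^+}\mathcal{E}^{\mathrm{TF}}_J[\rho_t]=\int\big(J'(\rho)+V-\alpha\big)(\phi-\psi)\geq 0 .
\]
Set $W:=J'(\rho)+V-\alpha$. Choosing $\phi,\psi$ as approximate Dirac masses at Lebesgue points yields $\operatorname{ess\,inf}_{\R} W\geq \operatorname{ess\,sup}_{\{\rho>0\}}W$. Call this common value $\lambda$. This gives $W=\lambda$ on $\{\rho>0\}$ and $W\geq\lambda$ elsewhere.

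On $\{\rho=0\}$ we have $J'(0)=0$, so this is exactly $V-\alpha\geq\lambda$. Both conditions then hold a.e., and we interpret $\supp(\rho)$ as $\{\rho>0\}$ up to null sets. This is precisely \eqref{equation_espece_euler_lagrange}.

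\textbf{Main obstacle.} The delicate point is justifying the l.s.c.\ of $\mathcal{J}$ and the variations with only weak $L^3$ control. Convexity and the $C^1$ property of $J$ (the tangency at $\rho_\alpha$) are what make both steps work; we would check them carefully first. A secondary issue is whether the statement intends $\supp(\rho)$ as the closed support. If so, Assumption \ref{assumption_V2} (level sets of $V$ of measure zero) would be needed to exclude pathologies in which $\{\rho>0\}$ and its closure differ on a set of positive measure. We do not expect this to be required for the a.e.\ form above.
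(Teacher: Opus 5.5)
Your proposal is correct and follows the paper's proof. The paper also uses the direct method (tightness from $\int V\rho_n\le C$, an $L^3$ bound extracted from $\mathcal{J}$, weak lower semicontinuity of $\mathcal{J}$ by convexity of $J$) followed by mass-preserving perturbations of the minimizer; your factorization $e_\alpha(x)=c_{\mathrm{TF}}\,x(x-\rho_\alpha)^2$ and single one-sided family $\rho+t(\phi-\psi)$ are tidier versions of its Young-inequality bound and its perturbations $\rho_\infty+\varepsilon\sigma$. Your closing worry is unfounded: since $J'\ge 0$, you get $V-\alpha\le\lambda$ a.e.\ on $\{\rho>0\}$, hence everywhere on the closed support by continuity of $V$ (Assumption \ref{assumption_V}); so where $\rho=0$ on the closed support, your inequality $V-\alpha\ge\lambda$ becomes an equality, and the system holds for the closed support too, without Assumption \ref{assumption_V2}.
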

	\begin{proof}
		Let $(\rho_n)$ be a minimizing sequence of $\mathcal{E}_J^{\mathrm{TF}}$. Then for $n$ large,
		\begin{equation}
			E_J^{\mathrm{TF}}+1 +\alpha \geq \mathcal{E}_J^{\mathrm{TF}}[\rho_n] + \alpha = \mathcal{J}[\rho_n] + \int V\rho_n.
		\end{equation}
		Thus, $V\rho_n$ is bounded in $L^1$ and hence $(\rho_n)$ is a tight sequence of measures. Moreover, $\mathcal{J}[\rho_n]$ is bounded, which implies
		\begin{equation}
			\int_{\R} \rho^2 \leq \frac{c_{\mathrm{TF}}}{2I_w}\int_{\R} \rho^3 + \frac{2I}{c_{\mathrm{TF}}}\int_{\R}\rho = \frac{c_{\mathrm{TF}}}{2I_w}\int_{\R} \rho^3 + \frac{2I_w}{c_{\mathrm{TF}}},
		\end{equation}
		hence
		\begin{equation}
			E_J^{\mathrm{TF}}+1 +\alpha \geq \mathcal{J}[\rho_n] \geq c_{\mathrm{TF}}\int_{\rho_n\geq\rho_{\alpha}}\rho_n^3 - \frac{c_{\mathrm{TF}}}{2}\int_{\R}\rho_n^3 - 2\frac{I_w^2}{c_{\mathrm{TF}}}.
		\end{equation}
		Therefore
		\begin{equation}
			\int_{\R}\rho_n^3 \leq \int_{\rho_n \geq \rho_{\alpha}} \rho_n^3 + \rho_{\alpha}^2\int_{\rho_n \leq \rho_{\alpha}}  \rho_n \leq \int_{\rho_n \geq \rho_{\alpha}} \rho_n^3 + \rho_{\alpha}^2\leq C +  \frac{1}{2}\int_{\R}\rho_n^3,
		\end{equation}
		and thus $(\rho_n)$ is bounded in $L^3$. Therefore, up to extraction, we have the convergence of $(\rho_n)$ weakly in $L^3$ and weakly-* as measures towards a $\rho_{\infty}\in L^3$. Since $(\rho_n)$ is tight, we have 
		\begin{equation}
			\int_{\R}\rho_{\infty} = \lim_{n\to+\infty}\int_{\R} \rho_n = 1.
		\end{equation}
		Furthermore, by weak lower semi-continuity of $\mathcal{J}$, we have
		\begin{equation}
			E^{\mathrm{TF}}_J = \liminf_{n\to+\infty} \mathcal{E}^{\mathrm{TF}}_J[\rho_n] \geq \mathcal{E}^{\mathrm{TF}}_J[\rho_{\infty}],
		\end{equation}
		and $\rho_{\infty}$ is indeed a minimizer. Then, \eqref{equation_espece_euler_lagrange} corresponds to the Euler-Lagrange equations of the minimization problem. To prove the first equation, one can take $\sigma\in L^1\cap L^3$ such that
		\begin{equation}
			\int_{\R} \sigma = 0~~~~\mathrm{and}~~~~|\sigma|\leq \rho_{\infty}
		\end{equation}
		and consider for $|\varepsilon| < 1$ the energy of $\rho_{\infty} + \varepsilon \sigma$ in the limit $\varepsilon \to 0$. To prove the second equation, one can take $\sigma\in L^1\cap L^3$ such that
		\begin{equation}
			\int_{\R} \sigma = 0,~~~~~~\sigma \geq 0 ~~\mathrm{on}~~\supp(\rho_{\infty})^c ~~~~\mathrm{and}~~~~\sigma \leq -\rho_{\infty} ~~\mathrm{on}~~\supp(\rho_{\infty}),
		\end{equation}
		and consider the energy of $\rho_{\infty} + \varepsilon \sigma$ in the limit $\varepsilon\to 0^+$.
	\end{proof}
	\begin{corollary}[The relaxation does not change minimizers]\label{corollary_same_minimizers}
		We have
		\begin{equation}\label{equation_ohlala_enfait_les_energies_cest_les_memes}
			E^{\mathrm{TF}} = E^{\mathrm{TF}}_J
		\end{equation} 
		and the minimizers of $\mathcal{E}^{\mathrm{TF}}_J$ and $\mathcal{E}^{\mathrm{TF}}$ on $S$ coincide. 
	\end{corollary}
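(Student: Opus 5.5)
We need to prove $E^{\mathrm{TF}} = E^{\mathrm{TF}}_J$ and that the minimizers coincide on $S$. The inequality $E^{\mathrm{TF}}_J \leq E^{\mathrm{TF}}$ is immediate from Properties \ref{properties_jcal}, since $\mathcal{E}^{\mathrm{TF}}_J \leq \mathcal{E}^{\mathrm{TF}}$ pointwise on $S$. Here we note that $\int e_\alpha(\rho) - \alpha\int\rho = c_{\mathrm{TF}}\int\rho^3 - I_w\int\rho^2$, so $\mathcal{E}^{\mathrm{TF}}[\rho] = \int e_\alpha(\rho) + \int V\rho - \alpha$. Moreover, the two functionals agree at $\rho$ exactly when $e_\alpha(\rho)\mathds{1}_{\rho<\rho_\alpha} = 0$ almost everywhere. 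Since $e_\alpha>0$ on $(0,\rho_\alpha)$, this is equivalent to $\rho \in \{0\}\cup[\rho_\alpha,\infty)$ almost everywhere.

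The key step is therefore to show that every minimizer $\rho$ of $\mathcal{E}^{\mathrm{TF}}_J$, which exists by Proposition \ref{proposition_minimizing_problem_relaxed}, takes no values in $(0,\rho_\alpha)$ on a set of positive measure. For this I would use the Euler--Lagrange system \eqref{equation_espece_euler_lagrange}. On the set $A=\{0<\rho<\rho_\alpha\}\subset\supp(\rho)$, the first equation gives $V-\alpha=\lambda$, so $A\subset\{V=\lambda+\alpha\}$, which is a level set of $V$. By Assumption \ref{assumption_V2} this set has Lebesgue measure zero. Hence every relaxed minimizer satisfies $\rho\in\{0\}\cup[\rho_\alpha,\infty)$ almost everywhere, and so $\mathcal{E}^{\mathrm{TF}}[\rho] = \mathcal{E}^{\mathrm{TF}}_J[\rho] = E^{\mathrm{TF}}_J \leq E^{\mathrm{TF}}$. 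Since $\rho\in S$, it follows that $E^{\mathrm{TF}}\leq E^{\mathrm{TF}}_J$, so equality holds and $\rho$ minimizes $\mathcal{E}^{\mathrm{TF}}$.

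For the converse, let $\rho$ be a minimizer of $\mathcal{E}^{\mathrm{TF}}$. Then
\[
\mathcal{E}^{\mathrm{TF}}_J[\rho]\leq\mathcal{E}^{\mathrm{TF}}[\rho]=E^{\mathrm{TF}}=E^{\mathrm{TF}}_J,
\]
so $\rho$ also minimizes $\mathcal{E}^{\mathrm{TF}}_J$. Note that $\rho\in L^3$ because the energy is finite. This proves that the two sets of minimizers coincide.

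The main obstacle is making the Euler--Lagrange step rigorous on $A$. One must make sure that the multiplier $\lambda$ is a single constant, independent of the variations used. One must also check that on $A$ the term $e'_\alpha(\rho)\mathds{1}_{\rho\geq\rho_\alpha}$ indeed vanishes, which is legitimate since $J\equiv0$ on $[0,\rho_\alpha)$ and $J$ is $C^1$ at $\rho_\alpha$ because $e'_\alpha(\rho_\alpha)=0$. If the full statement \eqref{equation_espece_euler_lagrange} is only available almost everywhere, that suffices here. Alternatively, one can argue directly: suppose $|A|>0$, and write $A$ as the union of $\{V<\lambda+\alpha\}\cap A$ and $\{V>\lambda+\alpha\}\cap A$, one of which has positive measure (the level set being null). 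Transferring a small mass $\varepsilon\sigma$ between two such pieces of $A$ then changes $\mathcal{E}^{\mathrm{TF}}_J$ only through $\int V\sigma$, since $J$ is flat there, and for a suitable choice of sign this strictly lowers the energy. This requires locating the pieces with differing values of $V$, which again comes down to the zero-measure level sets. I would present the argument through Proposition \ref{proposition_minimizing_problem_relaxed}, since it is already available, and mention the direct perturbation as a check.
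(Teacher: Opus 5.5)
Your proof is correct and follows the paper's argument essentially step for step. Both use the pointwise bound $\mathcal{E}^{\mathrm{TF}}_J\leq\mathcal{E}^{\mathrm{TF}}$, then combine the Euler--Lagrange system \eqref{equation_espece_euler_lagrange} with the null level sets of Assumption \ref{assumption_V2} to rule out values in $(0,\rho_\alpha)$ for a relaxed minimizer, and close with the chain of inequalities in both directions. Your reading $V=\lambda+\alpha$ on $\{0<\rho<\rho_\alpha\}$ is the right one; the paper's $\lambda-\alpha$ is a harmless sign slip, since only the fact that this set lies in a level set of $V$ is used.
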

	\begin{proof}
		Let $\rho$ be a minimizer of $\mathcal{E}^{\mathrm{TF}}_J$. Then, by \eqref{equation_espece_euler_lagrange}, we have
		\begin{equation}
			V = \lambda - \alpha ~~\mathrm{on}~~ \{0<\rho<\rho_{\alpha}\},
		\end{equation}
		and since we have assumed that the level sets of $V$ are of measure 0, we have
		\begin{equation}\label{equation_minimiseurs_bornés_inférieurement}
			\rho\geq \rho_{\alpha}
		\end{equation}
		almost surely on $\supp(\rho)$. This implies in particular that
		\begin{equation}
			e_{\alpha}(\rho) = J(\rho)
		\end{equation}
		almost everywhere, and thus
		\begin{equation}
			E^{\mathrm{TF}}\geq E^{\mathrm{TF}}_J = \mathcal{E}^{\mathrm{TF}}_J[\rho] = \mathcal{E}^{\mathrm{TF}}[\rho].
		\end{equation}
		Therefore, $\rho$ is indeed a minimizer of $\mathcal{E}^{\mathrm{TF}}$, and \eqref{equation_ohlala_enfait_les_energies_cest_les_memes} is true. On the contrary, if $\rho$ is a minimizer of $\mathcal{E}^{\mathrm{TF}}$, we have
		\begin{equation}\label{equation_ou_en_fait_les_inegalites_se_trouvent_etre_des_egalites}
			E^{\mathrm{TF}} = \mathcal{E}^{\mathrm{TF}}[\rho] \geq \mathcal{E}^{\mathrm{TF}}_J[\rho] \geq E^{\mathrm{TF}}_J = E^{\mathrm{TF}}.
		\end{equation}
		Thus, we have equality in all the inequalities of \eqref{equation_ou_en_fait_les_inegalites_se_trouvent_etre_des_egalites}, and in particular
		\begin{equation}
			\mathcal{E}^{\mathrm{TF}}_J[\rho] = E^{\mathrm{TF}}_J,
		\end{equation}
		i.e. $\rho$ is a minimizer of $\mathcal{E}^{\mathrm{TF}}_J$.
	\end{proof}
	Theorem \ref{theorem3} is now just a consequence of Proposition \ref{proposition_minimizing_problem_relaxed}, Corollary \ref{corollary_same_minimizers}, \eqref{equation_minimiseurs_bornés_inférieurement} and Remark \ref{remark_link_vlasov_thomas_fermi}.
	
	\subsection{Relaxed Vlasov energy}\label{section_appendixe_2}
	Similarly as in the previous section, we relax the Vlasov functional to recover some weak lower semi-continuity.
	\begin{definition}[Decomposition of the energy]\label{definition_decomposition_fof_the_energy}
		For $\eta \in [0, 1]$, let  
		\begin{equation}
			e_{\eta}(x) = (1-\eta) c_{\mathrm{TF}}x^3 - I_w x^2 + \alpha x
		\end{equation}
		with $\alpha > 0$ such that $e_{\eta}$ has exactly two minimizers on $\R_+$, $0$ and $\rho_{\alpha}\in \R_+^*$, with minimum $0$ (see Definition \ref{definition_local_energy_appendix} and Lemma \ref{lemma_min_ealpha} for more detail). We set
		\begin{equation}
			J_{\eta}(x) = \mathds{1}_{x\geq\rho_{\alpha}}e_{\eta}(x) \leq e_{\eta}(x)~~~~~\mathrm{and}~~~~~\mathcal{J}_{\eta}[\rho] = \int_{\R^d} J_{\eta}(\rho) - \alpha\int_{\R}\rho.
		\end{equation}
		Moreover, let 
		\begin{equation}
			\mathcal{E}^{\varepsilon, \eta}[\overline{\mu}] = \iint_{\R^d\times\R^d} \bigg((1-3\varepsilon)\eta |p|^2 + V(x)\bigg)\d \overline{\mu}
		\end{equation}
		and
		\begin{equation}
			\mathcal{K}^{\varepsilon, \eta}[\mu] = \mathcal{E}^{\varepsilon, \eta}[\mu] + \mathcal{J}_{\eta}[\rho_{2\pi\mu}] - \alpha.
		\end{equation}
	\end{definition}
	We recall that the $\varepsilon$ in the energy functional comes from \eqref{equation_borne_inf_en_separant_moitie_densite_moitie_husimi}.
	\begin{lemma}[Weak lower semi-continuity of the relaxed energy]\label{lemme_semi_conitnuite_inferieure_kcal}
		Let
		\begin{equation} 
			\mu \in \Lambda_A = \big\{\mu\in \mathcal{P}(\R),~\mathcal{E}_{\infty}[\mu] \leq A\big\}
		\end{equation} 
		a probability measure satisfying the Pauli principle, i.e. 
		\begin{equation}
			\mu \leq (2\pi)^{-d}.
		\end{equation} 
		Moreover, let $(\mu_n) \in (\Lambda_A)^{\N}$ a sequence of probability measures that converge weakly to $\mu$, and $V$ and $w$ satisfying Assumptions \ref{assumption_V}, \ref{assumption_V2} and \ref{assumption_w}. Then, 
		\begin{equation}
			\liminf_{n\to+\infty} \mathcal{K}^{\varepsilon, \eta}[\mu_n] \geq \mathcal{K}^{\varepsilon, \eta}[\mu].
		\end{equation}
	\end{lemma}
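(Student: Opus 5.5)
The plan is to split $\mathcal{K}^{\varepsilon, \eta}$ into three pieces: the linear phase-space part $\mathcal{E}^{\varepsilon, \eta}$, the local part $\int J_\eta(\rho_{2\pi\mu})$, and the constant terms $-\alpha\int_{\R}\rho_{2\pi\mu} - \alpha$. I will prove lower semi-continuity of each piece separately along a subsequence realizing the $\liminf$. Since $\mu_n$ and $\mu$ are probability measures, $\int \rho_{2\pi\mu_n} = \int\rho_{2\pi\mu} = 1$, so the constant terms are equal to $-2\alpha$ for every $n$ and need no argument.

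For $\mathcal{E}^{\varepsilon, \eta}$, I take $\varepsilon$ small so that $(1-3\varepsilon)\eta\geq 0$. The integrand $(x,p)\mapsto (1-3\varepsilon)\eta|p|^2 + V(x)$ is then nonnegative and continuous, since $V\in L^\infty_{\mathrm{loc}}$ with $\nabla V\in L^\infty_{\mathrm{loc}}$ by Assumption \ref{assumption_V}. I write it as the increasing limit of the bounded continuous functions $\min(\,\cdot\,, M)$. Weak convergence $\mu_n\rightharpoonup\mu$ gives convergence of each truncated integral, and monotone convergence in $M$ then yields $\liminf_n \mathcal{E}^{\varepsilon,\eta}[\mu_n]\geq \mathcal{E}^{\varepsilon,\eta}[\mu]$.

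The main step is the local term. Two facts about $J_\eta$ are needed.
\begin{itemize}
\item[(i)] $J_\eta$ is convex and nonnegative on $\R_+$. We have $J_\eta = 0$ on $[0,\rho_\alpha]$ and $J_\eta = e_\eta$ afterwards. Moreover $e_\eta(\rho_\alpha)=e_\eta'(\rho_\alpha)=0$, and $e_\eta''(x) = 6(1-\eta)c_{\mathrm{TF}}x - 2I_w>0$ for $x\geq \rho_\alpha$. This holds because, exactly as in Lemma \ref{lemma_min_ealpha}, $\rho_\alpha = I_w/(2(1-\eta)c_{\mathrm{TF}})$ exceeds the inflection point $I_w/(3(1-\eta)c_{\mathrm{TF}})$.
\item[(ii)] $J_\eta$ has cubic growth: $J_\eta(x)\geq \tfrac{1-\eta}{2}c_{\mathrm{TF}}x^3 - Cx$.
\end{itemize}
Let $\rho_n = \rho_{2\pi\mu_n}$ and $\rho = \rho_{2\pi\mu}$. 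These are well defined since $\mu_n\in\Lambda_A$ are densities (the functional $\mathcal{E}_\infty$ is only defined on $L^1$ densities), and $\mu$ is a density by the Pauli bound. Testing $\mu_n\rightharpoonup\mu$ against functions of $x$ alone shows that $\rho_n\rightharpoonup\rho$ weakly as probability measures on $\R$. I pass to a subsequence along which $\int J_\eta(\rho_n)$ converges to its $\liminf$; if this limit is infinite there is nothing to prove. Otherwise (ii) and $\int\rho_n=1$ bound $(\rho_n)$ in $L^3(\R)$. Extracting again, $\rho_n\rightharpoonup\tilde\rho$ weakly in $L^3$, and testing against $C_c^\infty$ functions identifies $\tilde\rho=\rho$.

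Finally, $\rho\mapsto\int J_\eta(\rho)$ is convex and strongly lower semi-continuous on nonnegative $L^3$ functions. Strong semi-continuity follows from Fatou's lemma along an a.e. convergent subsequence, using $J_\eta\geq0$ and continuity of $J_\eta$. Mazur's lemma, in the form used in Properties \ref{properties_jcal} via \cite[Corollary 3.9]{brezis_functional_2011}, then gives weak lower semi-continuity, so $\liminf\int J_\eta(\rho_n)\geq\int J_\eta(\rho)$. Summing the three pieces gives the claim.

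I expect the only delicate point to be this identification of the $L^3$-weak limit with the marginal of the weak limit $\mu$. It is the reason a uniform bound is needed, and that bound comes from the finite $\liminf$ together with the cubic growth (ii), not from any Pauli bound on the $\mu_n$. The hypothesis $\mu\in\Lambda_A$ with $\mu\leq(2\pi)^{-1}$ only guarantees that $\rho$ is an honest $L^1\cap L^\infty$ density, so that $\mathcal{K}^{\varepsilon,\eta}[\mu]$ makes sense.
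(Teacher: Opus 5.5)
Your proof is correct and has the same skeleton as the paper's. Both show lower semi-continuity of the linear part $\mathcal{E}^{\varepsilon,\eta}$, get a uniform $L^3$ bound on $\rho_n=\rho_{2\pi\mu_n}$, and extract a weak $L^3$ limit. Both then obtain weak lower semi-continuity of $\rho\mapsto\int J_\eta(\rho)$ from convexity plus strong (semi-)continuity, and identify the limit with $\rho_{2\pi\mu}$.

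The real difference is where the $L^3$ bound comes from. The paper reads $\mathcal{J}_\eta[\rho_n]\leq A+\alpha$ directly off $\mu_n\in\Lambda_A$. However, $\mathcal{E}_\infty[\mu_n]\leq A$ controls $\int\rho_n^3$ only through the bathtub bound $\iint|p|^2\d\mu_n\geq c\int\rho_n^3$, which needs a Pauli-type bound on $\mu_n$ itself. That bound does hold for the averaged measures $\overline{\mu}$, $\mu\in\Gamma_\varepsilon^c$, to which the lemma is applied. But the statement assumes it only for the limit $\mu$. Without it, one can have $\mathcal{E}_\infty[\mu_n]\to-\infty$ while $\int\rho_n^3\to+\infty$ and $\mu_n\rightharpoonup\mu$, for instance by adding a small, highly concentrated bump at low momentum.

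Your approach avoids this dependence. You take a subsequence realizing the $\liminf$, dispose of the infinite case, and use the cubic lower bound (ii) on $J_\eta$. So your argument proves the lemma as literally stated, whereas the paper's shorter argument tacitly uses the context of its application.

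There is a minor slip in your last sentence. The bound $\mu\leq(2\pi)^{-1}$ makes $\mu$ an $L^1\cap L^\infty$ density on $\R^2$, but its marginal $\rho$ is in general only in $L^1$, not in $L^\infty$. This does not affect the argument: you only need $\rho$ to be a density, and in the finite case it is the weak $L^3$ limit anyway.
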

	\begin{proof}
		First, it is clear that $\mathcal{E}^{\varepsilon, \eta}$ is weakly lower semi-continuous. Let us write
		\begin{equation}
			\rho_n = \rho_{2\pi\mu_n} = \int_{\R^d} \d \mu_n(\cdot, p).
		\end{equation}
		Then, since $\mu_n\in \Lambda_A$, we have
		\begin{equation}
			A + \alpha \geq \mathcal{J}_{\eta}[\rho_n]
		\end{equation}
		and thus $(\rho_n)$ is bounded in $L^3$. Hence, the sequence converges weakly (up to a subsequence) to a $\rho_{\infty}\in L^3$, and since $\mathcal{J}_{\eta}$ is convex and strongly continuous for the $L^3$ norm, we have
		\begin{equation}\label{equation_liminf_jcal}
			\liminf_{n\to+\infty} \mathcal{J}_{\eta}[\rho_n] \geq \mathcal{J}_{\eta}[\rho_{\infty}].
		\end{equation}
		As we also have $\mu_n\rightharpoonup \mu$, we get 
		\begin{equation} \label{equation_ohlala_rho_en_fait_cest_le_bon}
			\rho_{\infty} = \rho_{(2\pi)^d\mu}= \int_{\R^d} \d \mu(\cdot, p).
		\end{equation}
	\end{proof}
	
	\subsection{Additionnal material}\label{subsection_appendice_bonus}
	
	\begin{proof}[Proof of Remark \ref{remark_constraint_dimension}]
		Let us assume $d\geq 3$ and take
		\begin{equation}
			\rho_n = n^d\rho(n\cdot)
		\end{equation}
		with a positive $\rho \in C^{\infty}_c$ of integral $1$. Then, the potential energy is bounded:
		\begin{equation}
			\int_{\R^d} V\rho_n \leq C,
		\end{equation}
		while the kinetic and interaction energies diverge:
		\begin{equation}
			\int_{\R^d} \rho_n^{1+2/d} = n^2\int_{\R^d} \rho^{1+2/d},~~~~~\int_{\R^d} \rho_n^2 = n^d\int_{\R^d} \rho^2.
		\end{equation}
		As $d > 2$, we have 
		\begin{equation}
			\mathcal{E}^{\mathrm{TF}}[\rho_n] \to -\infty.
		\end{equation}
		For $d = 2$, we have
		\begin{equation}
			\mathcal{E}^{\mathrm{TF}}[\rho] = \big(c_{\mathrm{TF}} - I_w\big)\int_{\R^d} \rho^2 + \int_{\R^d} V\rho,
		\end{equation}
		and therefore, as long as \eqref{equation_lien_c_I} is satisfied, the Thomas-Fermi functional is bounded from below. When $d = 1$, we have for all $\alpha > 0$
		\begin{equation}
			\rho^2 \leq  \frac{\alpha}{2}\rho^3 + \frac{\alpha^{-1}}{2}\rho
		\end{equation}
		by Young's inequality. Thus, for a $\rho$ of integral $1$,
		\begin{equation}
			\mathcal{E}^{\mathrm{TF}}[\rho] \geq \frac{c_{\mathrm{TF}}}{2}\int_{\R^d} \rho^3 - \frac{I_w^2}{2c_{\mathrm{TF}}} + \int_{\R^d} V\rho\geq  - \frac{I_w^2}{2c_{\mathrm{TF}}}.
		\end{equation}
	\end{proof}
	
	\section*{Acknowledgements}
	I thank Nicolas Rougerie for the many long and fruitful discussions we had, especially on the existence of Thomas-Fermi minimizers, and for his proofreading of this manuscript.
	
	\bibliographystyle{abbrv}
	\bibliography{biblio}
	
\end{document}